\newtheorem{theorem}{Theorem}
\newtheorem{lemma}[theorem]{Lemma}
\newtheorem{corollary}[theorem]{Corollary}
\newtheorem{proposition}[theorem]{Proposition}
\newtheorem{defin}[theorem]{Definition}
\newtheorem{remark}[theorem]{Remark}
\newtheorem{claim}[theorem]{Claim}
\newtheorem{example}[theorem]{Example}
\newcommand{\safemath}[2]{\newcommand{#1}{\ensuremath{#2}\xspace}}
\renewcommand{\safemath}[2]{\newcommand{#1}{\ensuremath{#2}\xspace}}
\safemath{\veca}{\mathbf{a}}
\safemath{\vecb}{\mathbf{b}}
\safemath{\vecc}{\mathbf{c}}
\safemath{\vecd}{\mathbf{d}}
\safemath{\vece}{\mathbf{e}}
\safemath{\vecf}{\mathbf{f}}
\safemath{\vecg}{\mathbf{g}}
\safemath{\vech}{\mathbf{h}}
\safemath{\veci}{\mathbf{i}}
\safemath{\vecj}{\mathbf{j}}
\safemath{\veck}{\mathbf{k}}
\safemath{\vecl}{\mathbf{l}}
\safemath{\vecm}{\mathbf{m}}
\safemath{\vecn}{\mathbf{n}}
\safemath{\veco}{\mathbf{o}}
\safemath{\vecp}{\mathbf{p}}
\safemath{\vecq}{\mathbf{q}}
\safemath{\vecr}{\mathbf{r}}
\safemath{\vecs}{\mathbf{s}}
\safemath{\vect}{\mathbf{t}}
\safemath{\vecu}{\mathbf{u}}
\safemath{\vectU}{\mathbf{U}}
\safemath{\vecv}{\mathbf{v}}
\safemath{\vecw}{\mathbf{w}}
\safemath{\vecx}{\mathbf{x}}
\safemath{\vectX}{\mathbf{X}}
\safemath{\vecy}{\mathbf{y}}
\safemath{\vecz}{\mathbf{z}}
\safemath{\bmuB}{\mathbf{B}}
\safemath{\bmuP}{\mathbf{P}}
\safemath{\setA}{\mathcal{A}}
\safemath{\setB}{\mathcal{B}}
\safemath{\setC}{\mathcal{C}}
\safemath{\setD}{\mathcal{D}}
\safemath{\setE}{\mathcal{E}}
\safemath{\setF}{\mathcal{F}}
\safemath{\setG}{\mathcal{G}}
\safemath{\setH}{\mathcal{H}}
\safemath{\setI}{\mathcal{I}}
\safemath{\setJ}{\mathcal{J}}
\safemath{\setK}{\mathcal{K}}
\safemath{\setL}{\mathcal{L}}
\safemath{\setM}{\mathcal{M}}
\safemath{\setN}{\mathcal{N}}
\safemath{\setO}{\mathcal{O}}
\safemath{\setP}{\mathcal{P}}
\safemath{\setQ}{\mathcal{Q}}
\safemath{\setR}{\mathcal{R}}
\safemath{\setS}{\mathcal{S}}
\safemath{\setT}{\mathcal{T}}
\safemath{\setU}{\mathcal{U}}
\safemath{\setV}{\mathcal{V}}
\safemath{\setW}{\mathcal{W}}
\safemath{\setX}{\mathcal{X}}
\safemath{\setY}{\mathcal{Y}}
\safemath{\setZ}{\mathcal{Z}}
\safemath{\emptySet}{\varnothing}
\safemath{\rndX}{X}
\safemath{\rndY}{Y}
\safemath{\rndZ}{Z}
\safemath{\opE}{\mathbb{E}}
\DeclareMathOperator{\Exop}{\opE}
\newcommand{\ind}[1]{\mathbbm{1}_{#1}}				
\safemath{\reals}{\mathbb R}
\safemath{\positivereals}{\reals_{+}}
\safemath{\integers}{\mathbb Z}
\safemath{\posint}{\integers_{+}}
\safemath{\naturals}{\mathbb N}
\safemath{\posnaturals}{\naturals_{+}}
\safemath{\complexset}{\mathbb C}
\safemath{\rationals}{\mathbb Q}
\newcommand{\channel}[2]{W ( #1 | #2 )} 
\newcommand{\channeltild}[2]{\widetilde W ( #1 | #2 )} 
\safemath{\oprobf}{\hat{p}}	
\safemath{\probf}{p} 
\newcommand{\muti}[2]{I ( #1 , #2 )} 
\newcommand{\bigmuti}[2]{I \bigl( #1 , #2 \bigr)} 
\newcommand{\condmuti}[3]{I ( #1 , #2 | #3 )} 
\newcommand{\ent}[1]{H ( #1 )} 
\newcommand{\bigent}[1]{H \bigl( #1 \bigr)} 
\newcommand{\condent}[2]{H ( #1 | #2 )} 
\newcommand{\bin}[1]{\bm{ \mathcal B}_{#1}}		
\newcommand{\indexset}[1]{\bm{ \mathcal V}_{#1}}		
\newcommand{\indexsetre}[1]{\setV_{#1}}		
\safemath{\indexsetsre}{\{ \indexsetre \nu \}}
\safemath{\binind}{\setI}			
\safemath{\pool}{\bm{ \mathcal P}}			
\safemath{\poolre}{\setP}			
\newcommand{\poolel}[1]{\bmuP ( #1 )}	
\safemath{\rcode}{\bm{ \mathcal C}}			
\newcommand{\cwind}[2]{U_{#1} ( #2 )}	
\safemath{\eptyp}{\setT^{(n)}_\epsilon}	
\newcommand{\idset}[1]{\bm{ \mathcal D}_{#1}}	
\newcommand{\idsetre}[1]{\setD_{#1}}	
\safemath{\dist}{\mathbb{P}}			
\safemath{\tdist}{\tilde \dist}			
\newcommand{\distof}[1]{\dist [ #1 ]}	
\newcommand{\bigdistof}[1]{\dist \bigl[ #1 \bigr]}	
\newcommand{\Bigdistof}[1]{\dist \Bigl[ #1 \Bigr]}	
\newcommand{\biggdistof}[1]{\dist \biggl[ #1 \biggr]}	
\newcommand{\Biggdistof}[1]{\dist \Biggl[ #1 \Biggr]}	
\newcommand{\distsubof}[2]{\dist_{#1} [ #2 ]}	
\newcommand{\bigdistsubof}[2]{\dist_{#1} \bigl[ #2 \bigr]}	
\newcommand{\Bigdistsubof}[2]{\dist_{#1} \Bigl[ #2 \Bigr]}	
\newcommand{\biggdistsubof}[2]{\dist_{#1} \biggl[ #2 \biggr]}	
\newcommand{\Biggdistsubof}[2]{\dist_{#1} \Biggl[ #2 \Biggr]}	
\safemath{\ry}{\setY}			
\safemath{\rz}{\setZ}			
\safemath{\rw}{\setW}			
\newcommand{\channely}[2]{W_\ry ( #1 | #2 )} 
\newcommand{\channelz}[2]{W_\rz ( #1 | #2 )} 
\newcommand{\channelytild}[2]{\widetilde W_\ry ( #1 | #2 )} 
\newcommand{\channelztild}[2]{\widetilde W_\rz ( #1 | #2 )} 
\newcommand{\sigof}[1]{\sigma ( #1 )}	
\newcommand{\card}[1]{| #1 |}			
\newcommand{\Ex}[2]{\ensuremath{\Exop_{#1} [#2]}} 	
\newcommand{\bigEx}[2]{\ensuremath{\Exop_{#1} \bigl[#2\bigr]}} 	
\newcommand{\BigEx}[2]{\ensuremath{\Exop_{#1} \Bigl[#2\Bigr]}} 	
\newcommand{\biggEx}[2]{\ensuremath{\Exop_{#1} \biggl[#2\biggr]}} 	
\newcommand{\BiggEx}[2]{\ensuremath{\Exop_{#1} \Biggl[#2\Biggr]}} 	
\begin{document}
%
%
\selectlanguage{USenglish}
\pagenumbering{arabic}
%
\title{Identification via the Broadcast Channel}
  
\author{Annina Bracher and Amos Lapidoth}

\maketitle

\huge
\begin{abstract}
  \setcounter{page}{1}
  \normalsize
  \vspace{0.5cm}
  
  \let\thefootnote\relax\footnotetext{The results in this paper were presented in part at the IEEE International Symposium on Information Theory (ISIT), Honolulu, USA, Jun.\ 2014.}
  \let\thefootnote\relax\footnotetext{A.~Bracher is with Swiss Reinsurance Company Ltd, Mythenquai~50, 8022 Zurich, Switzerland (e-mail: annina\_bracher@swissre.com).\newline A.~Lapidoth is with the Signal and Information Processing Laboratory, ETH Zurich, 8092 Zurich, Switzerland (e-mail: lapidoth@isi.ee.ethz.ch).}

The identification (ID) capacity region of the two-receiver broadcast channel (BC) is shown to be the set of rate-pairs for which, for some distribution on the channel input, each receiver's ID rate does not exceed the mutual information between the channel input and the channel output that it observes. Moreover, the capacity region's interior is achieved by codes with deterministic encoders. The results are obtained under the average-error criterion, which requires that each receiver reliably identify its message whenever the message intended for the other receiver is drawn at random. They hold also for channels whose transmission capacity region is to-date unknown. Key to the proof is a new ID code construction for the single-user channel. Extensions to the BC with one-sided feedback and the three-receiver BC are also discussed: inner bounds on their ID capacity regions are obtained, and those are shown to be in some cases tight.

\end{abstract}
\normalsize

\section{Introduction} \label{sec:intro}

In Shannon's classical transmission problem the encoder transmits a message from a message set $\setM$ of size $|\setM|$ over a discrete memoryless channel (DMC) $\channel y x$, and the receiver guesses the transmitted message based on the channel's outputs. The guess can be any of the $| \setM |$ messages in the set $\setM$, and the receiver thus faces a hypothesis-testing problem with $| \setM |$ hypotheses. Loosely speaking, we say that a transmission scheme is reliable if, irrespective of the transmitted message $m$, the receiver guesses correctly with high probability. Ahlswede and Dueck's identification-via-channels problem \cite{ahlswededueck89} is different. Here the encoder sends an identification (ID) message from a set $\setM$, and $\card \setM$ receiving parties observe the channel outputs. Each party is focused on a different message $m^\prime \in \setM$. The \emph{$m^\prime$-focused receiving party} must guess whether or not Message~$m^\prime$ was sent. It thus faces a hypothesis-testing problem with only two hypotheses. Loosely speaking, we say that an identification scheme is reliable if, for every possible transmitted ID message $m \in \setM$ and for every $m^\prime \in \setM$ (possibly equal to $m$), the $m^\prime$-focused receiving party guesses correctly with high probability. That is, if $m^\prime$ equals the transmitted ID message $m$, then the $m^\prime$-focused receiving party guesses with high probability that $m^\prime$ was sent, and otherwise it guesses with high probability that $m^\prime$ was not sent.\footnote{The corresponding error events are called \emph{missed identification} and \emph{wrong identification}: a missed identification occurs if $m^\prime = m$ and the $m^\prime$-focused receiving party guesses that $m^\prime$ was not sent, and a wrong identification occurs if $m^\prime \neq m$ and the $m^\prime$-focused receiving party guesses that $m^\prime$ was sent. The identification scheme is reliable if the maximum probabilities of missed and wrong identification are small, where the maximum is w.r.t.\ $m$ for the probability of missed identification and w.r.t.\ the distinct pair $m, \, m^\prime$ for the probability of wrong identification.}

In Shannon's problem the number of messages that can be transmitted reliably is exponential in the number of channel uses, and the transmission rate is thus defined as the logarithm of the number of transmission messages normalized by the blocklength~$n$. In Ahlswede and Dueck's ID problem the number of identifiable messages is double-exponential, and the ID rate is thus defined as the \emph{iterated} logarithm of the number of ID messages normalized by~$n$. The suprema of achievable rates for the two problems are identical: both the transmission and the ID capacity equal $C$, where $C = \max_{P} \muti {P}{W}$ \cite{shannon48, ahlswededueck89, hanverdu92}.

The two problems also differ in the role of randomization at the encoder. Whether or not stochastic encoders are allowed does not influence the transmission capacity. However, stochastic encoders are essential for achieving the ID capacity. Such encoders associate with each ID message a distribution on the channel-input sequence and send ID Message~$m$ by generating the channel-input sequence according to the distribution associated with $m$. If we only allow deterministic encoders, then the number of identifiable messages grows only exponentially in the blocklength.\footnote{For ID codes with deterministic encoders, the ID rate is defined as the logarithm of the number of ID messages normalized by $n$, and the supremum of all achievable ID rates is the logarithm of the number of distinct probability mass functions (PMFs) $\channel {\cdot}{x}$ on the channel output that are induced by the different channel-input symbols $x \in \setX$ \cite{ahlswededueck89}.} Throughout this paper we allow stochastic encoders, but for our main achievability result (Theorem~\ref{th:IDBC}) they are unnecessary.\\

The present paper studies identification via a two-receiver broadcast channel (BC) $\channel {y,z} x$ whose transmitting terminal is Terminal~$\setX$ and whose receiving terminals are $\ry$ and $\rz$. The sender wishes to send two ID messages, one to each receiving terminal. The received sequence at Terminal~$\ry$ is observed by different parties, each of which is focused---among all the possible ID messages intended for Terminal~$\ry$---on a different ID message. Likewise for Terminal~$\rz$. We show that the ID capacity region of the BC is the set of rate-pairs for which, for some distribution on the channel input, each receiver's ID rate does not exceed the mutual information between the channel input and the channel output that it observes (Theorem~\ref{th:IDBC}). The converse we provide is a strong converse.

Our results are obtained under the average-error criterion. Under this criterion, the ID messages $M_\ry$ and $M_\rz$ to the two receiving terminals are assumed to be independent with each being uniform over its message set ($\setM_\ry$ or $\setM_\rz$), and each receiver must identify the message intended for it reliably in expectation over the ID message intended for the other receiving terminal. Loosely speaking, we thus say that an identification scheme is reliable under the average-error criterion if the following two requirements are met: 1) for all (possibly equal) $m_\ry, \, m_\ry^\prime \in \setM_\ry$, if the ID message that is sent to Terminal~$\ry$ is $m_\ry$ and the ID message that is sent to Terminal~$\rz$ is drawn uniformly over $\setM_\rz$, then the $m^\prime_\ry$-focused receiving party guesses correctly with high probability whether or not $m_\ry$ is equal to $m_\ry^\prime$; and 2) likewise for all $m_\rz, \, m_\rz^\prime \in \setM_\rz$.\footnote{The average-error criterion for identification via the BC should not be confused with the average-error criterion for identification via the DMC. On the DMC the average-error criterion requires that for every $m^\prime \in \setM$ the probability of wrong identification associated with the pair $m, \, m^\prime$ be small on average over all possible realizations $m \neq m^\prime$ of the transmitted ID message. Han and Verd\'u showed that under this criterion the ID capacity is infinite whenever $C > 0$ \cite{hanverdu92}. This holds because the stochastic encoder can associate the same distribution on the channl-input sequence with an infinite number of ID messages while guaranteeing that the probability of missed identification and the average (but not the maximum) probability of wrong identification be small at each receiving party. The average-error criterion for the BC, which we consider in this paper, is different: For Terminal~$\ry$ it requires that the probability of wrong identification associated with any distinct pair $m_\ry, \, m_\ry^\prime \in \setM_\ry$ be small; the term ``average'' refers to the fact that the probabilities of missed and wrong identification at Terminal~$\ry$ are defined on average over all possible realizations $m_\rz \in \setM_\rz$ of the ID message that is sent to Terminal~$\rz$. Likewise for Terminal~$\rz$.}

Identification via the BC was previously studied in \cite{verbovenmeulen90, biliksteinberg01,oohama03,ahlswede08} under a different criterion, namely, the maximum-error criterion. Under this criterion each receiver must identify its message reliably irrespective of the realization of the ID message intended for the other receiver. Loosely speaking, we thus say that an identification scheme is reliable under the maximum-error criterion if for all transmitted ID message-pairs $( m_\ry, m_\rz ) \in \setM_\ry \times \setM_\rz$ the following two requirements are met: 1) for every $m_\ry^\prime \in \setM_\ry$ (possibly equal to $m_\ry$), the $m^\prime_\ry$-focused receiving party guesses correctly with high probability whether or not $m_\ry$ is equal to $m_\ry^\prime$; and 2) likewise for every $m_\rz^\prime$- focused receiving party at Terminal~$\setZ$.

The maximum-error ID capacity region of the BC is still unknown (but see \cite{ahlswede08} and our discussion in Section~\ref{sec:maxError} of the case where an additional constraint is imposed on the decay to zero as a function of the blocklength of the probability of error). Clearly, the average-error ID capacity region is an outer bound, but whether this bound is tight is unknown. To-date, the best known inner bound on the maximum-error ID capacity region of the BC is the ``common-randomness capacity region'' of the BC \cite{ahlswede08}. This inner bound is achieved by a common-randomness ID code, which---like that of \cite{ahlswededueckfb89} for the DMC---uses a transmission code to establish common randomness between the encoder and each decoder. As we shall see, the average-error ID capacity region of the BC typically exceeds this inner bound (Remark~\ref{re:commRandIBStrCont}), but this, of course, does not imply that it exceeds the maximum-error ID capacity region. We do know that the capacity regions differ when only deterministic encoders are allowed, because, unlike the maximum-error ID capacity region (or, for that matter, the single-user channel), all rate-pairs in the interior of the average-error ID capacity region can be achieved by deterministic encoders (Remark~\ref{re:detIDCodeOptBC}). This is perhaps not surprising, because to each receiver such a deterministic encoder appears stochastic: the transmitted sequence depends not only on the ID message addressed to it but also on the random ID message (of positive rate) addressed to the other terminal.\\

To derive our capacity region, we introduce a new capacity-achieving ID code construction for the single-user channel. Our coding scheme for the BC builds on this by making it appear to each receiver as though we were using an instance of the new single-user ID code on its marginal channel. We next describe the new single-user coding scheme, which is reminiscent of \cite{ahlswededueck89} but with an important twist that is key to our results. We then describe our scheme for the BC. 

For a DMC $\channel y x$ the new scheme can be described as follows: Fix an input distribution $P$, an ID rate $R < \muti {P}{W}$, and some blocklength $n$. The scheme associates with each ID message $m$ a multiset we call ``the $m$-th bin'' and whose elements are $n$-tuples (not necessarily distinct) of channel inputs.\footnote{A multiset is a generalized set that allows multiple instances of its elements, e.g., $\{ 1, 2, 3, 4 \}$ and $\{ 1, 1, 2, 3, 4, 4, 4 \}$ are different multisets. The size of a multiset is the number of elements that it contains. The size of the multiset $\{ 1, 2, 3, 4 \}$ is thus four and that of $\{ 1, 1, 2, 3, 4, 4, 4 \}$ is seven. If $X$ is chosen uniformly at random from a multiset, then $\distof {X = x}$ is proportional to the number of instances of $x$ in the set. For example, if $X$ is chosen uniformly at random from the multiset $\{ 1, 1, 2, 3, 4, 4, 4 \}$, then $\distof {X = 1} = 2/7$.} To send the $m$-th ID message, the (stochastic) encoder sends a random element of this bin. At the receiver's side, the $m^\prime$-focused receiving party guesses that $m^\prime$ was sent if at least one element of the $m^\prime$-th bin is jointly typical with the received $n$-tuple of channel outputs. To construct the bins we use a random coding argument, with each bin having expected size $e^{ n \tilde R }$, where $\tilde R$ exceeds the ID rate $R$, but is smaller than $\muti {P}{W}$,
\begin{IEEEeqnarray}{C}
R < \tilde R < \muti {P}{W}. \label{eq:constraintDMC}
\end{IEEEeqnarray}

The bins are constructed at random from a size $e^{n R_\poolre}$ multiset that we call ``pool'' and whose elements are $n$-length input sequences. Here $R_\poolre$ can be any number exceeding $\tilde R$, possibly even exceeding $\muti {P}{W}$, so, by \eqref{eq:constraintDMC},
\begin{IEEEeqnarray}{C}
\tilde R < \muti {P}{W} \quad \textnormal{and} \quad R < \tilde R < R_\poolre.
\end{IEEEeqnarray}
We construct every bin by randomly selecting its elements from the pool, with the $n$-tuples in the pool being selected for inclusion in the $m$-th bin independently each with probability $e^{-n (R_\poolre - \tilde R)}$. Since the pool is of size $e^{n R_\poolre} \!$, each bin is a multiset of expected size $e^{n \tilde R}$. The elements of the pool are drawn independently $\sim P^n$. As we shall see, the generated ID code is with high probability reliable (Section~\ref{sec:IDCodingTechnique}).

Our above scheme is reminiscent of the one in \cite{ahlswededueck89}: every ID message is associated with a bin, and in both schemes the bins are chosen at random from a pool. The main difference is that in our scheme the pool need not constitute a codebook that is reliable in Shannon's sense. Indeed, our pool is of size $e^{n R_\poolre} \!$, where $R_\poolre$ can exceed $\muti {P}{W}$ or even $C$. This flexibility in choosing $R_\poolre$ will be critical on the BC.

The scheme we propose for the BC $\channel {y,z} x$ is motivated by the single-user scheme. Denote by $\channely y x = \sum_z \channel {y,z} x$ and $\channelz z x = \sum_y \channel {y,z} x$ the marginal channels. Fix an input distribution $P$, positive ID rates
\begin{IEEEeqnarray*}{C}
0 < R_\ry < \muti {P}{W_\ry}, \\
0 < R_\rz < \muti {P}{W_\rz},
\end{IEEEeqnarray*}
and some blocklength $n$. We first consider the receivers' side, because in their decoding the receivers follow the single-user scheme. Like the single-user scheme, the scheme for the BC associates with each ID message $m_\ry \in \setM_\ry$ a multiset we call the $m_\ry$-th bin and whose elements are $n$-tuples of channel inputs, and likewise with each ID message $m_\rz \in \setM_\rz$. The $m_\ry^\prime$-focused receiving party at Terminal~$\ry$ guesses that $m_\ry^\prime$ was sent if at least one element of the $m_\ry^\prime$-th bin is jointly typical with the sequence it observes, and likewise at Terminal~$\rz$. The encoding, however, is different from the single-user scheme. In fact, our encoder for the BC is deterministic: it maps each ID message-pair $(m_\ry,m_\rz)$ to an $n$-tuple of channel inputs we call the ``$(m_\ry, m_\rz)$-codeword.'' (The $(m_\ry, m_\rz)$-codeword is in the intersection of the $m_\setY$-th and the $m_\setZ$-th bins, whenever the intersection is not empty.) 
We design the codewords and the bins using a random coding argument.

Our goal in designing the codewords and the bins is that to each receiver it would appear as though its intended ID message were sent over its marginal channel using the single-user scheme. More precisely, we want the following to hold: 1) if the ID message that is sent to Terminal~$\ry$ is $m_\ry \in \setM_\ry$ and the ID message that is sent to Terminal~$\rz$ is drawn uniformly over $\setM_\rz$, then the transmitted codeword is nearly uniformly distributed over the $m_\ry$-th bin (in terms of Total-Variation distance); and 2) likewise for $m_\rz \in \setM_\rz$. If 1) and 2) hold, then to each receiver it nearly appears as though we were using an instance of the new single-user ID code on its marginal channel: if we view the ID message that is sent to Terminal~$\rz$ as uniformly-drawn, then the encoder communicates with Terminal~$\ry$ ``essentially'' using our reliable single-user scheme, and likewise with Terminal~$\setZ$. To prove that the design goal can be met, we shall use a random coding argument.

The bins are constructed as in the single-user scheme: We construct all the bins---those associated with an ID message $m_\ry \in \setM_\ry$ or $m_\rz \in \setM_\rz$---from a multiset we call pool. The pool has size $e^{n R_\poolre}$, and each bin associated with an ID message $m_\ry \in \setM_\ry$ or $m_\rz \in \setM_\rz$ has expected size $e^{n \tilde R_\ry}$ or $e^{n \tilde R_\rz}$, respectively. The pool and the bins are generated as in the single-user construction, and $R_\poolre$, $\tilde R_\ry$, and $\tilde R_\rz$ meet similar constraints, so
\begin{IEEEeqnarray*}{C}
\tilde R_\ry < \muti {P}{W_\ry} \quad \textnormal{and} \quad R_\ry < \tilde R_\ry < R_\poolre, \\
\tilde R_\rz < \muti {P}{W_\rz} \quad \textnormal{and} \quad R_\rz < \tilde R_\rz < R_\poolre.
\end{IEEEeqnarray*}
Additionally, we impose the constraint
\begin{IEEEeqnarray}{C}
R_\poolre < \tilde R_\ry + \tilde R_\rz. \label{eq:additionalConstraintBC}
\end{IEEEeqnarray}
(The constraints can all be met, because $R_\ry$ and $R_\rz$, and thus also $\muti {P}{W_\ry}$ and $\muti {P}{W_\rz}$, are positive.) The additional constraint \eqref{eq:additionalConstraintBC} has no counterpart in the single-user setting. It restricts the size of the pool in order to guarantee that with high probability the $m_\ry$-th bin and the $m_\rz$-th bin intersect and that consequently the $(m_\ry,m_\rz)$-codeword will be in both bins. If the $(m_\ry,m_\rz)$-codeword is not in this intersection, then, to at least one of the two receivers, it won't appear as though the $n$-tuple of channel inputs were drawn uniformly over the bin associated with its intended ID message. And if this happens to too many pairs $(m_\ry,m_\rz)$, our scheme will fail.

As to the design of the codewords, if the $m_\ry$-th and the $m_\rz$-th bins intersect, then we draw the $(m_\ry,m_\rz)$-codeword uniformly at random from the intersection, and otherwise we draw it uniformly at random from the pool. As we shall see, the generated ID code meets our design goals with high probability (see Section~\ref{sec:DPIDBC}; key to the proof is that the size of each bin is exponential in $n$ while the cardinalities of $\setM_\ry$ and $\setM_\rz$ are double-exponential).



The flexibility afforded by our single-user scheme to choose a pool of size $e^{n R_\poolre} \!$, where $R_\poolre$ can be larger than $\muti {P}{W_\ry}$ or $\muti {P}{W_\rz}$, is crucial to our BC scheme. To see why, consider for now a BC $\channel {y,z} x$ and an input distribution $P$ for which $$\muti {P}{W_\rz} < \muti {P}{W_\ry}.$$ If the pool had been of size $e^{n R_\poolre}$ for some $R_\poolre \leq \muti {P}{W_\rz}$, then at most $\exp \bigl( \exp \bigl(n \muti {P}{W_\rz}\bigr) \bigr)$ different bins could have been constructed from the pool, and the BC scheme would have thus failed for $R_\ry > \muti {P}{W_\rz}$, because in this case the number of possible ID messages intended for Receiver~$\ry$ would have exceeded the number of different bins. The pool rate $R_\poolre$ must therefore exceed $\muti {P}{W_\rz}$, and hence the pool cannot consist of a codebook that is reliable in the Shannon sense on the marginal channel $\channelz z x$. It is the possibility of choosing $R_\poolre > \muti {P}{W_\rz}$ that allows our BC scheme to achieve every rate-pair $( R_\ry, R_\rz )$ satisfying
\begin{IEEEeqnarray}{C}
0 < R_\ry < \muti {P}{W_\ry} \quad \textnormal{ and } \quad  0 < R_\rz < \muti {P}{W_\rz}, \label{eq:introAchBC}
\end{IEEEeqnarray}
even when $R_\ry > \muti {P}{W_\rz}$.\\


The average-error criterion, which we consider in this paper, is suitable whenever the receivers' ID messages are independent and uniform over their supports. As we shall see, we can adapt our coding scheme to solve for the capacity region of a more general scenario where the receivers' ID messages are not independent but have a common part. In this scenario the ID message intended for Terminal~$\ry$ is a tuple comprising a private message of rate $R_\ry$ and a common message of rate $R$, and likewise for Terminal~$\rz$.\footnote{One can view the common-message setting of the transmission problem via the BC as a scenario where the encoder conveys one message to each receiver, but each receiver's message comprises a private and a common part.} The common messages are identicial, and the private messages are independent, uniformly distributed on their supports, and independent of the common message. We assume that all rates are positive and require that each receiver identify its message reliably in expectation over the other receiver's private message. For this scenario, we show that the ID capacity region of the BC is the set of rate-triples $(R,R_\ry,R_\rz)$ satisfying
\begin{IEEEeqnarray}{C}
0 < R, R_\ry < \muti {P}{W_\ry} \quad \textnormal{ and } \quad  0 < R, R_\rz < \muti {P}{W_\rz} \label{eq:introAchBCCM}
\end{IEEEeqnarray}
for some input distribution $P$ (Theorem~\ref{th:IDBCCM}).\footnote{The assumption that $R > 0$ is not needed; it only ensures that there is a common message. The assumption that $R_\ry, \, R_\rz > 0$ is, however, needed: if $R_\ry$, say, is zero, then the imposed average-error criterion will turn into a maximum-error criterion for Receiver~$\setZ$.} Comparing \eqref{eq:introAchBCCM} and \eqref{eq:introAchBC} we see that the common message appears to come for free at all rates up to $\min \bigl\{ \muti {P}{W_\ry}, \muti {P}{W_\rz} \bigr\}$. This can be explained as follows. The ID rate is the iterated logarithm of the number of ID messages normalized by the blocklength~$n$, and for $n$ sufficiently large and for all nonnegative real numbers $R_1$ and $R_2$ $$\exp ( \exp (n R_1) ) \exp ( \exp (n R_2) ) \approx \exp \bigl( \exp \bigl( n \max \{ R_1, R_2 \} \bigr) \bigr).$$ Comparing \eqref{eq:introAchBCCM} and \eqref{eq:introAchBC} we see that the common message appears to come for free at all rates up to $$\min \bigl\{ \muti {P}{W_\ry}, \muti {P}{W_\rz} \bigr\}.$$ A reason for this is that the ID rate of a pair of ID messages is not equal to the sum of the messages' ID rates.


We also discuss extensions to the BC with more than two receivers and the two-receiver BC with one-sided feedback: We inner-bound the ID capacity region of the three-receiver BC (Theorem~\ref{th:ibBC3Rec}) and show that the bound is tight if no receiver is ``much more capable'' than the other two (see Remark~\ref{re:3RecTight} for more details). The ID capacity region of the two-receiver BC with one-sided feedback is established for the case where the channel outputs are independent conditional on the channel input (Corollary~\ref{co:ICBC1FBMarkov}).\\

The rest of this paper is structured as follows. We conclude this section with some notation and with the concentration inequalities that we shall need. Section~\ref{sec:IDCodingTechnique} is dedicated to the new ID code for the DMC. Section~\ref{sec:IDBC} studies identification via the BC. Section~\ref{sec:maxError} compares the average- and the maximum-error criterion. The extensions are presented in Section~\ref{sec:extensions}, and the paper concludes with a brief summary.

\subsection{Notation and Terminology}

On the single-user channel we denote the channel-input alphabet by $\setX$ and the channel-output alphabet by $\setY$. On the two-receiver BC $\setX$ is the channel-input alphabet, $\ry$ is the channel-output alphabet at Terminal~$\setY$, and $\rz$ is the channel-output alphabet at Terminal~$\rz$. All these alphabets are finite. We write $( \setX, \channel y x, \setY )$ or $\channel y x$ for a DMC of transition law $\channel y x$ and $( \setX, \channel {y,z} x, \setY \times \setZ )$ or $\channel {y,z} x$ for a BC of transition law $\channel {y,z} x$. We denote the marginal channel of the BC $\channel {y,z} x$ to Terminal~$\ry$ by $\channely y x$, i.e., $\channely y x = \sum_z \channel {y,z} x$; and likewise $\channelz z x = \sum_y \channel {y,z} x$.

Random variables are denoted by upper-case letters and their realization or the elements of their supports by lower-case letters, e.g., $Y$ denotes the random output of the DMC and $y \in \setY$ a value it may take. The terms \emph{pool} and \emph{bin} are used for indexed multisets of $n$-tuples from $\setX^n$. Pools and bins are denoted by calligraphic letters, and in boldface if they are random, e.g., $\pool$ denotes a random pool and $\poolre$ a possible realization. Sequences are denoted by boldface lower- or upper-case letters depending on whether they are deterministic or random, e.g., $\poolel j$ denotes the $j$-th $n$-tuple in the random pool $\pool$, and $\vecx$ is an $n$-tuple from $\setX^n$. The positive integer $n \in \naturals$ stands for the blocklength, and, unless otherwise specified, sequences are of length $n$. We denote the positive real numbers by $\reals^+$ and the nonnegative real numbers by $\reals^+_0$, so $\reals^+_0 = \reals^+ \cup \{ 0 \}$.

Variables that occur at Time~$i$ have the subscript $i$, so $\rndY_i$ is the Time-$i$ channel output. Sequences of variables that occur in the time-range $j$ to $i$ bear a subscript $j$ and a superscript $i$, where the subscript $j = 1$ may be dropped, e.g., $\rndY_{4}^5$ denotes the forth and fifth output, and $\rndY^n$ denotes all the outputs through Time~$n$.

The set of PMFs on $\setX$ is denoted $\mathscr P ( \setX )$, and its generic element $P$. If the input $X$ of the channel $\channel y x$ is of PMF $P$, then $P \times W$ denotes the joint distribution of $X$ and the channel output $Y$, i.e., $$( P \times W ) ( x,y ) = P ( x ) \channel y x, \quad x \in \setX, \, y \in \setY,$$ and $P W$ denotes the corresponding distribution of $Y$, i.e., $$( P W ) ( y ) = \sum_{x \in \setX} ( P \times W ) ( x,y  ) = \sum_{x \in \setX} P ( x ) \channel y x, \quad y \in \setY.$$

The set of $\epsilon$-typical sequences of length~$n$ w.r.t.\ $P$ is denoted $\setT^{( n )}_{\epsilon} (P)$, i.e., $$\setT^{( n )}_{\epsilon} ( P ) = \Biggl\{ \vecx \in \setX^n \colon \biggl| \frac{N ( x | \vecx )}{n} - P (x) \biggr| \leq \epsilon P (x), \, \forall \, x \in \setX \Biggr\},$$ where $N (x|\vecx)$ is the number of components of the $n$-tuple $\vecx$ that equal $x$. We often write $\setT^{( n )}_{\epsilon}$ instead of $\setT^{( n )}_{\epsilon} (P)$ when $P$ is clear from the context. The empirical type of an $n$-tuple $\vecx \in \setX^n$ is denoted $P_\vecx$, so $P_\vecx (x) = N (x|\vecx) / n, \, x \in \setX$, and $\setT^{(n)}_P$ is the set of all elements of $\setX^n$ of empirical type $P$. We denote the set of $n$-types on $\setX^n$ by $\Gamma^{(n)}$, so $$\Gamma^{(n)} = \bigl\{ P \in \mathscr P (\setX) \colon \setT^{(n)}_P \neq \emptyset \bigr\}.$$ For a given DMC $\channel y x$ and for every $\vecx \in \setX^n$ and $P \in \mathscr P (\setX)$, we denote by $\setT^{( n )}_\epsilon ( P \times W | \vecx )$ the set of $n$-tuples $\vecy \in \setY^n$ that are jointly $\epsilon$-typical with $\vecx$ w.r.t.\ $P \times W$, i.e., $$\setT^{( n )}_\epsilon ( P \times W | \vecx ) = \bigl\{ \vecy \in \setY^n \colon ( \vecx,\vecy ) \in \eptyp (P \times W) \bigr\}.$$ Similarly, for a given BC $\channel {y,z} x$, $\setT^{( n )}_\epsilon ( P \times W_\ry | \vecx )$ is the set of $n$-tuples $\vecy \in \setY^n$ that are jointly $\epsilon$-typical with $\vecx$ w.r.t.\ $P \times W_\ry$, i.e., $$\setT^{( n )}_\epsilon ( P \times W_\ry | \vecx ) = \bigl\{ \vecy \in \setY^n \colon ( \vecx,\vecy ) \in \eptyp (P \times W_\ry) \bigr\};$$ and $\setT^{( n )}_\epsilon ( P \times W_\rz | \vecx )$ is the set of $n$-tuples $\vecz \in \setZ^n$ that are jointly $\epsilon$-typical with $\vecx$ w.r.t.\ $P \times W_\rz$.

A generic probability measure on a measurable space $( \Omega, \setF )$ is denoted $\dist$. If $\dist_1$ and $\dist_2$ are two probability measures on the same measurable space $( \Omega, \setF )$, then the Total-Variation distance $d ( \dist_1, \dist_2 )$ between $\dist_1$ and $\dist_2$ is $$d ( \dist_1, \dist_2 ) = \sup_{\setA \in \setF} \distsubof 1 \setA - \distsubof 2 \setA.$$  We shall only encounter measurable spaces $( \Omega, \setF )$ for which $\Omega$ is finite and $\setF = 2^\Omega$. On such spaces $$d ( \dist_1, \dist_2 ) = \frac{1}{2} \sum_{\omega \in \Omega} \bigl| \dist_1 ( \omega ) - \dist_2 ( \omega ) \bigr|.$$


\subsection{Some Useful Bounds}

We use the following multiplicative Chernoff bounds (see, e.g., \cite[Theorems~4.4 and 4.5]{mitzenmacherupfal05}):\footnote{The bound \eqref{eq:multChernDeltaGeq1} is not stated in \cite{mitzenmacherupfal05}. It is, however, a direct consequence of \cite[Theorem~4.4]{mitzenmacherupfal05} and the fact that $$e^\delta / (1+\delta)^{1+\delta} < e^{-\delta / 3}, \quad \delta \geq 1.$$}

\begin{proposition}\label{pr:multChernoff}
If $S_1, \ldots, S_n$ are independent binary random variables and $$\mu = \BiggEx {}{\sum^n_{i = 1} S_i},$$ then for all $0 < \delta < 1$
\begin{subequations}
\begin{IEEEeqnarray}{rCl}
\Biggdistof {\sum^n_{i = 1} S_i \leq (1 - \delta) \mu} & \leq & \exp \biggl\{ -\frac{\delta^2 \mu}{2} \biggr\}, \label{eq:multChernDeltaSm1Sm} \\
\Biggdistof {\sum^n_{i = 1} S_i \geq (1 + \delta) \mu} & \leq & \exp \biggl\{ -\frac{\delta^2 \mu}{3} \biggr\}, \label{eq:multChernDeltaSm1La}
\end{IEEEeqnarray}
\end{subequations}
and for all $\delta \geq 1$
\begin{IEEEeqnarray}{rCl}
\Biggdistof {\sum^n_{i = 1} S_i \geq (1 + \delta) \mu} & \leq & \exp \biggl\{ -\frac{\delta \mu}{3} \biggr\}. \label{eq:multChernDeltaGeq1}
\end{IEEEeqnarray}
\end{proposition}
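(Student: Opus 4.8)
The plan is to obtain all three inequalities by the classical exponential-moment (Chernoff) method, which reduces each of them to an elementary one-variable estimate.

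First I would handle the upper tails \eqref{eq:multChernDeltaSm1La} and \eqref{eq:multChernDeltaGeq1} together. Writing $S = \sum_{i=1}^n S_i$ and $p_i = \distof{S_i = 1}$, so that $\mu = \sum_i p_i$, I would apply Markov's inequality to $e^{tS}$ for $t > 0$ and use independence to get $\distof{S \geq (1+\delta)\mu} \leq e^{-t(1+\delta)\mu}\prod_{i=1}^n\bigl(1+p_i(e^t-1)\bigr)$. Bounding each factor by $\exp\{p_i(e^t-1)\}$ via $1+x\leq e^x$ and summing the exponents to $\mu(e^t-1)$, then choosing the minimizing $t=\ln(1+\delta)$, yields the core bound $\distof{S\geq(1+\delta)\mu}\leq\bigl(e^\delta/(1+\delta)^{1+\delta}\bigr)^\mu$. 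It then remains only to show $e^\delta/(1+\delta)^{1+\delta}\leq e^{-\delta^2/3}$ for $0<\delta<1$ and the weaker $e^\delta/(1+\delta)^{1+\delta}<e^{-\delta/3}$ for $\delta\geq1$; the latter is exactly the inequality quoted in the footnote. For the lower tail \eqref{eq:multChernDeltaSm1Sm} I would repeat the argument with $e^{-tS}$ in place of $e^{tS}$, arriving after the optimal choice $t=-\ln(1-\delta)$ at $\distof{S\leq(1-\delta)\mu}\leq\bigl(e^{-\delta}/(1-\delta)^{1-\delta}\bigr)^\mu$, so that it suffices to prove $e^{-\delta}/(1-\delta)^{1-\delta}\leq e^{-\delta^2/2}$ for $0<\delta<1$.

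The only genuine content is then the three scalar inequalities, and none is hard: after taking logarithms each asserts that a smooth function that vanishes at $\delta=0$ (or, for \eqref{eq:multChernDeltaGeq1}, stays nonnegative past $\delta=1$) has a constant sign, which follows by differentiating once or twice and invoking $-\ln(1-\delta)\geq\delta$ and $\ln 2 > 2/3$. Since these are completely standard, in the write-up I would simply cite \cite[Theorems~4.4 and 4.5]{mitzenmacherupfal05} for \eqref{eq:multChernDeltaSm1Sm} and \eqref{eq:multChernDeltaSm1La}, and derive \eqref{eq:multChernDeltaGeq1} from \cite[Theorem~4.4]{mitzenmacherupfal05} together with the footnote's inequality $e^\delta/(1+\delta)^{1+\delta}<e^{-\delta/3}$ for $\delta\geq1$.

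Accordingly I expect no real obstacle here; the method is entirely routine and the result is cited. The only point that needs a line of care is checking that the elementary estimate on $e^\delta/(1+\delta)^{1+\delta}$ survives the passage from the regime $\delta<1$, where the exponent is $\delta^2/3$, to the regime $\delta\geq1$, where one must settle for the weaker exponent $\delta/3$ — i.e.\ verifying that $h(\delta):=(1+\delta)\ln(1+\delta)-\tfrac{4}{3}\delta$ is nonnegative for $\delta\geq1$, which is immediate from $h(1)=2\ln 2-\tfrac43>0$ and $h'(\delta)=\ln(1+\delta)-\tfrac13>0$.
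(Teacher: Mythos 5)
Your proposal is correct and takes essentially the same route as the paper: the paper simply cites \cite[Theorems~4.4 and 4.5]{mitzenmacherupfal05} for \eqref{eq:multChernDeltaSm1Sm} and \eqref{eq:multChernDeltaSm1La}, and obtains \eqref{eq:multChernDeltaGeq1} from \cite[Theorem~4.4]{mitzenmacherupfal05} together with the footnote's scalar inequality $e^\delta/(1+\delta)^{1+\delta}<e^{-\delta/3}$ for $\delta\geq1$, which is exactly what you propose. Your verification that $h(\delta)=(1+\delta)\ln(1+\delta)-\tfrac43\delta$ is positive for $\delta\geq1$ (via $h(1)=2\ln 2-\tfrac43>0$ and $h'(\delta)=\ln(1+\delta)-\tfrac13>0$) correctly fills in the one step the paper leaves to the reader.
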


We make frequent use of Hoeffding's inequality:

\begin{proposition}\cite[Theorem~2]{hoeffding63}\label{pr:hoeffding}
If $S_1, \ldots, S_n$ are independent random variables satisfying $S_i \in [a_i, b_i], \, i \in \{ 1, \ldots, n \}$, where $a_i, \, b_i \in \reals$, then for all $t > 0$
\begin{IEEEeqnarray}{rCl}
\Biggdistof {\frac{1}{n} \sum^n_{i = 1} \bigl( S_i - \Ex {}{S_i} \bigr) \geq t} & \leq & \exp \biggl\{ -\frac{2 n^2 t^2}{\sum^n_{i = 1} (b_i - a_i)^2} \biggr\}.
\end{IEEEeqnarray}
\end{proposition}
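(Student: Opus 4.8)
This is the classical Hoeffding inequality, and since the statement is attributed to \cite{hoeffding63} one could simply invoke it; but here is how I would establish it from scratch, via the standard exponential-moment (Chernoff) argument combined with Hoeffding's lemma on the moment generating function of a bounded random variable.

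First I would center the summands. Put $X_i = S_i - \Ex{}{S_i}$, so that the $X_i$ are independent, $\Ex{}{X_i} = 0$, and each $X_i$ takes values in an interval $[\alpha_i,\beta_i]$ with $\beta_i - \alpha_i = b_i - a_i$. Writing $S = \sum_{i=1}^n X_i$, the event in question is $\{S \geq nt\}$. For every $\lambda > 0$, Markov's inequality applied to the nonnegative random variable $e^{\lambda S}$ gives
$$\distof{S \geq nt} = \distof{e^{\lambda S} \geq e^{\lambda nt}} \leq e^{-\lambda nt}\,\Ex{}{e^{\lambda S}} = e^{-\lambda nt}\prod_{i=1}^n \Ex{}{e^{\lambda X_i}},$$
the last step by independence.

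The key step is Hoeffding's lemma: a zero-mean random variable $X$ taking values in $[\alpha,\beta]$ satisfies $\Ex{}{e^{\lambda X}} \leq \exp\{\lambda^2(\beta-\alpha)^2/8\}$. I would prove this by convexity of $x \mapsto e^{\lambda x}$: for $x \in [\alpha,\beta]$ one has $e^{\lambda x} \leq \tfrac{\beta - x}{\beta-\alpha}e^{\lambda\alpha} + \tfrac{x-\alpha}{\beta-\alpha}e^{\lambda\beta}$; taking expectations and using $\Ex{}{X}=0$ yields $\Ex{}{e^{\lambda X}} \leq e^{\varphi(u)}$, where $u = \lambda(\beta-\alpha)$, $p = -\alpha/(\beta-\alpha) \in [0,1]$, and $\varphi(u) = -pu + \log(1-p+pe^u)$. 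A short computation shows $\varphi(0) = \varphi'(0) = 0$ and $\varphi''(u) = \rho(1-\rho) \leq 1/4$ with $\rho = pe^u/(1-p+pe^u) \in [0,1]$, so Taylor's theorem with Lagrange remainder gives $\varphi(u) \leq u^2/8$, which is the claim.

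Applying the lemma to each $X_i$ (support width $b_i - a_i$) and substituting into the product bound gives
$$\distof{S \geq nt} \leq \exp\Bigl\{-\lambda nt + \tfrac{\lambda^2}{8}\textstyle\sum_{i=1}^n (b_i-a_i)^2\Bigr\}.$$
The exponent is a convex quadratic in $\lambda$, minimized over $\lambda > 0$ at $\lambda^\star = 4nt/\sum_{i=1}^n (b_i-a_i)^2$; plugging this in yields exactly $-2n^2t^2/\sum_{i=1}^n (b_i-a_i)^2$, as claimed. The only nontrivial ingredient is Hoeffding's lemma — specifically the bound $\varphi'' \leq 1/4$ — while the rest (centering, the Chernoff step, and optimizing the quadratic) is routine bookkeeping.
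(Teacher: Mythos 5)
Your proof is correct. The paper does not provide a proof of this proposition at all — it simply cites it as Theorem~2 of Hoeffding's 1963 paper — so there is nothing in the paper to compare against; but what you have written is the standard Chernoff/Hoeffding argument (centering, Markov on the exponential moment, Hoeffding's lemma with the $\varphi''\leq 1/4$ bound, and quadratic optimization over $\lambda$), and every step, including the final arithmetic $-\lambda^\star nt + (\lambda^\star)^2 \sum_i (b_i-a_i)^2/8 = -2n^2t^2/\sum_i(b_i-a_i)^2$, checks out.
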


More general versions of this inequality can be found in \cite[Corollary~2.4.7]{dembozeitouni98} or \cite[Theorem~3.24]{rosspekoz07}.

\section{A Capacity-Achieving ID Code for the DMC}\label{sec:IDCodingTechnique}

In this section we present our capacity-achieving ID code for the DMC $\bigl( \setX, \channel y x, \setY \bigr)$. We begin with the basic definitions of an ID code \cite{ahlswededueck89} and with the capacity theorem.

\begin{defin}\label{def:IDCodeDMC}
Fix a finite set $\setM$, a blocklength $n \in \naturals$, and positive constants $\lambda_1, \, \lambda_2$. Associate with every ID message $m \in \setM$ a PMF $Q_{m}$ on $\setX^n$ and an ID set $\idsetre {m} \subset \setY^n$. The collection of tuples $\{ Q_{m}, \idsetre m \}_{m \in \setM}$ is an $( n, \setM, \lambda_1, \lambda_2 )$ ID code for the DMC $\channel y x$ if the maximum probability of missed identification
\begin{IEEEeqnarray}{rCl}
p_{\textnormal{missed-ID}} & = & \max_{m \in \setM} ( Q_m W^n ) ( Y^n \notin \idsetre m )
\end{IEEEeqnarray}
and the maximum probability of wrong identification
\begin{IEEEeqnarray}{rCl}
p_{\textnormal{wrong-ID}} & = & \max_{m \in \setM} \max_{m^\prime \neq m} ( Q_m W^n ) ( Y^n \in \idsetre {m^\prime} )
\end{IEEEeqnarray}
satisfy
\begin{IEEEeqnarray}{rCl}
p_{\textnormal{missed-ID}} & \leq & \lambda_1, \\
p_{\textnormal{wrong-ID}} & \leq & \lambda_2.
\end{IEEEeqnarray}
A rate $R$ is achievable if for every positive $\lambda_1$ and $\lambda_2$ and for every sufficiently-large blocklength $n$ there exists an $( n, \setM, \lambda_1, \lambda_2 )$ ID code for the DMC with
\begin{IEEEeqnarray*}{rrCll}
& \tfrac{1}{n} \log \log | \setM | & \geq & R \quad &\textnormal{if } R > 0, \\*[-0.625\normalbaselineskip]
\smash{\left\{
\IEEEstrut[6.39\jot]
\right.} \nonumber
\\*[-0.625\normalbaselineskip]
& |\setM| & = & 1 \quad &\textnormal{if } R = 0.
\end{IEEEeqnarray*}
The ID capacity $C$ of the DMC is the supremum of all achievable rates.
\end{defin}

The ID capacity was established in \cite{ahlswededueck89,hanverdu92}: Ahlswede and Dueck \cite{ahlswededueck89} proved the direct part and a soft converse, which holds for error probabilities that decay exponentially in the blocklength. The strong converse, which holds for all probabilities of missed and wrong identification satisfying $\lambda_1 + \lambda_2 < 1$, is due to Han and Verd\'u \cite{hanverdu92}.

\begin{theorem}\cite[Theorem~1]{ahlswededueck89} and \cite[Theorem~2]{hanverdu92} \label{th:IDDMC}
The ID capacity $C$ of the DMC $\channel y x$ is
\begin{IEEEeqnarray}{rCl}
C & = & \max_{P} \muti{P}{W}.
\end{IEEEeqnarray}
\end{theorem}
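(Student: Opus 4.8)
The plan is to prove Theorem~\ref{th:IDDMC} in two halves: the achievability (direct) part and the strong converse. For the converse I would simply invoke \cite[Theorem~2]{hanverdu92}, since reproving the Han--Verd\'u strong converse from scratch is beyond the scope here; the substance of this section is the new achievability construction described in the introduction, which I now sketch. Fix an input distribution $P$ achieving the maximum, fix $R < \muti{P}{W}$, and choose auxiliary rates with $R < \tilde R < R_\poolre$ and $\tilde R < \muti{P}{W}$ (such a choice is possible, and one may even take $R_\poolre$ large). The goal is to exhibit, for every $\lambda_1,\lambda_2 > 0$ and all sufficiently large $n$, an $(n,\setM,\lambda_1,\lambda_2)$ ID code with $\tfrac1n \log\log|\setM| \geq R$; since $P$ and $R < C$ are arbitrary, this gives $C \geq \max_P \muti{P}{W}$, and combined with the converse it gives equality.

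First I would set up the random construction. Let $|\setM| = \lceil e^{e^{nR}} \rceil$. Draw a pool $\pool = (\poolel 1, \ldots, \poolel{L})$ with $L = \lceil e^{n R_\poolre}\rceil$, the $\poolel j$ i.i.d.\ $\sim P^n$. For each $m \in \setM$ form the bin $\bmuB_m$ by independently including each pool index $j$ in $\bmuB_m$ with probability $e^{-n(R_\poolre - \tilde R)}$, all choices independent across $j$ and across $m$. The encoder $Q_m$ is the uniform distribution over $\bmuB_m$ (viewed as a multiset of $n$-tuples), and the ID set is $\idsetre m = \bigcup_{\vecx \in \bmuB_m} \setT^{(n)}_\epsilon(P \times W \mid \vecx)$, i.e., the receiver focused on $m'$ accepts iff the received $\vecy$ is jointly $\epsilon$-typical with some element of $\bmuB_{m'}$. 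I would then bound the expected probabilities of missed and wrong identification and show the bin sizes concentrate, so that a good realization exists.

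The three estimates to carry out are: (i) \emph{bin size concentration} --- $|\bmuB_m|$ is a sum of $L$ independent Bernoulli's with mean $\mu = L e^{-n(R_\poolre-\tilde R)} \doteq e^{n\tilde R}$, so by the multiplicative Chernoff bounds (Proposition~\ref{pr:multChernoff}) the probability that $|\bmuB_m|$ deviates from $e^{n\tilde R}$ by a constant factor is doubly-exponentially small, hence a union bound over the doubly-exponentially-many $m \in \setM$ still goes to zero provided we arrange the exponents correctly (this forces the bin exponent $\tilde R$ to beat the message exponent $R$, i.e.\ $R < \tilde R$); (ii) \emph{missed-ID} --- conditioned on the pool and on $\bmuB_m$ being nonempty, a uniform element $\vecX$ of $\bmuB_m$ is (close to) a fresh $P^n$ draw, so $\Pr[(\vecX, Y^n) \notin \eptyp] $ is exponentially small by the joint AEP/typicality lemma, uniformly in $m$; (iii) \emph{wrong-ID} --- for fixed $m \neq m'$ and a fixed transmitted $\vecx \in \bmuB_m$, the received $\vecy \in \setT^{(n)}_\epsilon(P\times W\mid\vecx)$ lies in $\idsetre{m'}$ only if some $\poolel j$ with $j \in \bmuB_{m'}$ is jointly typical with $\vecy$; since $|\setT^{(n)}_\epsilon(P\mid \text{given }\vecy)| \doteq e^{n(H(X) - \muti{P}{W})}$ out of the whole pool, the expected number of pool elements jointly typical with $\vecy$ is $\doteq e^{n(R_\poolre - \muti{P}{W})}$, and the expected number of \emph{those} that land in $\bmuB_{m'}$ is $\doteq e^{n(R_\poolre - \muti{P}{W})} \cdot e^{-n(R_\poolre - \tilde R)} = e^{-n(\muti{P}{W} - \tilde R)} \to 0$ because $\tilde R < \muti{P}{W}$. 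Averaging over $\vecx, \vecy$ and taking expectations, $\Ex{}{p_{\textnormal{wrong-ID}}}$ is small; a careful union/Markov argument over all distinct pairs $(m,m')$ — again affordable since each bin's relevant count is exponentially small while there are only doubly-exponentially many pairs — then yields a realization with $p_{\textnormal{wrong-ID}} \leq \lambda_2$ simultaneously with $p_{\textnormal{missed-ID}} \leq \lambda_1$.

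The main obstacle is estimate (iii) combined with the union bound over pairs: one must show the wrong-identification count is not merely small in expectation but concentrated enough (or small enough in expectation, divided by $\lambda_2$ via Markov and a union bound, using that the bad events for different $(m,m')$ can be controlled) that after throwing away a vanishing fraction of bad bins one still has $|\setM|$ good ones --- and the bookkeeping must keep $R_\poolre$ free (it need not be below $\muti{P}{W}$, since it cancels), which is exactly the ``twist'' that will later be exploited on the BC. The remaining steps (i)--(ii) are routine concentration and typicality arguments; the delicate point is arranging the chain of rate inequalities $R < \tilde R < \muti{P}{W}$ (with $R_\poolre$ arbitrary subject to $R_\poolre > \tilde R$) so that every union bound's doubly-exponential cardinality is beaten by a doubly-exponentially small per-event probability.
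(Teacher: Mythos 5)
Your construction, decoder, and rate bookkeeping ($R<\tilde R<\muti{P}{W}$ with $R_\poolre>\tilde R$ otherwise free) are exactly those of the paper, and your three estimates correspond to its Lemma~\ref{le:prEventEToZero} (bin-size and intersection concentration), the missed-ID bound, and the wrong-ID bound, so the approach is essentially identical and sound. One caution on execution: the ``Markov on the expectation plus union bound'' phrasing would not survive the union over $\exp(2e^{nR})$ pairs --- as you yourself note, you need per-pair (and per-message, also for missed-ID) failure probabilities that are \emph{doubly}-exponentially small, which the paper obtains by applying Hoeffding's inequality to the average of the $e^{n\tilde R}$ i.i.d.\ per-codeword error terms within a bin, after first conditioning on the index-sets (not on the pool realization) and separating out the pairwise bin intersection $\indexset{m}\cap\indexset{m^\prime}$, whose relative size is controlled separately.
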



Fix any positive ID rate $R$ satisfying
\begin{equation}
0 < R < \max_{P} \muti{P}{W},
\end{equation}
and let $\setM$ be a size-$\exp (\exp (n R))$ set of possible ID messages. We assume that $\max_P \muti P W$ is positive, because rate $R = 0$ is always achievable (see Definition~\ref{def:IDCodeDMC}). We next describe our random code construction and show that, for every positive $\lambda_1$ and $\lambda_2$ and for every sufficiently-large blocklength $n$, it produces with high probability an $( n,\setM,\lambda_1,\lambda_2 )$ ID code for the DMC $\channel y x$.

\subparagraph*{Code Generation:} Choose a PMF $P$ on $\setX$ for which $$R < \muti {P}{W},$$ and fix an expected bin rate $\tilde R$ and a pool rate $R_\poolre$ satisfying
\begin{IEEEeqnarray}{C}
R < \tilde R < \muti {P}{W} \quad \textnormal{and} \quad \tilde R < R_\poolre. \label{eq:IDCodeConstraintsBinAndPoolRate}
\end{IEEEeqnarray}
Draw $e^{n R_\poolre}$ $n$-tuples $\sim P^n$ independently and place them in a pool $\pool$. Index the $n$-tuples in the pool by the elements of a size-$e^{n R_\poolre}$ set $\setV$, e.g., $\{ 1, \ldots, e^{n R_\poolre} \}$, and denote by $\poolel v$ the $n$-tuple in $\pool$ that is indexed by~$v \in \setV$. Associate with each ID message $m \in \setM$ an index-set $\indexset m$ and a bin $\bin m$ as follows. Select each element of $\setV$ for inclusion in $\indexset m$ independently with probability $e^{-n( R_\poolre - \tilde R )}$, and let Bin~$\bin m$ be the multiset that contains all the $n$-tuples in the pool that are indexed by $\indexset m$, $$\bin m = \bigl\{ \poolel v, \, v \in \indexset m \bigr\}.$$ (Bin~$\bin m$ is thus of expected size $e^{n \tilde R}$.)

Reveal the pool $\pool$, the index-sets $\{ \indexset m \}_{m \in \setM}$, and the corresponding bins $\{ \bin m \}_{m \in \setM}$ to all parties. The encoding and decoding are determined by
\begin{equation}
\rcode = \bigl( \pool, \{ \indexset m \}_{m \in \setM} \bigr). \label{eq:rcodeDMC}
\end{equation}\\

For the purpose of illustration, the pool and the bins are depicted in Figure~\ref{fig:idCodeConstruction}. As mentioned in Section~\ref{sec:intro}, our code is similar to the one in \cite{ahlswededueck89}: every ID message is associated with a bin, and in both schemes the bins are chosen at random from a pool. The main difference is that in our scheme the pool need not constitute a codebook that is reliable in Shannon's sense. Indeed, our pool is of size $e^{n R_\poolre} \!$, where $R_\poolre$ can exceed $\muti {P}{W}$ or even $C$.

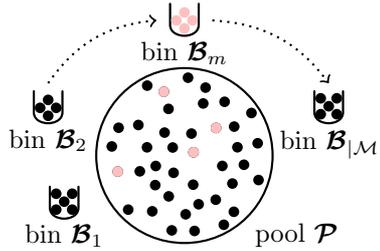
\begin{figure}[!ht]
\begin{center}
\def\pgfsysdriver{pgfsys-dvipdfm.def}
\begin{tikzpicture}
	\tikzstyle{pools}=[draw,thick,circle,text width = 4em, text height = 4em];
	\def\circledia{6em}
	\def\circlediam{3em}
	
	\node (pool) [pools] {};

\foreach \x/\y in {0.3944/0.3473,-0.1464/0.1013,-0.1986/0.5869,-0.3639/-0.3717,
-0.1579/-0.9408,0.0448/0.9791,0.9244/-0.3076,0.0153/-0.7388,0.1088/-0.3667,
0.4912/-0.8705,0.7422/-0.1157,-0.9057/0.1145,-0.7010/-0.6800,-0.4010/-0.7010,
0.6609/0.1043,-0.2567/0.8029,-0.3574/-0.1313,0.0877/0.2765,0.2339/0.6553,
-0.1959/-0.5130,-0.2479/0.2889,-0.4665/0.1035,-0.5745/0.3845,0.1128/0.0408,
0.1261/-0.9481,0.3905/-0.5552,0.5983/-0.3518,0.9135/0.1546,0.2635/0.8546,
-0.8208/0.3497,-0.4639/0.8796,0.4725/-0.1305,0.6205/0.3641,-0.8310/-0.201,
0.1768/-0.6123,0.5008/0.6763,-0.5561/-0.1982,-0.1355/-0.2170,0.8361/0.5351,
0.6167/-0.7071}
	\fill [black] (pool.center) + (\x*\circlediam,\y*\circlediam) circle (2pt);
	
	\path (pool.center) + (1.4*\circlediam,-1*\circlediam) node [text centered] {pool $\pool$};

	\path (pool.center) + (0,0.85*\circledia) node (binmz) [text centered] {\huge{$\cup$}};
	\path (binmz.south) + (0,-0.05*\circledia) node [text centered] {bin~$\bin m$};
	\fill [pink] (binmz.center) + (0*\circlediam,-0.1*\circlediam) circle (2pt);
	\fill [pink] (binmz.center) + (0.1*\circlediam,0*\circlediam) circle (2pt);
	\fill [pink] (binmz.center) + (-0.1*\circlediam,0*\circlediam) circle (2pt);
	\fill [pink] (binmz.center) + (0*\circlediam,0.1*\circlediam) circle (2pt);

	\foreach \x/\y in {0.3944/0.3473,-0.2567/0.8029,0.1128/0.0408,-0.8310/-0.201}
	\fill [pink] (pool.center) + (\x*\circlediam,\y*\circlediam) circle (2pt);

	\path (pool.center) + (-0.75*\circledia,-0.3*\circledia) node (bin1) [text centered] {\huge{$\cup$}};
	\path (bin1.south) + (0,-0.05*\circledia) node [text centered] {bin~$\bin 1$};
	\path (pool.center) + (-0.85*\circledia,0.3*\circledia) node (bin2) [text centered] {\huge{$\cup$}};
	\path (bin2.south) + (0,-0.05*\circledia) node [text centered] {bin~$\bin 2$};
	\path (pool.center) + (0.9*\circledia,0.3*\circledia) node (binl) [text centered] {\huge{$\cup$}};
	\path (binl.south) + (0.025*\circledia,-0.075*\circledia) node [text centered] {bin~$\bin {|\setM|}$};
	\draw [thick, ->, dotted, bend right = -25] (bin2.north) to (binmz.west);
	\draw [thick, ->, dotted, bend right = -25] (binmz.east) to (binl.north);

	\fill [black] (bin1.center) + (0.1*\circlediam,-0.08*\circlediam) circle (2pt);
	\fill [black] (bin1.center) + (-0.1*\circlediam,-0.08*\circlediam) circle (2pt);
	\fill [black] (bin1.center) + (0*\circlediam,0.02*\circlediam) circle (2pt);
	\fill [black] (bin1.center) + (0.1*\circlediam,0.12*\circlediam) circle (2pt);
	\fill [black] (bin1.center) + (-0.1*\circlediam,0.12*\circlediam) circle (2pt);

	\fill [black] (bin2.center) + (0*\circlediam,-0.1*\circlediam) circle (2pt);
	\fill [black] (bin2.center) + (0.1*\circlediam,0*\circlediam) circle (2pt);
	\fill [black] (bin2.center) + (-0.1*\circlediam,0*\circlediam) circle (2pt);
	\fill [black] (bin2.center) + (0*\circlediam,0.1*\circlediam) circle (2pt);
	
	\fill [black] (binl.center) + (0.1*\circlediam,-0.08*\circlediam) circle (2pt);
	\fill [black] (binl.center) + (-0.1*\circlediam,-0.08*\circlediam) circle (2pt);
	\fill [black] (binl.center) + (0*\circlediam,0.02*\circlediam) circle (2pt);
	\fill [black] (binl.center) + (0.1*\circlediam,0.12*\circlediam) circle (2pt);
	\fill [black] (binl.center) + (-0.1*\circlediam,0.12*\circlediam) circle (2pt);
    		
\end{tikzpicture}
\caption[ID code construction for the DMC]{ID code construction for the DMC.}
\label{fig:idCodeConstruction}
\end{center}
\end{figure}

\subparagraph*{Encoding:} To send ID Message $m \in \setM$, the encoder draws some $V$ uniformly at random from $\indexset m$ and transmits the sequence $\poolel V$. ID Message~$m$ is thus associated with the PMF
\begin{IEEEeqnarray}{rCl}
\bm Q_m ( \vecx ) & = & \frac{1}{| \indexset m |} \sum_{v \in \indexset m} \ind {\vecx = \poolel v}, \quad \vecx \in \setX^n, \quad \indexset m \neq \emptyset. \label{eq:distIDMsgDMC}
\end{IEEEeqnarray}
If $\indexset m$ is empty, then the encoder chooses $V = v^\star$ and transmits $\poolel {v^\star}$, where $v^\star$ is an arbitrary but fixed element of $\setV$, so
\begin{IEEEeqnarray}{rCl}
\bm Q_m ( \vecx ) & = & \ind {\vecx = \poolel {v^\star}}, \quad \vecx \in \setX^n, \quad \indexset m = \emptyset. \label{eq:distIDMsgDMC2}
\end{IEEEeqnarray}

\subparagraph*{Decoding:} In this section $\setT^{( n )}_\epsilon$ is short for $\setT^{( n )}_\epsilon ( P \times W )$, and the function $\delta ( \cdot )$ maps every nonnegative real number $u$ to $u \ent{P \times W}$. The decoders choose $\epsilon > 0$ sufficiently small so that $2 \delta ( \epsilon ) < \muti {P}{W} - \tilde R$. The $m^\prime$-focused party guesses that $m^\prime$ was sent if, and only if, (iff) for some index $v \in \indexset {m^\prime}$ the $n$-tuple $\poolel v$ in Bin~$\bin {m^\prime}$ is jointly $\epsilon$-typical with the channel-output sequence $Y^n$, i.e., iff $\bigl( \poolel v, Y^n \bigr) \in \eptyp$ for some $v \in \indexset {m^\prime}$. The set $\idset {m^\prime}$ of output sequences that result in the guess ``$m^\prime$ was sent'' is thus
\begin{IEEEeqnarray}{rCl}
\idset {m^\prime} & = & \Bigl\{ \vecy \in \setY^n \colon \exists \, v \in \indexset {m^\prime} \textnormal{ s.t.\ } \bigl( \poolel v, \vecy \bigr) \in \eptyp \Bigr\} \\
& = & \bigcup_{v \in \indexset {m^\prime}} \setT^{( n )}_\epsilon \bigl( P \times W \bigl| \poolel v \bigr). \label{eq:DefIDSet}
\end{IEEEeqnarray}

\subparagraph*{Analysis of the Probabilities of Missed and Wrong Identification:} We first note that $\rcode$ (together with the fixed blocklength $n$, the fixed element $v^\star$ of $\setV$, and the chosen $\epsilon$) fully specifies the encoding and guessing rules. That is, the randomly constructed ID code $\{ \bm Q_m, \idset m \}_{m \in \setM}$ is fully specified by $\rcode$. Let $\dist$ be the distribution of $\rcode$, and let $\Exop$ denote expectation w.r.t.\ $\dist$. Subscripts indicate conditioning on the event that some of the chance variables assume the values indicated by the subscripts, e.g., $\dist_{\indexsetre m}$ denotes the distribution conditional on $\indexset m = \indexsetre m$, and $\Exop_{\indexsetre m}$ denotes the expectation w.r.t.\ $\dist_{\indexsetre m}$.\\

The maximum probabilities of missed and wrong identification of the randomly constructed ID code $\{ \bm Q_m, \idset m \}_{m \in \setM}$ are the random variables
\begin{subequations}
\begin{IEEEeqnarray}{rCl}\label{bl:randomErrorPrIDDMC}
P_{\textnormal{missed-ID}} & = & \max_{m \in \setM} ( \bm Q_m W^n ) ( Y^n \notin \idset m ), \label{eq:randomPrMissedIDDMC} \\
P_{\textnormal{wrong-ID}} & = & \max_{m \in \setM} \max_{m^\prime \neq m} ( \bm Q_m W^n ) ( Y^n \in \idset {m^\prime} ). \label{eq:randomPrWrongIDDMC}
\end{IEEEeqnarray}
\end{subequations}
They are fully specified by $\rcode$. How we upper-bound these probabilities depends on the size of the index-sets and of their pairwise intersections. For every distinct pair $m, \, m^\prime \in \setM$ denote the intersection of the index-sets $\indexset m$ and $\indexset {m^\prime}$ by $\indexset {m,m^\prime}$, so
\begin{IEEEeqnarray}{C}
\indexset {m,m^\prime} = \indexset m \cap \indexset {m^\prime}. \label{eq:IDCodeDefIntersectionVmmprime}
\end{IEEEeqnarray}
The expected size of $\indexset {m, m^\prime}$ is $e^{n (2 \tilde R - R_\poolre)}$ ($= e^{n R_\poolre} e^{ - 2 n (R_\poolre - \tilde R)}$) and is thus, by \eqref{eq:IDCodeConstraintsBinAndPoolRate}, exponentially smaller than the expected size of the index-sets $\indexset m$ and $\indexset {m^\prime}$, which is $e^{n \tilde R}$. The following lemma upper-bounds the probability that the size of the index-sets deviates from its mean $e^{n \tilde R}$ or that the pairwise intersections are large compared to $e^{n \tilde R}$. To state the lemma, we first introduce the set $\setG_\mu$ comprising the realizations $\{ \indexsetre m \}_{m \in \setM}$ of the index-sets $\{ \indexset m \}_{m \in \setM}$ satisfying that for every distinct pair $m, \, m^\prime \in \setM$ the following three inequalities hold:
\begin{subequations} \label{bl:setGProperties}
\begin{IEEEeqnarray}{rCl}
| \indexsetre m | & > & ( 1 - \delta_n) e^{n \tilde R}, \label{eq:indexsetmSuffLarge} \\
| \indexsetre {m^\prime} | & < & ( 1 + \delta_n) e^{n \tilde R}, \label{eq:indexsetmprimeSuffSmall} \\
| \indexsetre {m,m^\prime} | & < & e^{n ( \tilde R - \mu / 2 ) + \log 2}, \label{eq:indexsetIntersectionSuffSmall}
\end{IEEEeqnarray}
\end{subequations}
where $\mu$ is fixed and satisfies
\begin{IEEEeqnarray}{C}
0 < \mu < \min \bigl\{ R_\poolre - \tilde R, \tilde R - R \bigr\}, \label{eq:IDCodeMu}
\end{IEEEeqnarray}
and
\begin{IEEEeqnarray}{C}
\delta_n = e^{-n \mu / 2}. \label{eq:IDCodeDeltan}
\end{IEEEeqnarray}

\begin{lemma}\label{le:prEventEToZero}
The probability that $\{ \indexset m \}_{m \in \setM}$ is not in $\setG_\mu$ converges to zero as the blocklength~$n$ tends to infinity:
\begin{IEEEeqnarray}{l}
\lim_{n \rightarrow \infty} \bigdistof {\{ \indexset m \}_{m \in \setM} \notin \setG_\mu} = 0.
\end{IEEEeqnarray}
\end{lemma}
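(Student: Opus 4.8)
The plan is to bound the probability of the complement of $\setG_\mu$ by a union bound over all pairs $m,m^\prime$ and to control each of the three failure events in \eqref{bl:setGProperties} using the Chernoff bounds of Proposition~\ref{pr:multChernoff}. The key point that makes the union bound survive is that $\card \setM = e^{e^{nR}}$ is double-exponential in $n$, so each of the three probabilities must be shown to decay faster than doubly-exponentially; concretely, each will be of the form $\exp\{-c\,e^{n\alpha}\}$ for some positive constant $\alpha$, and multiplying by $\card\setM^2 = e^{2e^{nR}}$ still tends to zero precisely because the exponents $\alpha$ involved will exceed $R$. Throughout, recall that $\card{\indexset m}$ is a sum of $e^{nR_\poolre}$ independent Bernoulli$(e^{-n(R_\poolre-\tilde R)})$ random variables, hence has mean $\mu_m := e^{n\tilde R}$, and that, for distinct $m,m^\prime$, $\card{\indexset{m,m^\prime}}$ is a sum of $e^{nR_\poolre}$ independent Bernoulli$\bigl(e^{-2n(R_\poolre-\tilde R)}\bigr)$ random variables, hence has mean $\mu_{m,m^\prime} := e^{n(2\tilde R - R_\poolre)}$.

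First I would handle \eqref{eq:indexsetmSuffLarge} and \eqref{eq:indexsetmprimeSuffSmall}. Applying \eqref{eq:multChernDeltaSm1Sm} and \eqref{eq:multChernDeltaSm1La} with $\delta = \delta_n = e^{-n\mu/2}$ and $\mu$ replaced by $\mu_m = e^{n\tilde R}$, each of these events has probability at most $\exp\{-\tfrac13\delta_n^2 e^{n\tilde R}\} = \exp\{-\tfrac13 e^{n(\tilde R-\mu)}\}$. Since \eqref{eq:IDCodeMu} forces $\mu < \tilde R - R$, we have $\tilde R - \mu > R$, so this is $\exp\{-\tfrac13 e^{n\beta}\}$ for some $\beta > R$; multiplying by the $O\bigl(\card\setM^2\bigr) = e^{2e^{nR}}$ pairs still goes to zero. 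Second, for \eqref{eq:indexsetIntersectionSuffSmall} I would write the target bound $e^{n(\tilde R-\mu/2)+\log 2} = 2\,e^{n(\tilde R-\mu/2)}$ as $(1+\delta)\mu_{m,m^\prime}$ with $\mu_{m,m^\prime}=e^{n(2\tilde R - R_\poolre)}$; this gives $1+\delta = 2\,e^{n(R_\poolre-\tilde R-\mu/2)}$, and since \eqref{eq:IDCodeMu} also gives $\mu < R_\poolre-\tilde R$, the exponent $R_\poolre-\tilde R-\mu/2$ is positive, so $\delta\geq 1$ for $n$ large and the large-deviation branch \eqref{eq:multChernDeltaGeq1} applies: the probability is at most $\exp\{-\tfrac13\delta\,\mu_{m,m^\prime}\} \leq \exp\{-\tfrac13 e^{n(R_\poolre-\tilde R-\mu/2)}e^{n(2\tilde R - R_\poolre)}\} = \exp\{-\tfrac13 e^{n(\tilde R-\mu/2)}\}$, and $\tilde R-\mu/2 > \tilde R - \mu > R$, so again the union bound over pairs vanishes.

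Finally I would assemble the three pieces: by the union bound,
\begin{IEEEeqnarray*}{rCl}
\bigdistof {\{ \indexset m \}_{m \in \setM} \notin \setG_\mu}
&\leq& \sum_{m} \Bigl( \distof{\card{\indexset m}\leq(1-\delta_n)e^{n\tilde R}} + \distof{\card{\indexset m}\geq(1+\delta_n)e^{n\tilde R}} \Bigr) \\
&& {} + \sum_{m\neq m^\prime} \distof{\card{\indexset{m,m^\prime}}\geq 2e^{n(\tilde R-\mu/2)}} \\
&\leq& 2\card\setM\exp\Bigl\{-\tfrac13 e^{n(\tilde R-\mu)}\Bigr\} + \card\setM^2\exp\Bigl\{-\tfrac13 e^{n(\tilde R-\mu/2)}\Bigr\},
\end{IEEEeqnarray*}
and since $\card\setM = e^{e^{nR}}$ with $\tilde R - \mu > R$, both terms are of the form $\exp\{2e^{nR} - \tfrac13 e^{n\gamma}\}$ with $\gamma>R$ and hence tend to $0$. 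I expect the only real subtlety — the "main obstacle" — to be bookkeeping: verifying that the choice $\delta_n = e^{-n\mu/2}$ together with $\mu<\min\{R_\poolre-\tilde R,\tilde R-R\}$ simultaneously (i) keeps $\delta_n\in(0,1)$ so that \eqref{eq:multChernDeltaSm1Sm}--\eqref{eq:multChernDeltaSm1La} apply to the index-set sizes, (ii) forces $\delta\geq 1$ in the intersection bound so that \eqref{eq:multChernDeltaGeq1} applies, and (iii) leaves every resulting exponent strictly above $R$ so the double-exponential $\card\setM$ is absorbed; none of this is hard, but it must be checked that the single parameter $\mu$ can be chosen to make all inequalities hold at once, which is exactly what \eqref{eq:IDCodeMu} guarantees.
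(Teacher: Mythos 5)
Your proposal is correct and follows essentially the same route as the paper's proof: a union bound over the three failure events, the small-deviation Chernoff bounds \eqref{eq:multChernDeltaSm1Sm}--\eqref{eq:multChernDeltaSm1La} with $\delta_n = e^{-n\mu/2}$ for the index-set sizes, and the large-deviation branch \eqref{eq:multChernDeltaGeq1} for the pairwise intersections, with the condition $\mu < \min\{R_\poolre - \tilde R, \tilde R - R\}$ ensuring every resulting exponent strictly exceeds $R$ so that the double-exponential number of messages is absorbed. The only cosmetic difference is that you solve directly for the deviation parameter $\delta$ in the intersection bound, whereas the paper introduces an auxiliary exponent $\xi \in [R_\poolre - \tilde R - \mu,\, R_\poolre - \tilde R - \mu/2]$ and sets $\kappa_n = e^{n\xi}$; the two computations are equivalent.
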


\begin{proof}
See Appendix~\ref{app:lePrEventEToZero}.
\end{proof}

To prove that for every choice of $\lambda_1, \, \lambda_2 > 0$ and $n$ sufficiently large the collection of tuples $\{ \bm Q_m, \idset m \}_{m \in \setM}$ is with high probability an $( n,\setM,\lambda_1,\lambda_2 )$ ID code for the DMC $\channel y x$, we prove the following stronger result:

\begin{claim}\label{re:convExpFast}
The maximum probability of missed identification, $P_{\textnormal{missed-ID}}$, and the maximum probability of wrong identification, $P_{\textnormal{wrong-ID}}$, of the randomly constructed ID code $\{ \bm Q_m, \idset m \}_{m \in \setM}$ converge in probability to zero exponentially in the blocklength~$n$, i.e.,
\begin{IEEEeqnarray}{l}
\exists \, \tau > 0 \textnormal{ s.t.\ } \lim_{n \rightarrow \infty} \bigdistof { \max \{ P_{\textnormal{missed-ID}}, P_{\textnormal{wrong-ID}} \} \geq e^{-n \tau} } = 0. \label{bl:toShowIDDMC}
\end{IEEEeqnarray}
\end{claim}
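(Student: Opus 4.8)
The plan is to prove the stronger Claim~\ref{re:convExpFast} by a random‑coding argument that conditions first on the favourable event $\setG_\mu$ and then controls each individual ``bad‑code'' event by a \emph{doubly}‑exponentially small probability, small enough to survive a union bound over the double‑exponentially many messages and message‑pairs. Concretely, I would fix a constant $\tau>0$ with
\[
2\tau<\tilde R-R,\qquad 2\tau<\mu,\qquad \tau<\gamma(\epsilon),\qquad \tau<\muti P W-2\delta(\epsilon)-\tilde R ,
\]
where $\gamma(\epsilon)>0$ is the exponent in the standard estimate $\bigdistof{(X^n,Y^n)\notin\eptyp(P\times W)}\le e^{-n\gamma(\epsilon)}$ for $(X^n,Y^n)\sim(P\times W)^n$; such a $\tau$ exists since $R<\tilde R<\muti P W$ by \eqref{eq:IDCodeConstraintsBinAndPoolRate} and since $\epsilon$ satisfies $2\delta(\epsilon)<\muti P W-\tilde R$. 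By Lemma~\ref{le:prEventEToZero} we have $\bigdistof{\{\indexset m\}_{m\in\setM}\notin\setG_\mu}\to0$, so it remains to show that, conditionally on \emph{every} realization $\{\indexsetre m\}_{m\in\setM}\in\setG_\mu$ of the index‑sets and for each fixed message $m$ (resp.\ pair $m\ne m'$), the corresponding term $(\bm Q_m W^n)(Y^n\notin\idsetre m)$ (resp.\ $(\bm Q_m W^n)(Y^n\in\idsetre{m'})$) exceeds $e^{-n\tau}$ only with probability $\exp\bigl(-\Theta(e^{n(\tilde R-2\tau)})\bigr)$; after union bounds over the $|\setM|=e^{e^{nR}}$ messages and $|\setM|^2=e^{2e^{nR}}$ ordered pairs, and using $\tilde R-2\tau>R$, this yields \eqref{bl:toShowIDDMC}. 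A structural point used throughout is that the index‑sets are drawn independently of the pool $\pool$, so that conditionally on $\{\indexset m\}=\{\indexsetre m\}$ the pool sequences $\{\poolel v\}_{v\in\setV}$ remain i.i.d.\ $\sim P^n$.

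For the \textbf{missed‑identification} term, I note that on $\setG_\mu$ every $\indexsetre m$ is non‑empty by \eqref{eq:indexsetmSuffLarge}, so $\bm Q_m$ is uniform over $\{\poolel v:v\in\indexsetre m\}$, and since $\eptyp(P\times W\,|\,\poolel v)\subseteq\idsetre m$ for each $v\in\indexsetre m$ by \eqref{eq:DefIDSet}, transmitting $\poolel v$ yields $Y^n\in\idsetre m$ unless $(\poolel v,Y^n)\notin\eptyp(P\times W)$. Hence
\[
(\bm Q_m W^n)(Y^n\notin\idsetre m)\ \le\ \frac1{|\indexsetre m|}\sum_{v\in\indexsetre m}f(\poolel v),\qquad f(\vecx):=W^n\bigl((\vecx,Y^n)\notin\eptyp(P\times W)\bigm|\vecx\bigr)\in[0,1].
\]
The summands are independent (distinct pool indices), take values in $[0,1]$, and have common mean $\Ex{}{f(\poolel v)}\le e^{-n\gamma(\epsilon)}$ (the estimate defining $\gamma(\epsilon)$). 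With $|\indexsetre m|\ge(1-\delta_n)e^{n\tilde R}$, Hoeffding's inequality (Proposition~\ref{pr:hoeffding}), applied with deviation $e^{-n\tau}-e^{-n\gamma(\epsilon)}\ge\tfrac12 e^{-n\tau}$, bounds the probability that this average reaches $e^{-n\tau}$ by $\exp\bigl(-\tfrac14 e^{n(\tilde R-2\tau)}\bigr)$ for $n$ large; the ensuing union bound over the $e^{e^{nR}}$ messages is harmless because $\tilde R-2\tau>R$.

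The \textbf{wrong‑identification} term is the crux. Fixing $m\ne m'$, write $(\bm Q_m W^n)(Y^n\in\idsetre{m'})=\tfrac1{|\indexsetre m|}\sum_{v\in\indexsetre m}W^n(\idsetre{m'}\,|\,\poolel v)$ and split $\indexsetre m$ into $\indexsetre{m,m'}$ and $\indexsetre m\setminus\indexsetre{m'}$. For $v\in\indexsetre{m,m'}$ I bound $W^n(\idsetre{m'}\,|\,\poolel v)\le1$; by \eqref{eq:indexsetIntersectionSuffSmall} and \eqref{eq:indexsetmSuffLarge} this part contributes at most $4e^{-n\mu/2}$, which is \emph{deterministically} below $\tfrac12 e^{-n\tau}$ for $n$ large since $\mu>2\tau$ --- precisely what the intersection bound in $\setG_\mu$ buys us. For the $v\in\indexsetre m\setminus\indexsetre{m'}$ part, the key is a \emph{deterministic} estimate valid for \emph{every} pool realization: for every $\vecx'\in\setX^n$ the conditional‑typicality set $\eptyp(P\times W\,|\,\vecx')$ has $(PW)^n$‑mass at most $e^{-n(\muti P W-2\delta(\epsilon))}$ (type counting: at most $e^{n(\condent Y X+\delta(\epsilon))}$ sequences are jointly $\epsilon$‑typical with $\vecx'$, each of $(PW)^n$‑probability at most $e^{-n(\ent Y-\delta(\epsilon))}$, and $\ent Y-\condent Y X=\muti P W$), whence on $\setG_\mu$
\[
(PW)^n(\idsetre{m'})\ \le\ \sum_{v'\in\indexsetre{m'}}(PW)^n\bigl(\eptyp(P\times W\,|\,\poolel{v'})\bigr)\ \le\ |\indexsetre{m'}|\,e^{-n(\muti P W-2\delta(\epsilon))}\ \le\ 2e^{-n\beta'} ,
\]
with $\beta':=\muti P W-2\delta(\epsilon)-\tilde R>0$, using \eqref{eq:indexsetmprimeSuffSmall}. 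Now condition additionally on $\{\poolel{v'}\}_{v'\in\indexsetre{m'}}$ (hence on $\idsetre{m'}$): the sequences $\{\poolel v\}_{v\in\indexsetre m\setminus\indexsetre{m'}}$ are i.i.d.\ $\sim P^n$, independent of $\idsetre{m'}$, and each $W^n(\idsetre{m'}\,|\,\poolel v)\in[0,1]$ has conditional mean $\sum_{\vecx\in\setX^n}P^n(\vecx)\,W^n(\idsetre{m'}\,|\,\vecx)=(PW)^n(\idsetre{m'})\le 2e^{-n\beta'}$. Since there are at most $(1+\delta_n)e^{n\tilde R}$ such terms and we divide by at least $(1-\delta_n)e^{n\tilde R}$, Hoeffding's inequality (with deviation $\ge\tfrac14 e^{-n\tau}$, valid because $\beta'>\tau$) bounds the probability that this part reaches $\tfrac12 e^{-n\tau}$ by $\exp\bigl(-c\,e^{n(\tilde R-2\tau)}\bigr)$ for a constant $c>0$, uniformly over $\{\poolel{v'}\}_{v'\in\indexsetre{m'}}$ and hence unconditionally on $\setG_\mu$. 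Adding the two parts and union‑bounding over the fewer than $e^{2e^{nR}}$ ordered pairs --- again using $\tilde R-2\tau>R$ --- completes the wrong‑ID estimate.

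The step I expect to be the main obstacle is the wrong‑identification analysis, and within it the recognition that one must \emph{not} handle $(PW)^n(\idset{m'})$ probabilistically --- a Markov/Chebyshev bound is only exponentially small in $n$, while there are double‑exponentially many bins, so such a bound could never survive the union bound --- but rather \emph{deterministically}, uniformly over the pool: this works exactly because each individual conditional‑typicality set has tiny $(PW)^n$‑mass ($\le e^{-n(\muti P W-2\delta(\epsilon))}$ with $\muti P W-2\delta(\epsilon)>\tilde R$) and a bin contains only $\approx e^{n\tilde R}$ of them. The second enabling feature is that each bin has \emph{exponentially} many i.i.d.\ entries, so Hoeffding (or the multiplicative Chernoff bound of Proposition~\ref{pr:multChernoff}, whose proof applies verbatim to $[0,1]$‑valued summands) over a bin yields a bound of the form $\exp(-\Theta(e^{n\tilde R}))$; the hypothesis $R<\tilde R$ in \eqref{eq:IDCodeConstraintsBinAndPoolRate} is precisely what makes $\tilde R-2\tau>R$ achievable with $\tau>0$, while $\tilde R<\muti P W$ together with the choice of $\epsilon$ is what makes $\beta'>0$. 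Collecting the missed‑ID bound, the wrong‑ID bound, and $\bigdistof{\{\indexset m\}\notin\setG_\mu}\to0$ then yields \eqref{bl:toShowIDDMC}, which is the desired Claim.
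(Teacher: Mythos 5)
Your proof is correct and follows essentially the same route as the paper's: condition on $\setG_\mu$ via Lemma~\ref{le:prEventEToZero}, bound the missed-ID term by a bin-average of conditional-non-typicality probabilities and apply Hoeffding, and for the wrong-ID term split off the intersection $\indexsetre{m,m'}$ (deterministically small on $\setG_\mu$), bound $(PW)^n(\idsetre{m'})$ deterministically via $|\idsetre{m'}|\lesssim e^{n(\tilde R+\condent W P+\delta(\epsilon))}$, and apply Hoeffding to the remaining sum conditionally on $\bigl\{\poolel{v'}\bigr\}_{v'\in\indexsetre{m'}}$. The only cosmetic differences are that you condition on the pool entries indexed by $\indexsetre{m'}$ rather than directly on $\idsetre{m'}$ (equivalent, since $\idset{m'}$ is a function of those entries), and that you fix a single $\tau$ up front instead of introducing the intermediate thresholds $\alpha_n,\kappa_n,\omega_n$.
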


\begin{proof}
Fix some $\mu$ satisfying \eqref{eq:IDCodeMu}, and choose $\delta_n$ as in \eqref{eq:IDCodeDeltan}. We upper-bound $P_{\textnormal{missed-ID}}$ and $P_{\textnormal{wrong-ID}}$ differently depending on whether or not $\{ \indexset \nu \}$ is in $\setG_\mu$, where $\{ \indexset \nu \}$ is short for $\{ \indexset \nu \}_{\nu \in \setM}$. If $\{ \indexset \nu \} \notin \setG_\mu$, then we upper-bound them by one to obtain for every $\tau > 0$
\begin{IEEEeqnarray}{l}
\bigdistof { \max \{ P_{\textnormal{missed-ID}}, P_{\textnormal{wrong-ID}} \} \geq e^{-n \tau} } \nonumber \\
\quad \leq \bigdistof { \{ \indexset \nu \} \notin \setG_\mu } + \sum_{ \indexsetsre \in \setG_\mu } \bigdistof { \{ \indexset \nu \} = \indexsetsre } \, \bigdistsubof { \indexsetsre } { \max \{ P_{\textnormal{missed-ID}}, P_{\textnormal{wrong-ID}} \} \geq e^{-n \tau} }.
\end{IEEEeqnarray}
By Lemma~\ref{le:prEventEToZero} the first term on the RHS converges to zero as the blocklength~$n$ tends to infinity, and it thus suffices to show that
\begin{IEEEeqnarray}{l}
\exists \, \tau > 0 \textnormal{ s.t.\ } \lim_{n \rightarrow \infty} \max_{\indexsetsre \in \setG_\mu}  \bigdistsubof { \indexsetsre } { \max \{ P_{\textnormal{missed-ID}}, P_{\textnormal{wrong-ID}} \} \geq e^{-n \tau} } = 0. \label{bl:toShowIDDMCSetG}
\end{IEEEeqnarray}

\begin{remark}\label{re:drawIndexSetsAtRandom}
As we shall see, \eqref{bl:toShowIDDMCSetG} does indeed hold, and we could have therefore simplified our random code construction considerably by drawing only the pool $\pool$ at random while fixing the index-sets $\indexsetsre \in \setG_\mu$. This is correct, but the main purpose of our random code construction for the DMC is to pave the way for the one for the BC, and there we shall need to draw the index-sets at random.
\end{remark}

Henceforth we assume that $n$ is large enough so that the following two inequalities hold:
\begin{subequations}\label{bl:nSuffLarge}
\begin{IEEEeqnarray}{rCl}
( 1 - \delta_n) e^{n \tilde R} & \geq & 1, \label{eq:nSuffLargeIndexSetsNonempty} \\
\delta_n + e^{-n \mu / 2 + \log 2} & \leq & 1/2, \label{eq:nSuffLargeIndexSetWithoutIntersectionLarge}
\end{IEEEeqnarray}
\end{subequations}
where $\delta_n$ is defined in \eqref{eq:IDCodeDeltan}. (This is possible, because $\delta_n$ converges to zero as $n$ tends to infinity and $\tilde R, \, \mu > 0$.)\\

To establish \eqref{bl:toShowIDDMCSetG}, we first show that
\begin{IEEEeqnarray}{l}
\exists \, \tau > 0 \textnormal{ s.t.\ } \lim_{n \rightarrow \infty} \max_{\indexsetsre \in \setG_\mu} \bigdistsubof {\indexsetsre}{P_{\textnormal{missed-ID}} \geq e^{-n \tau}} = 0, \label{eq:toShowIDDMCPrMissedID}
\end{IEEEeqnarray}
and we then show that
\begin{IEEEeqnarray}{l}
\exists \, \tau > 0 \textnormal{ s.t.\ } \lim_{n \rightarrow \infty} \max_{\indexsetsre \in \setG_\mu} \bigdistsubof { \indexsetsre }{P_{\textnormal{wrong-ID}} \geq e^{-n \tau}} = 0. \label{eq:toShowIDDMCPrWrongID}
\end{IEEEeqnarray}
The Union-of-Events bound, \eqref{eq:toShowIDDMCPrMissedID}, and \eqref{eq:toShowIDDMCPrWrongID} imply \eqref{bl:toShowIDDMCSetG} and hence \eqref{bl:toShowIDDMC}.

To conclude the proof, it remains to establish \eqref{eq:toShowIDDMCPrMissedID} and \eqref{eq:toShowIDDMCPrWrongID}. We start by establishing \eqref{eq:toShowIDDMCPrMissedID}. To this end fix any realization $\indexsetsre$ in $\setG_\mu$. Rather than directly upper-bounding the maximum over $m \in \setM$ of $( \bm Q_m W^n ) ( Y^n \notin \idset m )$ under $\dist_{\indexsetsre}$, we first consider $( \bm Q_m W^n ) ( Y^n \notin \idset m )$ for a fixed $m \in \setM$. (This $\sigof \rcode$-measurable random variable with support $[0,1]$ can be viewed as the probability---associated with the randomly constructed ID code---that the $m$-focused party erroneously guesses that $m$ was not sent.) By \eqref{eq:indexsetmSuffLarge} (which holds because $\indexsetsre \in \setG_\mu$) and \eqref{eq:nSuffLargeIndexSetsNonempty}, $\indexsetre m$ is nonempty, and $\bm Q_m$ is hence given by \eqref{eq:distIDMsgDMC}. This implies that $\dist_{\indexsetsre}$-almost-surely the random variable $( \bm Q_m W^n ) ( Y^n \notin \idset m )$ is upper-bounded by
\begin{IEEEeqnarray}{l}
( \bm Q_m W^n ) ( Y^n \notin \idset m ) \nonumber \\
\quad \stackrel{(a)}= \sum_{\vecx \in \setX^n} \frac{1}{| \indexsetre m |} \sum_{v \in \indexsetre m} \ind {\vecx = \poolel v} \, W^n (Y^n \notin \idset m | \vecx) \\
\quad \stackrel{(b)}\leq \frac{1}{| \indexsetre m |} \sum_{v \in \indexsetre m} W^n \Bigl(Y^n \notin \setT^{( n )}_\epsilon \bigl( P \times W \bigl| \poolel v \bigr) \Bigl| \poolel v \Bigr), \label{eq:sumOfIndRVsLeMissedID}
\end{IEEEeqnarray}
where $(a)$ follows from \eqref{eq:distIDMsgDMC}; and $(b)$ follows from \eqref{eq:DefIDSet}, which implies that $\dist_{\indexsetsre}$-almost-surely $$\setT^{( n )}_\epsilon \bigl( P \times W \bigl| \poolel v \bigr) \subseteq \idset m, \quad v \in \indexsetre m.$$ There is an inequality in $(b)$, because the $m$-focused party may guess correctly even if $\vecy$ is not jointly typical with $\poolel v$: it also guesses correctly when $\vecy$ is jointly typical with $\poolel {v^\prime}$ for some $v^\prime$ in $\indexsetre m$ other than $v$.

Let
\begin{subequations}\label{bl:betaAlpha}
\begin{IEEEeqnarray}{rCl}
\beta_n & = & ( P \times W )^n \Bigl( ( X^n,Y^n ) \notin \eptyp \Bigr), \\
\alpha_n & = & \max \bigl\{ 2 \beta_n, e^{-n \mu/2} \bigr\}, \label{eq:blBetaAlphaAlpha}
\end{IEEEeqnarray}
\end{subequations}
and note that \eqref{eq:blBetaAlphaAlpha} implies that
\begin{IEEEeqnarray}{rCl}
\alpha_n - \beta_n & \geq & e^{-n \mu / 2} / 2. \label{eq:alphaMinusBeta}
\end{IEEEeqnarray}
Moreover, since $\beta_n$ decays exponentially and $\mu > 0$, there must exist a positive constant $\tau > 0$ and some $\eta_0 \in \naturals$ for which
\begin{IEEEeqnarray}{rCl}
\alpha_n \leq e^{ - n \tau }, \quad  n \geq \eta_0. \label{eq:alphaExpSmallInBlocklength}
\end{IEEEeqnarray}

Under $\dist_{\indexsetsre}$ the $[0,1]$-valued random variables $$\biggl\{ W^n \Bigl(Y^n \notin \setT^{( n )}_\epsilon \bigl( P \times W \bigl| \poolel v \bigr) \Bigl| \poolel v \Bigr) \biggr\}_{v \in \setV}$$ are IID and have mean $\beta_n$, because the pool was drawn independently of the index-sets, so $\bigl\{ \poolel v \bigr\}_{v \in \setV}$ are IID $\sim P^n$ also under $\dist_{\indexsetsre}$. Consequently, Hoeffding's inequality (Proposition~\ref{pr:hoeffding}) implies that
\begin{IEEEeqnarray}{l}
\Biggdistsubof {\indexsetsre} {\frac{1}{|\indexsetre m|} \sum_{v \in \indexsetre m} W^n \Bigl(Y^n \notin \setT^{( n )}_\epsilon \bigl( P \times W \bigl| \poolel v \bigr) \Bigl| \poolel v \Bigr) \geq \alpha_n } \nonumber \\
\quad \leq e^{ - 2 \, | \indexsetre m | ( \alpha_n - \beta_n )^2 } \\
\quad \leq \exp \Bigl\{ -  (1 - \delta_n) e^{ n ( \tilde R - \mu ) - \log 2}  \Bigr\}, \quad \indexsetsre \in \setG_\mu, \label{eq:hoeffSumOfIndRvsLeMissedID}
\end{IEEEeqnarray}
where in the second inequality we used \eqref{eq:indexsetmSuffLarge} (which holds because $\indexsetsre \in \setG_\mu$) and \eqref{eq:alphaMinusBeta}. Having obtained \eqref{eq:hoeffSumOfIndRvsLeMissedID} for every fixed $m$, we are now ready to tackle the maximum over $m$ and prove \eqref{eq:toShowIDDMCPrMissedID}: for every $\tau > 0$ and $\eta_0 \in \naturals$ satisfying \eqref{eq:alphaExpSmallInBlocklength} and for all $n$ exceeding $\eta_0$
\begin{IEEEeqnarray}{l}
\max_{\indexsetsre \in \setG_\mu} \bigdistsubof {\indexsetsre} { P_{\textnormal{missed-ID}} \geq e^{ - n \tau } } \nonumber \\
\quad \stackrel{(a)}\leq \max_{\indexsetsre \in \setG_\mu} \bigdistsubof {\indexsetsre} { P_{\textnormal{missed-ID}} \geq \alpha_n } \\
\quad \stackrel{(b)}= \max_{\indexsetsre \in \setG_\mu} \bigdistsubof {\indexsetsre} {\exists \, m \in \setM \colon ( \bm Q_m W^n ) ( Y^n \notin \idset m ) \geq \alpha_n } \\
\quad \stackrel{(c)}\leq \max_{\indexsetsre \in \setG_\mu} \sum_{m \in \setM} \bigdistsubof {\indexsetsre}{ ( \bm Q_m W^n ) ( Y^n \notin \idset m ) \geq \alpha_n } \\
\quad \stackrel{(d)}\leq \max_{\indexsetsre \in \setG_\mu} \sum_{m \in \setM} \Biggdistsubof {\indexsetsre}{\frac{1}{| \indexsetre m |} \sum_{v \in \indexsetre m} W^n \Bigl(Y^n \notin \setT^{( n )}_\epsilon \bigl( P \times W \bigl| \poolel v \bigl) \Bigl| \poolel v \Bigr) \geq \alpha_n } \\
\quad \stackrel{(e)}\leq \sum_{m \in \setM} \exp \Bigl\{ - (1 - \delta_n) e^{ n ( \tilde R - \mu ) - \log 2} \Bigr\} \\
\quad \stackrel{(f)}\leq | \setM | \exp \Bigl\{ - e^{n ( \tilde R - \mu ) - 2 \log 2 } \Bigr\} \\
\quad \stackrel{(g)}\rightarrow 0 \, ( n \rightarrow \infty ), \label{eq:maxPrMissedIDDMCConvZero}
\end{IEEEeqnarray}
where $(a)$ holds by \eqref{eq:alphaExpSmallInBlocklength}, because $n$ exceeds $\eta_0$; $(b)$ follows from \eqref{eq:randomPrMissedIDDMC}; $(c)$ follows from the Union-of-Events bound; $(d)$ follows from \eqref{eq:sumOfIndRVsLeMissedID}; $(e)$ holds by \eqref{eq:hoeffSumOfIndRvsLeMissedID}; $(f)$ follows from \eqref{eq:nSuffLargeIndexSetWithoutIntersectionLarge}, which implies that $\delta_n \leq 1/2$; and $(g)$ holds because $\card {\setM} = \exp (\exp (n R))$ and $\mu < \tilde R - R$.\\

Having established \eqref{eq:toShowIDDMCPrMissedID}, it remains to establish \eqref{eq:toShowIDDMCPrWrongID} in order to conclude the proof. To this end fix any realization $\indexsetsre$ in $\setG_\mu$. We begin by upper-bounding  $( \bm Q_m W^n ) ( Y^n \in \idset {m^\prime} )$ under $\dist_{\indexsetsre}$ for fixed distinct $m, \, m^\prime \in \setM$. Later we will maximize over such $m, \, m^\prime$. (The $\sigof \rcode$-measurable random variable $( \bm Q_m W^n ) ( Y^n \in \idset {m^\prime} )$ with support $[0,1]$ can be viewed as the probability---associated with the randomly constructed ID code---that the $m^\prime$-focused party erroneously guesses that $m^\prime$ was sent when in fact $m$ was sent.) By \eqref{eq:indexsetmSuffLarge} (which holds because $\indexsetsre \in \setG_\mu$) and \eqref{eq:nSuffLargeIndexSetsNonempty}, $\indexsetre m$ is nonempty, and $\bm Q_m$ is hence given by \eqref{eq:distIDMsgDMC}. This implies that $\dist_{\indexsetsre}$-almost-surely the random variable $( \bm Q_m W^n ) ( Y^n  \in \idset {m^\prime} )$ is upper-bounded by
\begin{IEEEeqnarray}{l}
( \bm Q_m W^n ) ( Y^n \in \idset {m^\prime} ) \nonumber \\
\quad \stackrel{(a)}= \sum_{\vecx \in \setX^n} \frac{1}{| \indexsetre m |} \sum_{v \in \indexsetre m} \ind {\vecx = \poolel v} \, W^n (Y^n \in \idset {m^\prime} | \vecx) \\
\quad = \frac{1}{|\indexsetre m|} \sum_{v \in \indexsetre m} W^n \bigl( Y^n \in \idset {m^\prime} \bigl| \poolel v \bigr) \\
\quad \stackrel{(b)}\leq \frac{|\indexsetre {m,m^\prime}|}{|\indexsetre m|} + \frac{1}{|\indexsetre m|} \sum_{v \in \indexsetre {m} \setminus \indexsetre {m,m^\prime}} W^n \bigl( Y^n \in \idset {m^\prime} \bigl| \poolel v \bigr), \label{eq:probWrongIDRatioAndProbInOtherWTSet}
\end{IEEEeqnarray}
where $(a)$ follows from \eqref{eq:distIDMsgDMC}; and $(b)$ holds because $$W^n \bigl( Y^n \in \idset {m^\prime} \bigl| \poolel v \bigr) \leq 1, \quad v \in \setV.$$

We consider the two terms on the RHS of \eqref{eq:probWrongIDRatioAndProbInOtherWTSet} separately, beginning with $|\indexsetre {m,m^\prime}| / |\indexsetre m|$. Because $\indexsetsre \in \setG_\mu$,
\begin{IEEEeqnarray}{rCl}
\frac{|\indexsetre {m,m^\prime}|}{|\indexsetre {m}|} \stackrel{(a)}< \frac{e^{n (\tilde R - \mu / 2 ) + \log 2}}{(1 - \delta_n) e^{n \tilde R}} \stackrel{(b)}\leq e^{-n \mu / 2 + 2 \log 2},
\label{eq:probWrongIDRatio}
\end{IEEEeqnarray}
where $(a)$ follows from \eqref{eq:indexsetmSuffLarge} and \eqref{eq:indexsetIntersectionSuffSmall}; and $(b)$ follows from \eqref{eq:nSuffLargeIndexSetWithoutIntersectionLarge}, which implies that $\delta_n \leq 1/2$. We next consider the second term in \eqref{eq:probWrongIDRatioAndProbInOtherWTSet}, namely, $$\frac{1}{|\indexsetre m|} \sum_{v \in \indexsetre m \setminus \indexsetre {m,m^\prime}} W^n \bigl( Y^n \in \idset {m^\prime} \bigl| \poolel v \bigr).$$ The cardinality of $\idset {m^\prime}$ is $\dist_{\indexsetsre}$-almost-surely upper-bounded by
\begin{IEEEeqnarray}{rCl}
|\idset {m^\prime}| & \stackrel{(a)}= & \biggl| \bigcup_{v \in \indexsetre {m^\prime}} \eptyp \bigl( P \times W \bigl| \poolel v \bigr) \biggr| \nonumber \\
& \leq & \sum_{v \in \indexsetre {m^\prime}} \Bigl| \eptyp \bigl( P \times W \bigl| \poolel v \bigr) \Bigr| \\
& \stackrel{(b)}\leq & (1 + \delta_n) e^{n ( \tilde R + \condent {W}{P} + \delta (\epsilon) )}, \label{eq:cardBsetL}
\end{IEEEeqnarray}
where $(a)$ follows from \eqref{eq:DefIDSet}; and $(b)$ follows from $$\Bigl| \eptyp ( P \times W | \vecx ) \Bigr| \leq e^{n ( \condent {W}{P} + \delta (\epsilon) ) }, \quad \vecx \in \setX^n,$$ and from \eqref{eq:indexsetmprimeSuffSmall} (which holds because $\indexsetsre \in \setG_\mu$).

Let
\begin{subequations}\label{bl:gammaEta}
\begin{IEEEeqnarray}{rCl}
\gamma_n & = & (1 + \delta_n) e^{- n ( \muti {P}{W} - \tilde R - 2 \delta (\epsilon) )}, \label{eq:blGammaEtaGamma} \\
\kappa_n & = & \max \bigl\{ 2 \gamma_n, e^{-n \mu / 2} \bigr\}, \label{eq:blGammaEtaEta}
\end{IEEEeqnarray}
\end{subequations}
and note that \eqref{eq:blGammaEtaEta} implies that
\begin{IEEEeqnarray}{rCl}
\kappa_n - \gamma_n \geq e^{-n \mu / 2} / 2. \label{eq:etaMinusGamma}
\end{IEEEeqnarray}

Fix a realization $\idsetre {m^\prime}$ of $\idset {m^\prime}$ for which $\distsubof {\indexsetsre} {\idset {m^\prime} = \idsetre {m^\prime}} > 0$. From \eqref{eq:DefIDSet} it follows that all output sequences in $\idsetre {m^\prime}$ are of approximate type $P W$, i.e., that
\begin{IEEEeqnarray}{rCl}
\idsetre {m^\prime} & \subseteq & \eptyp ( P W ). \label{eq:formBsetL}
\end{IEEEeqnarray}
And from \eqref{eq:cardBsetL} it follows that
\begin{IEEEeqnarray}{rCl}
| \idsetre {m^\prime} | & \leq & (1 + \delta_n) e^{n ( \tilde R + \condent {W}{P} + \delta (\epsilon) )}. \label{eq:cardBsetL2}
\end{IEEEeqnarray}
The next computation is under $\dist_{\indexsetsre,\idsetre {m^\prime}}$, where we condition not only on $\{ \indexset \nu \} = \indexsetsre$ but also on $\idset {m^\prime} = \idsetre {m^\prime}$. The $n$-tuples in the pool $\bigl\{ \poolel v \bigr\}_{v \in \setV \setminus \indexsetre {m^\prime}}$ that are not indexed by $\indexsetre {m^\prime}$ are IID $\sim P^n$ also under $\dist_{\indexsetsre,\idsetre {m^\prime}}$, because the pool was drawn independently of the index-sets, and because by \eqref{eq:DefIDSet} $\idset {m^\prime}$ depends only on $\bigl\{ \poolel v \bigr\}_{v \in \indexset {m^\prime}}$. Hence, under $\dist_{\indexsetsre,\idsetre {m^\prime}}$ the  $[0,1]$-valued random variables $$\Bigl\{ W^n \bigl( Y^n \in \idset {m^\prime} \bigl| \poolel v \bigr) \Bigr\}_{ v \in \setV \setminus \indexsetre {m^\prime} }$$ are IID of mean
\begin{IEEEeqnarray}{l}
\BigEx{\indexsetsre, \idsetre {m^\prime}}{W^n \bigl( Y^n \in \idset {m^\prime} \bigl| \poolel v \bigr)} \nonumber \\
\quad \stackrel{(a)}= \sum_{\vecy \in \idsetre {m^\prime}} ( P W )^n ( \vecy ) \\
\quad \stackrel{(b)}\leq |\idsetre {m^\prime}| \, e^{ - n ( \ent {P W} - \delta (\epsilon) ) } \\
\quad \stackrel{(c)}\leq (1 + \delta_n) e^{-n ( \muti {P}{W} - \tilde R - 2 \delta (\epsilon) )} \\
\quad \stackrel{(d)}= \gamma_n, \label{eq:meanRVsWrongIDLe}
\end{IEEEeqnarray}
where $(a)$ holds because $\idset {m^\prime} = \idsetre {m^\prime}$ and $\bigl\{ \poolel v \bigr\}_{v \in \setV \setminus \indexsetre {m^\prime}}$ are IID $\sim P^n$ under $\dist_{\indexsetsre, \idsetre {m^\prime}}$; $(b)$ holds because $$( P W )^n ( \vecy ) \leq e^{ - n ( \ent{P W} - \delta (\epsilon) ) }, \quad \vecy \in \eptyp ( P W ),$$ and by \eqref{eq:formBsetL}; $(c)$ follows from \eqref{eq:cardBsetL2}; and $(d)$ holds by \eqref{eq:blGammaEtaGamma}. Consequently, Hoeffding's inequality (Proposition~\ref{pr:hoeffding}) implies that
\begin{IEEEeqnarray}{l}
\Biggdistsubof {\indexsetsre, \idsetre {m^\prime}}{\frac{1}{| \indexsetre m \setminus \indexsetre {m,m^\prime} |} \sum_{v \in \indexsetre m \setminus \indexsetre {m,m^\prime}} W^n \bigl( Y^n \in \idset {m^\prime} \bigl| \poolel v \bigr) \geq \kappa_n } \nonumber \\
\quad \stackrel{(a)}\leq \exp \bigl\{ - 2 \, | \indexsetre m \setminus \indexsetre {m,m^\prime} | \, ( \kappa_n - \gamma_n  )^2 \bigr\} \\
\quad \stackrel{(b)}\leq \exp \bigl\{ - | \indexsetre m \setminus \indexsetre {m,m^\prime} | \, e^{ - n \mu - \log 2} \bigr\} \\
\quad \stackrel{(c)} \leq \exp \Bigl\{ - e^{ n ( \tilde R - \mu ) - 2 \log 2} \Bigr\}, \quad \indexsetsre \in \setG_\mu, \, \distsubof {\indexsetsre} {\idset {m^\prime} = \idsetre {m^\prime}} > 0, \label{eq:probInOtherWTSet}
\end{IEEEeqnarray}
where $(a)$ holds because $\indexsetre m \setminus \indexsetre {m,m^\prime}$ is a subset of $\setV \setminus \indexsetre {m^\prime}$; $(b)$ follows from \eqref{eq:etaMinusGamma}; and $(c)$ follows from
\begin{IEEEeqnarray}{l}
\card {\indexsetre m \setminus \indexsetre {m,m^\prime}} \stackrel{(d)}>  (1 - \delta_n) e^{n \tilde R} - e^{n ( \tilde R - \mu / 2 ) + \log 2} \stackrel{(e)}\geq e^{n \tilde R - \log 2},
\end{IEEEeqnarray}
where $(d)$ is due to \eqref{eq:indexsetmSuffLarge} and \eqref{eq:indexsetIntersectionSuffSmall} (which hold because $\indexsetsre \in \setG_\mu$), and $(e)$ is due to \eqref{eq:nSuffLargeIndexSetWithoutIntersectionLarge}. By \eqref{eq:probInOtherWTSet} and because $| \indexsetre m \setminus \indexsetre {m,m^\prime} | \leq |\indexsetre m|$, the probability that the second term in \eqref{eq:probWrongIDRatioAndProbInOtherWTSet} exceeds $\kappa_n$ is upper-bounded by
\begin{IEEEeqnarray}{l}
\Biggdistsubof {\indexsetsre}{\frac{1}{|\indexsetre m|} \sum_{v \in \indexsetre m \setminus \indexsetre {m,m^\prime}} W^n \bigl( Y^n \in \idset {m^\prime} \bigl| \poolel v \bigr) \geq \kappa_n } \nonumber \\
\quad \leq \Biggdistsubof {\indexsetsre}{\frac{1}{| \indexsetre m \setminus \indexsetre {m,m^\prime} |} \sum_{v \in \indexsetre m \setminus \indexsetre {m,m^\prime}} W^n \bigl( Y^n \in \idset {m^\prime} \bigl| \poolel v \bigr) \geq \kappa_n } \\
\quad = \sum_{\idsetre {m^\prime}} \distsubof {\indexsetsre} {\idset {m^\prime} = \idsetre {m^\prime}} \, \Biggdistsubof {\indexsetsre, \idsetre {m^\prime}}{\frac{1}{| \indexsetre m \setminus \indexsetre {m,m^\prime} |} \sum_{v \in \indexsetre m \setminus \indexsetre {m,m^\prime}} W^n \bigl( Y^n \in \idset {m^\prime} \bigl| \poolel v \bigr) \geq \kappa_n } \\
\quad \leq  \exp \Bigl\{ - e^{ n ( \tilde R - \mu ) - 2 \log 2} \Bigr\}, \quad \indexsetsre \in \setG_\mu. \label{eq:probInOtherWTSet2}
\end{IEEEeqnarray}

Having obtained \eqref{eq:probWrongIDRatioAndProbInOtherWTSet}, \eqref{eq:probWrongIDRatio}, and \eqref{eq:probInOtherWTSet2} for every fixed distinct $m, \, m^\prime$, we are now ready to tackle the maximum over $m, \, m^\prime $ and prove \eqref{eq:toShowIDDMCPrWrongID}: Let
\begin{IEEEeqnarray}{rCl}
\omega_n = e^{-n \mu / 2 + 2 \log 2} + \kappa_n, \label{eq:omegaPrWrongIDDMC}
\end{IEEEeqnarray}
and note that, by \eqref{bl:gammaEta}, because $\mu > 0$, because $\delta_n$ converges to zero as $n$ tends to infinity, and because $2 \delta ( \epsilon ) < \muti {P}{W} - \tilde R$, there must exist a positive constant $\tau > 0$ and some $\eta_0 \in \naturals$ for which
\begin{IEEEeqnarray}{rCl}
\omega_n \leq e^{ - n \tau }, \quad  n \geq \eta_0. \label{eq:omegaExpSmallInBlocklength}
\end{IEEEeqnarray}
For every $\tau > 0$ and $\eta_0 \in \naturals$ satisfying \eqref{eq:omegaExpSmallInBlocklength} and for all $n$ exceeding $\eta_0$
\begin{IEEEeqnarray}{l}
\max_{\indexsetsre \in \setG_\mu} \bigdistsubof { \indexsetsre }{ P_{\textnormal{wrong-ID}} \geq e^{-n \tau} } \nonumber \\
\quad \stackrel{(a)}\leq \max_{\indexsetsre \in \setG_\mu} \bigdistsubof { \indexsetsre }{ P_{\textnormal{wrong-ID}} \geq \omega_n } \\
\quad \stackrel{(b)}= \max_{ \indexsetsre \in \setG_\mu} \bigdistsubof {\indexsetsre } { \exists \, m, m^\prime \in \setM, \, m \neq m^\prime \colon ( \bm Q_m W^n ) ( Y^n \in \idset {m^\prime} ) \geq \omega_n } \\
\quad \stackrel{(c)}\leq \max_{ \indexsetsre \in \setG_\mu} \sum_{m \in \setM} \sum_{m^\prime \neq m} \bigdistsubof{ \indexsetsre }{ ( \bm Q_m W^n ) ( Y^n \in \idset {m^\prime} ) \geq \omega_n} \\
\quad \stackrel{(d)}\leq \max_{ \indexsetsre \in \setG_\mu} \sum_{m \in \setM} \sum_{m^\prime \neq m} \Biggdistsubof{ \indexsetsre }{ \frac{|\indexsetre {m,m^\prime}|}{|\indexsetre m|} + \frac{1}{|\indexsetre m|} \sum_{v \in \indexsetre {m} \setminus \indexsetre {m,m^\prime}} W^n \bigl( Y^n \in \idset {m^\prime} \bigl| \poolel v \bigr) \geq \omega_n} \\
\quad \stackrel{(e)}\leq \max_{ \indexsetsre\in \setG_\mu} \sum_{m \in \setM} \sum_{m^\prime \neq m} \Biggdistsubof { \indexsetsre }{\frac{|\indexsetre {m,m^\prime}|}{|\indexsetre m|} \geq e^{-n \mu / 2 + 2 \log 2}} \nonumber \\
\qquad + \max_{ \indexsetsre \in \setG_\mu} \sum_{m \in \setM} \sum_{m^\prime \neq m} \Biggdistsubof{ \indexsetsre }{\frac{1}{|\indexsetre m|} \sum_{v \in \indexsetre m \setminus \indexsetre {m,m^\prime}} W^n \bigl( Y^n \in \idset {m^\prime} \bigl| \poolel v \bigr) \geq \kappa_n} \\
\quad \stackrel{(f)}\leq | \setM |^2 \exp \Bigl\{ - e^{n ( \tilde R - \mu ) - 2 \log 2 } \Bigr\} \\
\quad \stackrel{(g)}\rightarrow 0 \, ( n \rightarrow \infty ), \label{eq:maxPrWrongIDDMCConvZero}
\end{IEEEeqnarray}
where $(a)$ holds by \eqref{eq:omegaExpSmallInBlocklength}, because $n$ exceeds $\eta_0$; $(b)$ follows from \eqref{eq:randomPrWrongIDDMC}; $(c)$ follows from the Union-of-Events bound; $(d)$ follows from \eqref{eq:probWrongIDRatioAndProbInOtherWTSet}; $(e)$ follows from \eqref{eq:omegaPrWrongIDDMC} and the Union-of-Events bound; $(f)$ holds by \eqref{eq:probWrongIDRatio} and \eqref{eq:probInOtherWTSet2}; and $(g)$ holds because $\card {\setM} = \exp (\exp (n R))$ and $\mu < \tilde R - R$. 
\end{proof}

\section{Identification via the BC}\label{sec:IDBC}

In this section we establish the ID capacity region of the two-receiver BC $( \setX, \channel {y,z}{x}, \setY \times \setZ )$ under the average-error criterion, which requires that each receiver identify the message intended for it reliably in expectation over the uniform ID message intended for the other receiver. 
We begin with the basic definitions of an average-error ID code for the BC $\channel {y,z}{x}$:

\begin{defin}\label{def:IDCodeBC}
Fix finite sets $\setM_\ry$ and $\setM_\rz$, a blocklength $n \in \naturals$, and positive constants $\lambda^{\ry}_1, \lambda^{\ry}_2, \lambda^{\rz}_1, \lambda^{\rz}_2$. Associate with every ID message-pair $( m_{\ry}, m_{\rz} ) \in \setM_{\ry} \times \setM_{\rz}$ a PMF $Q_{m_\ry,m_\rz}$ on $\setX^n$, with every $m_\ry \in \setM_\ry$ an ID set $\setD_{m_{\ry}} \subset \setY^n$, and with every $m_\rz \in \setM_\rz$ an ID set $\setD_{m_\rz} \subset \setZ^n$. The collection of tuples $\bigl\{ Q_{m_\ry,m_\rz}, \setD_{m_\ry}, \setD_{m_\rz} \bigr\}_{(m_\ry,m_\rz) \in \setM_\ry \times \setM_\rz}$ is an $\bigl( n, \setM_\ry, \setM_\rz, \lambda^{\ry}_1, \lambda^{\ry}_2, \lambda^{\rz}_1, \lambda^{\rz}_2 \bigr)$ ID code for the BC $\channel {y,z} x$ if the maximum probabilities of missed identification at Terminals~$\ry$ and $\rz$
\begin{subequations}
\begin{IEEEeqnarray}{rCl}
p^\ry_{\textnormal{missed-ID}} & = & \max_{m_\ry \in \setM_\ry} \frac{1}{\card {\setM_\rz}} \sum_{m_\rz \in \setM_\rz} \bigl( Q_{m_\ry,m_\rz} W^n \bigr) \bigl( Y^n \notin \setD_{m_\ry} \bigr), \\
p^\rz_{\textnormal{missed-ID}} & = & \max_{m_\rz \in \setM_\rz} \frac{1}{\card {\setM_\ry}} \sum_{m_\ry \in \setM_\ry} \bigl( Q_{m_\ry,m_\rz} W^n \bigr) \bigl( Z^n \notin \setD_{m_\rz} \bigr)
\end{IEEEeqnarray}
\end{subequations}
satisfy
\begin{subequations}\label{bl:maxPrMissedIDBC}
\begin{IEEEeqnarray}{rCl}
p^\ry_{\textnormal{missed-ID}} & \leq & \lambda^{\ry}_1, \\
p^\rz_{\textnormal{missed-ID}} & \leq & \lambda^{\rz}_1,
\end{IEEEeqnarray}
\end{subequations}
and the maximum probabilities of wrong identification at Terminals~$\ry$ and $\rz$
\begin{subequations}
\begin{IEEEeqnarray}{rCl}
p^\ry_{\textnormal{wrong-ID}} & = & \max_{m_\ry \in \setM_\ry} \max_{m^\prime_\ry \neq m_\ry } \frac{1}{\card {\setM_\rz}} \sum_{m_\rz \in \setM_\rz} \bigl( Q_{m_\ry,m_\rz} W^n \bigr) \bigl( Y^n \in \setD_{m^\prime_\ry} \bigr), \\
p^\rz_{\textnormal{wrong-ID}} & = & \max_{m_\rz \in \setM_\rz} \max_{m^\prime_\rz \neq m_\rz } \frac{1}{\card {\setM_\ry}} \sum_{m_\ry \in \setM_\ry} \bigl( Q_{m_\ry,m_\rz} W^n \bigr) \bigl( Z^n \in \setD_{m^\prime_\rz} \bigr)
\end{IEEEeqnarray}
\end{subequations}
satisfy
\begin{subequations}\label{bl:maxPrWrongIDBC}
\begin{IEEEeqnarray}{rCl}
p^\ry_{\textnormal{wrong-ID}} & \leq & \lambda^{\ry}_2, \\
p^\rz_{\textnormal{wrong-ID}} & \leq & \lambda^{\rz}_2.
\end{IEEEeqnarray}
\end{subequations}
A rate-pair $( R_\ry, R_\rz )$ is called achievable if for every positive $\lambda^{\ry}_1$, $\lambda^{\ry}_2$, $ \lambda^{\rz}_1$, and $\lambda^{\rz}_2$ and for every sufficiently-large blocklength~$n$ there exists an $\bigl( n, \setM_\ry, \setM_\rz, \lambda^{\ry}_1, \lambda^{\ry}_2, \lambda^{\rz}_1, \lambda^{\rz}_2 \bigr)$ ID code for the BC with
\begin{IEEEeqnarray*}{rrCll}
& \tfrac{1}{n} \log \log | \setM_\ry | & \geq & R_\ry \quad &\textnormal{if } R_\ry > 0,
\\*[-0.625\normalbaselineskip]
\smash{\left\{
\IEEEstrut[6.39\jot]
\right.} \nonumber
\\*[-0.625\normalbaselineskip]
& |\setM_\ry| & = & 1 \quad &\textnormal{if } R_\ry = 0, \\[0.5\normalbaselineskip]
& \tfrac{1}{n} \log \log | \setM_\rz | & \geq & R_\rz \quad &\textnormal{if } R_\rz > 0,
\\*[-0.625\normalbaselineskip]
\smash{\left\{
\IEEEstrut[6.39\jot]
\right.} \nonumber
\\*[-0.625\normalbaselineskip]
& |\setM_\rz| & = & 1 \quad &\textnormal{if } R_\rz = 0.
\end{IEEEeqnarray*}
The ID capacity region $\setC$ of the BC is the closure of the set of all achievable rate-pairs.
\end{defin}

Equivalently, we can define an ID code for the BC $\channel {y,z}{x}$ as follows:

\begin{remark}\label{re:eqFormDefIDCodeBC}
Given a collection of PMFs $\bigl\{ Q_{m_\ry,m_\rz} \bigr\}_{(m_\ry,m_\rz) \in \setM_\ry \times \setM_\rz}$ on $\setX^n$, define the mixture PMFs on $\setX^n$
\begin{subequations} \label{bl:PMFsMarginalCodesBC}
\begin{IEEEeqnarray}{rCl}
Q_{m_\ry} & = & \frac{1}{\card {\setM_\rz}} \sum_{m_\rz \in \setM_\rz} Q_{m_\ry,m_\rz}, \quad m_\ry \in \setM_\ry, \\
Q_{m_\rz} & = & \frac{1}{\card {\setM_\ry}} \sum_{m_\ry \in \setM_\ry} Q_{m_\ry,m_\rz}, \quad m_\rz \in \setM_\rz.
\end{IEEEeqnarray}
\end{subequations}
The collection of tuples $\bigl\{ Q_{m_\ry,m_\rz}, \setD_{m_\ry}, \setD_{m_\rz} \bigr\}_{(m_\ry,m_\rz) \in \setM_\ry \times \setM_\rz}$ is an $\bigl( n, \setM_\ry, \setM_\rz, \lambda^{\ry}_1, \lambda^{\ry}_2, \lambda^{\rz}_1, \lambda^{\rz}_2 \bigr)$ ID code for the BC $\channel {y,z} x$ if, and only if, (iff) the following two requirements are met: 1) $\bigl\{ Q_{m_\ry}, \setD_{m_\ry} \bigr\}_{m_\ry \in \setM_\ry}$ is an $\bigl( n, \setM_\ry, \lambda^{\ry}_1, \lambda^{\ry}_2 \bigr)$ ID code for the marginal channel $\channely y x$; and 2) $\bigl\{Q_{m_\rz}, \setD_{m_\rz} \bigr\}_{m_\rz \in \setM_\rz}$ is an $\bigl( n, \setM_\rz, \lambda^{\rz}_1, \lambda^{\rz}_2 \bigr)$ ID code for $\channelz z x$.
\end{remark}

Our main result is a single-letter characterization of the ID capacity region of the BC:

\begin{theorem}\label{th:IDBC}
The ID capacity region $\setC$ of the BC $\channel {y,z}{x}$ is the set of all rate-pairs $( R_\ry, R_\rz ) \in ( \reals^+_0 )^2$ that for some PMF $P$ on $\setX$ satisfy
\begin{subequations}\label{bl:capacityBC}
\begin{IEEEeqnarray}{rCl}
R_\ry & \leq & \muti{P}{W_\ry}, \\
R_\rz & \leq & \muti{P}{W_\rz}.
\end{IEEEeqnarray}
\end{subequations}
\end{theorem}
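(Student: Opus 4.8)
The plan is to prove the two inclusions separately: an achievability (direct) part showing that every rate-pair with $0 < R_\ry < \muti{P}{W_\ry}$ and $0 < R_\rz < \muti{P}{W_\rz}$ for some PMF $P$ is achievable, and a strong converse showing that every achievable rate-pair satisfies \eqref{bl:capacityBC} for some $P$. Since the right-hand side of \eqref{bl:capacityBC} is closed (the simplex is compact and $P\mapsto(\muti{P}{W_\ry},\muti{P}{W_\rz})$ continuous) and already contains the axis points $(R_\ry,0)$ with $R_\ry\le\max_P\muti{P}{W_\ry}$ and $(0,R_\rz)$ with $R_\rz\le\max_P\muti{P}{W_\rz}$ — for which achievability reduces to the single-user Theorem~\ref{th:IDDMC} — these two parts, together with taking the closure of the achievable region, give the theorem.

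\emph{Achievability.} Fix $P$ and positive rates $R_\ry<\muti{P}{W_\ry}$, $R_\rz<\muti{P}{W_\rz}$, and choose expected bin rates $\tilde R_\ry,\tilde R_\rz$ and a pool rate $R_\poolre$ with $R_\ry<\tilde R_\ry<\muti{P}{W_\ry}$, $R_\rz<\tilde R_\rz<\muti{P}{W_\rz}$, $\tilde R_\ry,\tilde R_\rz<R_\poolre$, and $R_\poolre<\tilde R_\ry+\tilde R_\rz$ as in \eqref{eq:additionalConstraintBC} (feasible because the mutual informations are positive). I would use the random construction of the Introduction: a pool of $e^{nR_\poolre}$ $n$-tuples i.i.d.\ $\sim P^n$; for each $m_\ry$ (resp.\ $m_\rz$) an index-set obtained by keeping each pool index independently with probability $e^{-n(R_\poolre-\tilde R_\ry)}$ (resp.\ $e^{-n(R_\poolre-\tilde R_\rz)}$) together with the bin $\bin{m_\ry}$ (resp.\ $\bin{m_\rz}$) it indexes; the $(m_\ry,m_\rz)$-codeword drawn uniformly from $\bin{m_\ry}\cap\bin{m_\rz}$ when this is nonempty and uniformly from the pool otherwise; and ID sets $\setD_{m_\ry}=\bigcup_v\setT^{(n)}_\epsilon(P\times W_\ry\mid\poolel v)\subset\setY^n$ over the indices of $\bin{m_\ry}$, and analogously $\setD_{m_\rz}\subset\setZ^n$. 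I would then show that, with probability tending to one, this yields an $(n,\setM_\ry,\setM_\rz,\lambda^{\ry}_1,\lambda^{\ry}_2,\lambda^{\rz}_1,\lambda^{\rz}_2)$ ID code for any prescribed positive $\lambda$'s and $n$ large, in four steps. (i) As in Lemma~\ref{le:prEventEToZero}: with high probability all bins have their expected size up to a $(1\pm o(1))$ factor and same-terminal bins have exponentially smaller pairwise intersections (Proposition~\ref{pr:multChernoff} and a union bound over the doubly-exponentially many bins). (ii) For fixed $m_\ry$, $\Bigdistof{\bin{m_\ry}\cap\bin{m_\rz}=\emptyset}\le\exp(-e^{n(\tilde R_\ry+\tilde R_\rz-R_\poolre)})$, so by \eqref{eq:additionalConstraintBC} the expected number of $m_\rz$ with empty intersection is a vanishing fraction of $\card{\setM_\rz}$; conditioning on $\indexset{m_\ry}$ renders these events independent over $m_\rz$, so a Chernoff bound and a union bound over $m_\ry$ (here it is crucial that $\card{\setM_\rz}$ is doubly and $\card{\setM_\ry}$ only singly exponential) show that with high probability, for \emph{every} $m_\ry$, at most an $\epsilon$-fraction of the $m_\rz$ give an empty intersection, and symmetrically with $\ry$ and $\rz$ interchanged. (iii) Combining (i)--(ii): with high probability the mixture $Q_{m_\ry}=\card{\setM_\rz}^{-1}\sum_{m_\rz}Q_{m_\ry,m_\rz}$ is within vanishing Total-Variation distance of the uniform law on $\bin{m_\ry}$ for every $m_\ry$ — averaging the uniform laws on the sub-bins $\bin{m_\ry}\cap\bin{m_\rz}$ over a near-uniformly distributed $m_\rz$ reweights the elements of $\bin{m_\ry}$ essentially uniformly, and the bad pairs carry only vanishing mass — and symmetrically $Q_{m_\rz}$ is close to uniform on $\bin{m_\rz}$. (iv) The missed- and wrong-ID probabilities at Terminal~$\ry$ enter only through $Q_{m_\ry}W_\ry^n$, which is Total-Variation-Lipschitz in $Q_{m_\ry}$, so replacing $Q_{m_\ry}$ by the uniform law on $\bin{m_\ry}$ perturbs them by at most the distance in (iii); but for that law the pair $\{Q_{m_\ry},\setD_{m_\ry}\}$ \emph{is} the single-user ID code of Section~\ref{sec:IDCodingTechnique} on $W_\ry$ with parameters $R_\ry,\tilde R_\ry,R_\poolre$, so the argument behind Claim~\ref{re:convExpFast} makes its error probabilities exponentially small with high probability, and likewise at Terminal~$\rz$. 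By Remark~\ref{re:eqFormDefIDCodeBC} this is precisely what it means for the construction to be an ID code for the BC; in particular one exists and its encoder is deterministic, which proves achievability.

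\emph{Converse.} Let $(R_\ry,R_\rz)$ be achievable; fix $\lambda$'s with $\lambda^{\ry}_1+\lambda^{\ry}_2<1$ and $\lambda^{\rz}_1+\lambda^{\rz}_2<1$ and an ID code for the BC with $n$ large and $\tfrac1n\log\log\card{\setM_\ry}\ge R_\ry-o(1)$, $\tfrac1n\log\log\card{\setM_\rz}\ge R_\rz-o(1)$. By Remark~\ref{re:eqFormDefIDCodeBC}, $\{Q_{m_\ry},\setD_{m_\ry}\}$ and $\{Q_{m_\rz},\setD_{m_\rz}\}$ are ID codes for $W_\ry$ and $W_\rz$, so the strong converse of Han and Verd\'u (the converse half of Theorem~\ref{th:IDDMC}) controls their rates. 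The only point not already contained in the single-user theory is to exhibit one PMF $P$ for which these controls read $R_\ry\le\muti{P}{W_\ry}$ and $R_\rz\le\muti{P}{W_\rz}$ \emph{simultaneously} — the mere rectangle $R_\ry\le\max_P\muti{P}{W_\ry}$, $R_\rz\le\max_P\muti{P}{W_\rz}$ is in general strictly larger. For this I would open up the single-user strong converse to a type-resolved form: there being only polynomially many $n$-types, after discarding a $1/\mathrm{poly}(n)$ fraction of the messages on each side (harmless after the iterated logarithm) a single $n$-type $P_\ry$ governs the $\ry$-marginal code in the sense that $\tfrac1n\log\log\card{\setM_\ry}\le\muti{P_\ry}{W_\ry}+o(1)$ and the remaining $Q_{m_\ry}$ concentrate on $\setT^{(n)}_\epsilon(P_\ry)$, and similarly a type $P_\rz$ for the $\rz$-marginal code; since both marginal codes are built from the \emph{common} family $\{Q_{m_\ry,m_\rz}\}$ of input distributions, a counting argument produces a pair $(m_\ry,m_\rz)$ for which $Q_{m_\ry,m_\rz}$ concentrates on both $\setT^{(n)}_\epsilon(P_\ry)$ and $\setT^{(n)}_\epsilon(P_\rz)$, forcing $P_\ry$ and $P_\rz$ to coincide up to $o(1)$. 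Passing to a subsequence along which the types converge yields a PMF $P$ with $R_\ry\le\muti{P}{W_\ry}$ and $R_\rz\le\muti{P}{W_\rz}$, i.e.\ $(R_\ry,R_\rz)\in\setC$.

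\emph{Main obstacle.} The heart of the achievability is step (iii): controlling, uniformly over the doubly-exponentially many $m_\ry$ (and $m_\rz$), the Total-Variation distance between $Q_{m_\ry}$ and the uniform law on its bin when the encoder in fact samples from the random sub-bins $\bin{m_\ry}\cap\bin{m_\rz}$. This is exactly where the new constraint \eqref{eq:additionalConstraintBC} — which has no single-user counterpart — is needed, and where the interplay between the exponential bin sizes and the double-exponential message sets must be handled carefully. On the converse side the one genuinely new ingredient is pinning the two marginal bounds to a common input distribution; the rest is the single-user strong converse applied twice.
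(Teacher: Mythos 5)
Your \emph{achievability} sketch follows essentially the paper's argument: the same pool/bin construction, the same observation that to Receiver~$\ry$ the scheme should look like the single-user ID code on $W_\ry$, and the same reduction to controlling the Total-Variation distance between the induced marginal encoder $Q_{m_\ry}$ and the uniform law on $\bin{m_\ry}$, with the union bounds made affordable by the gap between the singly-exponential $\card\setV$, bin sizes, and the doubly-exponential $\card{\setM_\ry}$, $\card{\setM_\rz}$. The one organizational difference is that you split off the ``empty intersection'' event in a separate step (ii); the paper instead absorbs the empty-intersection correction into the exact mean computation of $\ind{v = V_{m_\ry,m_\rz}}$ under $\dist_{\indexsetre{m_\ry}}$ and then hits each pool index $v$ with Hoeffding directly. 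Both are fine, and your step (iv) (Total-Variation Lipschitzness of the error probabilities, followed by the single-user Claim~\ref{re:convExpFast}) is precisely the paper's use of the Data-Processing inequality for Total-Variation distance.

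The \emph{converse} is where I see a genuine gap. You claim that, after discarding a fraction of messages, ``a single $n$-type $P_\ry$ governs the $\ry$-marginal code in the sense that \ldots the remaining $Q_{m_\ry}$ concentrate on $\setT^{(n)}_\epsilon(P_\ry)$.'' This is not something the Han--Verd\'u machinery delivers, and it is in fact false in general: a homogeneous ID code may have every $Q_{m_\ry}$ put, say, mass $1/2$ on $\setT^{(n)}_{P_1}$ and $1/2$ on $\setT^{(n)}_{P_2}$ for two types $P_1 \neq P_2$ with $\muti{P_1}{W_\ry} = \muti{P_2}{W_\ry} = C_\ry$; no discarding and no further homogenization singles out one of them. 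What the machinery does give — and what the paper proves as Lemma~\ref{le:avgDistWeightOnTypes} — is that the \emph{mixture} $\tfrac{1}{\card\setM}\sum_m Q_m$ puts most of its mass on the set of \emph{sequences} $\{\vecx : \muti{P_\vecx}{W} > R - \epsilon\}$, a set that in general straddles many types. The paper then exploits the fact that the marginal codes $\{Q_{m_\ry}\}$ and $\{Q_{m_\rz}\}$ share the same grand mixture $Q = \tfrac{1}{\card{\setM_\ry}\card{\setM_\rz}}\sum Q_{m_\ry,m_\rz}$: a union bound in \eqref{eq:BCExistsPMFGoodForBoth} shows $Q$ assigns positive mass to sequences $\vecx$ with \emph{both} $\muti{P_\vecx}{W_\ry} > R_\ry - \epsilon$ and $\muti{P_\vecx}{W_\rz} > R_\rz - \epsilon$, and one then takes $P = P_\vecx$. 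Your structural instinct — link the two marginal bounds through the common family $\{Q_{m_\ry,m_\rz}\}$ — is exactly right, but the intermediate object must be a single well-behaved \emph{sequence}, not a single dominant \emph{type}; the ``counting argument'' forcing $P_\ry$ and $P_\rz$ to coincide cannot be carried out because its premises (a), (b) fail.
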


We prove the direct part in Section~\ref{sec:DPIDBC} and the converse part in Section~\ref{sec:CVIDBC}. In fact, we shall establish the following stronger results:

\begin{remark}\label{re:convExpFastBC}
The ID capacity region $\setC$ of the BC $\channel {y,z} x$ is achievable even if we require that the maximum probabilities of missed and wrong identification decay exponentially in the blocklength~$n$. And for all sufficiently-large~$n$, rate-pairs outside this region can be achieved only if $\lambda^{\ry}_1 + \lambda^{\ry}_2 + \lambda^{\rz}_1 + \lambda^{\rz}_2 \geq 1$.
\end{remark}

\begin{proof}
This follows from Claims~\ref{cl:toShowIDBC} and \ref{cl:toShowIDBCConv} ahead.
\end{proof}

In contrast to transmission via the BC, Theorem~\ref{th:IDBC} implies that for identification via the BC there is no trade-off between Receiver~$\setY$ and Receiver~$\setZ$'s rate. An intuitive explanation for this is that in transmission via the BC the message to the other receiver hurts because it is like noise, whereas here this effect is offset by the benefits afforded by randomization.\\

Recall that to achieve the ID capacity of a DMC requires stochastic encoders; deterministic encoders cannot achieve any positive ID rate \cite{ahlswededueck89}. On the BC this is not true:

\begin{remark}\label{re:detIDCodeOptBC}
Every rate-pair in the interior of the ID capacity region $\setC$ of the BC $\channel {y,z} x$ can be achieved using ID codes with deterministic encoders.
\end{remark}

\begin{proof}
The encoder we construct in Section~\ref{sec:DPIDBC} ahead to prove the direct part of Theorem~\ref{th:IDBC} is deterministic: it maps every ID message-pair to a channel-input sequence that is fully determined by the random code construction.
\end{proof}


As a corollary to Theorem~\ref{th:IDBC}, we next observe that the ID capacity region of the BC is convex. This requires proof, because the ID rate is the iterated logarithm of the number of ID messages normalized by the blocklength~$n$, and we therefore cannot invoke a time-sharing argument \cite[Remark~2]{verbovenmeulen90}.

\begin{corollary}
The ID capacity region of the BC $\channel {y,z} x$ is convex.
\end{corollary}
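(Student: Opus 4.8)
The plan is to sidestep time-sharing entirely and instead exploit the single-letter description furnished by Theorem~\ref{th:IDBC}, namely
\[
\setC = \Bigl\{ (R_\ry, R_\rz) \in (\reals^+_0)^2 \colon \exists\, P \in \mathscr P(\setX) \text{ with } R_\ry \leq \muti{P}{W_\ry} \text{ and } R_\rz \leq \muti{P}{W_\rz} \Bigr\}.
\]
First I would fix two rate-pairs $(R_\ry^{(1)}, R_\rz^{(1)}), (R_\ry^{(2)}, R_\rz^{(2)}) \in \setC$ and a weight $\lambda \in [0,1]$, and pick input distributions $P_1, P_2 \in \mathscr P(\setX)$ witnessing their membership in $\setC$, i.e.\ $R_\ry^{(i)} \leq \muti{P_i}{W_\ry}$ and $R_\rz^{(i)} \leq \muti{P_i}{W_\rz}$ for $i \in \{1,2\}$. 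The convex combination $\lambda(R_\ry^{(1)}, R_\rz^{(1)}) + (1-\lambda)(R_\ry^{(2)}, R_\rz^{(2)})$ plainly still lies in $(\reals^+_0)^2$, so it remains only to exhibit a witnessing input distribution for it.

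The key step is the concavity of the map $P \mapsto \muti{P}{W}$ for a fixed channel $W$ (classical: $\muti{P}{W} = \ent{PW} - \sum_x P(x) \ent{W(\cdot|x)}$ is a concave function of $PW$, hence of $P$, minus a linear function of $P$), applied separately to the two marginal channels $W_\ry$ and $W_\rz$. Putting $P_\lambda = \lambda P_1 + (1-\lambda)P_2 \in \mathscr P(\setX)$, concavity yields
\[
\muti{P_\lambda}{W_\ry} \geq \lambda \muti{P_1}{W_\ry} + (1-\lambda)\muti{P_2}{W_\ry} \geq \lambda R_\ry^{(1)} + (1-\lambda) R_\ry^{(2)},
\]
and symmetrically $\muti{P_\lambda}{W_\rz} \geq \lambda R_\rz^{(1)} + (1-\lambda) R_\rz^{(2)}$. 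Thus $P_\lambda$ witnesses that the convex combination lies in $\setC$, establishing convexity. (No separate appeal to the ``closure'' in Definition~\ref{def:IDCodeBC} is needed, since the set displayed above is already closed, $\mathscr P(\setX)$ being compact and mutual information continuous.)

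I do not expect a genuine obstacle; the only thing to be careful about is not to attempt the naive time-sharing argument, which fails here for the reason already noted before the corollary: concatenating an ID code of blocklength $\lambda n$ with one of blocklength $(1-\lambda)n$ multiplies the (doubly-exponential) message-set sizes, so by the identity $\exp(\exp(n R_1))\exp(\exp(n R_2)) \approx \exp(\exp(n\max\{R_1,R_2\}))$ the resulting ID rate is $\max\{R_1,R_2\}$ rather than the weighted average. The whole content of the proof is that Theorem~\ref{th:IDBC} converts the question into concavity of mutual information in the input distribution.
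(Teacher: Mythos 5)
Your proof is correct and follows essentially the same route as the paper: reduce to showing the single-letter region of Theorem~\ref{th:IDBC} is convex, and witness the convex combination of rate-pairs by the corresponding mixture $P_\lambda = \lambda P_1 + (1-\lambda)P_2$ of input distributions, using concavity of $P \mapsto \muti{P}{W}$ for each marginal channel. The only cosmetic difference is how that concavity is justified — you use the decomposition $\muti{P}{W} = \ent{PW} - \sum_x P(x)\ent{W(\cdot|x)}$, while the paper introduces an auxiliary $U$ and uses the Markov chain $U - X - Y$; both are standard and equivalent.
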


\begin{proof}
It suffices to show that the rate region in Theorem~\ref{th:IDBC} is convex. Given two PMFs $P^{(0)}_X$ and $P^{(1)}_X$ on $\setX$ and some $\alpha \in [ 0,1 ]$, let $P_U$ be the Bernoulli distribution with parameter $\alpha$; let the transition law $P_{X | U}$ be $P^{(U)}_X$; and draw $(U,X) \sim P_U \times P_{X|U}$. Denote the resulting law of $X$ by $P_X$. Then,
\begin{IEEEeqnarray}{l}
\alpha \bigmuti {P^{(1)}_X}{W_\ry} + ( 1 - \alpha ) \bigmuti {P^{(0)}_X}{W_\ry} \nonumber \\
\quad = \condmuti {P_{X|U}}{W_\ry}{P_U} \\
\quad \leq \muti {P_{U,X}}{W_\ry} \\
\quad \stackrel{(a)}= \muti {P_X}{W_\ry}, \label{eq:capRegBCConvex1}
\end{IEEEeqnarray}
where $(a)$ holds since $U$, $X$, and $Y$ form a Markov chain in that order. Likewise,
\begin{IEEEeqnarray}{l}
\alpha \bigmuti {P^{(1)}_X}{W_\rz} + ( 1 - \alpha ) \bigmuti {P^{(0)}_X}{W_\rz} \leq \muti {P_X}{W_\rz}. \label{eq:capRegBCConvex2}
\end{IEEEeqnarray}
Inequalities \eqref{eq:capRegBCConvex1} and \eqref{eq:capRegBCConvex2} combine to prove that the rate region in Theorem~\ref{th:IDBC} is convex.
\end{proof}

We next prove Theorem~\ref{th:IDBC}: Section~\ref{sec:DPIDBC} establishes the direct part and Section~\ref{sec:CVIDBC} a strong converse.

\subsection{The Direct Part of Theorem~\ref{th:IDBC}}\label{sec:DPIDBC}

In this section we prove the direct part of Theorem~\ref{th:IDBC} by fixing any input distribution $P \in \mathscr P (\setX)$ and any positive ID rate-pair $(R_\ry, R_\rz)$ satisfying
\begin{subequations}\label{bl:ratePairAch}
\begin{IEEEeqnarray}{rCcCl}
0 & < & R_\ry & < & \muti {P}{W_\ry}, \label{eq:ratePairAchRy} \\
0 & < & R_\rz & < & \muti {P}{W_\rz} \label{eq:ratePairAchRz}
\end{IEEEeqnarray}
\end{subequations}
and showing that the rate-pair $( R_\ry, R_\rz )$ is achievable. We assume that both $\muti {P}{W_\ry}$ and $\muti {P}{W_\rz}$ are positive; when they are not, the result follows from Theorem~\ref{th:IDDMC}. Let $\setM_\ry$ be a size-$\exp (\exp (n R_\ry))$ set of possible ID messages for Terminal~$\ry$, and let $\setM_\rz$ be a size-$\exp (\exp (n R_\rz))$ set of possible ID messages for Terminal~$\rz$. We next describe our random code construction and show that, for every positive $\lambda^{\ry}_1$, $\lambda^{\ry}_2$, $\lambda^{\rz}_1$, and $\lambda^{\rz}_2$ and for every sufficiently-large blocklength~$n$, it produces with high probability an $\bigl( n, \setM_\ry, \setM_\rz, \lambda^{\ry}_1, \lambda^{\ry}_2, \lambda^{\rz}_1, \lambda^{\rz}_2 \bigr)$ ID code for the BC $\channel {y,z} x$. The scheme that we propose builds on our code construction for the single-user channel in Section~\ref{sec:IDCodingTechnique} by making it appear to each receiver as though we were using an instance of the single-user ID code on its marginal channel. 

\subparagraph*{Code Generation:} Fix an expected bin rate $\tilde R_\ry$ for Terminal~$\ry$, an expected bin rate $\tilde R_\rz$ for Terminal~$\rz$, and a pool rate $R_\poolre$ satisfying
\begin{subequations}\label{bl:binAndPoolRateConstraintsBC}
\begin{IEEEeqnarray}{rCcCl}
R_\ry & < & \tilde R_\ry & < & \muti {P}{W_\ry}, \\
R_\rz & < & \tilde R_\rz & < & \muti {P}{W_\rz}, \\
&& \tilde R_\ry & < & R_\poolre, \\
&& \tilde R_\rz & < & R_\poolre, \\
R_\poolre & < & \tilde R_\ry + \tilde R_\rz. && \label{eq:blBinAndPoolRateConstraintsBCUBPoolRate}
\end{IEEEeqnarray}
\end{subequations}
This is possible by \eqref{bl:ratePairAch}. Draw $e^{n R_\poolre}$ $n$-tuples $\sim P^n$ independently and place them in a pool $\pool$. Index the $n$-tuples in the pool by the elements of a size-$e^{n R_\poolre}$ set $\setV$, e.g., $\{ 1, \ldots, e^{n R_\poolre} \}$, and denote by $\poolel v$ the $n$-tuple in $\pool$ that is indexed by~$v \in \setV$. For each receiving terminal $\Psi \in \{ \ry, \rz \}$ associate with each ID message $m_\Psi \in \setM_\Psi$ an index-set $\indexset {m_\Psi}$ and a bin $\bin {m_\Psi}$ as follows. Select each element of $\setV$ for inclusion in $\indexset {m_\Psi}$ independently with probability $e^{-n( R_\poolre - \tilde R_\Psi )}$, and let Bin~$\bin {m_\Psi}$ be the multiset that contains all the $n$-tuples in the pool that are indexed by $\indexset {m_\Psi}$, $$\bin {m_\Psi} = \bigl\{ \poolel v, \, v \in \indexset {m_\Psi} \bigr\}.$$ (Bin~$\bin {m_\Psi}$ is thus of expected size $e^{n \tilde R_\Psi}$.) Associate with each ID message-pair $(m_\ry,m_\rz) \in \setM_\ry \times \setM_\rz$ an index $V_{m_\ry,m_\rz}$ as follows. If $\indexset {m_\ry} \cap \indexset {m_\rz}$ is not empty, then draw $V_{m_\ry,m_\rz}$ uniformly over $\indexset {m_\ry} \cap \indexset {m_\rz}$. Otherwise draw $V_{m_\ry,m_\rz}$ uniformly over $\setV$. Reveal the pool $\pool$, the index-sets $\bigl\{ \indexset {m_\ry} \bigr\}_{m_\ry \in \setM_\ry}$ and $\bigl\{ \indexset {m_\rz} \bigr\}_{m_\rz \in \setM_\rz}$, the corresponding bins $\bigl\{ \bin {m_\ry} \bigr\}_{m_\ry \in \setM_\ry}$ and $\bigl\{ \bin {m_\rz} \bigr\}_{m_\rz \in \setM_\rz}$, and the indices $\bigl\{ V_{m_\ry,m_\rz} \bigr\}_{(m_\ry,m_\rz) \in \setM_\ry \times \setM_\rz}$ to all parties. The encoding and decoding are determined by
\begin{IEEEeqnarray}{l}
\rcode = \Bigl( \pool, \bigl\{ \indexset {m_\ry} \bigr\}_{m_\ry \in \setM_\ry}, \bigl\{ \indexset {m_\rz} \bigr\}_{m_\rz \in \setM_\rz}, \bigl\{ V_{m_\ry, m_\rz} \bigr\}_{ (m_\ry, m_\rz ) \in \setM_\ry \times \setM_\rz } \Bigr). \label{eq:randCodeBC}
\end{IEEEeqnarray}

\subparagraph*{Encoding:} To send ID Message-Pair~$(m_\ry, m_\rz) \in \setM_\ry \times \setM_\rz$, the encoder transmits the sequence $\poolel {V_{m_\ry,m_\rz}}$. ID Message-Pair~$(m_\ry,m_\rz)$ is thus associated with the $\{ 0,1 \}$-valued PMF
\begin{IEEEeqnarray}{rCl}
\bm Q_{m_\ry,m_\rz} (\vecx) & = & \ind {\vecx = \poolel {V_{m_\ry,m_\rz}}}, \quad \vecx \in \setX^n. \label{eq:defPMFRndCodeBC}
\end{IEEEeqnarray}
Note that once the code \eqref{eq:randCodeBC} has been constructed, the encoder is deterministic: it maps ID Message-Pair~$(m_\ry,m_\rz)$ to the $(m_\ry,m_\rz)$-codeword $\poolel {V_{m_\ry,m_\rz}}$. 

\subparagraph*{Decoding:} In this section the function $\delta (\cdot)$ maps every nonnegative real number $u$ to $u  \ent {P \times W}$. The decoders choose $\epsilon > 0$ sufficiently small so that $2 \delta ( \epsilon ) < \muti {P}{W_\ry} - \tilde R_\ry$ and $2 \delta ( \epsilon ) < \muti {P}{W_\rz} - \tilde R_\rz$. The $m^\prime_\ry$-focused party at Terminal~$\ry$ guesses that $m^\prime_\ry$ was sent iff for some index $v \in \indexset {m_\ry^\prime}$ the $n$-tuple $\poolel v$ in Bin~$\bin {m_\ry^\prime}$ is jointly $\epsilon$-typical with the Terminal-$\ry$ output-sequence $Y^n$, i.e., iff $( \poolel v, Y^n ) \in \eptyp (P \times W_\ry)$ for some $v \in \indexset {m_\ry^\prime}$. The set $\idset {m^\prime_\ry}$ of Terminal-$\ry$ output-sequences $\vecy \in \setY^n$ that result in the guess ``$m^\prime_\ry$ was sent'' is thus
\begin{IEEEeqnarray}{rCl}
\idset {m^\prime_\ry} & = & \bigcup_{v \in \indexset {m^\prime_\ry}} \setT^{ ( n  )}_\epsilon \bigl( P \times W_\ry \bigl| \poolel v \bigr). \label{eq:DefIDSetMy}
\end{IEEEeqnarray}
Likewise, the $m^\prime_\rz$-focused party at Terminal~$\rz$ guesses that $m^\prime_\rz$ was sent iff $( \poolel v, Z^n ) \in \eptyp (P \times W_\rz)$ for some $v \in \indexset {m_\rz^\prime}$. The set $\idset {m^\prime_\rz}$ of Terminal-$\rz$ output-sequences $\vecz \in \setZ^n$ that result in the guess ``$m^\prime_\rz$ was sent'' is thus
\begin{IEEEeqnarray}{rCl}
\idset {m^\prime_\rz} & = & \bigcup_{v \in \indexset {m^\prime_\rz}} \setT^{( n )}_\epsilon \bigl( P \times W_\rz \bigl| \poolel v \bigr). \label{eq:DefIDSetMz}
\end{IEEEeqnarray}

\subparagraph*{Analysis of the Probabilities of Missed and Wrong Identification:} We first note that $\rcode$ of \eqref{eq:randCodeBC} (together with the fixed blocklength $n$ and the chosen $\epsilon$) fully specifies the encoding and guessing rules. That is, the randomly constructed ID code
\begin{IEEEeqnarray}{l}
\bigl\{ \bm Q_{m_\ry,m_\rz}, \idset {m_\ry}, \idset {m_\rz} \bigr\}_{(m_\ry,m_\rz) \in \setM_\ry \times \setM_\rz} \label{eq:randCodeBC2}
\end{IEEEeqnarray}
is fully specified by $\rcode$. Let $\dist$ be the distribution of  $\rcode$, and let $\Exop$ denote expectation w.r.t.\ $\dist$. Subscripts indicate conditioning on the event that some of the chance variables assume the values indicated by the subscripts, e.g., $\dist_{\indexsetre {m_\ry}}$ denotes the distribution conditional on $\indexset {m_\ry} = \indexsetre {m_\ry}$, and $\Exop_{\indexsetre {m_\ry}}$ denotes the expectation w.r.t.\ $\dist_{\indexsetre {m_\ry}}$.\\

The maximum probabilities of missed and wrong identification of the randomly constructed ID code are the random variables
\begin{subequations}
\begin{IEEEeqnarray}{rCl}
P^\ry_{\textnormal{missed-ID}} & = & \max_{m_\ry \in \setM_\ry} \frac{1}{\card {\setM_\rz}} \sum_{m_\rz \in \setM_\rz} \bigl( \bm Q_{m_\ry,m_\rz} W^n \bigr) \bigl( Y^n \notin \idset {m_\ry} \bigr), \\
P^\rz_{\textnormal{missed-ID}} & = & \max_{m_\rz \in \setM_\rz} \frac{1}{\card {\setM_\ry}} \sum_{m_\ry \in \setM_\ry} \bigl( \bm Q_{m_\ry,m_\rz} W^n \bigr) \bigl( Z^n \notin \idset {m_\rz} \bigr), \\
P^\ry_{\textnormal{wrong-ID}} & = & \max_{m_\ry \in \setM_\ry} \max_{m^\prime_\ry \neq m_\ry } \frac{1}{\card {\setM_\rz}} \sum_{m_\rz \in \setM_\rz} \bigl( \bm Q_{m_\ry,m_\rz} W^n \bigr) \bigl( Y^n \in \idset {m^\prime_\ry} \bigr), \\
P^\rz_{\textnormal{wrong-ID}} & = & \max_{m_\rz \in \setM_\rz} \max_{m^\prime_\rz \neq m_\rz } \frac{1}{\card {\setM_\ry}} \sum_{m_\ry \in \setM_\ry} \bigl( \bm Q_{m_\ry,m_\rz} W^n \bigr) \bigl( Z^n \in \idset {m^\prime_\rz} \bigr).
\end{IEEEeqnarray}
\end{subequations}
They are fully specified by $\rcode$, because they are fully specified by the randomly constructed ID code \eqref{eq:randCodeBC2}, which is in turn fully specified by $\rcode$. To prove that for every choice of $\lambda^{\ry}_1, \, \lambda^{\ry}_2, \, \lambda^{\rz}_1, \, \lambda^{\rz}_2 > 0$ and $n$ sufficiently large the collection of tuples \eqref{eq:randCodeBC2} is with high probability an $\bigl( n, \setM_\ry, \setM_\rz, \lambda^{\ry}_1, \lambda^{\ry}_2, \lambda^{\rz}_1, \lambda^{\rz}_2 \bigr)$ ID code for the BC $\channel {y,z} x$, we prove the following stronger result:

\begin{claim}\label{cl:toShowIDBC}
The probabilities $P^\ry_{\textnormal{missed-ID}}$, $P^\rz_{\textnormal{missed-ID}}$, $P^\ry_{\textnormal{wrong-ID}}$, and $P^\rz_{\textnormal{wrong-ID}}$ of the randomly constructed ID code \eqref{eq:randCodeBC2} converge in probability to zero exponentially in the blocklength~$n$, i.e.,
\begin{IEEEeqnarray}{l}
\exists \, \tau > 0 \textnormal{ s.t.\ } \lim_{n \rightarrow \infty} \Bigdistof { \max \bigl\{ P^\ry_{\textnormal{missed-ID}}, P^\rz_{\textnormal{missed-ID}}, P^\ry_{\textnormal{wrong-ID}}, P^\rz_{\textnormal{wrong-ID}} \bigr\} \geq e^{-n \tau} } = 0. \label{eq:toShowIDBC}
\end{IEEEeqnarray}
\end{claim}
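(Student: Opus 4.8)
The plan is to reduce the four broadcast error quantities in \eqref{eq:toShowIDBC} to the two single-user problems on the marginal channels $\channely y x$ and $\channelz z x$, and then to invoke the analysis of Section~\ref{sec:IDCodingTechnique}. By Remark~\ref{re:eqFormDefIDCodeBC}, upon introducing the mixture PMFs $\bm Q_{m_\ry} = \frac{1}{\card {\setM_\rz}}\sum_{m_\rz \in \setM_\rz} \bm Q_{m_\ry,m_\rz}$ and $\bm Q_{m_\rz} = \frac{1}{\card {\setM_\ry}}\sum_{m_\ry \in \setM_\ry} \bm Q_{m_\ry,m_\rz}$, the random variables $P^\ry_{\textnormal{missed-ID}}$ and $P^\ry_{\textnormal{wrong-ID}}$ are exactly the maximum missed- and wrong-identification probabilities of the random single-user code $\{ \bm Q_{m_\ry}, \idset {m_\ry} \}_{m_\ry \in \setM_\ry}$ on $\channely y x$, and likewise $P^\rz_{\textnormal{missed-ID}}, P^\rz_{\textnormal{wrong-ID}}$ for $\{ \bm Q_{m_\rz}, \idset {m_\rz} \}_{m_\rz \in \setM_\rz}$ on $\channelz z x$. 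Since the construction is symmetric in $\ry$ and $\rz$, it suffices to control the two Terminal-$\ry$ quantities.

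Next, let $\bm U_{m_\ry}$ be the uniform law over Bin~$\bin {m_\ry}$, i.e.\ the law \eqref{eq:distIDMsgDMC} induced by an index drawn uniformly over $\indexset {m_\ry}$. The tuple $\bigl( \pool, \{ \indexset {m_\ry} \}_{m_\ry}, \{ \bm U_{m_\ry} \}_{m_\ry}, \{ \idset {m_\ry} \}_{m_\ry} \bigr)$ — with $\idset {m_\ry}$ as in \eqref{eq:DefIDSetMy} — has exactly the distribution of the single-user random code of Section~\ref{sec:IDCodingTechnique} run on $\channely y x$ with ID rate $R_\ry$, bin rate $\tilde R_\ry$, and pool rate $R_\poolre$; these satisfy the single-user constraints \eqref{eq:IDCodeConstraintsBinAndPoolRate} by \eqref{bl:binAndPoolRateConstraintsBC}. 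Hence Claim~\ref{re:convExpFast}, applied with $W \leftarrow W_\ry$, already shows that the maximum missed- and wrong-identification probabilities of $\{ \bm U_{m_\ry}, \idset {m_\ry} \}_{m_\ry}$ converge to zero exponentially in probability. Since for every event $\setB$ and every $m_\ry$ one has $\bigl| ( \bm Q_{m_\ry} W^n_\ry )( \setB ) - ( \bm U_{m_\ry} W^n_\ry )( \setB ) \bigr| \leq 2\, d( \bm Q_{m_\ry}, \bm U_{m_\ry} )$, it therefore remains only to prove that $\max_{m_\ry \in \setM_\ry} d( \bm Q_{m_\ry}, \bm U_{m_\ry} )$ converges to zero exponentially in probability, and likewise at Terminal~$\rz$; combining these facts with Claim~\ref{re:convExpFast} and a union bound over the two terminals and the four error terms then gives \eqref{eq:toShowIDBC}.

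The heart of the matter is this Total-Variation estimate, and it is where our having drawn the index-sets at random pays off (cf.\ Remark~\ref{re:drawIndexSetsAtRandom}). Conditioned on the pool $\pool$ and on $\indexset {m_\ry}$, the indices $\{ V_{m_\ry,m_\rz} \}_{m_\rz \in \setM_\rz}$ are i.i.d.; and — by the symmetry of the construction in the elements of $\indexset {m_\ry}$ — each $V_{m_\ry,m_\rz}$, after averaging over the randomness in $\indexset {m_\rz}$, is uniform over $\indexset {m_\ry}$ with probability $1 - p_0$ and uniform over $\setV$ with probability $p_0$, where $p_0 = \bigl( 1 - e^{-n( R_\poolre - \tilde R_\rz )} \bigr)^{| \indexset {m_\ry} |}$. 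By the Chernoff bound of Proposition~\ref{pr:multChernoff}, $| \indexset {m_\ry} |$ lies within a $( 1 \pm \delta_n )$-factor of $e^{n \tilde R_\ry}$ except with doubly-exponentially small probability (as in Lemma~\ref{le:prEventEToZero}), and on that event $p_0$ is doubly-exponentially small, since the constraint \eqref{eq:blBinAndPoolRateConstraintsBCUBPoolRate}, $R_\poolre < \tilde R_\ry + \tilde R_\rz$, makes the exponent $n( \tilde R_\ry + \tilde R_\rz - R_\poolre )$ grow linearly in $n$. Consequently, conditioned on such a realization of $( \pool, \indexset {m_\ry} )$, for each $v \in \indexset {m_\ry}$ the fraction of $m_\rz \in \setM_\rz$ with $V_{m_\ry,m_\rz} = v$ is an average of $\card {\setM_\rz}$ i.i.d.\ indicators whose mean differs from $1/| \indexset {m_\ry} |$ by at most $p_0$; Hoeffding's inequality (Proposition~\ref{pr:hoeffding}), applied to this average of \emph{doubly-exponentially many} terms, bounds the corresponding deviation by a doubly-exponentially small probability, and a union bound over the merely exponentially many indices $v \in \indexset {m_\ry}$ (together with a like bound on the fraction of $m_\rz$ with $V_{m_\ry,m_\rz} \notin \indexset {m_\ry}$) shows that the empirical law of $V_{m_\ry,m_\rz}$ over $m_\rz$ is within Total-Variation distance $e^{-n\tau}$ of the uniform law on $\indexset {m_\ry}$, whence \emph{a fortiori} $d( \bm Q_{m_\ry}, \bm U_{m_\ry} ) \leq e^{-n\tau}$ — except on an event of doubly-exponentially small probability. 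A final union bound over the doubly-exponentially many $m_\ry \in \setM_\ry$, absorbed because the per-message bound decays doubly-exponentially, completes the estimate; the argument at Terminal~$\rz$ is identical.

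The step I expect to be the main obstacle is exactly this uniform Total-Variation estimate: making it hold simultaneously over all $m_\ry$ forces one to keep careful track of three very different scales — the blocklength $n$, the exponential bin- and pool-sizes $e^{n\tilde R_\ry}, e^{n\tilde R_\rz}, e^{n R_\poolre}$, and the doubly-exponential message-set sizes — and, crucially, to exploit the constraint \eqref{eq:blBinAndPoolRateConstraintsBCUBPoolRate} so as to guarantee that the two terminals' bins typically intersect in an exponentially large set, whence the transmitted codeword $\poolel {V_{m_\ry,m_\rz}}$ lies in both $\bin {m_\ry}$ and $\bin {m_\rz}$ for all but a doubly-exponentially small fraction of the pairs $( m_\ry, m_\rz )$. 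Once this estimate is in hand, the remaining work is just the bookkeeping already carried out for the DMC in Section~\ref{sec:IDCodingTechnique}.
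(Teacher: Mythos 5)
Your proposal is correct and follows essentially the same route as the paper's proof: reduce to the marginal single-user codes via Remark~\ref{re:eqFormDefIDCodeBC}, compare the mixture encoder law $\bm Q_{m_\ry}$ to the uniform-over-bin law in Total-Variation distance so that Claim~\ref{re:convExpFast} applies, and establish the TV estimate by exhibiting the marginal law of $V_{m_\ry,m_\rz}$ given $\indexset{m_\ry}$ as a $(1-p_0,p_0)$-mixture of the uniform laws on $\indexset{m_\ry}$ and on $\setV$, then applying Chernoff to the bin sizes, Hoeffding over the doubly-exponentially many $m_\rz$, and union bounds over $v\in\setV$ and $m_\ry\in\setM_\ry$, with the constraint $R_\poolre<\tilde R_\ry+\tilde R_\rz$ used exactly where the paper uses it. The paper phrases the comparison as a Data-Processing step between two PMFs on the index set $\setV$ rather than directly on $\setX^n$, but this is only a cosmetic difference.
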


\begin{proof}
We will prove that
\begin{IEEEeqnarray}{l}
\exists \, \tau > 0 \textnormal{ s.t.\ } \lim_{n \rightarrow \infty} \Bigdistof { \max \bigl\{ P^\ry_{\textnormal{missed-ID}}, P^\ry_{\textnormal{wrong-ID}} \bigr\} \geq e^{-n \tau} } = 0. \label{eq:toShowIDBC2}
\end{IEEEeqnarray}
By swapping $\rz$ and $\ry$ throughout the proof it will then follow that \eqref{eq:toShowIDBC2} also holds when we replace $\ry$ with $\rz$, and \eqref{eq:toShowIDBC} will then follow using the Union-of-Events bound.

To prove \eqref{eq:toShowIDBC2} we consider for each $m_\ry \in \setM_\ry$ two distributions on the set $\setV$, which indexes the pool $\pool$. We fix some $v^\star \in \setV$ and define for every $m_\ry \in \setM_\ry$ the PMFs on $\setV$
\begin{subequations}\label{bl:remIndMzAndUnifBin}
\begin{IEEEeqnarray}{rCl}
\bm P_V^{(m_\ry)} (v) & = & \frac{1}{|\setM_\rz|} \sum_{m_\rz \in \setM_\rz} \ind {v = V_{m_\ry,m_\rz}}, \quad v \in \setV, \\ \label{eq:remIndMzBin}
\tilde {\bm P}_V^{(m_\ry)} (v) & = & \begin{cases} \frac{1}{|\indexset {m_\ry}|} \sum_{v^\prime \in \indexset {m_\ry}} \ind {v = v^\prime} &\textnormal{if } \indexset {m_\ry} \neq \emptyset, \\ \ind {v = v^\star} &\textnormal{otherwise}, \end{cases} \quad v \in \setV. \label{eq:remUnifBin}
\end{IEEEeqnarray}
\end{subequations}
The latter PMF is reminiscent of the distribution we encountered in \eqref{eq:distIDMsgDMC} and \eqref{eq:distIDMsgDMC2} in the single-user case. The former is related to the BC setting when we view $M_\setZ$ as uniform over $\setM_\setZ$. As we argue next, to establish \eqref{eq:toShowIDBC2} it suffices to show that the two are similar in the sense that
\begin{IEEEeqnarray}{l}
\exists \, \tau > 0 \textnormal{ s.t.\ } \lim_{n \rightarrow \infty} \biggdistof { \max_{m_\ry \in \setM_\ry} d \Bigl( \bm P_V^{(m_\ry)}, \tilde {\bm P}_V^{(m_\ry)} \Bigr) \geq e^{-n \tau} } = 0. \label{eq:toShowIDBC4}
\end{IEEEeqnarray}

To see why, let us define for every $m_\ry \in \setM_\ry$ the PMFs on $\setX^n$
\begin{subequations}\label{bl:encDistsMzUnifDMC}
\begin{IEEEeqnarray}{rCl}
\bm Q_{m_\ry} (\vecx) & = & \frac{1}{\card {\setM_\rz}} \sum_{m_\rz \in \setM_\rz} \bm Q_{m_\ry,m_\rz} (\vecx), \quad \vecx \in \setX^n, \\
\tilde {\bm Q}_{m_\ry} (\vecx) & = & \begin{cases} \frac{1}{|\indexset {m_\ry}|} \sum_{v^\prime \in \indexset {m_\ry}} \ind {\vecx = \poolel {v^\prime}} &\textnormal{if } \indexset {m_\ry} \neq \emptyset, \\ \ind {\vecx = \poolel {v^\star}} &\textnormal{otherwise}, \end{cases} \quad \vecx \in \setX^n.
\end{IEEEeqnarray}
\end{subequations}
The collection of tuples $\bigl\{ \bm Q_{m_\ry}, \idset {m_\ry} \bigr\}_{m_\ry \in \setM_\ry}$ can be viewed as a randomly constructed ID code for the DMC $W_\ry (y|x)$ with maximum probability of missed identification
\begin{IEEEeqnarray}{l}
\max_{m_\ry \in \setM_\ry} \bigl( \bm Q_{m_\ry} W^n \bigr) \bigl( Y^n \notin \idset {m_\ry} \bigr) \nonumber \\
\quad = \max_{m_\ry \in \setM_\ry} \frac{1}{\card {\setM_\rz}} \sum_{m_\rz \in \setM_\rz} \bigl( \bm Q_{m_\ry,m_\rz} W^n \bigr) \bigl( Y^n \notin \idset {m_\ry} \bigr) \\
\quad = P^\ry_{\textnormal{missed-ID}}
\end{IEEEeqnarray}
and maximum probability of wrong identification
\begin{IEEEeqnarray}{l}
\max_{m_\ry \in \setM_\ry} \max_{m^\prime_\ry \neq m_\ry } \bigl( \bm Q_{m_\ry} W^n \bigr) \bigl( Y^n \in \idset {m^\prime_\ry} \bigr) \nonumber \\
\quad = \max_{m_\ry \in \setM_\ry} \max_{m^\prime_\ry \neq m_\ry } \frac{1}{\card {\setM_\rz}} \sum_{m_\rz \in \setM_\rz} \bigl( \bm Q_{m_\ry,m_\rz} W^n \bigr) \bigl( Y^n \in \idset {m^\prime_\ry} \bigr) \\
\quad = P^\ry_{\textnormal{wrong-ID}}.
\end{IEEEeqnarray}
And $\bigl\{ \tilde {\bm Q}_{m_\ry}, \idset {m_\ry} \bigr\}_{m_\ry \in \setM_\ry}$ has the same law as the randomly constructed ID code $\{ \bm Q_m, \idset m \}_{m \in \setM}$ of Section~\ref{sec:IDCodingTechnique} for the DMC $W = W_\ry$ with blocklength~$n$, fixed element $v^\star$ of $\setV$, decoding parameter $\epsilon$, size-$\exp (\exp (n R_\ry))$ set $\setM_\ry$ of possible ID messages, expected bin rate $\tilde R_\ry$, and pool rate $R_\poolre$. (Note that $\epsilon$, $R_\ry$, $\tilde R_\ry$, and $R_\poolre$ are eligible for the random code construction in Section~\ref{sec:IDCodingTechnique}, because $\epsilon$ is positive and sufficiently small so that $2 \epsilon \ent {P \times W_\ry} < \muti {P}{W_\ry} - \tilde R_\ry$, and because of \eqref{bl:ratePairAch} and \eqref{bl:binAndPoolRateConstraintsBC}.) Let $\tilde P^\ry_{\textnormal{missed-ID}}$ and $\tilde P^\ry_{\textnormal{wrong-ID}}$ denote the maximum probabilities of missed and wrong identification of the randomly constructed ID code $\bigl\{ \tilde {\bm Q}_{m_\ry}, \idset {m_\ry} \bigr\}_{m_\ry \in \setM_\ry}$, i.e.,
\begin{subequations}
\begin{IEEEeqnarray}{rCl}
\tilde P^\ry_{\textnormal{missed-ID}} & = & \max_{m_\ry \in \setM_\ry} \bigl( \tilde {\bm Q}_{m_\ry} W^n \bigr) \bigl( Y^n \notin \idset {m_\ry} \bigr), \\
\tilde P^\ry_{\textnormal{wrong-ID}} & = & \max_{m_\ry \in \setM_\ry} \max_{m^\prime_\ry \neq m_\ry } \bigl( \tilde {\bm Q}_{m_\ry} W^n \bigr) \bigl( Y^n \in \idset {m^\prime_\ry} \bigr).
\end{IEEEeqnarray}
\end{subequations}
By Claim~\ref{re:convExpFast} on the single-user channel
\begin{IEEEeqnarray}{l}
\exists \, \tau > 0 \textnormal{ s.t.\ } \lim_{n \rightarrow \infty} \Bigdistof { \max \bigl\{ \tilde P^\ry_{\textnormal{missed-ID}}, \tilde P^\ry_{\textnormal{wrong-ID}} \bigr\} \geq e^{-n \tau} } = 0. \label{eq:knowFromDMCIDBC}
\end{IEEEeqnarray}
And by definition of the Total-Variation distance
\begin{subequations}\label{bl:ubPrDecErrorIDBCDMCAndTVDist}
\begin{IEEEeqnarray}{rCl}
P^\ry_{\textnormal{missed-ID}} & \leq & \tilde P^\ry_{\textnormal{missed-ID}} + \max_{m_\ry \in \setM_\ry} d \bigl( \bm Q_{m_\ry} W_\ry^n, \tilde {\bm Q}_{m_\ry} W_\ry^n \bigr), \\
P^\ry_{\textnormal{wrong-ID}} & \leq & \tilde P^\ry_{\textnormal{wrong-ID}} + \max_{m_\ry \in \setM_\ry} d \bigl( \bm Q_{m_\ry} W_\ry^n, \tilde {\bm Q}_{m_\ry} W_\ry^n \bigr).
\end{IEEEeqnarray}
\end{subequations}
For every $\tau_1$, $\tau_2$, and $\tau < \min \{ \tau_1, \tau_2 \}$ we have for all sufficiently-large $n$,
\begin{IEEEeqnarray}{rCl}
e^{-n \tau_1} + e^{-n \tau_2} & \leq & e^{- n \tau}.
\end{IEEEeqnarray}
This, combined with the Union-of-Events bound, \eqref{eq:knowFromDMCIDBC}, and \eqref{bl:ubPrDecErrorIDBCDMCAndTVDist}, implies that to establish \eqref{eq:toShowIDBC2} it suffices to show that
\begin{IEEEeqnarray}{l}
\exists \, \tau > 0 \textnormal{ s.t.\ } \lim_{n \rightarrow \infty} \biggdistof { \max_{m_\ry \in \setM_\ry} d \bigl( \bm Q_{m_\ry} W_\ry^n, \tilde {\bm Q}_{m_\ry} W_\ry^n \bigr) \geq e^{-n \tau} } = 0. \label{eq:toShowIDBC3}
\end{IEEEeqnarray}
Consequently, to prove our claim that \eqref{eq:toShowIDBC4} implies \eqref{eq:toShowIDBC2}, we only have to show that \eqref{eq:toShowIDBC4} implies \eqref{eq:toShowIDBC3}. To that end, define the conditional PMF
\begin{IEEEeqnarray}{rCl}
\bm P_{X^n|V} (\vecx|v) & = & \ind {\vecx = \poolel v}, \quad ( \vecx, v ) \in \setX^n \times \setV, \label{eq:BCCondPMFXnGivV}
\end{IEEEeqnarray}
and note that for every $m_\ry \in \setM_\ry$
\begin{subequations}
\begin{IEEEeqnarray}{rCl}
\bigl( \bm Q_{m_\ry} W_\ry^n \bigr) (\vecy) & = & \Bigl( \bm P_V^{(m_\ry)} \bm P_{X^n|V} W_\ry^n \Bigr) (\vecy), \quad \vecy \in \setY^n, \\
\bigl( \tilde { \bm Q }_{m_\ry} W_\ry^n \bigr) (\vecy) & = & \Bigl( \tilde {\bm P}_V^{(m_\ry)} \bm P_{X^n|V} W_\ry^n \Bigr) (\vecy), \quad \vecy \in \setY^n,
\end{IEEEeqnarray}
\end{subequations}
where we used \eqref{bl:remIndMzAndUnifBin}, \eqref{bl:encDistsMzUnifDMC}, and \eqref{eq:BCCondPMFXnGivV}, and in the first equality also \eqref{eq:defPMFRndCodeBC}. We can now upper-bound $d \bigl( \bm Q_{m_\ry} W_\ry^n, \tilde {\bm Q}_{m_\ry} W_\ry^n \bigr)$ by
\begin{IEEEeqnarray}{l}
d \bigl( \bm Q_{m_\ry} W_\ry^n, \tilde {\bm Q}_{m_\ry} W_\ry^n \bigr) \nonumber \\
\quad = d \Bigl( \bm P_V^{(m_\ry)} \bm P_{X^n|V} W_\ry^n, \tilde {\bm P}_V^{(m_\ry)} \bm P_{X^n|V} W_\ry^n \Bigr) \\
\quad \leq d \Bigl( \bm P_V^{(m_\ry)}, \tilde {\bm P}_V^{(m_\ry)} \Bigr), \label{eq:toShowIDBC3DataProcessingTotVarDist}
\end{IEEEeqnarray}
where the last inequality follows from the Data-Processing inequality for the Total-Variation distance \cite[Lemma~1]{cannoneronservedio15}. From \eqref{eq:toShowIDBC3DataProcessingTotVarDist} we conclude that \eqref{eq:toShowIDBC4} implies \eqref{eq:toShowIDBC3} and hence also \eqref{eq:toShowIDBC2}.\\

Having established that \eqref{eq:toShowIDBC4} implies \eqref{eq:toShowIDBC2}, it remains to prove \eqref{eq:toShowIDBC4}. Before we do that, we give an intuitive explanation why \eqref{eq:toShowIDBC4} holds. Fix $m_\ry \in \setM_\ry$ and a realization $\indexsetre {m_\ry}$ of the corresponding index-set $\indexset {m_\ry}$, and assume that $\indexsetre {m_\ry} \approx e^{n \tilde R_\ry}$. For every $m_\rz \in \setM_\rz$, the probabilitiy that the intersection of $\indexsetre {m_\ry}$ and $\indexset {m_\rz}$ is empty is very small, and if the intersection is nonempty, then, by our random construction of $\indexset {m_\rz}$ and $V_{m_\ry,m_\rz}$, the codeword-index $V_{m_\ry,m_\rz}$ is drawn uniformly at random from $\indexsetre {m_\ry}$. Because $\indexsetre {m_\ry}$ is exponential in $n$ and the cardinality of $\setM_\setZ$ is double-exponential in $n$, and because, by our random construction of $\{ \indexset {m_\rz} \}_{m_\rz \in \setM_\rz}$ and $\{ V_{m_\ry,m_\rz} \}_{m_\rz \in \setM_\rz}$, the codeword-indices $\{ V_{m_\ry,m_\rz} \}_{m_\rz \in \setM_\rz}$ are drawn independently of each other, \eqref{eq:toShowIDBC4} can be derived using concentration inequalities.

To prove \eqref{eq:toShowIDBC4} rigorously, fix some $\mu$ satisfying
\begin{IEEEeqnarray}{l}
0 < \mu < \tilde R_\ry - R_\ry, \label{eq:muIDBCPf}
\end{IEEEeqnarray}
and let
\begin{IEEEeqnarray}{l}
\delta_n = e^{-n \mu / 2}. \label{eq:deltanIDBCPf}
\end{IEEEeqnarray}
Introduce the set $\setH^\ry_\mu$ comprising the realizations $\{ \indexsetre \nu \}_{\nu \in \setM_\ry}$ of the index-sets $\{ \indexset {\nu} \}_{\nu \in \setM_\ry}$ satisfying that
 \begin{IEEEeqnarray}{rCl}
|\indexsetre \nu| > (1 - \delta_n) e^{n \tilde R_\ry}, \, \forall \, \nu \in \setM_\ry. \label{eq:IDBCSetH}
\end{IEEEeqnarray}
We upper-bound $\max_{m_\ry \in \setM_\ry} d \bigl( \bm P_V^{(m_\ry)}, \tilde {\bm P}_V^{(m_\ry)} \bigr)$ differently depending on whether or not $\{ \indexset \nu \}$ is in $\setH^\ry_\mu$, where $\{ \indexset \nu \}$ is short for $\{ \indexset {\nu} \}_{\nu \in \setM_\ry}$. If $\{ \indexset \nu \} \notin \setH^\ry_\mu$, then we upper-bound it by one (which is an upper bound on the Total-Variation distance between any two probability measures) to obtain for every $\tau > 0$
\begin{IEEEeqnarray}{l}
\biggdistof { \max_{m_\ry \in \setM_\ry} d \Bigl( \bm P_V^{(m_\ry)}, \tilde {\bm P}_V^{(m_\ry)} \Bigr) \geq e^{-n \tau} } \nonumber \\
\quad \leq \bigdistof { \{ \indexset \nu \} \notin \setH^\ry_\mu } + \sum_{ \indexsetsre \in \setH^\ry_\mu } \bigdistof { \{ \indexset \nu \} =  \indexsetsre } \, \biggdistsubof { \indexsetsre } { \max_{m_\ry \in \setM_\ry} d \Bigl( \bm P_V^{(m_\ry)}, \tilde {\bm P}_V^{(m_\ry)} \Bigr) \geq e^{-n \tau} }. \label{eq:IDBCToShowNotinAndInH}
\end{IEEEeqnarray}

We consider the two terms on the RHS of \eqref{eq:IDBCToShowNotinAndInH} separately, beginning with $\bigdistof { \{ \indexset \nu \} \notin \setH^\ry_\mu }$. Following the proof of Lemma~\ref{le:prEventEToZero} in Section~\ref{sec:IDCodingTechnique}, we will show that $\bigdistof { \{ \indexset \nu \} \notin \setH^\ry_\mu }$ converges to zero as $n$ tends to infinity. This does not follow from Lemma~\ref{le:prEventEToZero}, because here we require $\mu$ to satisfy \eqref{eq:muIDBCPf} instead of the more restrictive condition \eqref{eq:IDCodeMu} of Section~\ref{sec:IDCodingTechnique}. For every fixed $\nu \in \setM_\ry$ the $e^{n R_\poolre}$ binary random variables $\{ \ind {v \in \indexset \nu} \}_{v \in \setV}$ are IID, and
\begin{IEEEeqnarray}{rCl}
\BiggEx {}{\sum_{v \in \setV} \ind {v \in \indexset \nu}} & = & \sum_{v \in \setV} \distof {v \in \indexset \nu} = e^{ n \tilde R_\ry}.
\end{IEEEeqnarray}
Consequently, by the multiplicative Chernoff bound \eqref{eq:multChernDeltaSm1Sm} in Proposition~\ref{pr:multChernoff},
\begin{IEEEeqnarray}{rCl}
\Bigdistof { |\indexset \nu| \leq (1 - \delta_n) \, e^{n \tilde R_\ry} } & = & \Biggdistof { \sum_{v \in \setV} \ind {v \in \indexset \nu} \leq (1 - \delta_n) \, e^{n \tilde R_\ry} } \\
& \leq & \exp \bigl\{ - \delta_n^2 \, e^{n \tilde R_\ry - \log 2} \bigr\} \\
& = & \exp \bigl\{ - e^{n (\tilde R_\ry - \mu) - \log 2} \bigr\}.
\end{IEEEeqnarray}
The Union-of-Events bound thus implies that
\begin{IEEEeqnarray}{rCl}
\bigdistof { \{ \indexset \nu \} \notin \setH^\ry_\mu } & \leq & |\setM_\ry| \exp \bigl\{ - e^{n (\tilde R_\ry - \mu) - \log 2} \bigr\} \\
& \stackrel{(a)}\rightarrow & 0 \, (n \rightarrow \infty), \label{eq:allBmySuffLarge}
\end{IEEEeqnarray}
where $(a)$ holds because $|\setM_\ry| = \exp (\exp (n R_\ry))$ and by \eqref{eq:muIDBCPf}.

Having established \eqref{eq:allBmySuffLarge}, we return to \eqref{eq:IDBCToShowNotinAndInH} and conclude the proof of \eqref{eq:toShowIDBC4} by showing that
\begin{IEEEeqnarray}{l}
\exists \, \tau > 0 \textnormal{ s.t.\ } \lim_{n \rightarrow \infty} \max_{ \indexsetsre \in \setH^\ry_\mu}  \biggdistsubof { \indexsetsre } { \max_{m_\ry \in \setM_\ry} d \Bigl( \bm P_V^{(m_\ry)}, \tilde {\bm P}_V^{(m_\ry)} \Bigr) \geq e^{-n \tau} } = 0. \label{eq:toShowIDBC5}
\end{IEEEeqnarray}
(The proof of \eqref{eq:toShowIDBC5} ahead exploits the fact that the index-sets $\bigl\{ \indexset {m_\rz} \bigr\}_{m_\rz \in \setM_\rz}$ are drawn at random. Likewise, when we prove \eqref{eq:toShowIDBC2} with $\ry$ replaced by $\rz$, we shall need the fact that the index-sets $\bigl\{ \indexset {m_\ry} \bigr\}_{m_\ry \in \setM_\ry}$ are drawn at random. Hence Remark~\ref{re:drawIndexSetsAtRandom}.) To prove \eqref{eq:toShowIDBC5}, let us henceforth assume that $n$ is large enough so that the following two inequalities hold:
\begin{subequations}\label{bl:IDBCnLargeEnough}
\begin{IEEEeqnarray}{rCl}
(1 - \delta_n) e^{n \tilde R_\ry} & \geq & 1, \label{eq:IDBCnLargeEnough} \\
\delta_n & \leq & 1/2, \label{eq:IDBCnLargeEnough2}
\end{IEEEeqnarray}
 \end{subequations}
where $\delta_n$ is defined in \eqref{eq:deltanIDBCPf}. (This is possible, because $\delta_n$ converges to zero as $n$ tends to infinity and $\tilde R_\ry > 0$.) Fix any realization $\indexsetsre$ in $\setH^\ry_\mu$. Rather than directly upper-bounding the maximum over $m_\ry \in \setM_\ry$ of $d \bigl( \bm P_V^{(m_\ry)}, \tilde {\bm P}_V^{(m_\ry)} \bigr)$ under $\dist_{\indexsetsre}$, we first consider $d \bigl( \bm P_V^{(m_\ry)}, \tilde {\bm P}_V^{(m_\ry)} \bigr)$ for a fixed $m_\ry \in \setM_\ry$. By \eqref{eq:IDBCSetH} (which holds because $\indexsetsre \in \setH^\ry_\mu$) and \eqref{eq:IDBCnLargeEnough}, $\indexsetre {m_\ry}$ is nonempty. For every fixed $v \in \setV \setminus \indexsetre {m_\ry}$ we therefore have that under $\dist_{ \indexsetsre }$ the $\exp (\exp (n R_\rz))$ binary random variables $\bigl\{ \ind {v = V_{m_\ry,m_\rz}} \bigr\}_{m_\rz \in \setM_\rz}$ are IID and of mean
\begin{IEEEeqnarray}{l}
\bigEx { \indexsetsre }{\ind {v = V_{m_\ry,m_\rz}}} \nonumber \\
\quad = \bigdistsubof{ \indexsetsre }{V_{m_\ry,m_\rz} = v}\\
\quad \stackrel{(a)}= \frac{1}{|\setV|} \distsubof { \indexsetsre }{\indexsetre {m_\ry} \cap \indexset {m_\rz} = \emptyset} \\
\quad \stackrel{(b)}= \frac{1}{|\setV|} \distsubof {}{\indexsetre {m_\ry} \cap \indexset {m_\rz} = \emptyset} \\
\quad \stackrel{(c)}= \frac{1}{|\setV|} \bigl( 1 - e^{-n (R_\poolre - \tilde R_\rz)} \bigr)^{|\indexsetre {m_\ry}|} \label{eq:meanVNotinBMy} \\
\quad \stackrel{(d)}\leq \exp \bigl\{ - e^{-n (R_\poolre - \tilde R_\rz)} |\indexsetre {m_\ry}| - n R_\poolre \bigr\} \\
\quad \stackrel{(e)}\leq (1 - \delta_n)^{-1} \exp \bigl\{ - (1 - \delta_n) e^{n (\tilde R_\ry + \tilde R_\rz - R_\poolre)} - n \tilde R_\ry \bigr\}, \quad v \in \setV \setminus \setV_{m_\ry} \label{eq:meanVNotinBMyBd}
\end{IEEEeqnarray}
with the following justification. Equality~$(a)$ holds because $v \notin \indexsetre {m_\ry}$ and $\indexset {m_\ry} = \indexsetre {m_\ry}$ $\dist_{ \indexsetsre }$-almost-surely, and therefore: if $\indexsetre {m_\ry} \cap \indexset {m_\rz} \neq \emptyset$, then $V_{m_\ry,m_\rz} \neq v$, and otherwise $V_{m_\ry,m_\rz}$ is uniform over $\setV$. Equality~$(b)$ holds because $\indexset {m_\rz}$ is independent of $\{ \indexset \nu \}_{\nu \in \setM_\ry}$, and its distribution w.r.t.\ $\dist_{ \indexsetsre }$ is thus the same as w.r.t.\ $\dist$; $(c)$ holds because we have selected each element of $\setV$ for inclusion in $\indexset {m_\rz}$ independently with probability $e^{-n (R_\poolre - \tilde R_\rz)}$; $(d)$ holds because $|\setV| = e^{n R_\poolre}$ and because
\begin{equation}
1 - x \leq e^{-x}, \quad x \in \reals; \label{eq:1MinXSmEMinX}
\end{equation}
and $(e)$ holds because $0 \leq \delta_n < 1$, by \eqref{eq:IDBCSetH} (which holds because $\indexsetsre \in \setH^\ry_\mu$), and because $\tilde R_\ry < R_\poolre$. Similarly, for every fixed $v \in \indexsetre {m_\ry}$ we have that  under $\dist_{ \indexsetsre }$ the $\exp (\exp (n R_\rz))$ binary random variables $\bigl\{ \ind {v = V_{m_\ry,m_\rz}} \bigr\}_{m_\rz \in \setM_\rz}$ are IID and of mean
\begin{IEEEeqnarray}{l}
\bigEx { \indexsetsre }{\ind {v = V_{m_\ry,m_\rz}}} \nonumber \\
\quad = \bigdistsubof{\indexsetsre}{V_{m_\ry,m_\rz} = v}\\
\quad \stackrel{(a)}= \frac{1}{|\indexsetre {m_\ry}|} \bigdistsubof {\indexsetsre}{V_{m_\ry,m_\rz} \in \indexsetre {m_\ry}} \\
\quad = \frac{1}{|\indexsetre {m_\ry}|} \Bigl( 1 - \bigdistsubof {\indexsetsre}{V_{m_\ry,m_\rz} \notin \indexsetre {m_\ry}} \Bigr) \\
\quad \stackrel{(b)}= \frac{1}{|\indexsetre {m_\ry}|} \biggl( 1 - \frac{|\setV| - |\indexsetre {m_\ry}|}{|\setV|} \bigl( 1 - e^{-n (R_\poolre - \tilde R_\rz)} \bigr)^{|\indexsetre {m_\ry}|} \biggr) \\
\quad = \frac{1}{|\indexsetre {m_\ry}|} - \biggl( \frac{1}{|\indexsetre {m_\ry}|} - \frac{1}{|\setV|} \biggr) \bigl( 1 - e^{-n (R_\poolre - \tilde R_\rz)} \bigr)^{|\indexsetre {m_\ry}|} \\
\quad \stackrel{(c)}\in \biggl[ \frac{1}{|\indexsetre {m_\ry}|} \Bigl( 1 - \exp \bigl\{ - e^{-n (R_\poolre - \tilde R_\rz)} |\indexsetre {m_\ry}| \bigr\} \Bigr), \frac{1}{|\indexsetre {m_\ry}|} \biggr] \\
\quad \stackrel{(d)}\subseteq \biggl[ \frac{1}{|\indexsetre {m_\ry}|} - (1 - \delta_n)^{-1} \exp \bigl\{ - (1 - \delta_n) e^{n (\tilde R_\ry + \tilde R_\rz - R_\poolre)} - n \tilde R_\ry \bigr\}, \frac{1}{|\indexsetre {m_\ry}|} \biggr], \quad v \in \setV_{m_\ry}, \label{eq:meanVInBMyBd}
\end{IEEEeqnarray}
where $(a)$ holds by symmetry; $(b)$ holds by \eqref{eq:meanVNotinBMy}, because $\indexset {m_\ry} = \indexsetre {m_\ry}$ $\dist_{ \indexsetsre }$-almost-surely, and hence if $\indexsetre {m_\ry} \cap \indexset {m_\rz} = \emptyset$, then $V_{m_\ry,m_\rz}$ is uniform over $\setV$, and because $|\setV \setminus \indexsetre {m_\ry}| = |\setV| - |\indexsetre {m_\ry}|$; $(c)$ holds by \eqref{eq:1MinXSmEMinX}; and $(d)$ holds by \eqref{eq:IDBCSetH} (which holds because $\indexsetsre \in \setH^\ry_\mu$). Fix some $\kappa$ satisfying
\begin{IEEEeqnarray}{rCl}
0 < \kappa < \min \{ R_\rz, \tilde R_\ry + \tilde R_\rz - R_\poolre \}, \label{eq:IDBCKappaDef}
\end{IEEEeqnarray}
and let
\begin{IEEEeqnarray}{rCl}
\xi_n = 4 \exp \bigl\{ -e^{n \kappa - \log 2} \bigr\}. \label{eq:IDBCKappaTripExpSmall}
\end{IEEEeqnarray}
By \eqref{eq:IDBCnLargeEnough2}
\begin{IEEEeqnarray}{rCl}
\xi_n / 2 > (1 - \delta_n)^{-1} \exp \bigl\{ - (1 - \delta_n) e^{n (\tilde R_\ry + \tilde R_\rz - R_\poolre)} - n \tilde R_\ry \bigr\}. \label{eq:xinGeq2Dev}
\end{IEEEeqnarray}
Consequently, Hoeffding's inequality (Proposition~\ref{pr:hoeffding}) implies that for every fixed $v \in \setV \setminus \indexsetre {m_\ry}$
\begin{IEEEeqnarray}{l}
\biggdistsubof {\indexsetsre}{\Bigl| \bm P_V^{(m_\ry)} (v) - \tilde {\bm P}_V^{(m_\ry)} (v) \Bigr| \geq \xi_n} \nonumber \\
\quad \stackrel{(a)}= \Biggdistsubof {\indexsetsre}{\frac{1}{|\setM_\rz|} \sum_{m_\rz \in \setM_\rz} \ind {v = V_{m_\ry,m_\rz}} \geq \xi_n} \\
\quad \stackrel{(b)}\leq \exp \biggl\{ - 2 \, |\setM_\rz| \Bigl( \xi_n - (1 - \delta_n)^{-1} \exp \bigl\{ - (1 - \delta_n) e^{n (\tilde R_\ry + \tilde R_\rz - R_\poolre)} - n \tilde R_\ry \bigr\} \Bigr)^2 \biggr\} \\
\quad \stackrel{(c)}\leq \exp \bigl\{ - |\setM_\rz| \, \xi_n^2 / 2 \bigr\},  \label{eq:contTotVarDistVNotinBMy}
\end{IEEEeqnarray}
where $(a)$ holds because $\indexset {m_\ry} = \indexsetre {m_\ry}$ $\dist_{ \indexsetsre }$-almost-surely, because $\indexsetre {m_\ry}$ is nonempty (which holds because $\indexsetsre \in \setH^\ry_\mu$ implies \eqref{eq:IDBCSetH} and by \eqref{eq:IDBCnLargeEnough}), by \eqref{bl:remIndMzAndUnifBin}, and because $v \notin \indexsetre {m_\ry}$; $(b)$ follows from Hoeffding's inequality (Proposition~\ref{pr:hoeffding}) and \eqref{eq:meanVNotinBMyBd}; and $(c)$ holds by \eqref{eq:xinGeq2Dev}. Similarly, for every fixed $v \in \indexsetre {m_\ry}$
\begin{IEEEeqnarray}{l}
\biggdistsubof {\indexsetsre}{\Bigl| \bm P_V^{(m_\ry)} (v) - \tilde {\bm P}_V^{(m_\ry)} (v) \Bigr| \geq \xi_n} \nonumber \\
\quad \stackrel{(a)}= \Biggdistsubof {\indexsetsre}{\biggl| \frac{1}{|\setM_\rz|} \sum_{m_\rz \in \setM_\rz} \ind {v = V_{m_\ry,m_\rz}} - \frac{1}{|\indexsetre {m_\ry}|} \biggr| \geq \xi_n} \\
\quad \stackrel{(b)}\leq 2 \exp \biggl\{ - 2 \, |\setM_\rz| \Bigl( \xi_n - (1 - \delta_n)^{-1} \exp \bigl\{ - (1 - \delta_n) e^{n (\tilde R_\ry + \tilde R_\rz - R_\poolre)} - n \tilde R_\ry \bigr\} \Bigr)^2 \biggr\} \\
\quad \stackrel{(c)}\leq 2 \exp \bigl\{ - |\setM_\rz| \, \xi_n^2 / 2 \bigr\}, \label{eq:contTotVarDistVInBMy}
\end{IEEEeqnarray}
where $(a)$ holds because $\indexset {m_\ry} = \indexsetre {m_\ry}$ $\dist_{ \indexsetsre }$-almost-surely, because $\indexsetre {m_\ry}$ is nonempty, by \eqref{bl:remIndMzAndUnifBin}, and because $v \in \indexsetre {m_\ry}$; $(b)$ follows from Hoeffding's inequality (Proposition~\ref{pr:hoeffding}), \eqref{eq:meanVInBMyBd}, and the Union-of-Events bound; and $(c)$ holds by \eqref{eq:xinGeq2Dev}. The Union-of-Events bound, \eqref{eq:contTotVarDistVNotinBMy}, and \eqref{eq:contTotVarDistVInBMy} imply that
\begin{IEEEeqnarray}{l}
\biggdistsubof {\indexsetsre}{\exists \, v \in \setV \colon \Bigl| \bm P_V^{(m_\ry)} (v) - \tilde {\bm P}_V^{(m_\ry)} (v) \Bigr| \geq \xi_n} \nonumber \\
\quad \leq 2 \, |\setV| \exp \bigl\{ - |\setM_\rz| \, \xi_n^2 / 2 \bigr\}. \label{eq:prExistsVTooBig}
\end{IEEEeqnarray}
Therefore,
\begin{IEEEeqnarray}{l}
\biggdistsubof {\indexsetsre}{ d \Bigl( \bm P_V^{(m_\ry)}, \tilde {\bm P}_V^{(m_\ry)} \Bigr) \geq |\setV| \, \xi_n / 2 } \nonumber \\
\quad \stackrel{(a)}= \Biggdistsubof { \indexsetsre }{ \sum_{v \in \setV} \Bigl| \bm P_V^{(m_\ry)} (v) - \tilde {\bm P}_V^{(m_\ry)} (v) \Bigr| \geq |\setV| \, \xi_n} \\
\quad \leq \biggdistsubof {\indexsetsre}{ \exists \, v \in \setV \colon \Bigl| \bm P_V^{(m_\ry)} (v) - \tilde {\bm P}_V^{(m_\ry)} (v) \Bigr| \geq \xi_n} \\
\quad \stackrel{(b)}\leq 2 \, |\setV| \exp \bigl\{ - |\setM_\rz| \, \xi_n^2 / 2 \bigr\}, \quad \indexsetsre \in \setH^\ry_\mu,  \label{eq:prTotVarDistTooBigFixedMy}
\end{IEEEeqnarray}
where $(a)$ holds by definition of the Total-Variation distance; and $(b)$ holds by \eqref{eq:prExistsVTooBig}.

Having obtained \eqref{eq:prTotVarDistTooBigFixedMy} for every fixed $m_\ry \in \setM_\ry$, we are now ready to tackle the maximum over $m_\ry \in \setM_\ry$ and prove \eqref{eq:toShowIDBC5}: By \eqref{eq:ratePairAchRz}, \eqref{eq:blBinAndPoolRateConstraintsBCUBPoolRate}, \eqref{eq:IDBCKappaDef}, and \eqref{eq:IDBCKappaTripExpSmall} there must exist a positive constant $\tau > 0$ and some $\eta_0 \in \naturals$ for which
\begin{IEEEeqnarray}{l}
|\setV| \, \xi_n / 2 \leq e^{-n \tau}, \quad n \geq \eta_0. \label{eq:IDBCXiExpSmall}
\end{IEEEeqnarray}
For every $\tau > 0$ and $\eta_0 \in \naturals$ satisfying \eqref{eq:IDBCXiExpSmall} and for all $n$ exceeding $\eta_0$
\begin{IEEEeqnarray}{l}
\max_{\indexsetsre \in \setH^\ry_\mu} \biggdistsubof {\indexsetsre}{\exists \, m_\ry \in \setM_\ry \colon d \Bigl( \bm P_V^{(m_\ry)}, \bm U^{(m_\ry)}_V \Bigr) \geq e^{-n \tau}} \nonumber \\
\quad \stackrel{(a)}\leq \max_{\indexsetsre \in \setH^\ry_\mu} \biggdistsubof {\indexsetsre}{\exists \, m_\ry \in \setM_\ry \colon d \Bigl( \bm P_V^{(m_\ry)}, \bm U^{(m_\ry)}_V \Bigr) \geq |\setV| \, \xi_n / 2} \\
\quad \stackrel{(b)}\leq \max_{\indexsetsre \in \setH^\ry_\mu} \sum_{m_\ry \in \setM_\ry} \biggdistsubof {\indexsetsre}{  d \Bigl( \bm P_V^{(m_\ry)}, \bm U^{(m_\ry)}_V \Bigr) \geq |\setV| \, \xi_n / 2 } \\
\quad \stackrel{(c)}\leq 2 \, |\setV| \, |\setM_\ry| \exp \bigl\{ - |\setM_\rz| \exp \{ - e^{n \kappa} + 3 \log 2 \} \bigr\} \\
\quad \stackrel{(d)}\rightarrow 0 \, ( n \rightarrow \infty ),
\end{IEEEeqnarray}
where $(a)$ holds by \eqref{eq:IDBCXiExpSmall}, because $n$ exceeds $\eta_0$; $(b)$ follows from the Union-of-Events bound; $(c)$ holds by \eqref{eq:prTotVarDistTooBigFixedMy} and \eqref{eq:IDBCKappaTripExpSmall}; and $(d)$ holds because $|\setV| = e^{n R_\poolre}$, $|\setM_\ry| = \exp (\exp (n R_\ry))$, $|\setM_\rz| = \exp (\exp (n R_\rz))$, and by \eqref{eq:IDBCKappaDef}.
\end{proof}

\subsection{The Converse Part of Theorem~\ref{th:IDBC}}\label{sec:CVIDBC}

In this section we prove a strong converse to Theorem~\ref{th:IDBC}:

\begin{claim}\label{cl:toShowIDBCConv}
For every rate-pair $( R_\ry, R_\rz)$, every positive constants $\lambda_1^\ry, \, \lambda_2^\ry, \, \lambda_1^\rz, \, \lambda_2^\rz$ satisfying
\begin{IEEEeqnarray}{rCl}
\lambda_1^\ry + \lambda_2^\ry + \lambda_1^\rz + \lambda_2^\rz & < & 1, \label{eq:sumMissWrongBCSm1}
\end{IEEEeqnarray}
and every $\epsilon > 0$ there exists some $\eta_0 \in \naturals$ so that, for every blocklength~$n \geq \eta_0$, every size-$\exp (\exp (n R_\ry))$ set $\setM_\ry$ of possible ID messages for Receiver~$\ry$, and every size-$\exp (\exp (n R_\rz))$ set $\setM_\rz$ of possible ID messages for Receiver~$\rz$, a necessary condition for an $\bigl( n, \setM_\ry, \setM_\rz, \lambda_1^\ry, \lambda_2^\ry, \lambda_1^\rz, \lambda_2^\rz \bigr)$ ID code for the BC $\channel {y,z} x$ to exist is that for some PMF $P$ on $\setX$
 \begin{subequations}\label{eq:converseBC}
\begin{IEEEeqnarray}{rCl}
R_\ry & < & \muti {P}{W_\ry} + \epsilon, \\
R_\rz & < & \muti {P}{W_\rz} + \epsilon.
\end{IEEEeqnarray}
\end{subequations}
\end{claim}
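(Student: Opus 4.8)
The plan is to reduce the claim to the single-user channel by way of Remark~\ref{re:eqFormDefIDCodeBC} and then to combine a Han--Verd\'u-type counting bound with the concavity of mutual information. Fix a blocklength $n$, message sets $\setM_\ry$, $\setM_\rz$ of the stipulated sizes, constants with $\lambda_1^\ry+\lambda_2^\ry+\lambda_1^\rz+\lambda_2^\rz<1$, and an $\bigl(n,\setM_\ry,\setM_\rz,\lambda_1^\ry,\lambda_2^\ry,\lambda_1^\rz,\lambda_2^\rz\bigr)$ ID code for the BC. By Remark~\ref{re:eqFormDefIDCodeBC} the mixture PMFs of \eqref{bl:PMFsMarginalCodesBC} yield an $\bigl(n,\setM_\ry,\lambda_1^\ry,\lambda_2^\ry\bigr)$ ID code $\bigl\{Q_{m_\ry},\setD_{m_\ry}\bigr\}$ for $W_\ry$ and an $\bigl(n,\setM_\rz,\lambda_1^\rz,\lambda_2^\rz\bigr)$ ID code $\bigl\{Q_{m_\rz},\setD_{m_\rz}\bigr\}$ for $W_\rz$. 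Since \eqref{eq:sumMissWrongBCSm1} forces $\lambda_1^\ry+\lambda_2^\ry<1$ and $\lambda_1^\rz+\lambda_2^\rz<1$, testing each output distribution against its own and a competing ID set shows that distinct messages induce output distributions that are bounded away in Total-Variation distance: $d\bigl(Q_{m_\ry}W_\ry^n,Q_{m_\ry^\prime}W_\ry^n\bigr)\ge\theta_\ry:=1-\lambda_1^\ry-\lambda_2^\ry>0$ for $m_\ry\neq m_\ry^\prime$, and likewise $d\bigl(Q_{m_\rz}W_\rz^n,Q_{m_\rz^\prime}W_\rz^n\bigr)\ge\theta_\rz:=1-\lambda_1^\rz-\lambda_2^\rz>0$ for $m_\rz\neq m_\rz^\prime$. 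This is the step where \eqref{eq:sumMissWrongBCSm1} enters and is what makes the converse a \emph{strong} converse.

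The heart of the argument is a refinement of the Han--Verd\'u strong converse for the DMC (the converse half of Theorem~\ref{th:IDDMC}) in which the empirical input type is carried along. For an $n$-type $\tau$ on $\setX$ let $\setT^{(n)}_\tau$ denote its type class, and for $r\ge0$ put $S_\ry(r)=\bigcup_{\tau:\,\muti{\tau}{W_\ry}>r}\setT^{(n)}_\tau$. One shows that for every $\delta>0$ sufficiently small relative to $\theta_\ry$ there is an $n_0$ depending only on $\delta$, $\theta_\ry$, and the channel such that for $n\ge n_0$ the number of messages $m_\ry\in\setM_\ry$ with $Q_{m_\ry}\bigl(S_\ry(r)\bigr)\le e^{-n\delta}$ is at most $\exp\bigl(\exp(n(r+\delta))\bigr)$. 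The proof mimics Han--Verd\'u's: decompose $Q_{m_\ry}$ along the input types, use that a sub-PMF supported on a single type class $\setT^{(n)}_\tau$ has an output distribution that is $\delta$-resolvable by a uniform PMF on at most $e^{n(\muti{\tau}{W_\ry}+\delta)}$ input sequences (a covering-code estimate within the type class), discard the type classes carrying mass at most $e^{-n\delta}$ (incurring only $o(1)$ Total-Variation error, since there are polynomially many types), count the resulting resolution PMFs, and use the separation from the first paragraph to conclude that distinct messages are carried to distinct resolution PMFs. Applying this with $r=R_\ry-\gamma$ for a small fixed $\gamma>0$ and $\delta<\gamma$, the bound $\exp\bigl(\exp(n(R_\ry-\gamma+\delta))\bigr)$ is doubly-exponentially smaller than $|\setM_\ry|=\exp(\exp(nR_\ry))$, so all but a vanishing fraction of the messages $m_\ry$ put at least a fraction $e^{-n\delta}$ of the mass of $Q_{m_\ry}$ on input sequences whose type $\tau$ satisfies $\muti{\tau}{W_\ry}>R_\ry-\gamma$; the symmetric statement holds at Terminal~$\rz$. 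In particular, the overall average input PMF of the code has at least a $\bigl(1-o(1)\bigr)e^{-n\delta}$ fraction of its mass in $S_\ry(R_\ry-\gamma)$ and, simultaneously, in $S_\rz(R_\rz-\gamma)$.

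The remaining step---and the one I expect to be the main obstacle---is to pass from these two per-receiver statements to a \emph{single} PMF $P$ on $\setX$ with $R_\ry<\muti{P}{W_\ry}+\epsilon$ and $R_\rz<\muti{P}{W_\rz}+\epsilon$. Treating the two marginal codes in isolation only gives the pair of bounds $R_\ry\le\max_P\muti{P}{W_\ry}+\epsilon$ and $R_\rz\le\max_P\muti{P}{W_\rz}+\epsilon$, i.e.\ the rectangle that in general strictly contains the region of Theorem~\ref{th:IDBC}; the extra leverage must come from the fact that the \emph{same} codewords $Q_{m_\ry,m_\rz}$ serve both receivers. One clean way to organize the end-game is via convexity: the region of Theorem~\ref{th:IDBC} is closed and convex (as shown right after it), hence it is the intersection over $a,b\ge0$ of the half-planes $\bigl\{(R_\ry,R_\rz):aR_\ry+bR_\rz\le\max_P\bigl(a\muti{P}{W_\ry}+b\muti{P}{W_\rz}\bigr)\bigr\}$, and $\max_P\bigl(a\muti{P}{W_\ry}+b\muti{P}{W_\rz}\bigr)$ is exactly the Shannon---hence, by Theorem~\ref{th:IDDMC}, also the identification---capacity of the single-user ``mixed'' channel that reveals to its receiver an i.i.d.\ $\ber\bigl(b/(a+b)\bigr)$ tag and, accordingly, a $W_\ry$- or a $W_\rz$-output; it would then suffice to build from the BC code, for each fixed $(a,b)$, an identification code for this mixed channel and invoke Theorem~\ref{th:IDDMC}. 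Alternatively, one can stay with the empirical input types: since there are only polynomially many of them, the two per-receiver statements of the previous paragraph, combined with a double-counting over the pairs $(m_\ry,m_\rz)$ and over the types of the shared codewords $Q_{m_\ry,m_\rz}$, should isolate a single type $\tau^\ast$ that is simultaneously good for both receivers, whence $P=\tau^\ast$ works. I expect the delicate point to be precisely getting a common $P$ to emerge even though the per-receiver type-concentration produced above is only exponentially weak; once that is in hand, letting $\gamma\to0$, absorbing the $o(1)$ terms into $\epsilon$ for $n$ large, and checking that $n_0$ depends only on $\epsilon$, the $\lambda$'s, and the channel (and not on $\setM_\ry$, $\setM_\rz$) is routine.
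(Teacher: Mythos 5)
Your overall architecture --- reduce to the two marginal single-user codes via Remark~\ref{re:eqFormDefIDCodeBC}, then run a type-aware Han--Verd\'u counting argument to show that the code's input distribution concentrates on types with large mutual information --- is indeed the paper's architecture, and the counting idea in your second paragraph matches the roles of Proposition~\ref{pr:homogIDCodeApprox}, Lemma~\ref{le:rcodApprox}, and Proposition~\ref{prop:mRegRed}. But there is a genuine gap exactly where you flag ``the main obstacle,'' and your counting lemma as stated is too weak to close it. You only prove that all but doubly-exponentially few messages $m_\ry$ satisfy $Q_{m_\ry}\bigl(S_\ry(R_\ry-\gamma)\bigr)>e^{-n\delta}$, so the overall mixture $Q$ carries only an exponentially small mass $e^{-n\delta}$ in $S_\ry(R_\ry-\gamma)$ and, separately, $e^{-n\delta}$ in $S_\rz(R_\rz-\gamma)$; these two sets could well be disjoint, and no union bound --- nor any double counting over the polynomially many types --- can extract a common type from two exponentially small mass guarantees.

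The missing idea, which is what the paper's Lemma~\ref{le:avgDistWeightOnTypes} supplies, is the complementary, \emph{constant}-threshold statement: for any constant $\kappa$ exceeding the relevant $\lambda_1+\lambda_2$, the set of messages with $Q_m\bigl(\{\vecx\colon\muti{P_\vecx}{W}\le R-\epsilon\}\bigr)\ge\kappa$ is doubly-exponentially small. This is obtained by conditioning each such $Q_m$ on the bad type set, which inflates the missed- and wrong-identification probabilities only by the factor $1/\kappa$, so the conditioned code still has error sum below one and the homogenization/regularization/counting machinery applies to it. Consequently the average over $m_\ry$ of $Q_{m_\ry}\bigl(S_\ry(R_\ry-\epsilon)\bigr)$ exceeds $1-\kappa^\ry-o(1)$, and since averaging $Q_{m_\ry}$ over $m_\ry$ and averaging $Q_{m_\rz}$ over $m_\rz$ produce the \emph{same} mixture $Q$, choosing $\kappa^\ry>\lambda_1^\ry+\lambda_2^\ry$ and $\kappa^\rz>\lambda_1^\rz+\lambda_2^\rz$ with $\kappa^\ry+\kappa^\rz<1$ --- this is where \eqref{eq:sumMissWrongBCSm1} is actually needed, not merely for the pairwise Total-Variation separation you invoke, which uses only the weaker conditions $\lambda_1^\ry+\lambda_2^\ry<1$ and $\lambda_1^\rz+\lambda_2^\rz<1$ --- a single union bound shows that $Q$ assigns positive mass to the intersection of the two good sets; the empirical type of any sequence there is the desired $P$. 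Neither of your proposed end-games is then needed, and neither is carried far enough to substitute for this step (in particular, the ``mixed channel'' route would still require constructing an ID code for that channel from the BC code, which you do not do).
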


To prove Claim~\ref{cl:toShowIDBCConv}, we recall from Remark~\ref{re:eqFormDefIDCodeBC} that the following two conditions are necessary and sufficient for some collection of tuples $$\bigl\{ Q_{m_\ry,m_\rz}, \setD_{m_\ry}, \setD_{m_\rz} \bigr\}_{(m_\ry,m_\rz) \in \setM_\ry \times \setM_\rz}$$ to be an $\bigl( n, \setM_\ry, \setM_\rz, \lambda^{\ry}_1, \lambda^{\ry}_2, \lambda^{\rz}_1, \lambda^{\rz}_2 \bigr)$ ID code for the BC $\channel {y,z} x$: 1) $\bigl\{Q_{m_\ry}, \setD_{m_\ry} \bigr\}_{m_\ry \in \setM_\ry}$ is an $\bigl( n, \setM_\ry, \lambda^{\ry}_1, \lambda^{\ry}_2 \bigr)$ ID code for the marginal channel $\channely y x$; and 2) $\bigl\{ Q_{m_\rz}, \setD_{m_\rz} \bigr\}_{m_\rz \in \setM_\rz}$ is an $\bigl( n, \setM_\rz, \lambda^{\rz}_1, \lambda^{\rz}_2 \bigr)$ ID code for $\channelz z x$, where $\bigl\{ Q_{m_\ry} \bigr\}_{m_\ry \in \setM_\ry}$ and $\bigl\{ Q_{m_\rz} \bigr\}_{m_\rz \in \setM_\rz}$ are defined in \eqref{bl:PMFsMarginalCodesBC}. We shall use these conditions to establish Claim~\ref{cl:toShowIDBCConv} following Han and Verd\'u's proof of the strong converse for identification via the DMC \cite{hanverdu92}. To that end, we shall need some terminology and results from \cite{hanverdu92}. We begin with the following two definitions from \cite{hanverdu92}: 
%

\begin{defin}
An $( n,\setM,\lambda_1,\lambda_2 )$ ID code $\{ Q_m, \setD_m \}_{m \in \setM}$ for the DMC $\channel y x$ is \emph{homogeneous} if for every $n$-type $P$ on $\setX^n$
\begin{IEEEeqnarray}{rCl}
Q_m \bigl( \setT_P^{(n)} \bigr) & = & \frac{1}{\card \setM} \sum_{\nu \in \setM} Q_{ \nu } \bigl( \setT_P^{(n)} \bigr), \quad m \in \setM.
\end{IEEEeqnarray}
\end{defin}

\begin{defin}
Given an $( n,\setM,\lambda_1,\lambda_2 )$ ID code $\{ Q_m, \setD_m \}_{m \in \setM}$ for the DMC $\channel y x$, define for every $n$-type $P$ on $\setX^n$ and $m \in \setM$ the PMF
\begin{IEEEeqnarray}{rCl}
Q_m^{( n, P )} ( \vecx ) & = & \begin{cases} \frac{Q_m ( \vecx )}{Q_m ( \setT_P^{(n)} )} &\textnormal{if } \vecx \in \setT_P^{(n)} \textnormal{ and } Q_m \bigl( \setT_P^{(n)} \bigr) > 0, \\ \frac{1}{|\setT_P^{(n)}|} & \textnormal{if } \vecx \in \setT_P^{(n)} \textnormal{ and } Q_m \bigl( \setT_P^{(n)} \bigr) = 0, \\ 0 &\textnormal{if } \vecx \notin \setT_P^{(n)}. \end{cases}
\end{IEEEeqnarray}
The ID code is \emph{$L$-regular} if for every $n$-type $P$ on $\setX^n$ and $m \in \setM$ satisfying $Q_m \bigl( \setT_P^{(n)} \bigr) > 0$ the PMF $Q_m^{( n, P )}$ on $\setT^{(n)}_P$ is an $L$-type.
\end{defin}

Following the line of arguments in \cite{hanverdu92}, we shall construct from $\bigl\{Q_{m_\ry}, \setD_{m_\ry} \bigr\}_{m_\ry \in \setM_\ry}$ and $\bigl\{Q_{m_\rz}, \setD_{m_\rz} \bigr\}_{m_\rz \in \setM_\rz}$ homogeneous $L$-regular ID codes. For the construction we shall need Proposition~\ref{pr:homogIDCodeApprox} and Lemma~\ref{le:rcodApprox} ahead. Proposition~\ref{pr:homogIDCodeApprox} is a variation on \cite[Proposition~3]{hanverdu92}, and Lemma~\ref{le:rcodApprox} is a generalization of \cite[Lemma~1]{hanverdu92} similar to that in \cite[Lemma~2]{ahlswedezhang95}.

\begin{proposition}\label{pr:homogIDCodeApprox}
For every $( n,\setM,\lambda_1,\lambda_2 )$ ID code $\{ Q_m, \setD_m \}_{m \in \setM}$ for the DMC $\channel y x$ and for every $\delta \geq \log 2 / n$ there exists a subset $\setS$ of $\setM$ with
\begin{IEEEeqnarray}{rCl}
|\setS| & \geq & |\setM| \exp \bigr\{ - e^{ \log (1 + n) (1 + |\setX|) + \log \delta } \bigl\} \label{eq:homogIDCodeApproxSizeSetS}
\end{IEEEeqnarray}
for which we can construct from $\{ Q_m, \setD_m \}_{m \in \setS}$ a homogeneous $( n,\setS,\lambda_1^\prime,\lambda_2^\prime )$ ID code $\{ Q_m^\prime, \setD_m \}_{m \in \setS}$ for $\channel y x$ with
\begin{subequations}\label{bl:homogIDCodeApproxLambdaTilde}
\begin{IEEEeqnarray}{rCl}
\lambda_1^\prime & = & \lambda_1 + e^{ - n \delta + \log ( 1 + n ) |\setX|}, \\
\lambda_2^\prime & = & \lambda_2 + e^{ - n \delta + \log ( 1 + n ) |\setX|}.
\end{IEEEeqnarray}
\end{subequations}
Moreover, if for some $\epsilon, \, \kappa > 0$
\begin{IEEEeqnarray}{rCl}
Q_m \bigl( X^n \in \{ \vecx \in \setX^n \colon \muti {P_\vecx}{W} \leq R - \epsilon \} \bigr) & \geq & \kappa, \quad m \in \setM, \label{eq:convReducedCodeSatisfiesTypeProperty}
\end{IEEEeqnarray}
then 
\begin{IEEEeqnarray}{rCl}
Q_m^\prime \bigl( X^n \in \{ \vecx \in \setX^n \colon \muti {P_\vecx}{W} \leq R - \epsilon \} \bigr) & \geq & \kappa, \quad m \in \setS. \label{eq:convHomogCodeSatisfiesTypeProperty}
\end{IEEEeqnarray}
\end{proposition}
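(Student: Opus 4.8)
The plan is to follow Han and Verd\'u's construction of homogeneous ID codes \cite[Proposition~3]{hanverdu92}, adding the bookkeeping needed for the ``moreover'' part. For $m \in \setM$ call $\theta_m = \bigl( Q_m ( \setT_P^{(n)} ) \bigr)_{P \in \Gamma^{(n)}}$ the \emph{type profile} of $Q_m$; it is a PMF on the set $\Gamma^{(n)}$ of $n$-types, and $|\Gamma^{(n)}| \le (n+1)^{|\setX|}$. By definition the code is homogeneous exactly when all the profiles $\theta_m$ coincide, so the idea is to pass to a large sub-collection of messages whose profiles are nearly equal and then to perturb each $Q_m$ so that its profile becomes exactly the common average. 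Concretely, round every coordinate $\theta_m ( P )$ to the nearest multiple of $1/N$ with $N = \lceil e^{n \delta} \rceil$; the number of distinct rounded profiles is at most $\binom{N + |\Gamma^{(n)}|}{|\Gamma^{(n)}|}$, so by pigeonhole there is a subset $\setS \subseteq \setM$, all of whose profiles have the same rounding, with $|\setS|$ at least $|\setM|$ divided by that binomial coefficient. Using $|\Gamma^{(n)}| \le (n+1)^{|\setX|}$, the choice of $N$, the hypothesis $\delta \ge \log 2 / n$, and the estimate $k! \ge (k/e)^k$ one checks that this binomial coefficient is at most $\exp \bigl\{ e^{\log (1+n)(1 + |\setX|) + \log \delta} \bigr\}$, which gives \eqref{eq:homogIDCodeApproxSizeSetS}. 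Writing $\bar\theta = \frac{1}{|\setS|} \sum_{\nu \in \setS} \theta_\nu$ for the average profile over $\setS$, every $m \in \setS$ satisfies $\| \theta_m - \bar\theta \|_1 \le |\Gamma^{(n)}| / N$, since both $\theta_m$ and $\bar\theta$ lie within $\ell_1$-distance $|\Gamma^{(n)}| / (2N)$ of the common rounded point.

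Next I would homogenize. For $m \in \setS$ define $Q_m'$ by keeping, on each type class $\setT_P^{(n)}$, the conditional distribution induced by $Q_m$ but rescaling its mass to $\bar\theta ( P )$ (and putting the uniform distribution on $\setT_P^{(n)}$ when $Q_m ( \setT_P^{(n)} ) = 0 < \bar\theta ( P )$, which is legitimate because $\bar\theta ( P ) > 0$ forces $\setT_P^{(n)} \neq \emptyset$). Then $Q_m' ( \setT_P^{(n)} ) = \bar\theta ( P )$ for every $P$, so $\{ Q_m', \setD_m \}_{m \in \setS}$ is homogeneous, and a direct computation gives $\| Q_m' - Q_m \|_1 = \| \bar\theta - \theta_m \|_1$, hence $d ( Q_m, Q_m' ) \le |\Gamma^{(n)}| / (2N) \le \tfrac12 e^{-n \delta + \log (1+n) |\setX|}$. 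Applying the Data-Processing inequality for the Total-Variation distance to the channel $W^n$ and then to the test sets $\setD_m^c$ and $\setD_{m'}$ shows that the probabilities of missed and of wrong identification each grow by at most $e^{-n \delta + \log (1+n) |\setX|}$, which is \eqref{bl:homogIDCodeApproxLambdaTilde}.

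Finally, for the ``moreover'' statement, note that $\{ \vecx \in \setX^n \colon \muti {P_\vecx}{W} \le R - \epsilon \}$ depends on $\vecx$ only through its type and is therefore a union of type classes; consequently $Q_m' \bigl( X^n \in \{ \vecx \colon \muti {P_\vecx}{W} \le R - \epsilon \} \bigr) = \sum_P \bar\theta ( P ) \, \ind {\muti {P}{W} \le R - \epsilon} = \frac{1}{|\setS|} \sum_{\nu \in \setS} Q_\nu \bigl( X^n \in \{ \vecx \colon \muti {P_\vecx}{W} \le R - \epsilon \} \bigr)$, an average of quantities each $\ge \kappa$ by \eqref{eq:convReducedCodeSatisfiesTypeProperty} (which applies since $\setS \subseteq \setM$), hence $\ge \kappa$. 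I expect the only delicate point to be the counting in the first step --- bounding the number of rounded profiles by exactly the exponent appearing in \eqref{eq:homogIDCodeApproxSizeSetS}, which is where $\delta \ge \log 2 / n$ and the Stirling-type lower bound $k! \ge (k/e)^k$ are needed; the homogenization and the error-probability bookkeeping are routine.
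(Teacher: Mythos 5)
Your proposal is correct and follows essentially the same route as the paper: partition the messages by quantized type profiles, keep a largest cell by pigeonhole, rescale each $Q_m$'s conditional-on-type-class distributions to a common profile (which is exactly what makes the homogeneity and the ``moreover'' claim immediate, since $\{ \vecx \colon \muti{P_\vecx}{W} \leq R - \epsilon \}$ is a union of type classes), and control the error probabilities via Total-Variation distance and data processing. The only differences are cosmetic: the paper normalizes to the profile of a fixed reference message $m^\star \in \setS$ rather than to the average $\bar\theta$, and it quantizes each coordinate into half-open cells of width $2 e^{-n\delta}$, so that the number of cells per coordinate is $1 + \lfloor e^{n\delta}/2 \rfloor \leq e^{n\delta}$ directly from $\delta \geq \log 2 / n$ and the counting bound \eqref{eq:homogIDCodeApproxSizeSetS} follows without the binomial-coefficient/Stirling estimate you flag as delicate.
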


\begin{proof}
The proof is essentially that of \cite[Proposition~3]{hanverdu92}. Additionally, we observe the following: if the PMFs $\{ Q_m \}_{m \in \setM}$ satisfy \eqref{eq:convReducedCodeSatisfiesTypeProperty}, then the PMFs $\{ Q_m^\prime \}_{m \in \setS}$, which are constructed in the proof of \cite[Proposition~3]{hanverdu92}, satisfy \eqref{eq:convHomogCodeSatisfiesTypeProperty}. For the sake of completeness, we provide a proof in Appendix~\ref{app:homogIDCodeApprox}.
\end{proof}

\begin{lemma}\label{le:rcodApprox}
For every DMC $\channel y x$ there exists a positive constant $\delta_0 > 0$, which depends only on $| \setY |$, and a continuous, strictly-increasing function $\rho \colon [0,\delta_0] \rightarrow \reals^+_0$ with $\rho (0) = 0$ so that, for every $\delta \in ( 0, \delta_0 ]$, every $\epsilon \in ( 0, 1 )$, and every blocklength $n \geq \eta_0$ (where $\eta_0 \in \naturals$ depends only on $|\setX |$, $| \setY |$, $\delta$, and $\epsilon$), it holds that for every $n$-type $P$ on $\setX^n$, every PMF $Q$ on $\setT^{(n)}_P \subseteq \setX^n$, every $R \geq I (P,W) + \rho (\delta)$, and every $L = \lceil e^{n R} \rceil$ there exists an $L$-type $Q^\prime$ on $\setT^{(n)}_P$ that satisfies for every subset $\setD$ of $\setY^n$
\begin{subequations} \label{bl:lemma1}
\begin{IEEEeqnarray}{rCl}
( Q^{\prime} W^n ) ( Y^n \in \setD ) & \leq & ( 1 + \epsilon ) ( 1 - e^{ -n \delta } )^{-1} ( Q W^n ) ( Y^n \in \setD ) + e^{ -n \delta }, \\
( Q^{\prime} W^n ) ( Y^n \in \setD ) & \geq & ( 1 - \epsilon ) ( 1 - e^{ -n \delta } ) ( Q W^n ) ( Y^n \in \setD ) - e^{ -n \delta }.
\end{IEEEeqnarray}
\end{subequations}
\end{lemma}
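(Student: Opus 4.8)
The plan is to follow Han and Verd\'u's resolvability argument for \cite[Lemma~1]{hanverdu92}: exhibit $Q^\prime$ by a random-selection construction, and reduce the ``for every $\setD$'' requirement to a pointwise estimate by truncating the channel to its jointly-typical outputs. Fix the $n$-type $P$, the PMF $Q$ on $\setT^{(n)}_P$, a rate $R \geq I(P,W) + \rho(\delta)$ (with $\rho$ and $\delta_0$ to be pinned down at the end), and set $L = \lceil e^{nR}\rceil$. Draw $\vectX_1,\ldots,\vectX_L$ independently $\sim Q$ and let $Q^\prime(\vecx)=\frac1L\sum_{\ell=1}^L\ind{\vecx=\vectX_\ell}$. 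Since $Q$ is supported on $\setT^{(n)}_P$, $Q^\prime$ is automatically an $L$-type supported on $\setT^{(n)}_P$, and for every $\vecy$ the random variable $(Q^\prime W^n)(\vecy)=\frac1L\sum_\ell W^n(\vecy|\vectX_\ell)$ has mean $(QW^n)(\vecy)$. The plan is then to show that with probability tending to one $Q^\prime$ satisfies \eqref{bl:lemma1} for all $\setD\subseteq\setY^n$ simultaneously; since a plain union bound over the doubly-exponentially many $\setD$ is hopeless, the approximation must be localized to a small family of outputs.

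The localization device is the truncated channel $\widetilde W^n(\vecy|\vecx)=W^n(\vecy|\vecx)\,\ind{(\vecx,\vecy)\in\eptyp(P\times W)}$, $\vecx\in\setT^{(n)}_P$, with the typicality parameter $\epsilon$ to be tied to $\delta$. I would invoke three standard type estimates, valid for all $\vecx\in\setT^{(n)}_P$: (i) $\widetilde W^n(\setY^n|\vecx)\geq 1-\beta_n$ with $\beta_n\leq e^{-n\zeta}$, $\zeta=\zeta(\epsilon)>0$ (conditional typicality); (ii) $\widetilde W^n(\vecy|\vecx)\leq e^{-n(\condent W P-\epsilon^\prime)}$ for every $\vecy$; and (iii) $\{\vecy\colon\widetilde W^n(\vecy|\vecx)>0\}\subseteq\eptyp(PW)$, a set of size at most $e^{n(\ent{PW}+\epsilon^\prime)}$, where $\epsilon^\prime=\epsilon^\prime(\epsilon)\to0$ as $\epsilon\to0$. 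Property~(i) already gives, for any PMF $\widehat Q$ on $\setT^{(n)}_P$ and any $\setD$, that $0\leq(\widehat Q W^n)(\setD)-(\widehat Q\widetilde W^n)(\setD)\leq e^{-n\zeta}$, so it suffices to approximate $Q\widetilde W^n$ by $Q^\prime\widetilde W^n$ up to a multiplicative factor $(1\pm\epsilon)$ and an exponentially small additive error.

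To do that, split $\setY^n$ at the level $\theta_n=e^{-n(\ent{PW}+a)}$ into $\setF=\{\vecy\colon(Q\widetilde W^n)(\vecy)\geq\theta_n\}$ and its complement. On $\setF$, rescaling the summands $\widetilde W^n(\vecy|\vectX_\ell)$ by the bound in~(ii) and using $R\geq I(P,W)+\rho(\delta)$ together with $I(P,W)=\ent{PW}-\condent W P$, the mean of the rescaled sum is at least $e^{n(\rho(\delta)-a-\epsilon^\prime)}$, which is exponentially large provided $a+\epsilon^\prime<\rho(\delta)$; the multiplicative Chernoff bound (Proposition~\ref{pr:multChernoff}) then makes $\distof{|(Q^\prime\widetilde W^n)(\vecy)-(Q\widetilde W^n)(\vecy)|\geq\epsilon\,(Q\widetilde W^n)(\vecy)}$ doubly-exponentially small in $n$, small enough to survive a union bound over the at most $e^{n(\ent{PW}+\epsilon^\prime)}$ outputs in $\eptyp(PW)$. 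On $\setF^c$, property~(iii) gives $(Q\widetilde W^n)(\setF^c)\leq e^{n(\ent{PW}+\epsilon^\prime)}\theta_n=e^{-n(a-\epsilon^\prime)}$, which is exponentially small when $a>\epsilon^\prime$, and Hoeffding's inequality (Proposition~\ref{pr:hoeffding}) applied to the single event $\setF^c$ shows that $(Q^\prime\widetilde W^n)(\setF^c)$ overshoots its mean by more than $e^{-n\delta^\prime}$ only with doubly-exponentially small probability, provided $2\delta^\prime<R$. Intersecting these high-probability events and decomposing an arbitrary $\setD$ along $\setF$ gives $(Q^\prime W^n)(\setD)\leq(Q^\prime\widetilde W^n)(\setF\cap\setD)+(Q^\prime\widetilde W^n)(\setF^c)+e^{-n\zeta}\leq(1+\epsilon)(Q\widetilde W^n)(\setD)+(\text{exponentially small})\leq(1+\epsilon)(QW^n)(\setD)+(\text{exponentially small})$, and a mirror-image computation gives the matching lower bound; once $n$ is large the additive terms fall below $e^{-n\delta}$, and since $(1+\epsilon)\leq(1+\epsilon)(1-e^{-n\delta})^{-1}$ and $(1-\epsilon)\geq(1-\epsilon)(1-e^{-n\delta})$, this is exactly \eqref{bl:lemma1}.

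The hard part is the joint feasibility of the parameter constraints, which also forces the shape of $\rho$ and $\delta_0$: one needs $a$ at once larger than $\epsilon^\prime$ (and ultimately larger than $\delta$) and smaller than $\rho(\delta)-\epsilon^\prime$; $\delta^\prime$ strictly between $\delta$ and $R/2$; and the typicality parameter $\epsilon$ small enough that $\epsilon^\prime$ is negligible against $\rho(\delta)$ yet large enough that the conditional-typicality failure exponent $\zeta(\epsilon)$ exceeds $\delta$. Taking $\epsilon$ of order $\sqrt\delta$ makes $\zeta(\epsilon)$ of order $\delta$ and $\epsilon^\prime$ of order $\sqrt\delta$, so choosing $\rho(\delta)$ of order $\sqrt\delta$ with a sufficiently large constant and $\delta_0$ small enough makes every inequality hold, with $\rho$ continuous, strictly increasing, and $\rho(0)=0$. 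The only departure from \cite[Lemma~1]{hanverdu92} (in the spirit of \cite[Lemma~2]{ahlswedezhang95}) is that $Q$ here is an arbitrary PMF on $\setT^{(n)}_P$ and that both one-sided multiplicative bounds are retained, neither of which affects the argument.
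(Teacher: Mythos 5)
Your high-level plan---draw $L$ i.i.d.\ samples from $Q$, truncate $W^n$ to a typical output set, split the outputs at a probability threshold, and control the multiplicative error on the high-probability outputs via Chernoff and the tail via Hoeffding---is the standard resolvability architecture, and the overall structure is sound. However, it hides a genuine gap at the very first of your three ``standard type estimates''. You claim $\widetilde W^n(\setY^n\mid\vecx)\geq 1-\beta_n$ with $\beta_n\leq e^{-n\zeta}$ and $\zeta=\zeta(\epsilon)>0$; but for the frequency-deviation typical set $\eptyp(P\times W)$ that the paper uses (where deviations are measured \emph{relative} to $P(x)W(y|x)$), the failure exponent one gets from Chernoff/Hoeffding is of order $\epsilon^2\min_{(x,y):P(x)W(y|x)>0}P(x)W(y|x)$. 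This is not a function of $\epsilon$ and $|\setX|,|\setY|$ alone: it degenerates as $W$ has small positive transition probabilities, and---worse---since $P$ ranges over all $n$-types, $\min_x P(x)$ can be as small as $1/n$, so the exponent is not even uniform in $n$. The same problem infects property~(ii): the error $\epsilon'$ in $\widetilde W^n(\vecy\mid\vecx)\leq e^{-n(\condent{W}{P}-\epsilon')}$ involves $\max_{(x,y):W(y|x)>0}|\log W(y|x)|$, again a channel-dependent constant. Consequently the $\eta_0$ produced by your argument would depend on $W$ itself, not merely on $|\setX|,|\setY|,\delta,\epsilon$ as the lemma requires.

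This is precisely why the paper does \emph{not} truncate to a typical set but instead uses the Csisz\'ar--K\"orner / Han--Verd\'u type machinery: the transition law $W^n$ from $\setT^{(n)}_P$ is decomposed as a convex combination of \emph{equitype channels} $W^{(n)}_{V|P}$ indexed by conditional types $V$, and is then clipped to those $V$ with $D(V\|W|P)\leq\delta'$. The key estimate (the paper's Lemma~\ref{le:3HanVerdu92}, Han--Verd\'u's Lemma~3) bounds the clipped mass by $e^{-n\delta'+\log(1+n)|\setX||\setY|}$---an exponent depending only on $\delta'$, with a polynomial prefactor depending only on the alphabet sizes. This universality is exactly what your joint-typicality truncation is missing, and it is what lets $\eta_0$ be channel-independent. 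A second, smaller issue: within an equitype channel all positive transition probabilities are equal, so the summands in the paper's concentration step reduce to \emph{binary} indicators of inverse-image membership, which is what Proposition~\ref{pr:multChernoff} actually covers. Your summands $\widetilde W^n(\vecy\mid\vectX_\ell)/b_n$ are merely $[0,1]$-valued, so you would need the (true, but not stated) extension of the multiplicative Chernoff bound to bounded non-binary variables, or Bernstein's inequality. In short: your skeleton is close in spirit to the paper's proof, but the replacement of the divergence-based equitype clipping by a frequency-typicality truncation is not a cosmetic change---it breaks the uniformity over channels and over $n$-types that the lemma explicitly demands, and fixing it essentially forces you back onto the paper's decomposition.
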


\begin{proof}
The proof is essentially that of \cite[Lemma~1]{hanverdu92} with the differences being pointed out in the proof of \cite[Lemma~2]{ahlswedezhang95}. For the sake of completeness, we provide a proof in Appendix~\ref{app:rcodApprox}.
\end{proof}

Once we have constructed from $\bigl\{Q_{m_\ry}, \setD_{m_\ry} \bigr\}_{m_\ry \in \setM_\ry}$ and $\bigl\{Q_{m_\rz}, \setD_{m_\rz} \bigr\}_{m_\rz \in \setM_\rz}$ homogeneous $L$-regular ID codes, we shall use the following proposition to upper-bound the number of possible ID messages $|\setM_\ry|$ and $|\setM_\rz|$:

\begin{proposition}\cite[Proposition~4]{hanverdu92}\label{prop:mRegRed}
Let $\setM$ be a finite set and $\lambda_1, \, \lambda_2$ positive constants satisfying $\lambda_1 + \lambda_2 < 1$. Every homogeneous $L$-regular $( n, \setM, \lambda_1, \lambda_2 )$ ID code for the DMC $\channel y x$ satisfies
\begin{IEEEeqnarray}{rCl}
\log | \setM | & \leq & n ( 1 + n )^{\card \setX} L \log | \setX |.
\end{IEEEeqnarray}
\end{proposition}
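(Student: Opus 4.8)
Throughout, write $\mu(P)=Q_m\bigl(\setT_P^{(n)}\bigr)$; by the homogeneity property this quantity does not depend on $m\in\setM$, so $\mu$ is one fixed family of numbers indexed by the $n$-types. The plan is to bound $|\setM|$ by a counting argument: I will show that each $Q_m$ is reconstructible from $\mu$ (which is common to all messages) together with a very small amount of message-dependent data, and that distinct messages must yield distinct $Q_m$. Concretely, for $\vecx\in\setT_P^{(n)}$ one has $Q_m(\vecx)=\mu(P)\,Q_m^{(n,P)}(\vecx)$ whenever $\mu(P)>0$, and $Q_m(\vecx)=0$ whenever $\mu(P)=0$ (since then $Q_m\bigl(\setT_P^{(n)}\bigr)=0$). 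Hence $Q_m$ is completely determined by $\mu$ and by the \emph{profile} $\bigl(Q_m^{(n,P)}\bigr)_{P:\,\mu(P)>0}$, so $|\setM|$ is at most the number of distinct profiles realized by messages in $\setM$.

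Next I would count profiles. The number of $n$-types on $\setX^n$ is at most $(1+n)^{|\setX|}$, since a type is a vector of $|\setX|$ coordinates each taking one of the $n+1$ values $0,1/n,\ldots,1$. For a fixed $n$-type $P$ with $\mu(P)>0$, the $L$-regularity property forces $Q_m^{(n,P)}$ to be an $L$-type on $\setT_P^{(n)}$; every $L$-type on a finite set is the empirical distribution of some $L$-tuple of elements of that set, so the number of $L$-types on $\setT_P^{(n)}$ is at most $\bigl|\setT_P^{(n)}\bigr|^L\le|\setX|^{nL}$. Therefore the number of distinct profiles is at most $\bigl(|\setX|^{nL}\bigr)^{(1+n)^{|\setX|}}=|\setX|^{\,n(1+n)^{|\setX|}L}$.

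It then remains to see that distinct messages have distinct profiles, equivalently (by the reconstruction above) distinct $Q_m$. Suppose $Q_m=Q_{m^\prime}$ with $m\neq m^\prime$. The missed-identification bound $p_{\textnormal{missed-ID}}\le\lambda_1$ gives $\bigl(Q_{m^\prime}W^n\bigr)\bigl(Y^n\in\setD_{m^\prime}\bigr)\ge 1-\lambda_1$, hence also $\bigl(Q_{m}W^n\bigr)\bigl(Y^n\in\setD_{m^\prime}\bigr)\ge 1-\lambda_1$, whereas the wrong-identification bound $p_{\textnormal{wrong-ID}}\le\lambda_2$ gives $\bigl(Q_{m}W^n\bigr)\bigl(Y^n\in\setD_{m^\prime}\bigr)\le\lambda_2$; together these force $\lambda_1+\lambda_2\ge 1$, contradicting the hypothesis $\lambda_1+\lambda_2<1$. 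So $m\mapsto$ profile is injective, whence $|\setM|\le|\setX|^{\,n(1+n)^{|\setX|}L}$, i.e. $\log|\setM|\le n(1+n)^{|\setX|}L\log|\setX|$, which is the claim.

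I do not expect a genuine technical obstacle here — it is a clean counting argument — but the step that requires the right viewpoint and a bit of care is the reduction in the first paragraph: one must notice that homogeneity makes the type-class masses $\mu(P)$ common to all messages (so they carry no information distinguishing messages), and one must handle the convention in the definition of $Q_m^{(n,P)}$ on type classes with $\mu(P)=0$ by simply discarding those classes, since they contribute nothing to any $Q_m$. Once this reduction is in place, the bound $(1+n)^{|\setX|}$ on the number of types, the bound $|\setX|^{nL}$ on the number of $L$-types per class, and the one-line use of $\lambda_1+\lambda_2<1$ for injectivity finish the proof.
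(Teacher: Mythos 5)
Your proof is correct and is essentially the standard counting argument of Han and Verd\'u (the paper only cites \cite[Proposition~4]{hanverdu92} without reproducing the proof): homogeneity makes the type-class masses message-independent, $L$-regularity bounds the number of possible conditional distributions per type class by $|\setX|^{nL}$, the number of type classes is at most $(1+n)^{|\setX|}$, and $\lambda_1+\lambda_2<1$ forces the map $m\mapsto Q_m$ to be injective. All steps check out, including the careful handling of type classes with zero mass.
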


Once we have upper-bounded $|\setM_\ry|$ and $|\setM_\rz|$, we shall infer from the upper bounds that for every $\epsilon > 0$ and $n$ sufficiently large the mixture PMF on $\setX^n$ $$Q = \frac{1}{\card {\setM_\ry} \, \card {\setM_\rz}} \sum_{(m_\ry,m_\rz) \in \setM_\ry \times \setM_\rz} Q_{m_\ry,m_\rz}$$ must assign notable probability mass to some sequence $\vecx \in \setX^n$ that satisfies both $\muti {P_\vecx}{W_\ry} > R_\ry - \epsilon$ and $\muti {P_\vecx}{W_\rz} > R_\rz - \epsilon$. This implies Claim~\ref{cl:toShowIDBCConv}, because it implies that there must exist some PMF $P$ on $\setX$ for which \eqref{eq:converseBC} holds.

We next establish Claim~\ref{cl:toShowIDBCConv}, proceeding as outlined above. In a first step we shall combine Proposition~\ref{pr:homogIDCodeApprox}, Lemma~\ref{le:rcodApprox}, and Proposition~\ref{prop:mRegRed} to obtain the following lemma:

\begin{lemma}\label{le:avgDistWeightOnTypes}
For every DMC $\channel y x$, every ID rate $R$, and every positive constants $\lambda_1, \, \lambda_2, \, \epsilon, \, \kappa$ satisfying $\lambda_1 + \lambda_2 < \kappa < 1$ there exists some $\eta_0 \in \naturals$ so that, for every blocklength~$n \geq \eta_0$ and every size-$\exp (\exp (n R))$ set $\setM$ of possible ID messages, a necessary condition for a collection of tuples $\{ Q_m, \setD_m \}_{m \in \setM}$ to be an $( n,\setM,\lambda_1,\lambda_2 )$ ID code for the DMC $\channel y x$ is that
\begin{IEEEeqnarray}{rCl}
\frac{1}{| \setM |} \sum_{m \in \setM} Q_m \bigl( X^n \in \{ \vecx \in \setX^n \colon \muti {P_\vecx}{W} > R - \epsilon \} \bigr) & > & 1 - \kappa - \exp \bigl\{ e^{n ( R - \epsilon / 2 )} \bigr\} / \exp \bigl\{ e^{n R} \bigr\}.
\end{IEEEeqnarray}
\end{lemma}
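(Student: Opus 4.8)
The plan is to follow Han and Verd\'u's strong-converse argument, using the tools assembled just above (Proposition~\ref{pr:homogIDCodeApprox}, Lemma~\ref{le:rcodApprox}, and Proposition~\ref{prop:mRegRed}), but applied contrapositively. Suppose $\{ Q_m, \setD_m \}_{m \in \setM}$ is an $( n,\setM,\lambda_1,\lambda_2 )$ ID code and set $\setB = \{ \vecx \in \setX^n \colon \muti {P_\vecx}{W} \leq R - \epsilon \}$. Toward a contradiction, assume the averaged probability $\frac{1}{|\setM|} \sum_{m \in \setM} Q_m ( X^n \in \setB )$ is at least $\kappa + \exp \{ e^{n(R - \epsilon/2)} \} / \exp \{ e^{nR} \}$. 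By Markov's inequality applied to the nonnegative quantities $Q_m ( X^n \in \setB )$, the fraction of messages $m$ with $Q_m ( X^n \in \setB ) \geq \kappa$ is at least $\exp \{ e^{n(R - \epsilon/2)} \} / \exp \{ e^{nR} \}$, so there is a subset $\setM_0 \subseteq \setM$ of size at least $\exp \{ e^{n(R - \epsilon/2)} \}$ on which \eqref{eq:convReducedCodeSatisfiesTypeProperty} holds (with the role of $\kappa$ there played by our $\kappa$). Restricting the code to $\setM_0$ gives an $( n,\setM_0,\lambda_1,\lambda_2 )$ ID code, each of whose messages concentrates mass at least $\kappa$ on input types $\vecx$ with $\muti {P_\vecx}{W} \leq R - \epsilon$.

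Next I would feed this restricted code into Proposition~\ref{pr:homogIDCodeApprox}, choosing $\delta$ suitably (say $\delta = \epsilon/4$, which for $n$ large exceeds $\log 2/n$), to obtain a homogeneous $( n,\setS,\lambda_1',\lambda_2' )$ ID code $\{ Q_m', \setD_m \}_{m \in \setS}$ with $\setS \subseteq \setM_0$ still of doubly-exponential size — concretely $|\setS| \geq |\setM_0| \exp \{ - e^{\log(1+n)(1+|\setX|) + \log\delta} \}$, which is $\exp \{ e^{n(R - \epsilon/2) - o(n)} \} \geq \exp \{ e^{n(R - 3\epsilon/4)} \}$ for $n$ large — and with $\lambda_1' + \lambda_2' = \lambda_1 + \lambda_2 + o(1) < \kappa < 1$ for $n$ large. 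Crucially, by the moreover-part of Proposition~\ref{pr:homogIDCodeApprox}, each $Q_m'$ still puts mass at least $\kappa$ on $\setB$. Since each $Q_m'$ is supported on finitely many types, for each $m \in \setS$ there is at least one $n$-type $P$ with $\muti {P}{W} \leq R - \epsilon$ and $Q_m' ( \setT_P^{(n)} ) \geq \kappa / |\Gamma^{(n)}| \geq \kappa (1+n)^{-|\setX|}$; by homogeneity, \emph{every} message $m' \in \setS$ then also satisfies $Q_{m'}' ( \setT_P^{(n)} ) \geq \kappa (1+n)^{-|\setX|}$ for this same $P$. Pigeonholing over the polynomially-many $n$-types, I would extract a further subset $\setS' \subseteq \setS$ of size at least $|\setS| (1+n)^{-|\setX|}$ — still doubly-exponential, of order $\exp \{ e^{n(R - 3\epsilon/4)} \}$ — and a single ``bad'' type $P^\star$ with $\muti {P^\star}{W} \leq R - \epsilon$ such that $Q_m' ( \setT_{P^\star}^{(n)} ) \geq \kappa (1+n)^{-|\setX|}$ for all $m \in \setS'$.

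Now I would apply Lemma~\ref{le:rcodApprox} to the conditional PMFs $Q_m'^{(n,P^\star)}$ on $\setT_{P^\star}^{(n)}$, with the lemma's rate parameter chosen as $R^\star = \muti {P^\star}{W} + \rho(\delta) \leq R - \epsilon + \rho(\delta)$: picking $\delta$ small enough that $\rho(\delta) < \epsilon/2$ (possible since $\rho$ is continuous with $\rho(0)=0$), we get $L = \lceil e^{n R^\star} \rceil \leq e^{n(R - \epsilon/2) + 1}$. Lemma~\ref{le:rcodApprox} replaces each $Q_m'^{(n,P^\star)}$ by an $L$-type $Q_m''$ while distorting the decoding probabilities $(Q W^n)(Y^n \in \setD_m)$ only by a controlled multiplicative-plus-additive amount. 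Reassembling these into a code supported on $\setT_{P^\star}^{(n)}$ — possibly after discarding another vanishing fraction of messages and adjusting the $\setD_m$'s exactly as in \cite{hanverdu92} — yields a homogeneous $L$-regular $( n, \setS'', \lambda_1'', \lambda_2'' )$ ID code with $\lambda_1'' + \lambda_2'' < 1$ and $|\setS''| \geq \exp \{ e^{n(R - 7\epsilon/8)} \}$, say. Proposition~\ref{prop:mRegRed} then forces $\log |\setS''| \leq n(1+n)^{|\setX|} L \log|\setX| \leq e^{n(R - \epsilon/2) + o(n)}$, i.e.\ $\log\log|\setS''| \leq n(R - \epsilon/2) + o(n)$, contradicting $\log\log|\setS''| \geq n(R - 7\epsilon/8) \cdot (1+o(1))$... wait, I need $R - \epsilon/2 < R - 7\epsilon/8$ which is false; the correct bookkeeping is to carry the $\epsilon$'s so that the surviving doubly-exponential rate (something like $R - \epsilon/2 - o(1)$ from the Markov step, eroded only by sub-exponential factors thereafter) strictly exceeds the Proposition~\ref{prop:mRegRed} bound of $\approx R - \epsilon/2$ coming from $L \approx e^{n(R-\epsilon/2)}$; the genuine slack comes from the additive $\exp\{e^{n(R-\epsilon/2)}\}/\exp\{e^{nR}\}$ in the hypothesis being enough to keep $|\setS''|$ at rate $R - \epsilon/2 + \Omega(1)$ is \emph{not} true, so the clean resolution (and this is the main obstacle) is to instead run the argument with $R - \epsilon$ as the type threshold but $L = \lceil e^{n(R - \epsilon + \rho(\delta))}\rceil$ and to choose $\kappa$ and the additive term so that the surviving message count has doubly-exponential rate strictly between $R - \epsilon$ and $R$ — which it does, being $R - \epsilon/2$ — while Proposition~\ref{prop:mRegRed} caps it at $R - \epsilon + \rho(\delta) + o(1) < R - \epsilon/2$. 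That strict gap is the contradiction. The main technical care, exactly as in \cite{hanverdu92}, is in verifying that each reduction step (Markov, homogenization, type-pigeonholing, $L$-regularization) loses only a sub-doubly-exponential factor in $|\setM|$ and only an $o(1)$ additive amount in the error parameters, so that all the inequalities survive for $n \geq \eta_0$.
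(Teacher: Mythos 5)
Your overall architecture is the right one and matches the paper's: extract the doubly-exponentially many messages that put mass at least $\kappa$ on the low-mutual-information types (your Markov step is the contrapositive of the paper's direct bound on the set $\setK$ of such messages), homogenize via Proposition~\ref{pr:homogIDCodeApprox}, $L$-regularize via Lemma~\ref{le:rcodApprox} with $L \approx e^{n(R-\epsilon+\rho(\delta))}$ and $\rho(\delta)<\epsilon/2$, and invoke Proposition~\ref{prop:mRegRed}. Your closing paragraph, though muddled, does land on the correct contradiction: the surviving message count has iterated-log rate $R-\epsilon/2-o(1)$ while the counting bound caps it at $R-\epsilon+\rho(\delta)+o(1)<R-\epsilon/2$.

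The genuine gap is the single-type pigeonholing step. You restrict the code to one ``bad'' type $P^\star$, for which you can only guarantee $Q_m'\bigl(\setT_{P^\star}^{(n)}\bigr)\geq \kappa(1+n)^{-|\setX|}$, and then use the renormalized conditionals $Q_m'^{(n,P^\star)}$ as the PMFs of the resulting code. But the error probabilities of that standalone conditional code are only bounded by $\lambda_i'\big/Q_m'\bigl(\setT_{P^\star}^{(n)}\bigr)\leq \lambda_i'(1+n)^{|\setX|}/\kappa$, which grows polynomially in $n$ and exceeds $1$ for large $n$; the hypothesis $\lambda_1''+\lambda_2''<1$ of Proposition~\ref{prop:mRegRed} then fails and the counting bound cannot be applied. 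The restriction to a single type is also unnecessary: Proposition~\ref{prop:mRegRed} already carries the factor $(1+n)^{|\setX|}$ precisely to account for all types, and $L$-regularity is a per-type condition. The correct move (and what the paper does) is to condition each $Q_m'$ on the \emph{union} of all types with $\muti{P}{W}\leq R-\epsilon$ --- whose mass is at least $\kappa$, so the error sum inflates only to $(\lambda_1'+\lambda_2')/\kappa<1$, which is exactly where the hypothesis $\lambda_1+\lambda_2<\kappa$ is spent --- and then apply Lemma~\ref{le:rcodApprox} separately on each such type, reassembling the $L$-types with the original per-type weights $Q_m''\bigl(\setT_P^{(n)}\bigr)$ so that the decoding probabilities degrade only by an additive $\gamma/2$ with $\gamma=\bigl(1-(\lambda_1+\lambda_2)/\kappa\bigr)/2$.
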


\begin{proof}
Choose
\begin{equation}
\gamma = \biggl( 1 - \frac{\lambda_1 + \lambda_2}{\kappa} \biggr) / 2,
\end{equation}
and note that $\gamma > 0$. Pick $\delta > 0$ sufficiently small so that it satisfies the requirement in Lemma~\ref{le:rcodApprox} and so that $\rho (\delta) < \epsilon / 2$, where $\rho (\cdot)$ denotes the same function as in Lemma~\ref{le:rcodApprox}, and let $\epsilon^\prime = \rho (\delta)$. We henceforth assume that $n$ is sufficiently large so that the following four inequalitites hold:
\begin{subequations}\label{bl:convLemmaAvgDistWeightOnTypesNSuffLarge}
\begin{IEEEeqnarray}{rCl}
\log 2 / n & \leq & \delta, \label{eq:convLemmaAvgDistWeightOnTypesNSuffLarge1} \\
(1 + \gamma/4) (1 - e^{-n \delta})^{-1} + e^{-n \delta} & \leq & 1 + \gamma / 2, \label{eq:convLemmaAvgDistWeightOnTypesNSuffLarge3} \\
( \lambda_1 + \lambda_2 + 2 e^{ - n \delta + \log ( 1 + n ) |\setX|} ) / \kappa + \gamma & < & 1, \label{eq:convLemmaAvgDistWeightOnTypesNSuffLarge2} \\
\exp \bigl\{ e^{ n ( R - \epsilon + \epsilon^\prime ) + \log (1 + n) ( 1 + |\setX| ) + \log \log |\setX| } + e^{ \log (1 + n) (1 + |\setX|) + \log \delta } \bigr\} & < & \exp \bigl\{ e^{n ( R - \epsilon / 2 )} \bigr\}. \label{eq:convLemmaAvgDistWeightOnTypesNSuffLarge4}
\end{IEEEeqnarray}
\end{subequations}
Let $\setM$ be some size-$\exp (\exp (n R))$ set, and assume that the collection of tuples $\{ Q_m, \setD_m \}_{m \in \setM}$ is an $( n,\setM,\lambda_1,\lambda_2 )$ ID code for the DMC $\channel y x$. Pick
\begin{IEEEeqnarray}{rCl}
\setK & = & \Bigl\{ m \in \setM \colon Q_m \bigl( X^n \in \{ \vecx \in \setX^n \colon \muti {P_\vecx}{W} \leq R - \epsilon \} \bigr) \geq \kappa \Bigr\}, \label{bl:convLemmaAvgDistWeightOnTypesSetK}
\end{IEEEeqnarray}
and note that $\{ Q_m, \setD_m \}_{m \in \setK}$ is an $( n, \setK, \lambda_1, \lambda_2 )$ ID code for the DMC $\channel y x$. By \eqref{eq:convLemmaAvgDistWeightOnTypesNSuffLarge1}, \eqref{bl:convLemmaAvgDistWeightOnTypesSetK}, and Proposition~\ref{pr:homogIDCodeApprox} there exists a subset $\setS$ of $\setK$ with
\begin{IEEEeqnarray}{rCl}
| \setS | & \geq & | \setK | \exp \bigl\{ - e^{ \log (1 + n) (1 + |\setX|) + \log \delta } \bigr\} \label{eq:subsHomCode}
\end{IEEEeqnarray}
for which we can construct from $\{ Q_m, \setD_m \}_{m \in \setS}$ a homogeneous $( n,\setS,\lambda_1^\prime,\lambda_2^\prime )$ ID code $\{ Q_m^\prime, \setD_m \}_{m \in \setS}$ with
\begin{subequations}\label{bl:leAvgDistWeightOnTypesLambdasPrime}
\begin{IEEEeqnarray}{rCl}
\lambda_1^\prime & = & \lambda_1 + e^{ - n \delta + \log ( 1 + n ) |\setX|}, \\
\lambda_2^\prime & = & \lambda_2 + e^{ - n \delta + \log ( 1 + n ) |\setX|},
\end{IEEEeqnarray}
\end{subequations}
and
\begin{IEEEeqnarray}{rCl}
Q_m^\prime  \bigl( X^n \in \{ \vecx \in \setX^n \colon \muti {P_\vecx}{W} \leq R - \epsilon \} \bigr) & \geq & \kappa, \quad m \in \setS.
\end{IEEEeqnarray}
For every $m \in \setS$ define the PMF on $\setX^n$
\begin{IEEEeqnarray}{rCl}
Q^{\prime\prime}_m ( \vecx ) & = & \begin{cases} \frac{Q^\prime_m ( \vecx ) }{Q^\prime_m ( X^n \in \{ \vecx^\prime \in \setX^n \colon \muti {P_{\vecx^\prime}}{W} \leq R - \epsilon \} ) } &\textnormal{if } \muti {P_{\vecx}}{W} \leq R - \epsilon, \\
0 &\textnormal{otherwise}, \end{cases} \quad \vecx \in \setX^n.
\end{IEEEeqnarray}
Let
\begin{IEEEeqnarray}{l}
\lambda^{\prime\prime}_1 =  \frac{ \lambda_1^\prime }{\kappa} \quad \textnormal{and} \quad \lambda^{\prime\prime}_2 = \frac{ \lambda_2^\prime }{\kappa}, \label{eq:leAvgDistWeightOnTypesLambdasDoublePrime}
\end{IEEEeqnarray}
and note that the collection of tuples $\{ Q^{\prime\prime}_m, \setD_m \}_{m \in \setS}$ is a homogeneous $( n, \setS, \lambda_1^{\prime\prime}, \lambda_2^{\prime\prime} )$ ID code, because for every distinct pair $m, \, m^\prime \in \setS$
\begin{subequations}
\begin{IEEEeqnarray}{rCl}
( Q^{\prime\prime}_m W^n ) ( Y^n \notin \setD_m ) & \leq & \lambda^{\prime\prime}_1, \\
( Q^{\prime\prime}_m W^n ) ( Y^n \in \setD_{m^\prime} ) & \leq & \lambda^{\prime\prime}_2.
\end{IEEEeqnarray}
\end{subequations}

By Lemma~\ref{le:rcodApprox} there exists some $\eta_0^\prime \in \naturals$, which depends only on $|\setX|$, $|\setY|$, $\delta$, and $\gamma$, so that for every $n \geq \eta_0^\prime$ we can, for every $n$-type $P$ on $\setX^n$ for which $$\muti {P}{W} \leq R - \epsilon$$ and for every $m \in \setM$, approximate the PMF $( Q_m^{\prime\prime} )^{(n, P)}$ on $\setT^{(n)}_P$ by an $e^{ n ( R - \epsilon + \epsilon^\prime ) }$-type $( Q_m^{\prime\prime\prime})^{(n,P)}$ on $\setT^{(n)}_P$ that satisfies for every subset $\setD$ of $\setY^n$
\begin{IEEEeqnarray}{rCl}
\bigl( ( Q_m^{\prime\prime\prime})^{(n,P)} W^n \bigr) (Y^n \in \setD) & \leq & ( 1 + \gamma / 4 ) ( 1 - e^{ - n \delta } )^{-1} \bigl( ( Q_m^{\prime\prime})^{(n,P)} W^n \bigr) (Y^n \in \setD) + e^{ - n \delta } \\
& \leq & \bigl( ( Q_m^{\prime\prime})^{(n,P)} W^n \bigr) (Y^n \in \setD) + \gamma / 2, \label{eq:convLemmaAvgDistWeightOnTypesRegAppr}
\end{IEEEeqnarray}
where in the second inequality we used \eqref{eq:convLemmaAvgDistWeightOnTypesNSuffLarge3}. For every $m \in \setS$ define the PMF
\begin{IEEEeqnarray}{rCl}
Q_m^{\prime\prime\prime} (\vecx) & = & Q_m^{\prime\prime} \bigl( \setT^{(n)}_P \bigr) (Q_m^{\prime\prime\prime})^{(n,P)} (\vecx), \quad P \in \Gamma^{(n)}, \, \vecx \in \setT^{(n)}_P.
\end{IEEEeqnarray}
By \eqref{eq:convLemmaAvgDistWeightOnTypesRegAppr} it holds for every subset $\setD$ of $\setY^n$ that
\begin{IEEEeqnarray}{rCl}
( Q_m^{\prime\prime\prime} W^n ) (Y^n \in \setD) & = & \sum_{P \in \Gamma^{(n)}} Q_m^{\prime\prime} \bigl( \setT^{(n)}_P \bigr) \bigl( (Q_m^{\prime\prime\prime})^{(n,P)} W^n \bigr) ( Y^n \in \setD ) \\
& \leq & \sum_{P \in \Gamma^{(n)}} Q_m^{\prime\prime} \bigl( \setT^{(n)}_P \bigr) \Bigl( \bigl( (Q_m^{\prime\prime})^{(n,P)} W^n \bigr) (Y^n \in \setD) + \gamma / 2 \Bigr) \\
 & = & ( Q_m^{\prime\prime} W^n ) (Y^n \in \setD) + \gamma / 2. \label{eq:prQPPPYnInSet}
\end{IEEEeqnarray}
Let
\begin{IEEEeqnarray}{rCl}
\lambda_1^{\prime\prime\prime} = \lambda_1^{\prime\prime} + \frac{\gamma}{2} \quad \textnormal{and} \quad \lambda_2^{\prime\prime\prime} & = & \lambda_2^{\prime\prime} + \frac{\gamma}{2}.
\end{IEEEeqnarray}
By \eqref{eq:prQPPPYnInSet} and because $\{ Q^{\prime\prime}_m, \setD_m \}_{m \in \setS}$ is a homogeneous $( n, \setS, \lambda_1^{\prime\prime}, \lambda_2^{\prime\prime} )$ ID code, the collection of tuples $\{ Q_m^{\prime\prime\prime}, \setD_m \}_{m \in \setS}$ is a homogeneous $e^{ n ( R - \epsilon + \epsilon^\prime ) }$-regular $( n, \setS, \lambda_1^{\prime\prime\prime}, \lambda_2^{\prime\prime\prime} )$ ID code, and by \eqref{eq:convLemmaAvgDistWeightOnTypesNSuffLarge2}, \eqref{bl:leAvgDistWeightOnTypesLambdasPrime}, and \eqref{eq:leAvgDistWeightOnTypesLambdasDoublePrime}
\begin{IEEEeqnarray}{rCl}
\lambda_1^{\prime\prime\prime} + \lambda_2^{\prime\prime\prime} & < & 1.
\end{IEEEeqnarray}
Consequently, Proposition~\ref{prop:mRegRed} implies that
\begin{IEEEeqnarray}{rCl}
\log | \setS| & \leq & n ( 1 + n )^{| \setX |} e^{ n ( R - \epsilon + \epsilon^\prime ) } \log | \setX |,
\end{IEEEeqnarray}
and by \eqref{eq:subsHomCode}
\begin{IEEEeqnarray}{rCl}
| \setK | & \leq & \exp \bigl\{ e^{ n ( R - \epsilon + \epsilon^\prime ) + \log (1 + n) ( 1 + |\setX| ) + \log \log |\setX| } + e^{ \log (1 + n) (1 + |\setX|) + \log \delta } \bigr\} \\
& < & \exp \bigl\{ e^{n ( R - \epsilon / 2 )} \bigr\}, \label{eq:IDBCConvSetKUB}
\end{IEEEeqnarray}
where in the second inequality we used \eqref{eq:convLemmaAvgDistWeightOnTypesNSuffLarge4}. We are now ready to conclude the proof:
\begin{IEEEeqnarray}{l}
\frac{1}{| \setM |} \sum_{m \in \setM} Q_m \bigl( X^n \in \{ \vecx \in \setX^n \colon \muti {P_\vecx}{W} > R - \epsilon \} \bigr) \nonumber \\
\quad \stackrel{(a)}> ( 1 - \kappa ) \frac{|\setM| - | \setK |}{| \setM |} \\
\quad \stackrel{(b)}> 1 - \kappa - \exp \bigl\{ e^{n ( R - \epsilon / 2 )} \bigr\} / \exp \bigl\{ e^{n R} \bigr\}, \quad n \geq \eta_0,
\end{IEEEeqnarray}
where $(a)$ holds by \eqref{bl:convLemmaAvgDistWeightOnTypesSetK}; $(b)$ holds by \eqref{eq:IDBCConvSetKUB}; and we can let $\eta_0$ be the smallest integer no smaller than $\eta_0^\prime$ that satisfies \eqref{bl:convLemmaAvgDistWeightOnTypesNSuffLarge}.
\end{proof}

With Lemma~\ref{le:avgDistWeightOnTypes} at hand, we are now ready to conclude the proof of Claim~\ref{cl:toShowIDBCConv} by establishing that for every $\epsilon > 0$ and $n$ sufficiently large the mixture PMF on $\setX^n$ $$Q = \frac{1}{\card {\setM_\ry} \, \card {\setM_\rz}} \sum_{(m_\ry,m_\rz) \in \setM_\ry \times \setM_\rz} Q_{m_\ry,m_\rz}$$ must assign notable probability mass to some sequence $\vecx \in \setX^n$ that satisfies both $\muti {P_\vecx}{W_\ry} > R_\ry - \epsilon$ and $\muti {P_\vecx}{W_\rz} > R_\rz - \epsilon$:

\begin{proof}[Proof of Claim~\ref{cl:toShowIDBCConv}]
Fix $\kappa^\ry, \, \kappa^\rz > 0$ that satisfy the following three: 1) $\lambda_1^\ry + \lambda_2^\ry < \kappa^\ry$; 2) $\lambda_1^\rz + \lambda_2^\rz < \kappa^\rz$; and 3) $\kappa^\ry + \kappa^\rz< 1$. (This is possible because of \eqref{eq:sumMissWrongBCSm1}.) By Remark~\ref{re:eqFormDefIDCodeBC} and Lemma~\ref{le:avgDistWeightOnTypes} there must exist some $\eta_0^\prime \in \naturals$ so that, for every blocklength~$n \geq \eta_0^\prime$, every size-$\exp (\exp (n R_\ry))$ set $\setM_\ry$ of possible ID messages for Receiver~$\ry$, and every size-$\exp (\exp (n R_\rz))$ set $\setM_\rz$ of possible ID messages for Receiver~$\rz$, the following conditions are necessary for a collection of tuples $$\bigl\{ Q_{m_\ry,m_\rz}, \setD_{m_\ry}, \setD_{m_\rz} \bigr\}_{(m_\ry,m_\rz) \in \setM_\ry \times \setM_\rz}$$ to be an $\bigl( n, \setM_\ry, \setM_\rz, \lambda^{\ry}_1, \lambda^{\ry}_2, \lambda^{\rz}_1, \lambda^{\rz}_2 \bigr)$ ID code for the BC $\channel {y,z} x$: the mixture PMFs on $\setX^n$
\begin{subequations}
\begin{IEEEeqnarray}{rCl}
Q_{m_\ry} & = & \frac{1}{\card {\setM_\rz}} \sum_{m_\rz \in \setM_\rz} Q_{m_\ry,m_\rz}, \quad m_\ry \in \setM_\ry, \\
Q_{m_\rz} & = & \frac{1}{\card {\setM_\ry}} \sum_{m_\ry \in \setM_\ry} Q_{m_\ry,m_\rz}, \quad m_\rz \in \setM_\rz, \\
Q & = & \frac{1}{\card {\setM_\ry} \, \card {\setM_\rz}} \sum_{(m_\ry,m_\rz) \in \setM_\ry \times \setM_\rz} Q_{m_\ry,m_\rz}
\end{IEEEeqnarray}
\end{subequations}
satisfy
\begin{IEEEeqnarray}{l}
Q \bigl( X^n \in \{ \vecx \in \setX^n \colon \muti {P_\vecx}{W_\ry} > R_\ry - \epsilon \} \bigr) \nonumber \\
\quad = \frac{1}{| \setM_\ry |} \sum_{m_\ry \in \setM_\ry} Q_{m_\ry} \bigl( X^n \in \{ \vecx \in \setX^n \colon \muti {P_\vecx}{W_\ry} > R_\ry - \epsilon \} \bigr) \\
\quad \geq 1 - \kappa^\ry - \exp \bigl\{ e^{n ( R_\ry - \epsilon / 2 )} \bigr\} / \exp \bigl\{ e^{n R_\ry} \bigr\} \label{eq:condSinceW1IDCode}
\end{IEEEeqnarray}
and
\begin{IEEEeqnarray}{l}
Q \bigl( X^n \in \{ \vecx \in \setX^n \colon \muti {P_\vecx}{W_\rz} > R_\rz - \epsilon \} \bigr) \nonumber \\
\quad = \frac{1}{| \setM_\rz |} \sum_{m_\rz \in \setM_\rz} Q_{m_\rz} \bigl( X^n \in \{ \vecx \in \setX^n \colon \muti {P_\vecx}{W_\rz} > R_\rz - \epsilon \} \bigr) \\
\quad \geq 1 - \kappa^\rz - \exp \bigl\{ e^{n ( R_\rz - \epsilon / 2 )} \bigr\} / \exp \bigl\{ e^{n R_\rz} \bigr\}. \label{eq:condSinceW2IDCode}
\end{IEEEeqnarray}
The Union-of-Events bound, \eqref{eq:condSinceW1IDCode}, and \eqref{eq:condSinceW2IDCode} imply that
\begin{IEEEeqnarray}{l}
Q \bigl( X^n \in \{ \vecx \in \setX^n \colon \muti {P_\vecx}{W_\ry} > R_\ry - \epsilon, \, \muti {P_\vecx}{W_\rz} > R_\rz - \epsilon \} \bigr) \nonumber \\
\quad \geq 1 - \kappa^\ry - \kappa^\rz - \exp \bigl\{ e^{n ( R_\ry - \epsilon / 2 )} \bigr\} / \exp \bigl\{ e^{n R_\ry} \bigr\} - \exp \bigl\{ e^{n ( R_\rz - \epsilon / 2 )} \bigr\} / \exp \bigl\{ e^{n R_\rz} \bigr\}.  \label{eq:BCExistsPMFGoodForBoth}
\end{IEEEeqnarray}
Now let $\eta_0$ be the smallest integer $n \geq \eta_0^\prime$ for which the RHS of \eqref{eq:BCExistsPMFGoodForBoth} is positive (such an $n$ must exist, because $\epsilon > 0$ and $\kappa^\ry + \kappa^\rz < 1$). Then, for every blocklength $n \geq \eta_0$ a necessary condition for \eqref{eq:BCExistsPMFGoodForBoth} to hold is that for some PMF $P$ on $\setX$ \eqref{eq:converseBC} holds, and hence Claim~\ref{cl:toShowIDBCConv} follows.
\end{proof}

\section{Average- vs.\ Maximum-Error Criterion}\label{sec:maxError}

This section touches on the maximum-error criterion for identification via the BC, which was adopted in \cite{verbovenmeulen90, biliksteinberg01,oohama03,ahlswede08}. We are primarily interested in whether or not the maximum-error ID capacity region differs from the average-error ID capacity region. For Shannon's classical transmission problem this question can be answered in the negative: by Willems' result \cite{willems90} the transmission capacity region of the BC is the same under the average- and the maximum-error criterion. We begin with the basic definitions of a maximum-error ID code for the BC $\channel {y,z} x$:

\begin{defin}\label{def:IDCodeBCMacC}
Fix finite sets $\setM_\ry$ and $\setM_\rz$, a blocklength $n \in \naturals$, and positive constants $\lambda^{\ry}_1, \lambda^{\ry}_2, \lambda^{\rz}_1, \lambda^{\rz}_2$. Associate with every ID message-pair $( m_{\ry}, m_{\rz} ) \in \setM_{\ry} \times \setM_{\rz}$ a PMF $Q_{m_\ry,m_\rz}$ on $\setX^n$, with every $m_\ry \in \setM_\ry$ an ID set $\setD_{m_{\ry}} \subset \setY^n$, and with every $m_\rz \in \setM_\rz$ an ID set $\setD_{m_\rz} \subset \setZ^n$. The collection of tuples $\bigl\{ Q_{m_\ry,m_\rz}, \setD_{m_\ry}, \setD_{m_\rz} \bigr\}_{(m_\ry, m_\rz) \in \setM_\ry \times \setM_\rz}$ is an $\bigl( n, \setM_\ry, \setM_\rz, \lambda^{\ry}_1, \lambda^{\ry}_2, \lambda^{\rz}_1, \lambda^{\rz}_2 \bigr)$ maximum-error ID code for the BC $\channel {y,z} x$ if the maximum probabilities of missed identification at Terminals~$\ry$ and $\rz$
\begin{subequations}
\begin{IEEEeqnarray}{rCl}
p^\ry_{\textnormal{m-missed-ID}} = \max_{(m_\ry,m_\rz) \in \setM_\ry \times \setM_\rz} \bigl( Q_{m_\ry,m_\rz} W^n \bigr) \bigl( Y^n \notin \setD_{m_\ry} \bigr), \\
p^\rz_{\textnormal{m-missed-ID}} = \max_{(m_\ry,m_\rz) \in \setM_\ry \times \setM_\rz} \bigl( Q_{m_\ry,m_\rz} W^n \bigr) \bigl( Z^n \notin \setD_{m_\rz} \bigr)
\end{IEEEeqnarray}
\end{subequations}
satisfy
\begin{subequations}\label{bl:maxPrMissedIDBCMaxC}
\begin{IEEEeqnarray}{rCl}
p^\ry_{\textnormal{m-missed-ID}} & \leq & \lambda^{\ry}_1, \\
p^\rz_{\textnormal{m-missed-ID}} & \leq & \lambda^{\rz}_1,
\end{IEEEeqnarray}
\end{subequations}
and the maximum probabilities of wrong identification at Terminals~$\ry$ and $\rz$
\begin{subequations}
\begin{IEEEeqnarray}{rCl}
p^\ry_{\textnormal{m-wrong-ID}} & = & \max_{(m_\ry,m_\rz) \in \setM_\ry \times \setM_\rz} \max_{m^\prime_\ry \neq m_\ry } \bigl( Q_{m_\ry,m_\rz} W^n \bigr) \bigl( Y^n \in \setD_{m^\prime_\ry} \bigr), \\
p^\rz_{\textnormal{m-wrong-ID}} & = & \max_{(m_\ry,m_\rz) \in \setM_\ry \times \setM_\rz} \max_{m^\prime_\rz \neq m_\rz } \bigl( Q_{m_\ry,m_\rz} W^n \bigr) \bigl( Z^n \in \setD_{m^\prime_\rz} \bigr)
\end{IEEEeqnarray}
\end{subequations}
satisfy
\begin{subequations}\label{bl:maxPrWrongIDBCMacC}
\begin{IEEEeqnarray}{rCl}
p^\ry_{\textnormal{m-wrong-ID}} & \leq & \lambda^{\ry}_2, \\
p^\rz_{\textnormal{m-wrong-ID}} & \leq & \lambda^{\rz}_2.
\end{IEEEeqnarray}
\end{subequations}
A rate-pair $( R_\ry, R_\rz )$ is called maximum-error achievable if for every positive $\lambda^{\ry}_1$, $\lambda^{\ry}_2$, $\lambda^{\rz}_1$, and $\lambda^{\rz}_2$ and for every sufficiently-large blocklength~$n$ there exists an $\bigl( n, \setM_\ry, \setM_\rz, \lambda^{\ry}_1, \lambda^{\ry}_2, \lambda^{\rz}_1, \lambda^{\rz}_2 \bigr)$ maximum-error ID code for the BC with
\begin{IEEEeqnarray*}{rrCll}
& \tfrac{1}{n} \log \log | \setM_\ry | & \geq & R_\ry \quad &\textnormal{if } R_\ry > 0,
\\*[-0.625\normalbaselineskip]
\smash{\left\{
\IEEEstrut[6.39\jot]
\right.} \nonumber
\\*[-0.625\normalbaselineskip]
& |\setM_\ry| & = & 1 \quad &\textnormal{if } R_\ry = 0, \\[0.5\normalbaselineskip]
& \tfrac{1}{n} \log \log | \setM_\rz | & \geq & R_\rz \quad &\textnormal{if } R_\rz > 0,
\\*[-0.625\normalbaselineskip]
\smash{\left\{
\IEEEstrut[6.39\jot]
\right.} \nonumber
\\*[-0.625\normalbaselineskip]
& |\setM_\rz| & = & 1 \quad &\textnormal{if } R_\rz = 0.
\end{IEEEeqnarray*}
The maximum-error ID capacity region $\setC_{\textnormal{m}}$ of the BC is the closure of the set of all maximum-error-achievable rate-pairs.
\end{defin}

While the average-error criterion requires that each receiver identify the message intended for it reliably in expectation over the uniform ID message intended for the other receiver, the maximum-error criterion requires that each receiver identify the message intended for it reliably even if the realization of the ID message for the other receiver assumes the worst possible realization. Consequently, every rate-pair that is maximum-error achievable is also average-error achievable, and the average-error ID capacity region is thus an outer bound on the maximum-error ID capacity region. The maximum-error ID capacity region of the BC is still unknown. To-date the best known inner bound is the common-randomness capacity region $\setR_{\textnormal{cr}}$. It is unknown whether this bound is tight.

\begin{theorem}\cite[Theorem~11]{ahlswede08}\label{th:IDBCMaxC}
The maximum-error ID capacity region $\setC_{\textnormal{m}}$ of the BC $\channel {y,z}{x}$ contains the common-randomness capacity region $\setR_{\textnormal{cr}}$, which is the set of all rate-pairs $( R_\ry, R_\rz ) \in ( \reals^+_0 )^2$ that for some finite set $\setU$ and some PMF $P_{U,X}$ on $\setU \times \setX$ satisfy
\begin{subequations}\label{bl:capacityBCMaxC1}
\begin{IEEEeqnarray}{rCl}
R_\ry & \leq & \muti{P_U}{P_{X|U} W_\ry}, \\
R_\rz & \leq & \min \bigl\{ \muti{P_U}{P_{X|U} W_\ry} + \condmuti{P_{X|U}}{W_\rz}{P_U}, \muti {P_X}{W_\rz} \bigr\},
\end{IEEEeqnarray}
\end{subequations}
or
\begin{subequations}\label{bl:capacityBCMaxC2}
\begin{IEEEeqnarray}{rCl}
R_\ry & \leq & \min \bigl\{ \muti{P_U}{P_{X|U} W_\rz} + \condmuti{P_{X|U}}{W_\ry}{P_U}, \muti {P_X}{W_\ry} \bigr\}, \\
R_\rz & \leq & \muti{P_U}{P_{X|U} W_\rz}.
\end{IEEEeqnarray}
\end{subequations}
\end{theorem}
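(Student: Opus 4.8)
The plan is to prove this inner bound by the common-randomness identification paradigm of Ahlswede and Dueck~\cite{ahlswededueckfb89}, transplanted to the BC. The code operates in two phases within the blocklength~$n$. In the first phase, using $(1-o(1))n$ channel uses, the encoder employs a superposition transmission code so that the single transmitted block generates a random variable $K_\ry$ known to both the encoder and Receiver~$\ry$ and a random variable $K_\rz$ known to both the encoder and Receiver~$\rz$. In the second phase, using only $o(n)$ channel uses, the encoder exploits this shared randomness to identify double-exponentially many messages at each terminal. By the symmetry of the BC in its two outputs it suffices to show that every rate-pair satisfying~\eqref{bl:capacityBCMaxC1} is achievable; interchanging the roles of $\ry$ and $\rz$ then yields~\eqref{bl:capacityBCMaxC2}.

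For the first phase, fix $P_{U,X}$ on $\setU \times \setX$ together with nonnegative $R_0, R_1$ obeying $R_0 \le \muti{P_U}{P_{X|U} W_\ry}$, $R_1 \le \condmuti{P_{X|U}}{W_\rz}{P_U}$, and $R_0 + R_1 \le \muti{P_X}{W_\rz}$. Draw a cloud-center codebook of $e^{nR_0}$ codewords i.i.d.\ $\sim P_U^n$ and, for each cloud center, a satellite codebook of $e^{nR_1}$ codewords drawn conditionally i.i.d.\ $\sim P_{X|U}^n$; reveal all codebooks. The encoder draws a cloud index $J_0$ and a satellite index $J_1$ uniformly and transmits the corresponding satellite codeword. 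The standard superposition-coding error analysis shows that Receiver~$\ry$ reliably decodes $J_0$ (using $R_0 \le \muti{P_U}{P_{X|U} W_\ry}$) and that Receiver~$\rz$, decoding the pair jointly, reliably recovers $(J_0,J_1)$ (using $R_1 \le \condmuti{P_{X|U}}{W_\rz}{P_U}$ and $R_0+R_1 \le \muti{P_X}{W_\rz} = \muti{P_U}{P_{X|U}W_\rz}+\condmuti{P_{X|U}}{W_\rz}{P_U}$). Hence, with probability $1-o(1)$, $K_\ry := J_0$ is uniform of rate $R_0$ and common to the encoder and Receiver~$\ry$, while $K_\rz := (J_0,J_1)$ is uniform of rate $R_0+R_1$ and common to the encoder and Receiver~$\rz$. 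Since common randomness can always be thinned, a rate-splitting bookkeeping over $P_{U,X}$, $R_0$ and $R_1$ shows that the achievable common-randomness rate-pairs cover the region~\eqref{bl:capacityBCMaxC1}.

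For the second phase we apply the identification-from-common-randomness construction of~\cite{ahlswededueckfb89}. For each terminal $\Psi \in \{\ry,\rz\}$ and each ID message $m_\Psi \in \setM_\Psi$, fix a coloring $\varphi_{m_\Psi}$ from the support of $K_\Psi$ into a small tag set of size $e^{o(n)}$, drawn from an almost-universal hash family, so that $\log\log|\setM_\Psi|$ may be taken up to $(1-o(1))nR_\Psi$. Knowing $(M_\ry,M_\rz)$ and $(K_\ry,K_\rz)$, the encoder conveys the tag pair $\bigl(\varphi_{M_\ry}(K_\ry),\varphi_{M_\rz}(K_\rz)\bigr)$ over the remaining $o(n)$ channel uses with a short reliable BC transmission code---feasible since only $o(n)$ bits need reach each receiver, so the two tags can even be sent successively at negligible rate cost. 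The $m'_\Psi$-focused party at Terminal~$\Psi$ knows $K_\Psi$, computes $\varphi_{m'_\Psi}(K_\Psi)$, and guesses that $m'_\Psi$ was sent iff it matches the received tag. A missed identification forces a decoding error in one of the two phases, which has probability $o(1)$ uniformly over the transmitted message pair; a wrong identification forces in addition a collision $\varphi_{m_\Psi}(K_\Psi)=\varphi_{m'_\Psi}(K_\Psi)$ with $m_\Psi\ne m'_\Psi$, which the hash family keeps below any prescribed $\lambda_2^\Psi>0$ uniformly over $m_\Psi\ne m'_\Psi$. Since all estimates are uniform over codeword indices and over message pairs, the resulting code meets the \emph{maximum-}error criterion.

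I expect the main obstacle to be securing the \emph{maximum-}error guarantee in the second phase: a plainly random choice of colorings controls the collision probability only on average over the pair $(m_\Psi,m'_\Psi)$, so one must exhibit an explicit almost-universal family and verify that the $o(n)$-rate second phase neither erodes the first-phase common-randomness rates nor reintroduces message-dependent error terms---doing this simultaneously at both terminals, with the correlated $K_\ry$ and $K_\rz$ produced by the superposition code, is the delicate part. The first-phase superposition analysis and the check that the parameters sweep out~\eqref{bl:capacityBCMaxC1} are, by contrast, classical.
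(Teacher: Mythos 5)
Your proposal is correct and follows exactly the route the paper indicates for this result: the paper does not prove Theorem~\ref{th:IDBCMaxC} itself but cites \cite[Theorem~11]{ahlswede08} and describes precisely this strategy---a common-randomness ID code \`a la \cite{ahlswededueckfb89} in which a (superposition) transmission code first establishes shared randomness of rate $\muti{P_U}{P_{X|U}W_\ry}$ with Receiver~$\ry$ and rate $\min\bigl\{\muti{P_U}{P_{X|U}W_\ry}+\condmuti{P_{X|U}}{W_\rz}{P_U},\muti{P_X}{W_\rz}\bigr\}$ with Receiver~$\rz$, after which $o(n)$ channel uses convey the hash tags. Your closing worry about the maximum-error guarantee is handled by the standard argument: since the number of common-randomness realizations is exponential in $n$, a Chernoff bound makes the probability that a fixed pair of random colorings collides too often doubly-exponentially small, which survives the union bound over all (doubly-exponentially many) message pairs, so the collision fraction is controlled uniformly and not merely on average.
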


The region $\setR_{\textnormal{cr}}$ can be achieved by a common-randomness ID code, which---like that of \cite{ahlswededueckfb89} for the DMC---uses a transmission code to establish common randomness between the encoder and each decoder. If the BC is degraded, then Theorem~\ref{th:IDBCMaxC} specializes to \cite[Theorem~1]{biliksteinberg01}. Also for the degraded BC it is unknown whether the common-randomness inner bound is tight.

\begin{remark}\label{re:commRandIBStrCont}
The common-randomness capacity region $\setR_{\textnormal{cr}}$ is contained in the average-error ID capacity region $\setC$, and the containment can be strict.
\end{remark}

\begin{proof}
Every maximum-error-achievable rate-pair is also average-error achievable, and hence, by Theorem~\ref{th:IDBCMaxC}, $\setR_{\textnormal {cr}} \subseteq \setC_{\textnormal m} \subseteq \setC$. To see that $\setR_{\textnormal {cr}}$ can be strictly smaller than $\setC$, consider the binary-symmetric BC of \cite[Example~5.3]{gamalkim11}, whose marginal channels are both binary symmetric. This BC is degraded, and Theorem~\ref{th:IDBCMaxC} thus specializes to \cite[Theorem~1]{biliksteinberg01}, which we can evaluate as in \cite[Example~5.3 and Section 5.4.2]{gamalkim11} to conclude that $\setR_{\textnormal{cr}} \subsetneq \setC$ holds whenever the transition probabilities of the marginal binary-symmetric channels are distinct.
\end{proof}

To-date it is still unknown whether the common-randomness inner bound on the maximum-error ID capacity region of the BC is tight, i.e., whether $\setC_{\textnormal m} = \setR_{\textnormal {cr}}$. Ahlswede argued that it is whenever an additional constraint is imposed on the maximum probabilities of missed and wrong identification, namely, that they decay like $n^{-7}$, where $n$ is the blocklength \cite[Section~15]{ahlswede08}. Since the average-error ID capacity region of the BC is also achievable when we require that the error probabilities decay exponentially in $n$ (Remark~\ref{re:convExpFastBC}), we could thus infer from Remark~\ref{re:commRandIBStrCont} that, for some BCs and subject to the additional constraint that the maximum probabilities of missed and wrong identification decay like $n^{-7}$, the average-error ID capacity region is strictly larger than the maximum-error ID capacity region.

We hesitate to draw this conclusion, because there seems to be a gap in Ahlswede's proof: Ahlswede's proof (that of the converse part of \cite[Theorem~11]{ahlswede08}) builds on his converse to the single-user ID coding theorem \cite[Theorem~9]{ahlswede08}, which applies when for every blocklength~$n$ the maximum probabilities of missed and wrong identification must not exceed $n^{-7}$. The proof of \cite[Theorem~9]{ahlswede08} can be roughly sketched as follows: First, it is shown that for every possible ID message $m$ the PMF $Q_m$ can be represented by a size-$M$ subset of $\setX^n$. Then, it is argued that only few ID messages can have the same representation, and that the ID rate can thus be upper-bounded in terms of the number of possible representations, i.e., in terms of ${|\setX|^n \choose M}$. Since
\begin{IEEEeqnarray}{rCcCl}
{|\setX|^n \choose M} & \leq & |\setX^n|^{M} & = & e^{n \log |\setX| \, M} \approx \exp \bigl\{ e^{\log M} \bigr\},
\end{IEEEeqnarray}
it is concluded that for $n$ sufficiently large the ID rate cannot exceed $\log M / n$, where $M$ can be upper-bounded by \cite[Lemma~7]{ahlswede08}. Ahlswede's converse for the BC is similar (see \cite[Section~15]{ahlswede08}): To upper-bound the ID rate $R_\rz$ of Receiver~$\rz$, an auxiliary random variable $U$ is introduced, which is uniform over the support $\setM_\ry$ of the possible ID messages for Receiver~$\ry$. As in the proof of \cite[Theorem~9]{ahlswede08}, it is shown that for every possible ID message $m_\rz \in \setM_\rz$ for Receiver~$\rz$ the PMF $$Q_{m_\rz} = \frac{1}{|\setM_\ry|} \sum_{m_\ry \in \setM_\ry} Q_{m_\ry,m_\rz}$$ can be represented by a size-$M$ subset of $\setM_\ry$. Like for the single-user channel, it is argued that only few ID messages for Receiver~$\rz$ can have the same representation, and that one can thus upper-bound the ID rate $R_\rz$ in terms of the number of possible representations, i.e., in terms of ${|\setM_\ry| \choose M}$. From this it is concluded that for $n$ sufficiently large the ID rate cannot exceed $\log M / n$. There seems to be a gap in this conclusion, because, unlike $\setX^n$, the cardinality of $\setM_\ry$  grows doubly-exponentially in~$n$, i.e., $|\setM_\ry| = \exp ( \exp (n R_\ry))$, where $R_\ry$ is the ID rate of Receiver~$\ry$; and it is therefore not clear how to conclude that for $n$ sufficiently large $R_\rz$ cannot exceed $\log M / n$, because
\begin{IEEEeqnarray}{rCcCcCl}
{|\setM_\ry| \choose M} & \leq & |\setM_\ry|^{M} & = & \exp \bigl\{ e^{n R_\ry} M \bigr\} & = & \exp \bigl\{ e^{n R_\ry + \log M} \bigr\}.
\end{IEEEeqnarray}

\section{Extensions}\label{sec:extensions}

This section discusses several extensions: identification via the BC with more than two receivers (Section~\ref{sec:3RecBC}), identification via the BC with a common message (Section~\ref{sec:BCCommMsg}), and identification via the BC with one-sided feedback (Section~\ref{sec:BC1FB}). 

\subsection{More than Two Receivers}\label{sec:3RecBC}

In this section we study identification via the BC with more than two receivers. As we shall see, it is easy to adapt the converse of Theorem~\ref{th:IDBC} to this more general scenario, but in the direct part difficulties already arise when the number of receivers increases from two to three. To keep the exposition simple, we shall thus focus on the three-receiver BC. We inner-bound its ID capacity region and show that the bound is in some cases tight.\\

Consider a three-receiver BC of transition law $\channel {y_1,y_2,y_3}{x}$, and for every $k \in \{ 1, 2, 3 \}$ let $\setY_k$ denote the support of the channel output at Receiver~$k$ and $W_k (y_k|x)$ the marginal channel to Receiver~$k$. We begin with the basic definitions of an average-error ID code for the BC $\channel {y_1, y_2, y_3}{x}$:

\begin{defin}
Fix finite sets $\setM_1$, $\setM_2$, and $\setM_3$, a blocklength $n \in \naturals$, and positive constants $$\lambda^{(k)}_1, \, \lambda^{(k)}_2, \quad k \in \{ 1,2,3 \}.$$ Associate with every ID message-triple $( m_1, m_2, m_3 ) \in \setM_1 \times \setM_2 \times \setM_3$ a PMF $Q_{m_1,m_2,m_3}$ on $\setX^n$, and for each $k \in \{ 1,2,3 \}$ associate with every $m_k \in \setM_k$ an ID set $\setD_{m_k} \subset \setY_k^n$. Define the mixture PMFs on $\setX^n$
\begin{subequations}
\begin{IEEEeqnarray}{rCll}
Q_{m_1} & = & \frac{1}{|\setM_2| \, |\setM_3|} \sum_{m_2, m_3} Q_{m_1,m_2,m_3}, &\quad m_1 \in \setM_1, \\
Q_{m_2} & = & \frac{1}{|\setM_1| \, |\setM_3|} \sum_{m_1, m_3} Q_{m_1,m_2,m_3}, &\quad m_2 \in \setM_2, \\
Q_{m_3} & = & \frac{1}{|\setM_1| \, |\setM_2|} \sum_{m_1, m_2} Q_{m_1,m_2,m_3}, &\quad m_3 \in \setM_3.
\end{IEEEeqnarray}
\end{subequations}
The collection of tuples $\bigl\{ Q_{m_1,m_2,m_3}, \setD_{m_1}, \setD_{m_2}, \setD_{m_3} \bigr\}_{(m_1,m_2,m_3) \in \setM_1 \times \setM_2 \times \setM_3}$ is an $\bigl( n, \{ \setM_k, \lambda^{(k)}_1, \lambda^{(k)}_2 \}_{k \in \{ 1,2,3 \}} \bigr)$ ID code for the BC $\channel {y_1,y_2,y_3} x$ if for each $k \in \{ 1, 2, 3 \}$ the collection of tuples $\bigl\{ Q_{m_k}, \setD_{m_k} \bigr\}_{m_k \in \setM_k}$ is an $\bigl( n, \setM_k, \lambda^{(k)}_1, \lambda^{(k)}_2 \bigr)$ ID code for the marginal channel $W_k (y_k|x)$. A rate-triple $( R_1, R_2, R_3 )$ is called achievable if for every positive $\lambda^{(1)}_1$, $\lambda^{(1)}_2$, $\lambda^{(2)}_1$, $\lambda^{(2)}_2$, $\lambda^{(3)}_1$, and $\lambda^{(3)}_2$ and for every sufficiently-large blocklength~$n$ there exists an $\bigl( n, \{ \setM_k, \lambda^{(k)}_1, \lambda^{(k)}_2 \}_{k \in \{ 1,2,3 \}} \bigr)$ ID code for the BC with
\begin{IEEEeqnarray*}{rrClll}
& \tfrac{1}{n} \log \log | \setM_k | & \geq & R_k \quad &\textnormal{if } R_k > 0, \\*[-0.625\normalbaselineskip]
\smash{\left\{
\IEEEstrut[6.39\jot]
\right.} \nonumber &&&&&\qquad k \in \{1,2,3\}.
\\*[-0.625\normalbaselineskip]
& |\setM_k| & = & 1 \quad &\textnormal{if } R_k = 0,
\end{IEEEeqnarray*}
The ID capacity region $\setC_3$ of the three-receiver BC is the closure of the set of all achievable rate-triples.
\end{defin}

Our next result is an outer bound on the ID capacity region of the three-receiver BC:

\begin{theorem}\label{th:obBC3Rec}
The ID capacity region $\setC_3$ of the BC $\channel {y_1, y_2, y_3}{x}$ is contained in the set $\setR_{3 \textnormal{-ob}}$ of all rate-triples $(R_1, R_2, R_3) \in ( \reals^+_0 )^3$ that for some PMF $P$ on $\setX$ satisfy 
\begin{IEEEeqnarray}{rCl}
R_k & \leq & \muti{P}{W_k}, \, \forall \, k \in \{ 1, 2, 3 \}. \label{eq:obBC3Rec}
\end{IEEEeqnarray}
\end{theorem}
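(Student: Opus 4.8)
The key observation is that, by its very definition, an $\bigl(n,\{\setM_k,\lambda^{(k)}_1,\lambda^{(k)}_2\}_{k\in\{1,2,3\}}\bigr)$ ID code for the three-receiver BC is nothing but a choice of PMFs $\{Q_{m_1,m_2,m_3}\}$ and ID sets such that, for each $k\in\{1,2,3\}$, the mixture collection $\bigl\{Q_{m_k},\setD_{m_k}\bigr\}_{m_k\in\setM_k}$ is an $\bigl(n,\setM_k,\lambda^{(k)}_1,\lambda^{(k)}_2\bigr)$ ID code for the marginal channel $W_k(y_k|x)$. Hence the entire single-user strong-converse machinery assembled for Theorem~\ref{th:IDBC}, and in particular Lemma~\ref{le:avgDistWeightOnTypes}, applies verbatim to each of the three marginals, and the proof is the three-receiver analogue of that of Claim~\ref{cl:toShowIDBCConv}, with one extra application of the lemma and a wider union bound.

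Concretely, I would first prove the three-receiver analogue of Claim~\ref{cl:toShowIDBCConv}: for every $\epsilon>0$ and every choice of positive $\lambda^{(k)}_i$ with $\sum_{k=1}^{3}(\lambda^{(k)}_1+\lambda^{(k)}_2)<1$, there is an $\eta_0$ such that for all $n\geq\eta_0$ a necessary condition for an ID code for the three-receiver BC to exist is that for some PMF $P$ on $\setX$ one has $R_k<\muti{P}{W_k}+\epsilon$ for all $k$. To establish this, fix $\kappa^{(1)},\kappa^{(2)},\kappa^{(3)}>0$ with $\lambda^{(k)}_1+\lambda^{(k)}_2<\kappa^{(k)}$ for each $k$ and $\kappa^{(1)}+\kappa^{(2)}+\kappa^{(3)}<1$ (possible by the assumption on the $\lambda$'s), and apply Lemma~\ref{le:avgDistWeightOnTypes} to each marginal $W_k$ with rate $R_k$, message set $\setM_k$, and constants $\lambda^{(k)}_1,\lambda^{(k)}_2,\epsilon,\kappa^{(k)}$. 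Since $\frac{1}{\card{\setM_k}}\sum_{m_k}Q_{m_k}=Q$, where
\[
Q=\frac{1}{\card{\setM_1}\,\card{\setM_2}\,\card{\setM_3}}\sum_{(m_1,m_2,m_3)}Q_{m_1,m_2,m_3},
\]
the lemma yields, for all sufficiently large $n$ and each $k$,
\[
Q\bigl(X^n\in\{\vecx\in\setX^n\colon\muti{P_\vecx}{W_k}>R_k-\epsilon\}\bigr)>1-\kappa^{(k)}-\exp\bigl\{e^{n(R_k-\epsilon/2)}\bigr\}\big/\exp\bigl\{e^{nR_k}\bigr\}.
\]
A union bound over $k\in\{1,2,3\}$ then gives $Q\bigl(X^n\in\bigcap_{k=1}^{3}\{\vecx\colon\muti{P_\vecx}{W_k}>R_k-\epsilon\}\bigr)>0$ once $n$ is large enough that $\kappa^{(1)}+\kappa^{(2)}+\kappa^{(3)}$ together with the three vanishing correction terms stays below $1$. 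In particular some $\vecx$ lies in this intersection, and $P=P_\vecx$ is the desired PMF. (Coordinates with $R_k=0$ are vacuous, since $\muti{P}{W_k}\geq0$, so one may assume all rates positive.)

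It remains to pass from this $\epsilon$-statement to $\setC_3\subseteq\setR_{3\textnormal{-ob}}$. Given an achievable rate-triple $(R_1,R_2,R_3)$, achievability lets me take the $\lambda^{(k)}_i$ as small as I like, so the $\epsilon$-claim applies; for each $j$ take $\epsilon=1/j$ and obtain a PMF $P_j$ on $\setX$ with $R_k<\muti{P_j}{W_k}+1/j$ for $k=1,2,3$. By compactness of $\mathscr P(\setX)$ a subsequence $P_j\to P$, and by continuity of $P\mapsto\muti{P}{W_k}$ we get $R_k\leq\muti{P}{W_k}$ for all $k$, i.e.\ $(R_1,R_2,R_3)\in\setR_{3\textnormal{-ob}}$. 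The same compactness/continuity argument shows $\setR_{3\textnormal{-ob}}$ is closed, so the closure of the achievable region is contained in it, proving the theorem.

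\textbf{Main obstacle.} There is essentially none: all the work is hidden in Lemma~\ref{le:avgDistWeightOnTypes} (the empirical-type-tracking version of Han and Verd\'u's single-user strong converse), and moving from two to three receivers costs only one additional invocation of that lemma and a three-way union bound, the constants $\kappa^{(k)}$ being freely adjustable because achievability drives every $\lambda^{(k)}_i$ to zero. As the paper signals, the real difficulty of the three-receiver setting lies entirely in the direct part, not in this outer bound.
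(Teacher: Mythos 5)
Your proposal is correct and matches the paper's own proof: the paper likewise proves the strong-converse claim by applying Lemma~\ref{le:avgDistWeightOnTypes} once per marginal channel with constants $\kappa^{(1)}+\kappa^{(2)}+\kappa^{(3)}<1$ and then taking a three-way union bound over the empirical-type events, exactly as you describe. The closing compactness/continuity step you add to pass from the $\epsilon$-statement to the containment of the (closed) capacity region is the standard one the paper leaves implicit.
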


\begin{proof}
The proof follows along the line of arguments in Section~\ref{sec:CVIDBC} (see Appendix~\ref{app:obBC3Rec} for the details).
\end{proof}

We can adapt the two-receiver broadcast ID code of Section~\ref{sec:DPIDBC} to obtain the following inner bound on the ID capacity region of the three-receiver BC.

\begin{theorem}\label{th:ibBC3Rec}
The ID capacity region $\setC_3$ of the BC $\channel {y_1, y_2, y_3}{x}$ contains the set $\setR_{3 \textnormal{-ib}}$ of all rate-triples $(R_1, R_2, R_3) \in ( \reals^+_0 )^3$ that for some PMF $P$ on $\setX$ satisfy
\begin{IEEEeqnarray}{rCl}
R_k & \leq & \min \Biggl\{ \muti{P}{W_k}, \sum_{l \in \{ 1,2,3 \} \setminus \{ k \}} \muti{P}{W_l} \Biggr\}, \, \forall \, k \in \{ 1, 2, 3 \}. \label{eq:ibBC3Rec}
\end{IEEEeqnarray}
The interior of $\setR_{3 \textnormal{-ib}}$ is achieved by codes with deterministic encoders.
\end{theorem}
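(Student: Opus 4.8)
The plan is to adapt the two-receiver broadcast scheme of Section~\ref{sec:DPIDBC}. Fix a PMF $P$ on $\setX$ and a rate-triple satisfying the inequalities in \eqref{eq:ibBC3Rec} strictly, i.e.\ $0\le R_k<\min\{\muti{P}{W_k},\sum_{l\in\{1,2,3\}\setminus\{k\}}\muti{P}{W_l}\}$ for every $k$, and choose $\epsilon>0$ small enough that $2\epsilon\ent{P\times W_k}<\muti{P}{W_k}-\tilde R_k$ for each $k$ (the bin rates $\tilde R_k$ are chosen below). Draw a pool $\pool$ of $e^{nR_\poolre}$ $n$-tuples $\sim P^n$, indexed by a set $\setV$; for each $k$ and each $m_k\in\setM_k$ form an index-set $\indexset{m_k}$ by including each $v\in\setV$ independently with probability $e^{-n(R_\poolre-\tilde R_k)}$, and set $\bin{m_k}=\{\poolel v:v\in\indexset{m_k}\}$. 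With each triple $(m_1,m_2,m_3)$ associate an index $V_{m_1,m_2,m_3}$ drawn uniformly over $\indexset{m_1}\cap\indexset{m_2}\cap\indexset{m_3}$ when this intersection is nonempty and uniformly over $\setV$ otherwise; the encoder sends $\poolel{V_{m_1,m_2,m_3}}$. Decoding is the single-user rule: the $m_k'$-focused party at Receiver~$k$ guesses ``$m_k'$ was sent'' iff some element of $\bin{m_k'}$ is jointly $\epsilon$-typical with what it observes, so $\idset{m_k'}=\bigcup_{v\in\indexset{m_k'}}\eptyp(P\times W_k\mid\poolel v)$. Once the code is drawn the encoder is the deterministic map $(m_1,m_2,m_3)\mapsto\poolel{V_{m_1,m_2,m_3}}$, so proving achievability this way simultaneously proves the claim about deterministic encoders, and $\setC_3\supseteq\setR_{3\textnormal{-ib}}$ follows since $\setC_3$ is closed (the boundary of $\setR_{3\textnormal{-ib}}$ being handled by a routine limiting argument).

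The parameters must satisfy $R_k<\tilde R_k<\muti{P}{W_k}$ and $\tilde R_k<R_\poolre$ for every $k$ (so that the marginal construction at Receiver~$k$ is a legitimate instance of the single-user code of Section~\ref{sec:IDCodingTechnique} for $W_k$), together with the three-receiver analogue of \eqref{eq:additionalConstraintBC},
$$R_\poolre<\tfrac{1}{2}\bigl(\tilde R_1+\tilde R_2+\tilde R_3\bigr),$$
which is exactly what makes $\indexset{m_1}\cap\indexset{m_2}\cap\indexset{m_3}$ nonempty with doubly-exponentially high probability (its expected size is $e^{n(\tilde R_1+\tilde R_2+\tilde R_3-2R_\poolre)}$). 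I claim such $\tilde R_k$ and $R_\poolre$ exist precisely for such rate-triples: assuming w.l.o.g.\ $\muti{P}{W_1}\ge\muti{P}{W_2}\ge\muti{P}{W_3}$, pick $\tilde R_2,\tilde R_3$ just below $\muti{P}{W_2},\muti{P}{W_3}$ and $\tilde R_1$ strictly between $R_1$ and $\min\{\muti{P}{W_1},\tilde R_2+\tilde R_3\}$ --- possible because $R_1<\min\{\muti{P}{W_1},\muti{P}{W_2}+\muti{P}{W_3}\}$ --- and then $R_\poolre$ strictly between $\max_k\tilde R_k$ and $\tfrac12\sum_l\tilde R_l$; the remaining triangle inequalities $\tilde R_2<\tilde R_1+\tilde R_3$ and $\tilde R_3<\tilde R_1+\tilde R_2$ hold automatically, and one also gets $\tilde R_k+\tilde R_j>R_\poolre$ for every pair, so every pairwise intersection of index-sets is exponentially large with high probability. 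This step is where the $\sum_{l\neq k}\muti{P}{W_l}$ term in \eqref{eq:ibBC3Rec} enters: a pool that is too small cannot host enough distinct bins for a very capable receiver.

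For the error analysis, let $\bm Q_{m_k}$ be the mixture of the $\bm Q_{m_1,m_2,m_3}$ over the two message-indices other than $m_k$ (taken uniform) and $\tilde{\bm Q}_{m_k}$ the uniform law on $\bin{m_k}$ (with a fixed fall-back $\poolel{v^\star}$ if $\bin{m_k}=\emptyset$). As in Remark~\ref{re:eqFormDefIDCodeBC}, the six missed/wrong-identification probabilities of the broadcast code are those of the three single-user codes $\{\bm Q_{m_k},\idset{m_k}\}_{m_k\in\setM_k}$, and $\{\tilde{\bm Q}_{m_k},\idset{m_k}\}_{m_k}$ has exactly the law of the single-user ID code of Section~\ref{sec:IDCodingTechnique} for $W_k$ with bin rate $\tilde R_k$ and pool rate $R_\poolre$, so by Claim~\ref{re:convExpFast} its error probabilities vanish exponentially in probability. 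By the triangle inequality for Total-Variation distance, its monotonicity under $W_k^n$, and the Data-Processing inequality for Total-Variation distance exactly as in \eqref{eq:toShowIDBC3DataProcessingTotVarDist}, it then suffices to show for each $k$ that $\max_{m_k\in\setM_k}d\bigl(\bm P_V^{(m_k)},\tilde{\bm P}_V^{(m_k)}\bigr)$ decays exponentially in probability, where $\bm P_V^{(m_k)}(v)$ is the empirical frequency of $v$ among $\{V_{m_1,m_2,m_3}\}$ over the two non-$k$ coordinates and $\tilde{\bm P}_V^{(m_k)}$ is uniform on $\indexset{m_k}$. Exactly as in \eqref{eq:allBmySuffLarge}, a multiplicative Chernoff bound (Proposition~\ref{pr:multChernoff}) lets us condition on the high-probability event that $|\indexset{\nu}|>(1-\delta_n)e^{n\tilde R_k}$ for all $\nu\in\setM_k$.

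The main new obstacle --- the reason difficulties arise at three receivers --- is that, for fixed $m_k$, the indices $\{V_{m_1,m_2,m_3}\}$ over the other two coordinates are \emph{not} conditionally independent given $\{\indexset{\nu}\}_{\nu\in\setM_k}$: two of them share the index-set $\indexset{m_j}$ whenever their $j$-th coordinates agree (and likewise for the third coordinate), so the single Hoeffding step of the two-receiver proof no longer applies directly. I would resolve this by a two-stage concentration. First, \emph{additionally} conditioning on $\indexset{m_j}=\indexsetre{m_j}$, the indices $\{V_{m_1,m_2,m_3}\}$ over the remaining coordinate $m_l$ \emph{are} conditionally i.i.d., with mean within $\exp\{-e^{n(\tilde R_1+\tilde R_2+\tilde R_3-2R_\poolre)}\}$ (up to polynomial factors) of $1/|\indexsetre{m_k}\cap\indexsetre{m_j}|$ on $\indexsetre{m_k}\cap\indexsetre{m_j}$ and negligible off it, so Hoeffding's inequality (Proposition~\ref{pr:hoeffding}) bounds the deviation of the inner average over $m_l$ by something triply-exponentially small, which absorbs the doubly-exponential union bounds over $v\in\setV$, over $m_j\in\setM_j$, and over $m_k\in\setM_k$. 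Second, the resulting inner averages, indexed by $m_j$, are conditionally independent given $\{\indexset{\nu}\}_{\nu\in\setM_k}$ (they depend on the distinct $\indexset{m_j}$), bounded, and have conditional mean $\approx1/|\indexsetre{m_k}|$ on $\indexsetre{m_k}$ --- here one also uses that the pairwise intersections $\indexsetre{m_k}\cap\indexset{m_j}$ are exponentially large except with doubly-exponentially small probability --- so a second Hoeffding application controls the outer average over $m_j$. Combining these yields $\max_{m_k}d(\bm P_V^{(m_k)},\tilde{\bm P}_V^{(m_k)})\to0$ exponentially in probability, hence all six error probabilities do, so a good deterministic-encoder code exists. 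Besides this dependence issue the only other non-routine point is the parameter bookkeeping of the second paragraph; together with the outer bound $\setR_{3\textnormal{-ob}}$ of Theorem~\ref{th:obBC3Rec}, it shows the inner and outer bounds coincide exactly when no marginal channel is much more capable than the other two.
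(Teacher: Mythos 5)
Your proposal is correct and takes essentially the same route as the paper. You have the same code construction (pool, three families of index-sets, codeword drawn from the triple intersection), the same reduction via the Data-Processing inequality to showing that $\max_{m_k} d\bigl( \bm P_V^{(m_k)}, \tilde{\bm P}_V^{(m_k)} \bigr)$ vanishes exponentially in probability, and you correctly identify the only genuinely new difficulty over the two-receiver case: the codeword indices $\{ V_{m_1, m_2, m_3} \}$, averaged over two of the three coordinates, are no longer conditionally i.i.d.\ given the index-sets at the target receiver, because pairs of them share $\indexset{m_j}$.

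Your resolution of that difficulty is also the paper's: the paper writes the Total-Variation distance to be bounded as the sum of (i) the average over $m_2$ of $d\bigl( \bm P_V^{(m_1,m_2)}, \hat{\bm P}_V^{(m_1,m_2)} \bigr)$, where $\hat{\bm P}_V^{(m_1,m_2)}$ is uniform on $\indexset{m_1} \cap \indexset{m_2}$, and (ii) $d\bigl( \hat{\bm P}_V^{(m_1)}, \tilde{\bm P}_V^{(m_1)} \bigr)$. Term (i) is handled exactly as your ``inner Hoeffding'' (conditioning on $\indexset{m_j}$, averaging over $m_l$ as in the two-receiver step), supplemented by a separate concentration showing that for all but a doubly-exponentially small fraction of $m_2$ the intersection $\indexset{m_1} \cap \indexset{m_2}$ has roughly its expected size; term (ii) is your ``outer Hoeffding'' over $m_j$. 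So your two-stage conditioning is the paper's triangle-inequality decomposition in different clothing. Your observations that the pairwise bounds $\tilde R_k + \tilde R_j > R_\poolre$ follow automatically from $\tilde R_k < R_\poolre$ and $2 R_\poolre < \sum_l \tilde R_l$, and that Hoeffding's bound $\exp\{-|\setM_l|\,\xi_n^2\}$ is triply-exponentially small and hence absorbs all the doubly-exponential union bounds, are both correct and match the paper. The only small imprecision is in the parameter-existence paragraph: you should pick $\tilde R_1$ near the \emph{top} of its admissible range, since choosing it close to $R_1$ could break $\tilde R_2 < \tilde R_1 + \tilde R_3$ when $\muti{P}{W_2} - \muti{P}{W_3}$ is large; with that caveat the bookkeeping goes through. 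You also omit the degenerate case where some $\muti{P}{W_k}$ vanishes, which the paper dispatches by falling back to the two-receiver result, but this is routine.
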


\begin{proof}
See Appendix~\ref{app:ibBC3Rec}.
\end{proof}

By comparing Theorems~\ref{th:IDBC} and \ref{th:ibBC3Rec}, we see that to adapt the broadcast ID code of Section~\ref{sec:DPIDBC} to the three-receiver BC we additionally need the constraints
\begin{IEEEeqnarray}{rCl}
R_k & < & \sum_{l \in \{ 1,2,3 \} \setminus \{ k \}} \muti{P}{W_l}, \, \forall \, k \in \{ 1, 2, 3 \}, \label{eq:3RecAddConst}
\end{IEEEeqnarray}
which have no counterpart in the two-receiver case. We next explain where we use \eqref{eq:3RecAddConst}. To this end, we briefly describe how to extend the random code construction of Section~\ref{sec:DPIDBC} to the three-receiver BC. Fix a PMF $P$ on $\setX$, a blocklength~$n$, ID rates $R_k, \, k \in \{ 1,2,3 \}$, expected bin rates $\tilde R_k, \, k \in \{ 1,2,3 \}$, and a pool rate $R_\poolre$ satisfying
\begin{IEEEeqnarray}{l}
R_k < \tilde R_k < \min \{ \muti {P}{W_k}, R_\poolre \}, \, \forall \, k \in \{ 1,2,3 \}. \label{eq:3RecAddConstExpl1}
\end{IEEEeqnarray}
Draw $e^{n R_\poolre}$ $n$-tuples $\sim P^n$ independently, index them, and place them in a pool $\pool$. For each receiving terminal $k \in \{ 1,2,3 \}$ associate with each ID message $m_k \in \setM_k$ a Bin~$\bin {m_k}$ by randomly selecting each indexed element of the pool for inclusion in $\bin {m_k}$ independently with probability $e^{-n (R_\poolre - \tilde R_k)}$. Associate with every ID message-triple~$(m_1,m_2,m_3)$ an $n$-tuple we call the $(m_1,m_2,m_3)$-codeword as follows. If at least one indexed pool-element is contained in all three bins $\bin {m_1}$, $\bin {m_2}$, and $\bin {m_3}$, then draw the $(m_1,m_2,m_3)$-codeword uniformly over the indexed pool-elements that are contained in all three bins. Otherwise draw the $(m_1,m_2,m_3)$-codeword uniformly over the pool. To send ID message-triple $(m_1,m_2,m_3)$, the encoder transmits the $(m_1,m_2,m_3)$-codeword. For each $k \in \setM_k$ the $m^\prime_k$-focused party at Terminal~$k$ guesses that $m_k^\prime$ was sent if at least one element of the $m^\prime_k$-th bin is jointly typical with the channel outputs that it observes. Therefore, if the $(m_1,m_2,m_3)$-codeword is not an element of Bin~$\bin {m_k}$, then the probability that the $m_k$-focused party at Terminal~$k$ erroneously guesses that $m_k$ was not sent is high.

Note that for every ID message-triple~$(m_1,m_2,m_3)$ the expected number of indexed pool-elements that are contained in all three bins $\bin {m_1}$, $\bin {m_2}$, and $\bin {m_3}$ is $e^{n ( \sum^3_{k = 1} \tilde R_k - 2 R_\poolre )}$ ($= e^{n R_\poolre} \prod^3_{k = 1} e^{-n (R_\poolre - \tilde R_k)}$), which is smaller than one unless
\begin{IEEEeqnarray}{rCl}
2 R_\poolre & \leq & \sum^3_{k = 1} \tilde R_k. \label{eq:3BinsShare1El}
\end{IEEEeqnarray}
Therefore, if \eqref{eq:3BinsShare1El} does not hold, then with high probability the $(m_1,m_2,m_3)$-codeword is not contained in all three bins $\bin {m_1}$, $\bin {m_2}$, and $\bin {m_3}$, and our scheme will thus fail. This, combined with \eqref{eq:3RecAddConstExpl1}, implies that the code can be reliable only if \eqref{eq:3RecAddConst} holds. Note that in the two-receiver scenario the counterpart to \eqref{eq:3BinsShare1El} is 
\begin{IEEEeqnarray}{rCl}
R_\poolre & \leq & \tilde R_\ry + \tilde R_\rz. \label{eq:2BinsShare1El}
\end{IEEEeqnarray}
Unlike \eqref{eq:3BinsShare1El} in the three-receiver scenario, \eqref{eq:2BinsShare1El} in the two-receiver scenario can be satisfied by choosing $R_\poolre$ sufficiently small and hence without constraining the rate-pair $(R_\ry, R_\rz)$.\\

As the following example shows, the inner bound of Theorem~\ref{th:ibBC3Rec} need not be tight:

\begin{example}
Consider a deterministic BC $\channel {y_1,y_2,y_3} x$ with input $X = (X_1,X_2,X_3)$, where for each $k \in \{ 1,2,3 \}$ $X_k$ is binary, and with output $Y = (Y_1,Y_2,Y_3)$, where
\begin{subequations}
\begin{IEEEeqnarray}{rCl} 
Y_k & = & X_k, \quad k \in \{ 1,2 \}, \\
Y_3 & = & X.
\end{IEEEeqnarray}
\end{subequations}
For this channel the inner bound $\setR_{3 \textnormal{-ib}}$ of Theorem~\ref{th:ibBC3Rec} evaluates to the set of all rate-triples $(R_1, R_2, R_3) \in ( \reals^+_0 )^3$ that satisfy
\begin{subequations}
\begin{IEEEeqnarray}{rCl}
R_k & \leq & \log 2, \, \forall \, k \in \{ 1,2 \}, \\
R_3 & \leq & 2 \log 2.
\end{IEEEeqnarray}
\end{subequations}
Since the BC is deterministic, the encoder can compute all outputs from the inputs that it produces, and the ID capacity region $\setC_3$ does thus not increase if the encoder if furnished with perfect feedback. Therefore, Theorem~\ref{th:obBC3Rec} and \cite[Corollary~3]{ahlswedeverboven91}, which holds under the maximum-error criterion, imply that $\setC_3$ is the set of all rate-triples $(R_1, R_2, R_3) \in ( \reals^+_0 )^3$ that satisfy
\begin{subequations}
\begin{IEEEeqnarray}{rCl}
R_k & \leq & \log 2, \, \forall \, k \in \{ 1,2 \}, \\
R_3 & \leq & 3 \log 2.
\end{IEEEeqnarray}
\end{subequations}
Consequently, $\setR_{3 \textnormal{-ib}} \subsetneq \setC_3$.
\end{example}

The inner bound of Theorem~\ref{th:ibBC3Rec} is in some cases tight, e.g., if no receiver is ``much more capable'' than the other two:

\begin{remark}\label{re:3RecTight}
If the BC $\channel {y_1,y_2,y_3} x$ satisfies for every PMF $P$ on $\setX$
\begin{IEEEeqnarray}{rCl}
2 \max_{k \in \{ 1, 2, 3 \}} I (P,W_k) & \leq & \sum_{l \in \{ 1,2,3 \}} \muti{P}{W_l},
\end{IEEEeqnarray}
then its ID capacity region $\setC_3$ is the set of all rate-triples $(R_1, R_2, R_3) \in ( \reals^+_0 )^3$ that for some PMF $P$ on $\setX$ satisfy \eqref{eq:obBC3Rec}.
\end{remark}

\begin{proof}
This follows from Theorems~\ref{th:obBC3Rec} and \ref{th:ibBC3Rec}, because for such a BC $\setR_{3 \textnormal{-ob}} = \setR_{3 \textnormal{-ib}}$.
\end{proof}

\subsection{A Common Message}\label{sec:BCCommMsg}

In this section we consider the two-receiver BC $\channel {y,z} x$ and adapt the coding scheme in Section~\ref{sec:DPIDBC} to solve for the capacity region of a more general scenario where the receivers' ID messages need not be independent but can have a common part. We thus assume that the ID message intended for Terminal~$\ry$ is a tuple comprising a private message and a common message, and likewise for Terminal~$\rz$. We begin with the basic definitions of an average-error ID code for the BC $\channel {y,z} x$ with a common message:

\begin{defin}\label{def:commonMessageIDCodeBC}
Fix finite sets $\setM$, $\setM_\ry$, and $\setM_\rz$, a blocklength $n \in \naturals$, and positive constants $\lambda^\ry_1, \, \lambda^\ry_2, \, \lambda^\rz_1, \, \lambda^\rz_2$. Associate with every ID message-triple $( m, m_\ry, m_\rz ) \in \setM \times \setM_\ry \times \setM_\rz$ a PMF $Q_{m,m_\ry,m_\rz}$ on $\setX^n$, with every $(m,m_\ry) \in \setM \times \setM_\ry$ an ID set $\idsetre {m, m_\ry} \subset \setY^n$, and with every $(m,m_\rz) \in \setM \times \setM_\rz$ an ID set $\idsetre {m, m_\rz} \subset \setZ^n$. Define the mixture PMFs on $\setX^n$
\begin{subequations}
\begin{IEEEeqnarray}{rCll}
Q_{m,m_\ry} & = & \frac{1}{ | \setM_\rz | } \sum_{m_\rz \in \setM_\rz} Q_{m,m_\ry,m_\rz}, & \quad (m,m_\ry) \in \setM \times \setM_\ry,  \\
Q_{m,m_\rz} & = & \frac{1}{ | \setM_\ry | } \sum_{m_\ry \in \setM_\ry} Q_{m,m_\ry,m_\rz}, & \quad (m,m_\rz) \in \setM \times \setM_\rz.
\end{IEEEeqnarray}
\end{subequations}
The collection of tuples $$\bigl\{ Q_{m,m_\ry,m_\rz}, \idsetre {m,m_\ry}, \idsetre {m,m_\rz} \bigr\}_{(m,m_\ry,m_\rz) \in \setM \times \setM_\ry \times \setM_\rz}$$ is an $\bigl( n, \setM, \setM_\ry, \setM_\rz, \lambda^\ry_1, \lambda^\ry_2, \lambda^\rz_1, \lambda^\rz_2 \bigr)$ ID code for the BC $\channel {y,z} x$ with a common message if the following two requirements are met: 1) $\bigl\{ Q_{m,m_\ry}, \setD_{m,m_\ry} \bigr\}_{( m,m_\ry ) \in \setM \times \setM_\ry}$ is an $\bigl( n, \setM \times \setM_\ry, \lambda^{\ry}_1, \lambda^{\ry}_2 \bigr)$ ID code for the marginal channel $\channely y x$; and 2) $\bigl\{ Q_{m,m_\rz}, \setD_{m,m_\rz} \bigr\}_{( m, m_\rz ) \in \setM \times \setM_\rz}$ is an $\bigl( n, \setM \times \setM_\rz, \lambda^{\rz}_1, \lambda^{\rz}_2 \bigr)$ ID code for $\channelz z x$. A rate-triple $( R, R_\ry, R_\rz )$ is called achievable if for every positive $\lambda^\ry_1$, $\lambda^\ry_2$, $\lambda^\rz_1$, and $\lambda^\rz_2$ and for every sufficiently-large blocklength~$n$ there exists an $\bigl( n, \setM, \setM_\ry, \setM_\rz, \lambda^\ry_1, \lambda^\ry_2, \lambda^\rz_1, \lambda^\rz_2 \bigr)$ ID code for the BC with
\begin{IEEEeqnarray*}{rrCll}
& \tfrac{1}{n} \log \log | \setM | & \geq & R \quad &\textnormal{if } R > 0, \\*[-0.625\normalbaselineskip]
\smash{\left\{
\IEEEstrut[6.39\jot]
\right.} \nonumber
\\*[-0.625\normalbaselineskip]
& |\setM| & = & 1 \quad &\textnormal{if } R = 0, \\[0.5\normalbaselineskip]
& \tfrac{1}{n} \log \log | \setM_\ry | & \geq & R_\ry \quad &\textnormal{if } R_\ry > 0,
\\*[-0.625\normalbaselineskip]
\smash{\left\{
\IEEEstrut[6.39\jot]
\right.} \nonumber
\\*[-0.625\normalbaselineskip]
& |\setM_\ry| & = & 1 \quad &\textnormal{if } R_\ry = 0, \\[0.5\normalbaselineskip]
& \tfrac{1}{n} \log \log | \setM_\rz | & \geq & R_\rz \quad &\textnormal{if } R_\rz > 0,
\\*[-0.625\normalbaselineskip]
\smash{\left\{
\IEEEstrut[6.39\jot]
\right.} \nonumber
\\*[-0.625\normalbaselineskip]
& |\setM_\rz| & = & 1 \quad &\textnormal{if } R_\rz = 0.
\end{IEEEeqnarray*}
The ID capacity region $\setC_{\textnormal{cm}}$ of the BC with a common message is the closure of the set of all achievable rate-triples.
\end{defin}

We restrict our analysis to positive ID rates $R_\ry, \, R_\rz$, because if to some receiver we send only the common message, then for the other receiver the imposed average-error criterion will turn into a maximum-error criterion. Theorem~\ref{th:IDBC} allows for the following generalization:

\begin{theorem}\label{th:IDBCCM}
The ID capacity region $\setC_{\textnormal{cm}}$ of the BC $\channel {y,z} x$ with a common message and positive private rates $R_\ry, \, R_\rz$ is the set of all rate-triples $( R, R_\ry, R_\rz ) \in ( \reals^+_0 )^3$ that for some PMF $P$ on $\setX$ satisfy
\begin{subequations}\label{bl:capRegBCComM}
\begin{IEEEeqnarray}{rCl}
R, \, R_\ry & \leq & \muti{P}{W_\ry}, \\
R, \, R_\rz & \leq & \muti{P}{W_\rz}, \\
R_\ry, \, R_\rz & > & 0.
\end{IEEEeqnarray}
\end{subequations}
The interior of $\setC_{\textnormal{cm}}$ is achieved by codes with deterministic encoders.
\end{theorem}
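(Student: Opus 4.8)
The plan is to reduce the common-message problem to the two-receiver problem of Theorem~\ref{th:IDBC}, but with the private rates $R_\ry, R_\rz$ replaced by the ``combined'' rates $\max \{ R, R_\ry \}$ and $\max \{ R, R_\rz \}$. The starting point is the equivalence built into Definition~\ref{def:commonMessageIDCodeBC}: a collection $\bigl\{ Q_{m,m_\ry,m_\rz}, \idsetre {m,m_\ry}, \idsetre {m,m_\rz} \bigr\}$ is an ID code for the BC with a common message iff $\bigl\{ Q_{m,m_\ry}, \setD_{m,m_\ry} \bigr\}$ is a single-user ID code for $\channely y x$ with message set $\setM \times \setM_\ry$ and $\bigl\{ Q_{m,m_\rz}, \setD_{m,m_\rz} \bigr\}$ is one for $\channelz z x$ with message set $\setM \times \setM_\rz$. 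Since $e^{nR} + e^{nR_\ry}$ lies between $e^{n \max \{ R, R_\ry \}}$ and $e^{n \max \{ R, R_\ry \} + \log 2}$, the iterated logarithm of $| \setM \times \setM_\ry |$, normalized by $n$, tends to $\max \{ R, R_\ry \}$; this is the ``free common message'' phenomenon. The one twist relative to Theorem~\ref{th:IDBC} is that the two marginal message sets share the common part~$m$, so in the direct part the encoder must send $(m, m_\ry, m_\rz)$ via a pool element that lies simultaneously in a Terminal-$\ry$ bin and a Terminal-$\rz$ bin carrying the same~$m$.

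For the converse I would re-run Section~\ref{sec:CVIDBC} with $R$ replaced by $\max \{ R, R_\ry \}$ for the Terminal-$\ry$ marginal code and by $\max \{ R, R_\rz \}$ for the Terminal-$\rz$ marginal code. The proofs of Lemma~\ref{le:avgDistWeightOnTypes} and Claim~\ref{cl:toShowIDBCConv} carry over once one checks the single new point: the ``bad'' set $\setK_\ry \subseteq \setM \times \setM_\ry$ of pairs $(m, m_\ry)$ for which $Q_{m,m_\ry}$ puts mass at least $\kappa^\ry$ on $\{ \vecx : \muti {P_\vecx}{W_\ry} \leq \max \{ R, R_\ry \} - \epsilon \}$ still obeys $| \setK_\ry | \leq \exp \bigl( e^{n ( \max \{ R, R_\ry \} - \epsilon / 2 )} \bigr)$ --- the homogenization/regularization chain through Proposition~\ref{pr:homogIDCodeApprox}, Lemma~\ref{le:rcodApprox}, and Proposition~\ref{prop:mRegRed} is insensitive to the substitution --- and this is $o ( | \setM \times \setM_\ry | )$ because $| \setM \times \setM_\ry | \geq \exp \bigl( e^{n \max \{ R, R_\ry \}} \bigr)$. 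Hence the mixture $Q$ over all triples $(m, m_\ry, m_\rz)$ puts notable mass on some $\vecx$ with $\muti {P_\vecx}{W_\ry} > \max \{ R, R_\ry \} - \epsilon$, and likewise on some $\vecx$ with $\muti {P_\vecx}{W_\rz} > \max \{ R, R_\rz \} - \epsilon$; a union bound, then $\epsilon \downarrow 0$ and the compactness of $\mathscr P (\setX)$, yield a $P$ with $\max \{ R, R_\ry \} \leq \muti {P}{W_\ry}$ and $\max \{ R, R_\rz \} \leq \muti {P}{W_\rz}$, which is \eqref{bl:capRegBCComM}.

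For the direct part, fix $P$ and a rate-triple in the interior, so that $\max \{ R, R_\ry \} < \muti {P}{W_\ry}$, $\max \{ R, R_\rz \} < \muti {P}{W_\rz}$, and $R_\ry, R_\rz > 0$. I would run the random construction of Section~\ref{sec:DPIDBC} with $\setM_\ry$ and $\setM_\rz$ replaced by $\setM \times \setM_\ry$ and $\setM \times \setM_\rz$: a pool $\pool$ of $e^{n R_\poolre}$ i.i.d.\ $n$-tuples $\sim P^n$; for each pair $(m, m_\ry)$ a bin $\bin {m, m_\ry}$ obtained by including each pool index independently with probability $e^{-n ( R_\poolre - \tilde R_\ry )}$, and likewise $\bin {m, m_\rz}$ with probability $e^{-n ( R_\poolre - \tilde R_\rz )}$; and for each diagonal triple $(m, m_\ry, m_\rz)$ a codeword index $V_{m,m_\ry,m_\rz}$ drawn uniformly over $\indexset {m,m_\ry} \cap \indexset {m,m_\rz}$ when this set is non-empty and uniformly over $\setV$ otherwise. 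The rates are chosen so that $\max \{ R, R_\ry \} < \tilde R_\ry < \muti {P}{W_\ry}$, $\max \{ R, R_\rz \} < \tilde R_\rz < \muti {P}{W_\rz}$, and $\tilde R_\ry, \tilde R_\rz < R_\poolre < \tilde R_\ry + \tilde R_\rz$, which is possible since $R_\ry, R_\rz > 0$. Within the block of common message $m^\prime$, the $m_\ry^\prime$-focused party at Terminal~$\ry$ guesses ``$(m^\prime, m_\ry^\prime)$'' iff some $\poolel v$, $v \in \indexset {m^\prime, m_\ry^\prime}$, is jointly $\epsilon$-typical with $Y^n$, and symmetrically at Terminal~$\rz$. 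The analysis then mirrors the proof of Claim~\ref{cl:toShowIDBC}: (i) the reference code $\bigl\{ \tilde {\bm Q}_{m,m_\ry}, \idset {m,m_\ry} \bigr\}$ (uniform over $\bin {m,m_\ry}$, the analogue of \eqref{bl:encDistsMzUnifDMC}) has the law of the single-user ID code of Section~\ref{sec:IDCodingTechnique} on $\channely y x$, and since $\tfrac1n \log\log | \setM \times \setM_\ry | \leq \max \{ R, R_\ry \} + \tfrac{\log 2}{n} < \tilde R_\ry < \muti {P}{W_\ry}$ and $\tilde R_\ry < R_\poolre$ for all large~$n$, the argument of Claim~\ref{re:convExpFast} (whose union bounds only shrink for a smaller message set) gives exponentially decaying, w.h.p., missed- and wrong-identification probabilities for it; (ii) by the Data-Processing inequality for Total Variation, as in \eqref{eq:toShowIDBC3DataProcessingTotVarDist}, it suffices to bound $\max_{(m,m_\ry)} d \bigl( \bm P_V^{(m,m_\ry)}, \tilde {\bm P}_V^{(m,m_\ry)} \bigr)$, where $\bm P_V^{(m,m_\ry)}(v) = \tfrac{1}{| \setM_\rz |} \sum_{m_\rz \in \setM_\rz} \ind {v = V_{m,m_\ry,m_\rz}}$ averages over the \emph{private} message $m_\rz$ only and $\tilde {\bm P}_V^{(m,m_\ry)}$ is uniform over $\indexset {m,m_\ry}$; (iii) the Chernoff bound for the bin sizes together with Hoeffding's inequality for the indices $\{ V_{m,m_\ry,m_\rz} \}_{m_\rz}$, exactly as in Section~\ref{sec:DPIDBC}, show this deviation is at most $e^{-n\tau}$ w.h.p. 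Swapping $\ry$ and $\rz$ handles Terminal~$\rz$. Finally, once $\rcode$ has been drawn the encoder is the deterministic map $(m,m_\ry,m_\rz) \mapsto \poolel {V_{m,m_\ry,m_\rz}}$, so the interior of $\setC_{\textnormal{cm}}$ is achieved with deterministic encoders; a standard closure argument, together with the converse, completes the proof.

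The step I expect to be the main obstacle is verifying that the double-exponential union bounds still close. For Terminal~$\ry$ there are $| \setM \times \setM_\ry | = \exp ( e^{nR} + e^{nR_\ry} )$ bins, double-exponentially many in $\max \{ R, R_\ry \}$. They are dominated because (a) a bin deviates from its mean size $e^{n \tilde R_\ry}$ with probability only $\exp ( - e^{n ( \tilde R_\ry - \mu )} )$ for a suitable $\mu > 0$ with $\tilde R_\ry - \mu > \max \{ R, R_\ry \}$, and (b) for each fixed $(m, m_\ry)$ the index-concentration in step~(iii) fails with probability $\exp ( - \Theta ( \exp ( e^{n R_\rz} ) ) )$ --- a triple exponential, which swamps the double-exponential count precisely because $R_\rz > 0$. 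This is exactly why the theorem restricts to positive private rates.
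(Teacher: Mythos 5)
Your proposal is correct and follows essentially the same route as the paper's proof in Appendix~\ref{app:IDBCCM}: the direct part runs the construction of Section~\ref{sec:DPIDBC} with bins indexed by the pairs $(m,m_\Psi)$ and codewords drawn from $\indexset{m,m_\ry}\cap\indexset{m,m_\rz}$, and the converse applies Lemma~\ref{le:avgDistWeightOnTypes} to the two marginal codes at the combined rates $\tfrac1n\log\log|\setM\times\setM_\Psi|\geq\max\{R,R_\Psi\}$. You also correctly pinpoint the role of $R_\ry,R_\rz>0$: the Hoeffding concentration for the codeword indices averages only over the other receiver's \emph{private} message, so its triple-exponential failure probability must dominate the double-exponential union bound over $\setM\times\setM_\Psi$.
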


\begin{proof}
The proof is similar to that of Theorem~\ref{th:IDBC} (see Appendix~\ref{app:IDBCCM} for the details).
\end{proof}

Comparing Theorems~\ref{th:IDBCCM} and \ref{th:IDBC} we see that the common message appears to come for free at all rates up to $\min \bigl\{ \muti {P}{W_\ry}, \muti {P}{W_\rz} \bigr\}$. This can be explained as follows. The ID rate is the iterated logarithm of the number of ID messages normalized by the blocklength~$n$, and for $n$ sufficiently large and for all nonnegative real numbers $R_1$ and $R_2$ $$\exp ( \exp (n R_1) ) \exp ( \exp (n R_2) ) \approx \exp \bigl( \exp \bigl( n \max \{ R_1, R_2 \} \bigr) \bigr).$$\\

So far, we assumed that each receiver identifies the common message and its private message jointly. Next, we assume that each receiver identifies the common message and its private message separately. We begin with the basic definitions of an average-error ID code for the BC $\channel {y,z} x$ with a common message and where each receiver identifies the common message and its private message separately:

\begin{defin}\label{def:commonMessageIDCodeBCSeperateID}
Fix finite sets $\setM$, $\setM_\ry$, and $\setM_\rz$, a blocklength $n \in \naturals$, and positive constants $\lambda^\ry_1, \, \lambda^\ry_2, \, \lambda^\rz_1, \, \lambda^\rz_2$. Associate with every ID message-triple $( m, m_\ry, m_\rz ) \in \setM \times \setM_\ry \times \setM_\rz$ a PMF $Q_{m,m_\ry,m_\rz}$ on $\setX^n$, with every $m \in \setM$ ID sets $\idsetre m^\ry \subset \setY^n$ and $\idsetre m^\rz \subset \setZ^n$, with every $m_\ry \in \setM_\ry$ an ID set $\idsetre {m_\ry} \subset \setY^n$, and with every $m_\rz \in \setM_\rz$ an ID set $\idsetre {m_\rz} \subset \setZ^n$. Define the mixture PMFs on $\setX^n$
\begin{subequations}
\begin{IEEEeqnarray}{rCll}
Q_m & = & \frac{1}{ | \setM_\ry | \, | \setM_\rz | } \sum_{m_\ry, m_\rz} Q_{m,m_\ry,m_\rz}, &\quad m \in \setM,  \\
Q_{m_\ry} & = & \frac{1}{ | \setM | \, | \setM_\rz | } \sum_{m, m_\rz} Q_{m,m_\ry,m_\rz}, &\quad m_\ry \in \setM_\ry, \\
Q_{m_\rz} & = & \frac{1}{ | \setM | \, | \setM_\ry | } \sum_{m, m_\ry} Q_{m,m_\ry,m_\rz}, &\quad m_\rz \in \setM_\rz.
\end{IEEEeqnarray}
\end{subequations}
The collection of tuples $$\bigl\{ Q_{m,m_\ry,m_\rz}, \idsetre m^\ry, \idsetre {m_\ry}, \idsetre m^\rz, \idsetre {m_\rz} \bigr\}_{(m,m_\ry,m_\rz) \in \setM \times \setM_\ry \times \setM_\rz}$$ is an $\bigl( n, \setM, \setM_\ry, \setM_\rz, \lambda^\ry_1, \lambda^\ry_2, \lambda^\rz_1, \lambda^\rz_2 \bigr)$ ID code for the BC $\channel {y,z} x$ with a common message and where each receiver identifies the common message and its private message separately if the following four requirements are met: 1) $\bigl\{ Q_m, \setD^{\ry}_m \bigr\}_{m \in \setM}$ is an $\bigl( n, \setM, \lambda^{\ry}_1, \lambda^{\ry}_2 \bigr)$ ID code for the marginal channel $\channely y x$; 2) $\bigl\{ Q_{m_\ry}, \setD_{m_\ry} \bigr\}_{m_\ry \in \setM_\ry}$ is an $\bigl( n, \setM_\ry, \lambda^{\ry}_1, \lambda^{\ry}_2 \bigr)$ ID code for $\channely y x$; 3) $\bigl\{ Q_m, \setD^{\rz}_m \bigr\}_{m \in \setM}$ is an $\bigl( n, \setM, \lambda^{\rz}_1, \lambda^{\rz}_2 \bigr)$ ID code for $\channelz z x$; and 4) $\bigl\{ Q_{m_\rz}, \setD_{m_\rz} \bigr\}_{m_\rz \in \setM_\rz}$ is an $\bigl( n, \setM_\rz, \lambda^{\rz}_1, \lambda^{\rz}_2 \bigr)$ ID code for $\channelz z x$. A rate-triple $( R, R_\ry, R_\rz )$ is called achievable if for every positive $\lambda^\ry_1$, $\lambda^\ry_2$, $\lambda^\rz_1$, and $\lambda^\rz_2$ and for every sufficiently-large blocklength $n$ there exists an $$\bigl( n, \setM, \setM_\ry, \setM_\rz, \lambda^\ry_1, \lambda^\ry_2, \lambda^\rz_1, \lambda^\rz_2 \bigr)$$ ID code for the BC with
\begin{IEEEeqnarray*}{rrCll}
& \tfrac{1}{n} \log \log | \setM | & \geq & R \quad &\textnormal{if } R > 0, \\*[-0.625\normalbaselineskip]
\smash{\left\{
\IEEEstrut[6.39\jot]
\right.} \nonumber
\\*[-0.625\normalbaselineskip]
& |\setM| & = & 1 \quad &\textnormal{if } R = 0, \\[0.5\normalbaselineskip]
& \tfrac{1}{n} \log \log | \setM_\ry | & \geq & R_\ry \quad &\textnormal{if } R_\ry > 0,
\\*[-0.625\normalbaselineskip]
\smash{\left\{
\IEEEstrut[6.39\jot]
\right.} \nonumber
\\*[-0.625\normalbaselineskip]
& |\setM_\ry| & = & 1 \quad &\textnormal{if } R_\ry = 0, \\[0.5\normalbaselineskip]
& \tfrac{1}{n} \log \log | \setM_\rz | & \geq & R_\rz \quad &\textnormal{if } R_\rz > 0,
\\*[-0.625\normalbaselineskip]
\smash{\left\{
\IEEEstrut[6.39\jot]
\right.} \nonumber
\\*[-0.625\normalbaselineskip]
& |\setM_\rz| & = & 1 \quad &\textnormal{if } R_\rz = 0.
\end{IEEEeqnarray*}
The ID capacity region $\setC_{\textnormal{cm-s}}$ of the BC with a common message and where each receiver identifies the common message and its private message separately is the closure of the set of all achievable rate-triples.
\end{defin}

When each receiver identifies the common message and its private message separately, we can argue similarly as for the three-receiver BC to obtain the following result:

\begin{theorem}\label{th:obibBCCMSep}
The ID capacity region $\setC_{\textnormal{cm-s}}$ of the BC $\channel {y, z}{x}$ with a common message and where each receiver identifies the common message and its private message separately is contained in the set of all rate-triples $(R, R_\ry, R_\rz) \in ( \reals^+_0 )^3$ that for some PMF $P$ on $\setX$ satisfy
\begin{subequations}\label{bl:obBCCMSep}
\begin{IEEEeqnarray}{rCl}
R, \, R_\ry & \leq & \muti{P}{W_\ry}, \\
R, \, R_\rz & \leq & \muti{P}{W_\rz},
\end{IEEEeqnarray}
\end{subequations}
and it contains the set of all rate-tiples $(R, R_\ry, R_\rz) \in ( \reals^+_0 )^3$ that for some PMF $P$ on $\setX$ satisfy \eqref{bl:obBCCMSep} and
\begin{subequations}\label{bl:ibBCCMSep}
\begin{IEEEeqnarray}{rCl}
R_\ry & \leq & 2 \muti{P}{W_\rz}, \\
R_\rz & \leq & 2 \muti{P}{W_\ry}.
\end{IEEEeqnarray}
\end{subequations}
\end{theorem}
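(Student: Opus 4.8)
The plan is to prove the two inclusions separately, along the lines of the two halves of Theorem~\ref{th:IDBC}. Both parts are facilitated by the fact that Definition~\ref{def:commonMessageIDCodeBCSeperateID} already expresses an $\bigl( n, \setM, \setM_\ry, \setM_\rz, \lambda^\ry_1, \lambda^\ry_2, \lambda^\rz_1, \lambda^\rz_2 \bigr)$ ID code as a bundle of four single-user ID codes: the common-message codes $\{ Q_m, \setD^\ry_m \}_{m \in \setM}$ and $\{ Q_m, \setD^\rz_m \}_{m \in \setM}$ on the marginal channels $\channely y x$ and $\channelz z x$, and the private-message codes $\{ Q_{m_\ry}, \setD_{m_\ry} \}_{m_\ry \in \setM_\ry}$ on $\channely y x$ and $\{ Q_{m_\rz}, \setD_{m_\rz} \}_{m_\rz \in \setM_\rz}$ on $\channelz z x$.

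For the outer bound I would run the converse argument of Section~\ref{sec:CVIDBC} on all four of these marginal codes at once, the point being that all four mixture PMFs reduce to the single global mixture $Q = \bigl( \card{\setM} \, \card{\setM_\ry} \, \card{\setM_\rz} \bigr)^{-1} \sum_{(m,m_\ry,m_\rz)} Q_{m,m_\ry,m_\rz}$. Applying Lemma~\ref{le:avgDistWeightOnTypes} to the four codes --- with channel $W_\ry$ and ID rates $R$ and $R_\ry$ for the first two, channel $W_\rz$ and ID rates $R$ and $R_\rz$ for the last two, and positive constants $\kappa_1,\dots,\kappa_4$ whose sum is below one (legitimate once the error probabilities are taken small, which is permitted since achievability must hold for all positive $\lambda$'s) --- shows that for every $\epsilon > 0$ and all sufficiently large $n$ the measure $Q$ assigns mass exceeding $1 - \sum_i \kappa_i - o(1)$ to each of the four sets $\{ \vecx : \muti {P_\vecx}{W_\ry} > R - \epsilon \}$, $\{ \vecx : \muti {P_\vecx}{W_\ry} > R_\ry - \epsilon \}$, $\{ \vecx : \muti {P_\vecx}{W_\rz} > R - \epsilon \}$, and $\{ \vecx : \muti {P_\vecx}{W_\rz} > R_\rz - \epsilon \}$. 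A Union-of-Events bound forces these sets to have a common element $\vecx$, whose empirical type $P = P_\vecx$ then satisfies $\muti {P}{W_\ry} > \max \{ R, R_\ry \} - \epsilon$ and $\muti {P}{W_\rz} > \max \{ R, R_\rz \} - \epsilon$. Letting $\epsilon \downarrow 0$ and extracting a convergent subsequence of types (the simplex is compact and mutual information continuous) produces a single PMF $P$ satisfying \eqref{bl:obBCCMSep}.

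For the inner bound I would adapt the broadcast ID code of Section~\ref{sec:DPIDBC}, in the multi-bin form already used for the three-receiver channel (Theorem~\ref{th:ibBC3Rec} and Appendix~\ref{app:ibBC3Rec}). Fix a PMF $P$ and a rate-triple satisfying \eqref{bl:obBCCMSep} and \eqref{bl:ibBCCMSep} with strict inequalities and with $R_\ry, R_\rz > 0$ (the general case follows by closure). Draw a size-$e^{n R_\poolre}$ pool $\sim P^n$ and build from it, independently, a bin $\bin m$ of expected size $e^{n \tilde R_0}$ for each $m \in \setM$, a bin $\bin {m_\ry}$ of expected size $e^{n \tilde R_\ry}$ for each $m_\ry$, and a bin $\bin {m_\rz}$ of expected size $e^{n \tilde R_\rz}$ for each $m_\rz$, where
\begin{IEEEeqnarray*}{C}
R < \tilde R_0 < \min \{ \muti {P}{W_\ry}, \muti {P}{W_\rz} \}, \quad R_\ry < \tilde R_\ry < \muti {P}{W_\ry}, \quad R_\rz < \tilde R_\rz < \muti {P}{W_\rz}, \\
\max \{ \tilde R_0, \tilde R_\ry, \tilde R_\rz \} < R_\poolre < \tfrac{1}{2} \bigl( \tilde R_0 + \tilde R_\ry + \tilde R_\rz \bigr).
\end{IEEEeqnarray*}
The $(m, m_\ry, m_\rz)$-codeword is drawn uniformly from $\bin m \cap \bin {m_\ry} \cap \bin {m_\rz}$ when this triple intersection is nonempty and uniformly from the pool otherwise; the deterministic encoder transmits it, and each focused party at each terminal applies the single-user joint-typicality rule to the bin on which it is focused. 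By Definition~\ref{def:commonMessageIDCodeBCSeperateID} it then suffices to show that, with probability tending to one over the random construction, to the common-message-focused party at $\ry$ the code behaves like the single-user ID code of Section~\ref{sec:IDCodingTechnique} on $W_\ry$ with bin rate $\tilde R_0$, to the $m_\ry$-focused party at $\ry$ like that code on $W_\ry$ with bin rate $\tilde R_\ry$, and symmetrically at $\rz$. This is established as in the proof of Claim~\ref{cl:toShowIDBC}: a multiplicative Chernoff argument (Proposition~\ref{pr:multChernoff}, as in Lemma~\ref{le:prEventEToZero}) controls all bin sizes and --- crucially, because $R_\poolre < \tfrac{1}{2} ( \tilde R_0 + \tilde R_\ry + \tilde R_\rz )$ --- guarantees that every triple intersection is nonempty despite there being doubly-exponentially many triples; conditioned on this, the codeword-index distribution obtained by averaging over the two ``other'' message sets is a normalized sum of nearly independent indicators, so Hoeffding's inequality (Proposition~\ref{pr:hoeffding}) bounds its Total-Variation distance from the uniform-over-bin distribution. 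Invoking Claim~\ref{re:convExpFast} on the marginals together with the Data-Processing inequality for the Total-Variation distance then drives all four missed- and wrong-identification probabilities to zero.

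Two points remain, and I expect them to be where the care is needed. First, one must verify that the displayed constraints on $\tilde R_0, \tilde R_\ry, \tilde R_\rz, R_\poolre$ are simultaneously satisfiable precisely when \eqref{bl:obBCCMSep} and \eqref{bl:ibBCCMSep} hold strictly; the non-obvious part is the existence of an admissible $R_\poolre$, which reduces to the triangle-type inequalities $\tilde R_i < \tilde R_j + \tilde R_k$ among the three bin rates, and pushing each $\tilde R_i$ up toward its ceiling turns these into $R < \min \{ \muti {P}{W_\ry}, \muti {P}{W_\rz} \}$, $R_\ry < \min \{ \muti {P}{W_\ry}, 2 \muti {P}{W_\rz} \}$, and $R_\rz < \min \{ \muti {P}{W_\rz}, 2 \muti {P}{W_\ry} \}$ --- the common-message counterpart of the passage from \eqref{eq:3BinsShare1El} to \eqref{eq:3RecAddConst}, requiring a short case split on the ordering of $\muti {P}{W_\ry}$ and $\muti {P}{W_\rz}$. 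Second, in the Hoeffding step the codeword-indices being averaged are only nearly independent: two of them that share a common-message index (or a private index) are coupled through that shared bin, so one must condition on the shared index-set, exactly as the two-receiver proof conditions on $\indexset {m_\ry}$, and check that the residual correlations are negligible; this is routine given the estimates already developed for Claim~\ref{cl:toShowIDBC}.
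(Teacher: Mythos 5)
Your proposal is correct and follows essentially the same route as the paper, which proves this theorem by treating the common message as if it were addressed to a third receiver whose marginal channel is either $\channely y x$ or $\channelz z x$ and then invoking the three-receiver arguments of Appendices~\ref{app:obBC3Rec} and \ref{app:ibBC3Rec}; your four-marginal-code application of Lemma~\ref{le:avgDistWeightOnTypes} and your triple-bin construction with the constraint $2 R_\poolre < \tilde R_0 + \tilde R_\ry + \tilde R_\rz$ are exactly those arguments made explicit, and your case analysis recovering \eqref{bl:obBCCMSep} and \eqref{bl:ibBCCMSep} from the triangle inequalities among the bin rates is the intended one.
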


\begin{proof}
Pretend that the common ID message were intended for a third receiver whose marginal channel is time-invariant but can be either $\channely y x$ or $\channelz z x$. Then, we can argue as in Appendices~\ref{app:obBC3Rec} and \ref{app:ibBC3Rec} to establish the outer and inner bound, respectively.
\end{proof}

\subsection{One-Sided Feedback}\label{sec:BC1FB}

In this section we study identification via the BC $\channel {y,z} x$ with perfect feedback from at least one receiving terminal. Feedback from both terminals $\setY$ and $\setZ$ allows the encoder to choose the Time-$i$ channel-input in dependence on all past channel outputs $Y^{i-1}$ and $Z^{i-1}$: to transmit ID Message-Pair~$( m_{\ry}, m_{\rz} )$ when the past channel inputs are $X^{i-1} = x^{i-1}$ and the past channel outputs are $Y^{i-1} = y^{i-1}$ and $Z^{i-1} = z^{i-1}$, the stochastic encoder generates the Time-$i$ channel-input from a PMF of the form $$Q^{(i)}_{m_\ry,m_\rz} ( x | x^{i-1}, y^{i-1}, z^{i-1} ), \quad x \in \setX.$$ The ID capacity region $\setC_{\textnormal{fb}}$ of the BC with feedback from both terminals is known and can be achieved by a common-randomness ID code similar to that of \cite{ahlswededueckfb89}. It does not depend on the error criterion.

\begin{theorem}\cite[Corollary~3]{ahlswedeverboven91}\label{th:IDBCFB}
The ID capacity region $\setC_{\textnormal{fb}}$ of the BC $\channel {y,z} x$ with feedback from both terminals is the set of all rate-pairs $( R_\ry, R_\rz ) \in ( \reals^+_0 )^2$ that for some PMF $P$ on $\setX$ satisfy
\begin{subequations}\label{eq:IDBC2FB}
\begin{IEEEeqnarray}{rCl}
R_\ry & \leq & \ent {P W_\ry} \ind {\max_{\tilde P} \muti {\tilde P}{W_\ry} > 0}, \\
R_\rz & \leq & \ent {P W_\rz} \ind {\max_{\tilde P} \muti {\tilde P}{W_\rz} > 0}.
\end{IEEEeqnarray}
\end{subequations}
\end{theorem}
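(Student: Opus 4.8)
This statement is \cite[Corollary~3]{ahlswedeverboven91}; the plan is to obtain it from the single-user feedback identification theorem of Ahlswede and Dueck together with a common-randomness argument, the only real subtlety being the coupling of the two receivers through a common input PMF. Throughout one uses the fact that for a DMC with noiseless passive feedback the ID capacity equals $\max_P \ent{PW}$ if $\max_{\tilde P}\muti{\tilde P}{W}>0$ and equals $0$ otherwise \cite{ahlswededueck89}.

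\textbf{Direct part.} Fix a PMF $P$ on $\setX$. Split the block into a first sub-block of length $n_1$ with $n_1/n\to1$ and a trailing sub-block of length $n_2=n-n_1$ chosen to still grow unboundedly. In the first sub-block the encoder transmits symbols i.i.d.\ $\sim P$, \emph{irrespective of the ID message-pair}; since it sees the outputs $Y^{n_1}$ and $Z^{n_1}$ over the feedback links, after this sub-block the encoder shares $Y^{n_1}$ with every $\ry$-focused party and $Z^{n_1}$ with every $\rz$-focused party, and by the AEP these are, up to vanishing Total-Variation error, uniform over typical sets of sizes $e^{n_1(\ent{PW_\ry}\pm\epsilon)}$ and $e^{n_1(\ent{PW_\rz}\pm\epsilon)}$. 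This is common randomness of rates $\ent{PW_\ry}$ and $\ent{PW_\rz}$ shared with the two receiver families, \emph{generated by one and the same transmission} and hence governed by the \emph{same} $P$. If $\max_{\tilde P}\muti{\tilde P}{W_\ry}>0$, then in the trailing sub-block the encoder runs a positive-rate reliable transmission code on $W_\ry$ to convey a short hash index tying $m_\ry$ to the shared $Y^{n_1}$ --- the common-randomness-to-identification conversion of \cite{ahlswededueck89,ahlswededueckfb89} --- thereby converting common randomness of rate $\ent{PW_\ry}$ into an ID rate $\ent{PW_\ry}-\epsilon$; simultaneously, and over disjoint channel uses, the same is done for $\rz$ if $\max_{\tilde P}\muti{\tilde P}{W_\rz}>0$. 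If $\max_{\tilde P}\muti{\tilde P}{W_\rz}=0$ the $\rz$-marginal cannot carry the hash, only $|\setM_\rz|=1$ is possible, and the factor $\ind{\max_{\tilde P}\muti{\tilde P}{W_\rz}>0}$ in \eqref{eq:IDBC2FB} is exactly this; symmetrically for $\ry$. Letting $n\to\infty$, then $\epsilon\downarrow0$, and ranging over $P$ establishes achievability.

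\textbf{Converse.} A feedback ID code for the BC is in particular a feedback ID code for each marginal channel (the $\ry$-focused parties see $W_\ry$ with feedback, the $\rz$-focused parties see $W_\rz$ with feedback), so the single-user strong converse of \cite{ahlswededueck89} already yields $R_\ry<\max_P\ent{PW_\ry}\ind{\max_{\tilde P}\muti{\tilde P}{W_\ry}>0}+\epsilon$ and the analogous bound for $R_\rz$. The remaining point --- that both bounds can be met \emph{by one common $P$} --- is handled by importing the type-tracking strategy of Section~\ref{sec:CVIDBC}: one shows that any $\bigl(n,\setM_\ry,\setM_\rz,\lambda_1^\ry,\lambda_2^\ry,\lambda_1^\rz,\lambda_2^\rz\bigr)$ feedback ID code with $\lambda_1^\ry+\lambda_2^\ry+\lambda_1^\rz+\lambda_2^\rz<1$ forces the mixture of the induced channel-input distributions to place almost all of its mass on input sequences $\vecx$ for which \emph{simultaneously} $\ent{P_\vecx W_\ry}>R_\ry-\epsilon$ and $\ent{P_\vecx W_\rz}>R_\rz-\epsilon$, so that the single empirical type $P_\vecx$ witnesses \eqref{eq:IDBC2FB}. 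This mirrors the passage from Lemma~\ref{le:avgDistWeightOnTypes} to the proof of Claim~\ref{cl:toShowIDBCConv}, but with $\ent{P_\vecx W}$ in place of $\muti{P_\vecx}{W}$ and with the homogenization, regularization, and counting steps run against adaptive encoding strategies rather than fixed PMFs on $\setX^n$.

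\textbf{Main obstacle.} The delicate step is this last one: with feedback the encoder's behaviour is an adaptive strategy, so ``the input type'' is not a priori well defined, and one must extract from the joint ID-reliability at \emph{both} receivers a single empirical input type that is simultaneously good for both marginals --- the feedback analogue, in the $\ent{PW}$ regime, of Claim~\ref{cl:toShowIDBCConv}. The direct part is comparatively routine once the single-user feedback ID theorem and the common-randomness conversion are invoked.
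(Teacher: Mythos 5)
The paper does not prove this theorem; it imports it verbatim from \cite[Corollary~3]{ahlswedeverboven91} and uses it only for reference (to settle the deterministic-BC example in Section~\ref{sec:3RecBC}), so there is no paper-internal proof to compare against. Evaluating your sketch on its own terms:

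The direct part is sound in outline and matches the mechanism the paper itself alludes to (``can be achieved by a common-randomness ID code similar to that of \cite{ahlswededueckfb89}''): i.i.d.\ inputs $\sim P$ in a first sub-block create, via the feedback links, common randomness of rate $\ent{PW_\ry}$ with the $\ry$-receiver and of rate $\ent{PW_\rz}$ with the $\rz$-receiver, \emph{governed by the same $P$}, and a trailing sub-block of length $o(n)$ then transmits the two hash indices on the respective marginals provided those marginals have positive transmission capacity --- which is exactly what the indicator factors in \eqref{eq:IDBC2FB} encode. One citation slip: the single-user feedback ID theorem you invoke is from \cite{ahlswededueckfb89}, not from \cite{ahlswededueck89}.

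The converse is where the sketch is too thin. You rightly flag that the hard step is extracting one common $P$ that simultaneously witnesses both output-entropy bounds, and that ``the input type'' is ill-defined for an adaptive encoder. But ``homogenization, regularization, and counting steps run against adaptive encoding strategies'' names a goal rather than a mechanism. The machinery the paper builds for the harder one-sided case (Appendix~\ref{app:IDBC1FBOB}) points to the clean route: lift any feedback ID code to a \emph{non-feedback} ID code for the auxiliary BC with enlarged input alphabet $\setX\times\setY\times\setZ$ and transition law $\ind{y=\tilde y}\,\ind{z=\tilde z}$, so that Lemma~\ref{le:avgDistWeightOnTypes} applies unchanged to the two marginals of that non-feedback channel and forces the mixture distribution to concentrate on triples $(\vecx,\vecy,\vecz)$ whose output marginal types have large entropies; then invoke the martingale type-tracking Lemma~\ref{le:typeFB} (with $(Y,Z)$ treated as a single output) to show that the empirical joint type of $(X^n,Y^n,Z^n)$ concentrates near $P\times W$ for a single time-averaged input PMF $P$, from which $\ent{P_\vecy}\approx\ent{PW_\ry}$ and $\ent{P_\vecz}\approx\ent{PW_\rz}$ follow with a common $P$. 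The channel-enlargement reduction is precisely what lets one reuse the non-feedback type-counting lemmas verbatim; without it, the step you label the ``main obstacle'' remains unresolved and the converse as you outline it does not close.
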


Things get more interesting when the encoder is furnished with feedback from only one receiving terminal, say Terminal~$\ry$. In this scenario the encoder can choose the Time-$i$ channel-input in dependence on the past Terminal-$\ry$ outputs $Y^{i-1}$: to transmit ID Message-Pair~$( m_{\ry}, m_{\rz} )$ when the past channel inputs are $X^{i-1} = x^{i-1}$ and the past Terminal-$\ry$ outputs are $Y^{i-1} = y^{i-1}$, the stochastic encoder generates the Time-$i$ channel-input from a PMF of the form $$Q^{(i)}_{m_\ry,m_\rz} ( x | x^{i-1}, y^{i-1} ), \quad x \in \setX.$$ We use the following basic definitions of an average-error ID code with one-sided feedback from Terminal~$\ry$:

\begin{defin}\label{def:IDCodeBC1FB}
Fix finite sets $\setM_\ry$ and $\setM_\rz$, a blocklength $n \in \naturals$, and positive constants $\lambda^{\ry}_1, \lambda^{\ry}_2, \lambda^{\rz}_1, \lambda^{\rz}_2$. Associate with every ID message-pair $( m_{\ry}, m_{\rz} ) \in \setM_{\ry} \times \setM_{\rz}$ conditional PMFs $$Q^{(i)}_{m_\ry,m_\rz} (x|x^{i-1}, y^{i-1}), \quad i \in [1:n], \, (x,x^{i-1},y^{i-1}) \in \setX \times \setX^{i-1} \times \setY^{i-1},$$ with every $m_\ry \in \setM_\ry$ an ID set $\setD_{m_{\ry}} \subset \setY^n$, and with every $m_\rz \in \setM_\rz$ an ID set $\setD_{m_\rz} \subset \setZ^n$. The tuple $$\Bigl\{ \bigl\{ Q^{(i)}_{m_\ry,m_\rz} \bigr\}_{i \in \{ 1, \ldots, n \}}, \setD_{m_\ry}, \setD_{m_\rz} \Bigr\}_{(m_\ry,m_\rz) \in \setM_\ry \times \setM_\rz}$$ is an $\bigl( n, \setM_\ry, \setM_\rz, \lambda^{\ry}_1, \lambda^{\ry}_2, \lambda^{\rz}_1, \lambda^{\rz}_2 \bigr)$ ID code for the BC $\channel {y,z} x$ with one-sided feedback from Terminal~$\ry$ if the maximum probabilities of missed identification
\begin{subequations}
\begin{IEEEeqnarray}{rCl}
p^\ry_{\textnormal{missed-ID}} = \max_{m_\ry \in \setM_\ry} \frac{1}{\card {\setM_\rz}} \sum_{m_\rz \in \setM_\rz} \sum_{\substack{\vecx \in \setX^n, \\ \vecy \notin \setD_{m_\ry}}} \prod^n_{i = 1} Q^{(i)}_{m_\ry,m_\rz} ( x_i | x^{i-1}, y^{i-1} ) \channely {y_i}{x_i}, \\
p^\rz_{\textnormal{missed-ID}} = \max_{m_\rz \in \setM_\rz} \frac{1}{\card {\setM_\ry}} \sum_{m_\ry \in \setM_\ry} \sum_{\substack{\vecx \in \setX^n, \\ \vecy \in \setY^n, \\ \vecz \notin \setD_{m_\rz}}} \prod^n_{i = 1} Q^{(i)}_{m_\ry,m_\rz} ( x_i | x^{i-1}, y^{i-1} ) \channel {y_i,z_i}{x_i}
\end{IEEEeqnarray}
\end{subequations}
satisfy
\begin{subequations}\label{bl:maxPrMissedIDBC1FB}
\begin{IEEEeqnarray}{rCl}
p^\ry_{\textnormal{missed-ID}} & \leq & \lambda^{\ry}_1, \\
p^\rz_{\textnormal{missed-ID}} & \leq & \lambda^{\rz}_1,
\end{IEEEeqnarray}
\end{subequations}
and the maximum probabilities of wrong identification
\begin{subequations}
\begin{IEEEeqnarray}{rCl}
p^\ry_{\textnormal{wrong-ID}} & = & \max_{m_\ry \in \setM_\ry} \max_{m^\prime_\ry \neq m_\ry} \frac{1}{\card {\setM_\rz}} \sum_{m_\rz \in \setM_\rz} \sum_{\substack{\vecx \in \setX^n, \\ \vecy \in \setD_{m^\prime_\ry}}} \prod^n_{i = 1} Q^{(i)}_{m_\ry,m_\rz} ( x_i | x^{i-1}, y^{i-1} ) \channely {y_i}{x_i}, \\
p^\rz_{\textnormal{wrong-ID}} & = & \max_{m_\rz \in \setM_\rz} \max_{m^\prime_\rz \neq m_\rz } \frac{1}{\card {\setM_\ry}} \sum_{m_\ry \in \setM_\ry} \sum_{\substack{\vecx \in \setX^n, \\ \vecy \in \setY^n, \\ \vecz \in \setD_{m^\prime_\rz}}} \prod^n_{i = 1} Q^{(i)}_{m_\ry,m_\rz} ( x_i | x^{i-1}, y^{i-1} ) \channel {y_i,z_i}{x_i}
\end{IEEEeqnarray}
\end{subequations}
satisfy
\begin{subequations}\label{bl:maxPrWrongIDBC1FB}
\begin{IEEEeqnarray}{rCl}
p^\ry_{\textnormal{wrong-ID}} & \leq & \lambda^{\ry}_2, \\
p^\rz_{\textnormal{wrong-ID}} & \leq & \lambda^{\rz}_2.
\end{IEEEeqnarray}
\end{subequations}
A rate-pair $( R_\ry, R_\rz )$ is called achievable if for every positive $\lambda^{\ry}_1$, $\lambda^{\ry}_2$, $\lambda^{\rz}_1$, and $\lambda^{\rz}_2$ and for every sufficiently-large blocklength~$n$ there exists an $\bigl( n, \setM_\ry, \setM_\rz, \lambda^{\ry}_1, \lambda^{\ry}_2, \lambda^{\rz}_1, \lambda^{\rz}_2 \bigr)$ ID code for the BC with
\begin{IEEEeqnarray*}{rrCll}
& \tfrac{1}{n} \log \log | \setM_\ry | & \geq & R_\ry \quad &\textnormal{if } R_\ry > 0,
\\*[-0.625\normalbaselineskip]
\smash{\left\{
\IEEEstrut[6.39\jot]
\right.} \nonumber
\\*[-0.625\normalbaselineskip]
& |\setM_\ry| & = & 1 \quad &\textnormal{if } R_\ry = 0, \\[0.5\normalbaselineskip]
& \tfrac{1}{n} \log \log | \setM_\rz | & \geq & R_\rz \quad &\textnormal{if } R_\rz > 0,
\\*[-0.625\normalbaselineskip]
\smash{\left\{
\IEEEstrut[6.39\jot]
\right.} \nonumber
\\*[-0.625\normalbaselineskip]
& |\setM_\rz| & = & 1 \quad &\textnormal{if } R_\rz = 0.
\end{IEEEeqnarray*}
The ID capacity region $\setC_{1 \textnormal{-fb}}$ of the BC with one-sided feedback from Receiver~$\ry$ is the closure of the set of all achievable rate-pairs.
\end{defin}

One-sided feedback from Terminal~$\ry$ can be viewed as a special case of noisy feedback from Terminal~$\rz$. The ID capacity of the DMC with noisy feedback is to-date unknown. Inner and outer bounds can be found in \cite[Theorem~1]{ahlswedezhang95}. We do not tackle the general problem here, but we adapt the coding scheme in Section~\ref{sec:DPIDBC} to inner-bound the ID capacity region of the BC with one-sided feedback, and we show that the bound is tight whenever the channel outputs are independent conditional on the channel input. In such a scenario feedback from Terminal~$\ry$ does not provide the encoder with information about the channel output at Terminal~$\rz$. We can adapt the broadcast ID code of Section~\ref{sec:DPIDBC} to obtain the following inner bound:

\begin{theorem}\label{th:IDBC1FBIB}
The ID capacity region $\setC_{1\textnormal{-fb}}$ of the BC $\channel {y,z} x$ with one-sided feedback from Terminal~$\ry$ contains the set $\setR_{1\textnormal{-fb-ib}}$ of all rate-pairs $( R_\ry, R_\rz ) \in ( \reals^+_0 )^2$ that for some PMF $P$ on $\setX$ satisfy
\begin{subequations}\label{bl:IDBC1FB}
\begin{IEEEeqnarray}{rCl}
R_\ry & \leq & \ent {P W_\ry} \ind {\max_{\tilde P} \muti {\tilde P}{W_\ry} > 0}, \\
R_\rz & \leq & \muti P{W_\rz}.
\end{IEEEeqnarray}
\end{subequations}
The interior of $\setR_{1\textnormal{-fb-ib}}$ is achieved by codes with deterministic encoders.
\end{theorem}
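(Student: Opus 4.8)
The plan is to combine the capacity–achieving single‑user ID code of Section~\ref{sec:IDCodingTechnique}, run on the marginal channel $W_\rz$ for Receiver~$\rz$, with the feedback from Terminal~$\ry$, which lets the encoder learn Receiver~$\ry$'s output and thereby share a large amount of common randomness with it, and to couple the two exactly as in Section~\ref{sec:DPIDBC} so that the transmitted sequence looks to Receiver~$\rz$ like an instance of the single‑user scheme. Fix a PMF $P$ on $\setX$ and a rate‑pair $(R_\ry,R_\rz)$ in the interior of $\setR_{1\textnormal{-fb-ib}}$; on the interior one automatically has $0<R_\ry<\ent{P W_\ry}$ (hence $\max_{\tilde P}\muti{\tilde P}{W_\ry}>0$) and $0<R_\rz<\muti{P}{W_\rz}$ (hence $\muti{P}{W_\rz}>0$). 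Choose a small $\mu>0$ with $R_\ry<(1-\mu)\ent{P W_\ry}$ and $R_\rz<(1-\mu)\muti{P}{W_\rz}$, and split the $n$ channel uses into a first block of length $n_1=\lceil(1-\mu)n\rceil$ and a second block of length $n_2=n-n_1$.

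In the first block the code is built as in Section~\ref{sec:IDCodingTechnique}, but only for Receiver~$\rz$: draw a pool of $e^{n_1 R_\poolre}$ $n_1$‑tuples $\sim P^{n_1}$, and for each $m_\rz\in\setM_\rz$ draw an index‑set $\indexset{m_\rz}$ and bin $\bin{m_\rz}$ of expected size $e^{n_1\tilde R_\rz}$ with $\tfrac{R_\rz}{1-\mu}<\tilde R_\rz<\muti{P}{W_\rz}$ and $\tilde R_\rz<R_\poolre$; no bins are attached to the messages of Receiver~$\ry$. For every pair $(m_\ry,m_\rz)$ draw an index $V_{m_\ry,m_\rz}$ uniformly over $\indexset{m_\rz}$, independently across pairs, and to send $(m_\ry,m_\rz)$ transmit $\poolel{V_{m_\ry,m_\rz}}$ over the first block; the encoder then reads Receiver~$\ry$'s first‑block output $\vecy\in\setY^{n_1}$ off the feedback. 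In the second block the encoder sends a bounded‑size \emph{colour} of $m_\ry$ computed from this common randomness: fix a constant $L$ (of order $1/\min\{\lambda^{\ry}_1,\lambda^{\ry}_2\}$) and, for each $m_\ry$, a colouring $c_{m_\ry}\colon\setY^{n_1}\to\{1,\dots,L\}$ assigning each $\vecy$ an independent uniform colour; since $\max_{\tilde P}\muti{\tilde P}{W_\ry}>0$ and $n_2\to\infty$, a reliable $n_2$‑block transmission code for $W_\ry$ with $L$ messages exists, and the encoder uses it to convey $c_{m_\ry}(\vecy)$. For decoding, Receiver~$\rz$ ignores the second block and applies the single‑user decoder of Section~\ref{sec:IDCodingTechnique} to its first‑block output, while Receiver~$\ry$ decodes the colour $\hat c$ from the second block and, having observed $\vecy$, guesses ``$m_\ry'$'' iff $\hat c=c_{m_\ry'}(\vecy)$. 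Once the code is fixed this encoder is deterministic, which gives the last assertion of the theorem.

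For the error analysis the Receiver‑$\rz$ part is exactly the argument of Section~\ref{sec:DPIDBC}: with $M_\ry$ uniform the transmitted codeword is $\poolel{V_{M_\ry,m_\rz}}$, whose index is the empirical law of the doubly‑exponentially‑many i.i.d.\ uniform draws $\{V_{m_\ry,m_\rz}\}_{m_\ry\in\setM_\ry}$ on $\indexset{m_\rz}$; a Hoeffding bound (as in the proof of Claim~\ref{cl:toShowIDBC}) shows this law is, with high probability, within exponentially small Total‑Variation distance of the uniform law on $\indexset{m_\rz}$, whence Claim~\ref{re:convExpFast} applies. For Receiver~$\ry$, a missed identification or a second‑block transmission error each have exponentially small probability, so the crux is wrong identification for a pair $m_\ry\neq m_\ry'$, which—since $\hat c=c_{m_\ry}(\vecy)$ w.h.p.—amounts to the colourings of $m_\ry$ and $m_\ry'$ agreeing at $\vecy$; averaged over $M_\rz$ this has probability at most (second‑block error) $+\sum_{\vecy} w^{(m_\ry)}_\vecy\,\mathbbm{1}\{c_{m_\ry}(\vecy)=c_{m_\ry'}(\vecy)\}$, where $w^{(m_\ry)}_\vecy=\tfrac{1}{|\setM_\rz|}\sum_{m_\rz}W_\ry^{n_1}(\vecy\mid\poolel{V_{m_\ry,m_\rz}})$. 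Because the $\poolel{V_{m_\ry,m_\rz}}$ are marginally $\sim P^{n_1}$, a Hoeffding bound over the doubly‑exponentially‑many $m_\rz$ (this is where $R_\rz>0$ enters) shows that w.h.p.\ $w^{(m_\ry)}_\vecy\lesssim e^{-n_1(\ent{P W_\ry}-\delta)}$ for all $\vecy$, and then a Bernstein‑type bound over the independent colour‑collision events $\{\mathbbm{1}\{c_{m_\ry}(\vecy)=c_{m_\ry'}(\vecy)\}\}_\vecy$ makes the collision probability exceed $\lambda^{\ry}_2$ only with doubly‑exponentially small probability, which the slack $R_\ry<(1-\mu)\ent{P W_\ry}$ lets survive the union bound over all $\binom{|\setM_\ry|}{2}$ pairs. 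Choosing $\mu,\tilde R_\rz,R_\poolre,L$ consistently and combining the two blocks' error events by a final union bound then gives an $\bigl(n,\setM_\ry,\setM_\rz,\lambda^{\ry}_1,\lambda^{\ry}_2,\lambda^{\rz}_1,\lambda^{\rz}_2\bigr)$ ID code with $\tfrac1n\log\log|\setM_\ry|=R_\ry$ and $\tfrac1n\log\log|\setM_\rz|=R_\rz$.

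I expect the main obstacle to be precisely this wrong‑identification bound for Receiver~$\ry$: the common randomness $\vecy$ must, conditionally on the transmitted pair, be spread over roughly $e^{(1-\mu)n\ent{P W_\ry}}$ output sequences—a number that has to dominate $e^{nR_\ry}$ for the union bound over message pairs to close, which is what forces $R_\ry<(1-\mu)\ent{P W_\ry}$ and hence, letting $\mu\downarrow0$, $R_\ry\le\ent{P W_\ry}$. The construction secures this conditional spreading by routing the first‑block input through the bin $\bin{m_\rz}$ with an index that \emph{also} depends on $m_\ry$, so that averaging over the doubly exponential set $\setM_\rz$ supplies the needed concentration; this is also why the argument is confined to the interior, where $R_\rz>0$. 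Equivalently, the whole scheme can be read as a common‑randomness ID code in the spirit of \cite{ahlswededueckfb89}: feedback from Terminal~$\ry$ furnishes the encoder and Receiver~$\ry$ with about $n\ent{P W_\ry}$ bits of common randomness while a transmission code furnishes the encoder and Receiver~$\rz$ with about $n\muti{P}{W_\rz}$ bits, and the delicate point—met by the bin structure of Section~\ref{sec:DPIDBC}—is to generate both kinds with a single transmitted sequence. The boundary cases $\max_{\tilde P}\muti{\tilde P}{W_\ry}=0$ or $\max_{\tilde P}\muti{\tilde P}{W_\rz}=0$, where $\setR_{1\textnormal{-fb-ib}}$ degenerates to a segment with empty interior, are handled separately by the single‑user ID theorem (Theorem~\ref{th:IDDMC}) or the DMC feedback ID theorem of \cite{ahlswededueckfb89}.
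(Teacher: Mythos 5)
Your construction is essentially the paper's (Appendix~G): bins only for the $\rz$-messages, a codeword index $V_{m_\ry,m_\rz}$ drawn from $\indexset{m_\rz}$ so that averaging over the doubly-exponential set $\setM_\ry$ makes the first block look like the single-user scheme to Receiver~$\rz$ while averaging over $\setM_\rz$ makes the transmitted codeword look nearly uniform over the pool, followed by a short second block carrying a randomly coloured hash of $m_\ry$ evaluated at the fed-back output, \`a la Ahlswede--Dueck. The Receiver-$\rz$ half, the deterministic-encoder claim, and the cosmetic deviations (a $(1-\mu)n$/$\mu n$ split instead of $n+\sqrt n$, a constant number $L$ of colours instead of a growing transmission-message set) are all fine.

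The gap is in the Receiver-$\ry$ wrong-identification step. You assert that, because the $\poolel{V_{m_\ry,m_\rz}}$ are marginally $\sim P^{n_1}$, Hoeffding over the doubly-exponentially many $m_\rz$ gives $w^{(m_\ry)}_\vecy \lesssim e^{-n_1(\ent{P W_\ry}-\delta)}$. But the sequences $\bigl\{\poolel{V_{m_\ry,m_\rz}}\bigr\}_{m_\rz}$ are \emph{not} independent: they all point into one shared pool of only $e^{n_1 R_\poolre}$ elements. Conditioning on the pool restores independence across $m_\rz$, but then Hoeffding only shows that $w^{(m_\ry)}_\vecy$ concentrates around the pool-induced output law $\hat Q(\vecy) = |\setV|^{-1}\sum_{v} W_\ry^{n_1}\bigl(\vecy \bigl| \poolel v\bigr)$, not around $(P W_\ry)^{n_1}(\vecy)$. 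Your constraints allow $R_\poolre$ barely above $\tilde R_\rz < \muti{P}{W_\rz}$, which can be far below $\muti{P}{W_\ry}$; in that case $\hat Q$ is essentially supported on about $e^{n_1(R_\poolre + \condent{W_\ry}{P})}$ output sequences, the weights are of order $e^{-n_1(R_\poolre+\condent{W_\ry}{P})}$, and the collision-sum concentration plus the union bound over $|\setM_\ry|^2$ pairs only closes for $R_\ry < R_\poolre + \condent{W_\ry}{P}$, short of $\ent{P W_\ry}$. Two ingredients are missing: the explicit constraint $R_\poolre > \muti{P}{W_\ry}$, and the channel-resolvability theorem of Han and Verd\'u \cite{hanverdu93}, which is what converts ``codeword nearly uniform over a pool of rate exceeding $\muti{P}{W_\ry}$'' into ``$Y^{n_1}$ nearly $(P W_\ry)^{n_1}$-distributed in Total-Variation,'' and hence into the full $n_1 \ent{P W_\ry}$ bits of common randomness needed for the colouring argument and the union bound over message pairs. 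This is precisely the step the paper's proof isolates; without it the claimed bound on $w^{(m_\ry)}_\vecy$ is unjustified.
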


\begin{proof}
A formal proof can be found in Appendix~\ref{app:IDBC1FBIB}. Here, we provide a rough sketch. To prove the theorem, we extend the random code construction of Section~\ref{sec:DPIDBC} as follows: Fix an input distribution $P \in \mathscr P (\setX)$ and any positive ID rate-pair $(R_\ry, R_\rz)$ satisfying
\begin{subequations}
\begin{IEEEeqnarray}{rCcCl}
0 & < & R_\ry & < & \ent {P W_\ry} \ind {\max_{\tilde P} \muti {\tilde P}{W_\ry} > 0}, \\
0 & < & R_\rz & < & \muti P{W_\rz}.
\end{IEEEeqnarray}
\end{subequations}
Let $\setM_\ry$ be a size-$\exp (\exp (n R_\ry))$ set of possible ID messages for Receiver~$\ry$, and let $\setM_\rz$ be a size-$\exp (\exp (n R_\rz))$ set of possible ID messages for Receiver~$\rz$. Generate an ID code for the marginal channel $\channelz z x$ as in Section~\ref{sec:IDCodingTechnique}, and associate with every ID message-pair $(m_\ry, m_\rz)$ an $n$-tuple we call the $(m_\ry, m_\rz)$-codeword as follows. If Bin~$m_\rz$ is not empty, then draw the codeword uniformly over Bin~$m_\rz$, otherwise let it be some arbitrary but fixed pool element. To send ID Message-Pair~$(m_\ry, m_\rz)$, the encoder transmits during the first $n$ channel uses the $(m_\ry, m_\rz)$-codeword. Similarly as in Secion~\ref{sec:DPIDBC}, we can show that if the ID message that is sent to Terminal~$\ry$ is uniform over its support $\setM_\ry$, the ID message that is sent to Terminal~$\rz$ is $m_\rz$, and Bin~$m_\rz$ is not empty, then the transmitted codeword is nearly uniformly distributed (in terms of Total-Variation distance) over Bin~$m_\rz$. Consequently, by the analysis in Section~\ref{sec:IDCodingTechnique} and because $R_\rz < \muti P {W_\rz}$, Receiver~$\rz$ can identify its ID message reliably after the first $n$ channel uses.

As to Receiver~$\ry$, we can show that if the ID message that is sent to Terminal~$\ry$ is $m_\ry$ and the ID message that is sent to Terminal~$\rz$ is uniform over its support $\setM_\rz$, then the transmitted codeword is nearly uniformly distributed over the pool (in terms of Total-Variation distance). Since the pool contains $e^{n R_\poolre}$ $n$-tuples, which are drawn $\sim P^n$ independently, the results in \cite{hanverdu93} imply that for $R_\poolre > \muti {P}{W_\ry}$ the distribution of the length-$n$ Terminal-$\ry$ output-sequence $Y^n$ is nearly the product distribution $( P W_\ry )^n$ (in terms of Total-Variation distance). Therefore, if we choose $R_\poolre > \muti {P}{W_\ry}$, then the common randomness $Y^n$ that the encoder and Receiver~$\ry$ share after $n$ transmissions is of rate $\ent{ P W_\ry }$. Consequently, we can use the common-randomness argument of \cite{ahlswededueckfb89} to show that an additional $\sqrt n$ channel uses suffice for Receiver~$\ry$ to identify its ID message reliably, because $R_\ry < \ent{ P W_\ry } \ind {\max_{\tilde P} \muti {\tilde P}{W_\ry} > 0}$. To conclude, note that asymptotically $\sqrt n$ additional channel uses cannot decrease the ID rates.
\end{proof}

As the following example shows, the inner bound of Theorem~\ref{th:IDBC1FBIB} need not be tight:

\begin{example}
Consider a BC $\channel {y,z}{x}$ for which $Z = f ( Y )$. On such a channel feedback from Terminal~$\ry$ is as good as feedback from both terminals, and the ID capacity region $\setC_{1 \textnormal{-fb}}$ with one-sided feedback from Terminal~$\ry$ is thus the ID capacity region $\setC_{\textnormal{fb}}$ with feedback from both terminals. To see that in general $\setR_{1\textnormal{-fb-ib}} \subsetneq \setC_{\textnormal{fb}}$, consider for example a binary symmetric BC with identical outputs, whose receiving terminals both observe the output of the same binary symmetric channel.
\end{example}

Denote the conditional PMF of the Terminal-$\rz$ output given the channel input and the Terminal-$\ry$ output by $\channelztild {z}{x,y}$, i.e.,
\begin{equation}
\channelztild {z}{x,y} = \frac{\channel {y,z}{x}}{\channely y x}. \label{eq:channelztild}
\end{equation}
Our next result is an outer bound on the ID capacity region of the BC with one-sided feedback from Terminal~$\ry$:

\begin{theorem}\label{th:IDBC1FBOB}
The ID capacity region $\setC_{1\textnormal{-fb}}$ of the BC $\channel {y,z} x$ with one-sided feedback from Terminal~$\ry$ is contained in the set $\setR_{1\textnormal{-fb-ob}}$ of all rate-pairs $( R_\ry, R_\rz ) \in ( \reals^+_0 )^2$ that for some PMF $P$ on $\setX$ satisfy
\begin{subequations}\label{eq:IDBC1FBOB}
\begin{IEEEeqnarray}{rCl}
R_\ry & \leq & \ent{P W_\ry} \ind {\max_{\tilde P} \muti {\tilde P}{W_\ry} > 0}, \nonumber \\
R_\rz & \leq & \muti{P \times W_\ry}{\widetilde W_\rz} \ind {\max_{\tilde P} \muti {\tilde P}{W_\rz} > 0},
\end{IEEEeqnarray}
\end{subequations}
where $\widetilde W_\rz$ is defined in \eqref{eq:channelztild}.
\end{theorem}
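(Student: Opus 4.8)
The plan is to follow the converse of Section~\ref{sec:CVIDBC}, the genuinely new ingredient being that \emph{toward Terminal~$\rz$ the one-sided feedback is useless}. Fix a rate-pair in $\setC_{1 \textnormal{-fb}}$, an $\epsilon > 0$, and positive $\lambda^\ry_1, \lambda^\ry_2, \lambda^\rz_1, \lambda^\rz_2$ with $\lambda^\ry_1 + \lambda^\ry_2 + \lambda^\rz_1 + \lambda^\rz_2 < 1$; by achievability, for all large~$n$ there is an $\bigl( n, \setM_\ry, \setM_\rz, \lambda^\ry_1, \lambda^\ry_2, \lambda^\rz_1, \lambda^\rz_2 \bigr)$ ID code for the BC with one-sided feedback from Terminal~$\ry$, and by discarding messages we may assume $|\setM_\ry| = \lceil \exp(\exp(n R_\ry)) \rceil$, $|\setM_\rz| = \lceil \exp(\exp(n R_\rz)) \rceil$. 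For a strategy $(m_\ry, m_\rz)$ let $P^{X^n Y^n Z^n}_{m_\ry, m_\rz}$ be the induced joint law of the channel inputs and outputs. Since the Time-$i$ input does not depend on $Z^{i-1}$, and $\channel {y,z}{x} = \channely y x \, \channelztild z {x,y}$ with $\widetilde W_\rz$ as in \eqref{eq:channelztild}, this law factors as
\[
P^{X^n Y^n Z^n}_{m_\ry, m_\rz} ( \vecx, \vecy, \vecz ) = \Bigl[ \prod_{i=1}^n Q^{(i)}_{m_\ry, m_\rz} ( x_i | x^{i-1}, y^{i-1} ) \, \channely {y_i}{x_i} \Bigr] \prod_{i=1}^n \channelztild {z_i}{x_i, y_i} .
\]
Thus the $Z^n$-law that Receiver~$\rz$'s decoder faces for ID message $m_\rz$ (i.e.\ with $M_\ry$ uniform) is the image of the PMF $\bar P^{X^n Y^n}_{m_\rz} = \tfrac{1}{|\setM_\ry|} \sum_{m_\ry} P^{X^n Y^n}_{m_\ry, m_\rz}$ on $( \setX \times \setY )^n$ through the \emph{memoryless} channel $\widetilde W_\rz$, so by the reasoning behind Remark~\ref{re:eqFormDefIDCodeBC}, $\bigl\{ \bar P^{X^n Y^n}_{m_\rz}, \setD_{m_\rz} \bigr\}_{m_\rz \in \setM_\rz}$ is an $\bigl( n, \setM_\rz, \lambda^\rz_1, \lambda^\rz_2 \bigr)$ ID code for the DMC $\widetilde W_\rz$ with input alphabet $\setX \times \setY$ \emph{and no feedback}; likewise $\bigl\{ P^{Y^n}_{m_\ry}, \setD_{m_\ry} \bigr\}_{m_\ry \in \setM_\ry}$, with $P^{Y^n}_{m_\ry} = \tfrac{1}{|\setM_\rz|} \sum_{m_\rz} P^{Y^n}_{m_\ry, m_\rz}$, is an $\bigl( n, \setM_\ry, \lambda^\ry_1, \lambda^\ry_2 \bigr)$ ID code for the noiseless channel on $\setY$.

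Next I would apply Lemma~\ref{le:avgDistWeightOnTypes} twice. Picking $\kappa^\rz \in ( \lambda^\rz_1 + \lambda^\rz_2, 1 )$ and applying it to the DMC $\widetilde W_\rz$ shows that the grand mixture $\bar Q^{X^n Y^n} = \tfrac{1}{|\setM_\ry| |\setM_\rz|} \sum_{m_\ry, m_\rz} P^{X^n Y^n}_{m_\ry, m_\rz}$ puts mass exceeding $1 - \kappa^\rz - o(1)$ on those $(\vecx, \vecy)$ whose joint empirical type $P_{(\vecx,\vecy)}$ on $\setX \times \setY$ satisfies $\muti {P_{(\vecx,\vecy)}}{\widetilde W_\rz} > R_\rz - \epsilon$. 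Picking $\kappa^\ry \in ( \lambda^\ry_1 + \lambda^\ry_2, 1 )$ with $\kappa^\ry + \kappa^\rz < 1$ and applying it to the noiseless channel on $\setY$---for which $\muti {P}{W}$ is simply $\ent P$---shows that the $Y^n$-marginal of $\bar Q^{X^n Y^n}$ puts mass exceeding $1 - \kappa^\ry - o(1)$ on those $\vecy$ with $\ent {P_\vecy} > R_\ry - \epsilon$. (If $\max_{\tilde P} \muti {\tilde P}{W_\ry} = 0$, then $Y^n$ is independent of the ID messages, which forces $|\setM_\ry| = 1$ and $R_\ry = 0$; this is exactly what the indicator in Theorem~\ref{th:IDBC1FBOB} records, and in that degenerate case the $\ry$-part is skipped. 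The analogous remark covers $\max_{\tilde P}\muti {\tilde P}{W_\rz}=0$.)

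The last ingredient converts these bounds---which a priori feature an arbitrary joint type---into the advertised product form $P \times W_\ry$. Although the feedback correlates $X^n$ with $Y^n$, the channel law still makes $Y_i$ conditionally distributed as $\channely {\cdot}{X_i}$ given $(X^i, Y^{i-1})$, so for every $(x,y) \in \setX \times \setY$ the partial sums of $\ind {X_i = x}\bigl( \ind {Y_i = y} - \channely y x \bigr)$ form a bounded martingale-difference sequence; Azuma--Hoeffding then yields $\| P_{(X^n, Y^n)} - P_{X^n} \times W_\ry \|_\infty \leq \delta$ with probability at least $1 - 2 |\setX| |\setY| e^{- n \delta^2 / 2}$, uniformly over the strategy. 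Intersecting this event with the two sets above, for $n$ large the grand mixture still puts mass at least $1 - \kappa^\ry - \kappa^\rz - o(1) > 0$ on pairs $(\vecx, \vecy)$ with $\| P_{(\vecx,\vecy)} - P_\vecx \times W_\ry \|_\infty \leq \delta$, $\muti {P_{(\vecx,\vecy)}}{\widetilde W_\rz} > R_\rz - \epsilon$, and $\ent {P_\vecy} > R_\ry - \epsilon$. For any such pair, uniform continuity of $\muti {\cdot}{\widetilde W_\rz}$ and of entropy gives $\muti {P_\vecx \times W_\ry}{\widetilde W_\rz} > R_\rz - \epsilon - \omega(\delta)$ and $\ent {P_\vecx W_\ry} > R_\ry - \epsilon - \omega(\delta)$ with $\omega(\delta) \downarrow 0$; hence $P := P_\vecx$ meets the two inequalities defining $\setR_{1 \textnormal{-fb-ob}}$ up to $\epsilon + \omega(\delta)$. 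Letting $\epsilon, \delta \downarrow 0$ (and $n \to \infty$), combining with the degenerate cases above, and using that $\setR_{1 \textnormal{-fb-ob}}$ is closed then proves the containment $\setC_{1 \textnormal{-fb}} \subseteq \setR_{1 \textnormal{-fb-ob}}$.

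I expect the main obstacles to be precisely the two ideas absent from Section~\ref{sec:CVIDBC}: first, recognizing that toward Terminal~$\rz$ the feedback disappears once one writes $\channel {y,z}{x} = \channely y x \, \channelztild z {x,y}$, so that the enlarged input $(X, Y)$ turns the marginal-$\rz$ problem into a feedback-free ID problem for the DMC $\widetilde W_\rz$ to which Lemma~\ref{le:avgDistWeightOnTypes} applies verbatim; and second, the martingale concentration (replacing the plain Hoeffding step available when there is no feedback) that is needed to trade the free joint type of $(X^n, Y^n)$ for the product form $P \times W_\ry$. The remaining steps are the bookkeeping already carried out in Section~\ref{sec:CVIDBC}.
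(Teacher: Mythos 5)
Your proposal is correct and follows the same route as the paper's proof (Appendix~\ref{app:IDBC1FBOB}). You both make the key observation that the one-sided feedback disappears once the channel is factored as $\channel{y,z}x = \channely y x \, \channelztild z {x,y}$: the feedback code for $\channel{y,z}x$ becomes a feedback-free ID code for the BC with input $(X,Y)$, noiseless $\ry$-marginal, and $\rz$-marginal $\widetilde W_\rz$, to which the converse machinery of Section~\ref{sec:CVIDBC} (i.e.\ Lemma~\ref{le:avgDistWeightOnTypes} applied to each marginal and intersected as in Claim~\ref{cl:toShowIDBCConv}) applies directly.

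The one point where you diverge from the paper is the concentration step that converts the free joint type $P_{\vecx,\vecy}$ into a product form. You invoke Azuma--Hoeffding on the martingale differences $\ind{X_i = x}(\ind{Y_i = y} - W_\ry(y|x))$ to show $P_{(\vecx,\vecy)} \approx P_\vecx \times W_\ry$, giving an exponentially small failure probability and, conveniently, the empirical type $P_\vecx$ as the certifying PMF. The paper instead uses Lemma~\ref{le:typeFB} (attributed to \cite{venkatesananantharam98}), which exploits the same martingale structure but applies Chebyshev to uncorrelated differences centered at $\distof{X_i = x|X^{i-1},Y^{i-1}}W_\ry(y|x)$, yielding a polynomial $O(1/(n\nu^2))$ failure probability and the averaged conditional input $P^{\vecx,\vecy}$ as the certifying PMF. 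Both bounds vanish, so the choice is purely a matter of taste; your Azuma variant is slightly sharper and has the cosmetic advantage that $P$ is an observed empirical type rather than an average of strategy distributions, while the paper's version matches the statement of Lemma~\ref{le:typeFB} that is reused elsewhere. Everything else---including your handling of the degenerate cases behind the indicator functions, the $\ell^\infty$-to-mutual-information continuity estimates, and the intersection-of-events argument---coincides with the paper.
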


\begin{proof}
See Appendix~\ref{app:IDBC1FBOB}.
\end{proof}

If the outputs of the BC are conditionally independent given its input, i.e., if $\channel {y,z} x = \channely y x \channelz z x$, then the inner bound of Theorem~\ref{th:IDBC1FBIB} coincides with the outer bound of Theorem~\ref{th:IDBC1FBOB}:

\begin{corollary}\label{co:ICBC1FBMarkov}
The ID capacity region $\setC_{1\textnormal{-fb}}$ of the BC $\channel {y,z}x = \channely y x \channelz z x$ with one-sided feedback from Terminal~$\ry$ is the set of all rate-pairs $( R_\ry, R_\rz ) \in ( \reals^+_0 )^2$ that for some PMF $P$ on $\setX$ satisfy
\begin{subequations}\label{eq:capRegBCFB}
\begin{IEEEeqnarray}{rCl}
R_\ry & \leq & \ent{P W_\ry} \ind {\max_{P} \muti {P}{W_\ry} > 0}, \\
R_\rz & \leq & \muti{P}{W_\rz}. 
\end{IEEEeqnarray}
\end{subequations}
\end{corollary}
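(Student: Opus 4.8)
The plan is to observe that Corollary~\ref{co:ICBC1FBMarkov} is an immediate consequence of Theorems~\ref{th:IDBC1FBIB} and \ref{th:IDBC1FBOB} once one checks that for a BC whose outputs are conditionally independent given its input, the two bounds $\setR_{1\textnormal{-fb-ib}}$ and $\setR_{1\textnormal{-fb-ob}}$ coincide and equal the region described by \eqref{eq:capRegBCFB}. So the proof is essentially a one-line identity: under $\channel {y,z}{x} = \channely y x \channelz z x$ we have, by \eqref{eq:channelztild}, $\channelztild z{x,y} = \channel{y,z}x / \channely y x = \channelz z x$, i.e.\ the ``reverse'' channel $\widetilde W_\rz$ does not depend on $y$ at all and is simply $W_\rz$.

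With that identity in hand, the first step is to rewrite the $R_\rz$-constraint in the outer bound of Theorem~\ref{th:IDBC1FBOB}. Since $\widetilde W_\rz (z|x,y) = W_\rz(z|x)$ does not depend on $y$, for an input pair $(X,Y)$ with $X \sim P$ and $Y$ the corresponding Terminal-$\ry$ output we get $\muti{P \times W_\ry}{\widetilde W_\rz} = \muti{P}{W_\rz}$: the mutual information between the pair $(X,Y)$ and $Z$ reduces to the mutual information between $X$ and $Z$ because, conditionally on $X$, the output $Z$ is independent of $Y$ (indeed $Z$ is a function of $X$ through $W_\rz$ alone, and $(X,Y,Z)$ with $Z \perp Y \mid X$ forms a Markov chain $Y - X - Z$). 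Hence $\condmuti{}{Y}{Z \mid X}$-type terms vanish and $\bigmuti{P\times W_\ry}{\widetilde W_\rz} = \muti P{W_\rz}$. The indicator $\ind{\max_{\tilde P}\muti{\tilde P}{W_\rz} > 0}$ is likewise unaffected. The $R_\ry$-constraint in Theorem~\ref{th:IDBC1FBOB} is already literally the $R_\ry$-constraint of \eqref{eq:capRegBCFB} (note $\max_{\tilde P}\muti{\tilde P}{W_\ry} > 0$ and $\max_P\muti P{W_\ry}>0$ are the same condition). Therefore $\setR_{1\textnormal{-fb-ob}}$ equals the region in \eqref{eq:capRegBCFB}.

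The second step is to compare this with the inner bound $\setR_{1\textnormal{-fb-ib}}$ of Theorem~\ref{th:IDBC1FBIB}, which is exactly the set of $(R_\ry,R_\rz)$ satisfying, for some $P$, $R_\ry \le \ent{P W_\ry}\ind{\max_{\tilde P}\muti{\tilde P}{W_\ry}>0}$ and $R_\rz \le \muti P{W_\rz}$ — i.e.\ precisely \eqref{eq:capRegBCFB}. Thus $\setR_{1\textnormal{-fb-ib}} = \setR_{1\textnormal{-fb-ob}}$, and since by Theorems~\ref{th:IDBC1FBIB} and \ref{th:IDBC1FBOB} we have $\setR_{1\textnormal{-fb-ib}} \subseteq \setC_{1\textnormal{-fb}} \subseteq \setR_{1\textnormal{-fb-ob}}$, all three regions are equal, which is the claim.

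There is no real obstacle here; the only thing to be careful about is the reduction $\muti{P \times W_\ry}{\widetilde W_\rz} = \muti{P}{W_\rz}$, which must be justified by the conditional-independence hypothesis (it is what makes the feedback from Terminal~$\ry$ useless for learning about Terminal~$\rz$'s output), and the bookkeeping of the two indicator functions, which match trivially. One should also remark explicitly, as the text preceding the corollary already does, that conditional independence is exactly the situation in which ``feedback from Terminal~$\ry$ does not provide the encoder with information about the channel output at Terminal~$\rz$,'' so the converse cannot be beaten — this is the conceptual content that the short computation formalizes.
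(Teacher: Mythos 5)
Your proposal is correct and takes essentially the same approach as the paper: observe $\widetilde W_\rz(z|x,y) = W_\rz(z|x)$ under conditional independence, conclude $\muti{P \times W_\ry}{\widetilde W_\rz} = \muti{P}{W_\rz}$ via the Markov chain $Y - X - Z$, and match the inner bound of Theorem~\ref{th:IDBC1FBIB} with the outer bound of Theorem~\ref{th:IDBC1FBOB}. The paper's proof is a two-sentence version of exactly this argument.
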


\begin{proof}
The direct part follows from Theorem~\ref{th:IDBC1FBIB}. And the converse part follows from Theorem~\ref{th:IDBC1FBOB}, because $\channel {y,z} x = \channely y x \channelz z x$ implies that $\channelztild z {x,y} = \channelz z x$, and hence it holds that for every PMF $P$ on $\setX$
\begin{IEEEeqnarray}{rCl}
\muti{P \times W_\ry}{\widetilde W_\rz} & = & \muti {P}{W_\rz}.
\end{IEEEeqnarray}
\end{proof}

\section{Summary}

The ID capacity region of the two-receiver BC is the set of rate-pairs for which, for some distribution on the channel input, each receiver's ID rate does not exceed the mutual information between the channel input and the output that it observes. The capacity region's interior is achieved by codes with deterministic encoders. The results hold under the average-error criterion, which requires that each receiver identify the message intended for it reliably in expectation over the uniform ID message intended for the other receiving terminal. Previously, identification via the BC was studied under the maximum-error criterion, which requires that each receiver identify the message intended for it reliably irrespective of the realization of the ID message intended for the other receiving terminal. Both criteria---average- and maximum-error---consistently extend Ahlswede and Dueck's identification-via-channels problem to the broadcast setting.

The average-error criterion is suitable whenever the receivers' ID messages are independent and uniform over their supports. As we have seen, our coding scheme can be adapted to solve for the capacity region of a more general scenario where the receivers' ID messages are not independent but have a common part. We also discussed extensions to the BC with more than two receivers and the two-receiver BC with one-sided feedback. In particular, we obtained the ID capacity region of the three-receiver BC whenever no receiver is ``much more capable'' than the other two and that of the two-receiver BC with one-sided feedback whenever the channel outputs are independent conditional on the channel input.

The question whether for some BCs the average-error ID capacity region can be strictly larger than the maximum-error ID capacity region remains open. We do know that the ID capacity regions differ when only deterministic encoders are allowed: under the average-error criterion deterministic encoders can achieve every rate-pair in the interior of the ID capacity region, but under the maximum-error criterion they cannot achieve any positive ID rates.

\begin{appendix}
%

\section{A Proof of Lemma~\ref{le:prEventEToZero}}\label{app:lePrEventEToZero}

We use the Union-of-Events bound to show that $\bigdistof {\{ \indexset m \}_{m \in \setM} \notin \setG_\mu}$ converges to zero. We begin with the events $| \indexset m | \leq ( 1 - \delta_n) e^{n \tilde R}$ and $| \indexset {m^\prime} | \geq ( 1 + \delta_n) e^{n \tilde R}$. For every $\nu \in \setM$ the binary random variables $\bigl\{ \ind {v \in \indexset \nu} \bigr\}_{v \in \setV}$ are IID, and
\begin{IEEEeqnarray}{rCl}
\BiggEx {}{\sum_{v \in \setV} \ind {v \in \indexset \nu}} & = & \sum_{v \in \setV} \distof {v \in \indexset \nu} = e^{ n \tilde R}.
\end{IEEEeqnarray}
Consequently, by the multiplicative Chernoff bounds in Proposition~\ref{pr:multChernoff},
\begin{IEEEeqnarray}{rCl}
\Bigdistof { |\indexset m| \leq (1 - \delta_n) \, e^{n \tilde R} } & = & \Biggdistof { \sum_{v \in \setV} \ind {v \in \indexset m} \leq (1 - \delta_n) \, e^{n \tilde R} } \\
& \leq & \exp \bigl\{ - \delta_n^2 \, e^{n \tilde R - \log 2} \bigr\} \\
& = & \exp \bigl\{ - e^{n (\tilde R - \mu) - \log 2} \bigr\}, \label{eq:missedIdImSuffLarge}
\end{IEEEeqnarray}
and
\begin{IEEEeqnarray}{rCl}
\Bigdistof { |\indexset {m^\prime}| \geq (1 + \delta_n) \, e^{n \tilde R} } & \leq & \exp \bigl\{ - e^{n (\tilde R - \mu) - \log 3} \bigr\}. \label{eq:ImSuffSmall}
\end{IEEEeqnarray}
As to $| \indexset {m,m^\prime} | \geq e^{n ( \tilde R - \mu / 2 ) + \log 2}$, note that for every $v \in \setV$ $$\ind {v \in \indexset {m,m^\prime}} = \ind {v \in \indexset m} \ind {v \in \indexset {m^\prime}},$$ where $\ind {v \in \indexset m}$ and $\ind {v \in \indexset {m^\prime}}$ are independent because $m \neq m^\prime$. Hence, the binary random variables $\bigl\{ \ind {v \in \indexset {m,m^\prime}} \bigr\}_{v \in \setV}$ are IID of mean
\begin{IEEEeqnarray}{C}
\BiggEx {}{\sum_{v \in \setV} \ind {v \in \indexset {m,m^\prime}}} = \sum_{v \in \setV} \distof {v \in \indexset m} \, \distof {v \in \indexset {m^\prime}} = e^{n ( 2 \tilde R - R_\poolre )}.
\end{IEEEeqnarray}
Fix some $\xi$ satisfying
\begin{equation}
R_\poolre - \tilde R - \mu \leq \xi \leq R_\poolre - \tilde R - \mu/2, \label{eq:IDCodeXi}
\end{equation}
and let
\begin{equation}
\kappa_n = e^{n \xi}. \label{eq:IDCodeKappan}
\end{equation}
Observe that
\begin{IEEEeqnarray}{rCl}
\Bigdistof { |\indexset {m,m^\prime}| \geq e^{n ( \tilde R - \mu / 2 ) + \log 2} } & \stackrel{(a)}\leq & \Bigdistof { |\indexset {m,m^\prime}| \geq e^{n ( 2 \tilde R - R_\poolre + \xi) + \log 2} } \\
& \stackrel{(b)}\leq & \Bigdistof { |\indexset {m,m^\prime}| \geq (1 + \kappa_n) \, e^{n ( 2 \tilde R - R_{\poolre} )} } \\
& = & \Biggdistof { \sum_{v \in \setV} \ind {v \in \indexset {m,m^\prime}} \geq (1 + \kappa_n) \, e^{n ( 2 \tilde R - R_\poolre ) }} \\
& \stackrel{(c)}\leq & \exp \bigl\{ - \kappa_n \, e^{n ( 2 \tilde R - R_\poolre ) - \log 3} \bigr\} \\
& \stackrel{(d)}\leq & \exp \bigl\{ - e^{n ( \tilde R - \mu ) - \log 3} \bigr\}, \label{eq:probWrongIDVmhatm}
\end{IEEEeqnarray}
where $(a)$ holds because \eqref{eq:IDCodeXi} implies that $\tilde R - R_\poolre + \xi \leq - \mu / 2$; $(b)$ holds by \eqref{eq:IDCodeKappan} and because \eqref{eq:IDCodeMu} implies that $\mu < R_\poolre - \tilde R$, and hence it follows from \eqref{eq:IDCodeXi} that $\xi > 0$; $(c)$ follows from the multiplicative Chernoff bound \eqref{eq:multChernDeltaGeq1} in Proposition~\ref{pr:multChernoff}; and $(d)$ holds by \eqref{eq:IDCodeKappan} and because \eqref{eq:IDCodeXi} implies that $-\mu \leq \tilde R - R_\poolre + \xi$. The Union-of-Events bound, \eqref{eq:missedIdImSuffLarge}, \eqref{eq:ImSuffSmall}, and \eqref{eq:probWrongIDVmhatm} imply that
\begin{IEEEeqnarray}{rCl}
\bigdistof {\{ \indexset m \}_{m \in \setM} \notin \setG_\mu} & \leq & \card \setM \Bigl( \exp \bigl\{ -e^{n ( \tilde R - \mu ) - \log 2} \bigr\} + \card \setM \exp \bigl\{ - e^{n ( \tilde R - \mu ) - \log 3} \bigr\} \Bigr) \\
& \stackrel{(a)}\rightarrow & 0 \, (n \rightarrow \infty), \label{eq:probEventESmall}
\end{IEEEeqnarray}
where $(a)$ holds because $|\setM| = \exp (\exp (n R))$ and by \eqref{eq:IDCodeMu}.

\section{A Proof of Proposition~\ref{pr:homogIDCodeApprox}}\label{app:homogIDCodeApprox}

Let $\setL = \bigl\{ 0, \ldots, \lfloor e^{n \delta} / 2 \rfloor \bigr\}$, and partition the collection of PMFs $\{ Q_m \}_{m \in \setM}$ into $|\setL|^{|\Gamma^{(n)}|}$ subsets so that two PMFs $Q_m$ and $Q_{m^\prime}$ are in the same subset iff for every $n$-type $P$ on $\setX^n$ there exists an $\ell \in \setL$ for which $$Q_m \bigl( \setT^{(n)}_P \bigr), \, Q_{m^\prime} \bigl( \setT^{(n)}_P \bigr) \in \bigl[ 2 \ell e^{-n \delta}, 2 (\ell + 1) e^{- n \delta} \bigr).$$ Pick a largest subset, say $\setS$, and note that $\setS$ satisfies \eqref{eq:homogIDCodeApproxSizeSetS}:
\begin{IEEEeqnarray}{rCl}
|\setS| & \geq & |\setM| / |\setL|^{|\Gamma^{(n)}|} \\
& \geq & |\setM| / \exp \bigl\{ (1 + n)^{|\setX|} \log ( 1 + e^{ n \delta } / 2 ) \bigr\} \\
& \geq & |\setM| \exp \bigl\{ - e^{ \log (1 + n) (1 + |\setX|) + \log \delta } \bigr\},
\end{IEEEeqnarray}
where the last inequality holds because $e^{n \delta} \geq 2$. Pick $m^\star \in \setS$, and for each $m \in \setS$ define the PMF 
\begin{IEEEeqnarray}{rCl}
Q_m^\prime (\vecx) & = & Q_{m^\star} \bigl( \setT^{(n)}_P \bigr) Q_m^{(n,P)} (\vecx), \quad P \in \Gamma^{(n)}, \, \vecx \in \setT^{(n)}_P.
\end{IEEEeqnarray}
Note that for every $m \in \setS$
\begin{IEEEeqnarray}{rCl}
Q_m^\prime \bigl( \setT^{(n)}_P \bigr) = Q_{m^\star} \bigl( \setT^{(n)}_P \bigr), \quad P \in \Gamma^{(n)},
\end{IEEEeqnarray}
and therefore
\begin{IEEEeqnarray}{rCl}
Q_m^\prime \bigl( X^n \in \{ \vecx \in \setX^n \colon \muti {P_\vecx}{W} \leq R - \epsilon \} \bigr) & = & Q_{m^\star} \bigl( X^n \in \{ \vecx \in \setX^n \colon \muti {P_\vecx}{W} \leq R - \epsilon \} \bigr).
\end{IEEEeqnarray}
Consequently, \eqref{eq:convReducedCodeSatisfiesTypeProperty} implies \eqref{eq:convHomogCodeSatisfiesTypeProperty}. For every $m \in \setS$ we obtain from $m^\star \in \setS$ that
\begin{IEEEeqnarray}{rCl}
Q_m \bigl( \setT^{(n)}_P \bigr) - 2 e^{ - n \delta} < Q_m^\prime \bigl( \setT^{(n)}_P \bigr) < Q_m \bigl( \setT^{(n)}_P \bigr) + 2 e^{ - n \delta}, \quad P \in \Gamma^{(n)}.
\end{IEEEeqnarray}
This implies that for every subset $\setD$ of $\setY^n$
\begin{IEEEeqnarray}{rCl}
\bigl| ( Q_m^\prime W^n ) (\setD) - ( Q_m W^n ) (\setD) \bigr| & \leq & d ( Q_m^\prime W^n, Q_m W^n ) \\
& \stackrel{(a)}\leq & d ( Q_m^\prime, Q_m) \\
& = & \frac{1}{2} \sum_{P \in \Gamma^{(n)}} \sum_{\vecx \in \setT^{(n)}_P} \bigl| Q_{m^\star} \bigl( \setT^{(n)}_P \bigr) - Q_m \bigl( \setT^{(n)}_P \bigr) \bigr| Q_m^{(n,P)} (\vecx) \\
& \leq & \frac{1}{2} \sum_{P \in \Gamma^{(n)}} 2 e^{ - n \delta} \\
& \leq & e^{ - n \delta + \log ( 1 + n ) |\setX|},
\end{IEEEeqnarray}
where $(a)$ follows from the Data-Processing inequality for the Total-Variation distance \cite[Lemma~1]{cannoneronservedio15}. Hence, $\{ Q_m^\prime, \setD_m \}_{m \in \setS}$ is a homogeneous $( n,\setS,\lambda_1^\prime,\lambda_2^\prime )$ ID code for $\channel y x$, where $\setS$ satisfies \eqref{eq:homogIDCodeApproxSizeSetS} and $\lambda_1^\prime, \, \lambda_2^\prime$ are defined in \eqref{bl:homogIDCodeApproxLambdaTilde}.

\section{A Proof of Lemma~\ref{le:rcodApprox}}\label{app:rcodApprox}

Let $g (\cdot)$ be the continuous function that maps every nonnegative real number $u$ to
\begin{IEEEeqnarray}{rCl}
g (u) & = & \begin{cases} - \sqrt{ 2 u } \log \! \sqrt{2 u} &\textnormal{if } u > 0, \\ 0 &\textnormal{if } u = 0, \end{cases} \label{eq:lemma1FunG}
\end{IEEEeqnarray}
and let the function $\rho (\cdot)$ map every nonnegative real number $u$ to
\begin{IEEEeqnarray}{rCl}
\rho (u) & = & 6 u + 2 g (3 u) + \sqrt {3 u} \log |\setY|. \label{eq:lemma1FunRho}
\end{IEEEeqnarray}
There exists a positive constant $\delta_0$, which depends only on $|\setY|$, satisfying $3 \delta_0 < 1/64$ and that $g (\cdot)$ is continuous and strictly increasing on the interval $[ 0, 3 \delta_0 ]$. Because $g (\cdot)$ is continuous and strictly increasing on $[ 0, 3 \delta_0 ]$, $\rho (\cdot)$ is continuous and strictly increasing on $[0,\delta_0]$. Fix $\delta \in ( 0, \delta_0 ]$ and $\epsilon \in ( 0, 1 )$. Let $\eta_0$ be the smallest positive integer satisfying that for all $n \geq \eta_0$
\begin{subequations}\label{bl:lemma1NProp1}
\begin{IEEEeqnarray}{rCl}
e^{ - 3 n \delta + \log (1 + n) |\setX| \, |\setY| } & < & e^{ - 3 n \delta / 2}, \label{eq:lemma1NProp11} \\
e^{ - 3 n \delta / 2 + \log 2} & \leq & e^{ - n \delta}, \label{eq:lemma1NProp12}
\end{IEEEeqnarray}
\end{subequations}
and
\begin{IEEEeqnarray}{l}
2 \exp \bigl\{ - \epsilon^2 e^{ 3 n \delta - \log 3 } + n \log |\setY| + \log (1 + n) |\setX| \, |\setY| \bigr\} + \exp \bigl\{ - 3 n \delta / 2 + 2 \log (1 + n) |\setX| \, |\setY| \bigr\} < 1. \label{eq:lemma1NProp2}
\end{IEEEeqnarray}
Fix a blocklength~$n \geq \eta_0$, an $n$-type $P$ on $\setX$, a PMF $Q$ on $\setT^{(n)}_P \subseteq \setX^n$, a nonnegative real number $R \geq I (P,W) + \rho (\delta)$, and $L = \lceil e^{n R} \rceil$. We next show that there exists an $L$-type $Q^\prime$ on $\setT^{(n)}_P$ that satisfies \eqref{bl:lemma1} for every subset $\setD$ of $\setY^n$. The proof is essentially that of \cite[Lemma~1]{hanverdu92}:

\subparagraph*{Canonical Decomposition into Equitype Channels:} For every transition law $V (y|x)$ from $\setX$ to $\setY$ and every $n$-tuple $\vecx \in \setX^n$ let $\setT_{P \times V}^{(n)} (\vecx)$ denote the set of $n$-tuples $\vecy \in \setY^n$ for which $(\vecx, \vecy)$ has empirical type $P \times V$, so $$\setT_{P \times V}^{(n)} (\vecx) = \bigl\{ \vecy \in \setY^n \colon (\vecx, \vecy) \in \setT^{(n)}_{P \times V} \bigr\}.$$ Note that $\bigl| \setT_{P \times V}^{(n)} (\vecx) \bigr|$ is the same for all $\vecx \in \setT_P^{(n)}$, and denote it $L^{(n)}_{V|P}$, so $$L^{(n)}_{V|P} = \bigl| \setT_{P \times V}^{(n)} (\vecx) \bigr|, \quad \vecx \in \setT^{(n)}_P.$$ Let $\Lambda^{(n)}_P$ denote the set of all the transition laws $V (y|x)$ from $\setX$ to $\setY$ satisfying $L^{(n)}_{V|P} > 0$ and $V (y|x) = \channel y x$ whenever $P (x) = 0$, so $$\Lambda^{(n)}_P = \bigl\{ V \in \mathscr V (\setY|\setX) \colon L^{(n)}_{V|P} > 0 \textnormal{ and } V (y|x) = \channel y x, \,\, \forall \,  ( x, y ) \in \setX \times \setY \textnormal { s.t. } P (x) = 0 \bigr\},$$ where $\mathscr V (\setY|\setX)$ denotes the set of all transition laws from $\setX$ to $\setY$. Define for every $V \in \Lambda^{(n)}_P$ the transition law
\begin{IEEEeqnarray}{rCl}
W^{(n)}_{V|P} (\vecy | \vecx) & = & \begin{cases} \frac{1}{L^{(n)}_{V|P}} &\textnormal{if } \vecx \in \setT^{(n)}_P \textnormal{ and } \vecy \in \setT^{(n)}_{P \times V} (\vecx), \\ 0 &\textnormal{otherwise}. \end{cases}
\end{IEEEeqnarray}
Following the terminology of \cite{hanverdu92} we call $W^{(n)}_{V|P} (\vecy | \vecx)$ an \emph{equitype channel}, because it connects inputs of type $P$ to outputs of type $P V$, and because all positive transition probabilities are the same. The equitype channels $W^{(n)}_{V|P}, \,\, V \in \Lambda^{(n)}_P$ are distinct, because each $V \in \Lambda^{(n)}_P$ satisfies $V (y|x) = \channel y x$ whenever $P (x) = 0$.

Since $W^n (\vecy|\vecx)$ depends on the input sequence $\vecx \in \setX^n$ and the output sequence $\vecy \in \setY^n$ only via the type of $\vecx$ and the conditional type of $\vecy$ given $\vecx$, we can define $$c^{(n)}_{V|P} = W^n \bigl(\setT^{(n)}_{P \times V} (\vecx) \bigl| \vecx \bigr), \quad \vecx \in \setT^{(n)}_P$$ to obtain for every $\vecx \in \setT^{(n)}_P$ and every $\vecy \in \setT^{(n)}_{P \times V} (\vecx)$
\begin{IEEEeqnarray}{rCl}
W^n (\vecy | \vecx) & = & \frac{c^{(n)}_{V|P}}{L^{(n)}_{V|P}} \\
& = & c^{(n)}_{V|P} W^{(n)}_{V|P} (\vecy | \vecx).
\end{IEEEeqnarray}
Note that
\begin{IEEEeqnarray}{rCl}
\sum_{V \in \Lambda^{(n)}_P} c^{(n)}_{V|P} & = & 1.
\end{IEEEeqnarray}
Since for every pair $(\vecx, \vecy) \in \setT^{(n)}_P \times \setY^n$ there exists exactly one $V \in \Lambda^{(n)}_P$ for which $\vecy \in \setT_{P \times V}^{(n)} (\vecx)$, we can write
\begin{IEEEeqnarray}{rCl}
W^n (\vecy | \vecx) & = & \sum_{V \in \Lambda^{(n)}_P} c^{(n)}_{V|P} W^{(n)}_{V|P} (\vecy | \vecx), \quad (\vecx, \vecy) \in \setT^{(n)}_P \times \setY^n. \label{eq:canonicalDecomp}
\end{IEEEeqnarray}
Following the terminology of \cite{hanverdu92} we call this the \emph{canonical decomposition into equitype channels} of the transition law $W^n (\vecy|\vecx)$ from $\setT^{(n)}_P$ to $\setY^n$. The canonical decomposition is useful, because it allows us to first focus attention on each equitype channel separately, and to then take the weighted average \eqref{eq:canonicalDecomp} of the resulting approximations.

\subparagraph*{Estimating the Probability of Inverse Images:} For every $V \in \Lambda^{(n)}_P$ the subset of $n$-tuples $\vecx \in \setT^{(n)}_P$ that are connected to a specific $\vecy \in \setY^n$ by the equitype channel $W^{(n)}_{V|P}$ is denoted $H^{(n)}_{P \times V} (\vecy)$, so $$H^{(n)}_{P \times V} (\vecy) = \bigl\{ \vecx \in \setT^{(n)}_P \colon W^{(n)}_{V|P} (\vecy | \vecx) > 0 \bigr\}.$$ Note that for every PMF $\tilde Q$ on $\setT^{(n)}_P$
\begin{IEEEeqnarray}{rCl}
\bigl( \tilde Q W^{(n)}_{V|P} \bigr) (\vecy) = \frac{\tilde Q \bigl( H^{(n)}_{P \times V} (\vecy) \bigr)}{L^{(n)}_{V|P}}. \label{eq:lemma1PrEst}
\end{IEEEeqnarray}

\begin{lemma}\cite[Lemma~2]{hanverdu92} \label{le:2HanVerdu92}
For every $V \in \Lambda^{(n)}_P$ and every $\delta^\prime > 0$ define $$G^{(n)}_{\delta^\prime} (V|P) = \Bigl\{ \vecy \in \setY^n \colon Q \bigl( H^{(n)}_{P \times V} (\vecy) \bigr) \geq e^{ - n ( \muti P V + \delta^\prime ) } \Bigr\}.$$ Then, for every $n \in \naturals$
\begin{IEEEeqnarray}{rCl}
\bigl( Q W^{(n)}_{V|P} \bigr) \bigl( G^{(n)}_{\delta^\prime} (V|P) \bigr) \geq 1 - e^{-n \delta^\prime + \log (1 + n) |\setX| \, |\setY| }.
\end{IEEEeqnarray}
\end{lemma}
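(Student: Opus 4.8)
The plan is to bound the $\bigl( Q W^{(n)}_{V|P} \bigr)$-probability of the complement of $G^{(n)}_{\delta^\prime} (V|P)$, i.e., of the set $\setY^n \setminus G^{(n)}_{\delta^\prime} (V|P) = \bigl\{ \vecy \in \setY^n \colon Q \bigl( H^{(n)}_{P \times V} (\vecy) \bigr) < e^{ - n ( \muti {P}{V} + \delta^\prime ) } \bigr\}$, and to show that it is at most $e^{ - n \delta^\prime + \log (1 + n) |\setX| \, |\setY| }$; passing to the complementary event then yields the claim.

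First I would apply \eqref{eq:lemma1PrEst} with $\tilde Q = Q$, which gives $\bigl( Q W^{(n)}_{V|P} \bigr) (\vecy) = Q \bigl( H^{(n)}_{P \times V} (\vecy) \bigr) / L^{(n)}_{V|P}$ for every $\vecy \in \setY^n$, so that every $\vecy$ in the complement of $G^{(n)}_{\delta^\prime} (V|P)$ contributes strictly less than $e^{ - n ( \muti {P}{V} + \delta^\prime ) } / L^{(n)}_{V|P}$ to the probability in question. Moreover, only those $\vecy$ with $\bigl( Q W^{(n)}_{V|P} \bigr) (\vecy) > 0$ matter, and for such $\vecy$ the backward fan $H^{(n)}_{P \times V} (\vecy)$ is nonempty, whence $\vecy$ has empirical type $P V$; there are at most $e^{ n \ent{P V} }$ such $n$-tuples. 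Summing over them thus yields $\bigl( Q W^{(n)}_{V|P} \bigr) \bigl( \setY^n \setminus G^{(n)}_{\delta^\prime} (V|P) \bigr) < e^{ n \ent{P V} } \, e^{ - n ( \muti {P}{V} + \delta^\prime ) } / L^{(n)}_{V|P}$.

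It then remains to lower-bound $L^{(n)}_{V|P}$ by the standard conditional-type-class estimate $L^{(n)}_{V|P} \geq (1 + n)^{ - |\setX| \, |\setY| } e^{ n \condent {V}{P} }$. Plugging this in and using the identity $\ent{P V} - \condent {V}{P} = \muti {P}{V}$, the two factors $e^{ n \muti {P}{V} }$ cancel, leaving $\bigl( Q W^{(n)}_{V|P} \bigr) \bigl( \setY^n \setminus G^{(n)}_{\delta^\prime} (V|P) \bigr) < (1 + n)^{ |\setX| \, |\setY| } e^{ - n \delta^\prime } = e^{ - n \delta^\prime + \log (1 + n) |\setX| \, |\setY| }$, as desired.

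I expect the only delicate points to be the method-of-types bookkeeping: checking that the support of $W^{(n)}_{V|P} ( \cdot | \vecx )$ for $\vecx \in \setT^{(n)}_P$ consists exactly of the $n$-tuples of type $P V$ (so that bounding the number of contributing $\vecy$ by $e^{ n \ent{P V} }$, and discarding the non-contributing ones, is legitimate) and tracking the polynomial prefactor in the conditional-type-class estimate consistently. These are routine facts about types; since the statement is \cite[Lemma~2]{hanverdu92} verbatim, I would either present the short computation above or simply defer to \cite[Lemma~2]{hanverdu92} for the details.
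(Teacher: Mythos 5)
Your proof is correct and is exactly the standard type-counting argument behind \cite[Lemma~2]{hanverdu92}: express $\bigl(QW^{(n)}_{V|P}\bigr)(\vecy)$ via \eqref{eq:lemma1PrEst}, note that only $\vecy \in \setT^{(n)}_{PV}$ contribute, bound their number by $e^{n\ent{PV}}$, and use $L^{(n)}_{V|P} \geq (1+n)^{-|\setX|\,|\setY|}e^{n\condent{V}{P}}$ together with $\ent{PV}-\condent{V}{P}=\muti{P}{V}$. The paper itself does not reprove this lemma but defers to the citation, and your computation faithfully reproduces that cited proof, so there is nothing further to reconcile.
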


\subparagraph*{Channel Clipping:} For every $\delta^\prime > 0$ denote $$\Lambda^{(n)}_{\delta^\prime} (P) = \bigl\{ V \in \Lambda^{(n)}_P \colon D (V || W | P) \leq \delta^\prime \bigr\},$$ and define the transition law $W^{(n)}_{P, \delta^\prime}$ from $\setT^{(n)}_P$ to $\setY^n$ by
\begin{IEEEeqnarray}{rCl}
W^{(n)}_{P, \delta^\prime} (\vecy|\vecx) = \sum_{V \in \Lambda^{(n)}_P} \bar c^{(n)}_{V|P} W^{(n)}_{V|P} (\vecy|\vecx), \label{eq:lemma1ClippedCh}
\end{IEEEeqnarray}
where
\begin{IEEEeqnarray}{rCl}
\bar c^{(n)}_{V|P} = \begin{cases} \frac{c^{(n)}_{V|P}}{\sum_{V^\prime \in \Lambda^{(n)}_{\delta^\prime} (P)} c^{(n)}_{V^\prime|P}} &\textnormal{if } V \in \Lambda^{(n)}_{\delta^\prime} (P), \\ 0 &\textnormal{otherwise.} \end{cases}
\end{IEEEeqnarray}
As the following lemma shows, $W^{(n)}_{P, \delta^\prime}$ closely approximates the transition law $W^n$ from $\setT^{(n)}_P$ to $\setY^n$:

\begin{lemma}\cite[Lemma~3]{hanverdu92} \label{le:3HanVerdu92}
For every $\delta^\prime > 0$, every $n$-tuple $\vecx \in \setT^{(n)}_P$, and every subset $\setD$ of $\setY^n$
\begin{subequations}\label{bl:lemma3}
\begin{IEEEeqnarray}{rCl}
W^n (\setD|\vecx) & \geq & \bigl( 1 - e^{ - n \delta^\prime + \log (1 + n) |\setX| \, |\setY|  } \bigr) W^{(n)}_{P, \delta^\prime} (\setD | \vecx), \\
W^n (\setD|\vecx) & \leq & W^{(n)}_{P, \delta^\prime} (\setD | \vecx) + e^{ - n \delta^\prime + \log (1 + n) |\setX| \, |\setY| }.
\end{IEEEeqnarray}
\end{subequations}
\end{lemma}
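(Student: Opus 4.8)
The plan is to measure how much probability the clipping operation discards and then to match $W^n$ against $W^{(n)}_{P,\delta^\prime}$ term by term in their canonical decompositions \eqref{eq:canonicalDecomp} and \eqref{eq:lemma1ClippedCh}. Write $\theta_n = \sum_{V \in \Lambda^{(n)}_{\delta^\prime} (P)} c^{(n)}_{V|P}$ for the total weight of the equitype channels that survive the clipping, so that $\bar c^{(n)}_{V|P} = c^{(n)}_{V|P} / \theta_n$ for $V \in \Lambda^{(n)}_{\delta^\prime}(P)$ and $\bar c^{(n)}_{V|P} = 0$ otherwise. The one substantive step is to show that
\[
1 - e^{ - n \delta^\prime + \log (1 + n) |\setX| \, |\setY| } \; \leq \; \theta_n \; \leq \; 1 .
\]
The upper bound is immediate from $\sum_{V \in \Lambda^{(n)}_P} c^{(n)}_{V|P} = 1$ and $c^{(n)}_{V|P} \geq 0$. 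For the lower bound I would first record the method-of-types estimate $c^{(n)}_{V|P} = W^n \bigl( \setT^{(n)}_{P \times V} (\vecx) \bigm| \vecx \bigr) \leq e^{ -n D (V || W | P) }$, which holds for every $\vecx \in \setT^{(n)}_P$: it follows from $L^{(n)}_{V|P} = \bigl| \setT^{(n)}_{P \times V} (\vecx) \bigr| \leq e^{ n \condent {V}{P} }$ together with the identity $W^n (\vecy | \vecx) = \exp \{ - n ( D (V || W | P) + \condent {V}{P} ) \}$ for each $\vecy \in \setT^{(n)}_{P \times V} (\vecx)$ --- the latter obtained by expanding $W^n (\vecy | \vecx) = \prod_{x,y} W (y|x)^{n P (x) V (y|x)}$ and rewriting $- \sum_{x,y} P (x) V (y|x) \log W (y|x) = D (V || W | P) + \condent {V}{P}$. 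Since every $V \in \Lambda^{(n)}_P \setminus \Lambda^{(n)}_{\delta^\prime} (P)$ has $D (V || W | P) > \delta^\prime$ and $\bigl| \Lambda^{(n)}_P \bigr| \leq (1 + n)^{|\setX| \, |\setY|}$, summing over such $V$ gives $\sum_{V \in \Lambda^{(n)}_P \setminus \Lambda^{(n)}_{\delta^\prime} (P)} c^{(n)}_{V|P} \leq e^{ - n \delta^\prime + \log (1 + n) |\setX| \, |\setY| }$, which is exactly the bound on $1 - \theta_n$.

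For the comparison, the definition of $\bar c^{(n)}_{V|P}$ gives $\theta_n \, W^{(n)}_{P,\delta^\prime} (\setD | \vecx) = \sum_{V \in \Lambda^{(n)}_{\delta^\prime} (P)} c^{(n)}_{V|P} \, W^{(n)}_{V|P} (\setD | \vecx)$ for every $\vecx \in \setT^{(n)}_P$ and every $\setD \subseteq \setY^n$. Evaluating the canonical decomposition \eqref{eq:canonicalDecomp} on $\setD$ and splitting the sum over $\Lambda^{(n)}_P$ into the part inside $\Lambda^{(n)}_{\delta^\prime}(P)$ and the part outside it, I obtain
\[
W^n (\setD | \vecx) = \theta_n \, W^{(n)}_{P,\delta^\prime} (\setD | \vecx) + \sum_{V \in \Lambda^{(n)}_P \setminus \Lambda^{(n)}_{\delta^\prime} (P)} c^{(n)}_{V|P} \, W^{(n)}_{V|P} (\setD | \vecx) .
\]
Since $W^{(n)}_{V|P} (\setD | \vecx) \in [0,1]$, the trailing sum lies in $\bigl[ 0, \, e^{ - n \delta^\prime + \log (1 + n) |\setX| \, |\setY| } \bigr]$. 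Dropping it and invoking $\theta_n \geq 1 - e^{ - n \delta^\prime + \log (1 + n) |\setX| \, |\setY| }$ and $W^{(n)}_{P,\delta^\prime} (\setD | \vecx) \geq 0$ yields the first inequality of \eqref{bl:lemma3}; keeping it, bounding it above by $e^{ - n \delta^\prime + \log (1 + n) |\setX| \, |\setY| }$, and invoking $\theta_n \leq 1$ yields the second.

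I do not anticipate a real obstacle: this is a routine method-of-types computation, and it is precisely \cite[Lemma~3]{hanverdu92}, reproduced here for completeness. The only points that warrant care are keeping the polynomial count $(1+n)^{|\setX| \, |\setY|}$ of conditional types consistent with the exponent $\log (1+n) |\setX| \, |\setY|$ in the statement, and verifying the one-line identity $W^n (\vecy | \vecx) = \exp \{ - n ( D (V || W | P) + \condent {V}{P} ) \}$ for $\vecy$ of conditional type $V$ given $\vecx \in \setT^{(n)}_P$ that underlies the estimate $c^{(n)}_{V|P} \leq e^{-n D (V || W | P)}$.
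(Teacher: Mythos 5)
Your argument is correct and is precisely the standard method-of-types proof of \cite[Lemma~3]{hanverdu92}; the paper itself only cites this lemma without reproving it, so there is nothing to diverge from. The only point worth a passing remark is that the clipped channel $W^{(n)}_{P,\delta^\prime}$ is well defined only when $\theta_n = \sum_{V \in \Lambda^{(n)}_{\delta^\prime}(P)} c^{(n)}_{V|P} > 0$ (for small $n$ the set $\Lambda^{(n)}_{\delta^\prime}(P)$ could in principle be empty, in which case the stated bounds are vacuous anyway since then $1 \leq e^{-n\delta^\prime + \log(1+n)|\setX|\,|\setY|}$); this caveat is inherited from the original lemma and does not affect your argument.
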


As we argue next, Lemma~\ref{le:3HanVerdu92} reduces the proof to verifying that, whenever $n \geq \eta_0$, there exists an $L$-type $Q^\prime$ on $\setT^{(n)}_P$ that satisfies for $\delta^\prime = 3 \delta$ and for every subset $\setD$ of $\setY^n$
\begin{subequations}\label{bl:simplificationLemma1}
\begin{IEEEeqnarray}{rCl}
\bigl( Q^\prime W^{(n)}_{P, \delta^\prime} \bigr) \bigl( \setD \bigr) & \leq & ( 1 + \epsilon ) \bigl( Q W^{(n)}_{P, \delta^\prime} \bigr) ( \setD ) + e^{ -n \delta^\prime / 2 }, \\
\bigl( Q^\prime W^{(n)}_{P, \delta^\prime} \bigr) \bigl( \setD \bigr) & \geq & ( 1 - \epsilon ) \bigl( Q W^{(n)}_{P, \delta^\prime} \bigr) ( \setD ) - e^{ -n \delta^\prime / 2 }.
\end{IEEEeqnarray}
\end{subequations}
Indeed, \eqref{bl:lemma3} and \eqref{bl:simplificationLemma1} imply that
\begin{IEEEeqnarray}{rCl}
( Q^\prime W^n ) ( \setD ) & \leq & \bigl( Q^\prime W^{(n)}_{P, \delta^\prime} \bigr) ( \setD ) + e^{ - n \delta^\prime + \log (1 + n) |\setX| \, |\setY| }\\
& \leq & ( 1 + \epsilon ) \bigl( Q W^{(n)}_{P, \delta^\prime} \bigr) ( \setD ) + e^{ -n \delta^\prime / 2 } + e^{ - n \delta^\prime + \log (1 + n) |\setX| \, |\setY| }\\
& \leq & \frac{1 + \epsilon}{1 - e^{ - n \delta^\prime + \log (1 + n) |\setX| \, |\setY| }} ( Q W^n ) ( \setD ) + e^{ -n \delta^\prime / 2 } + e^{ - n \delta^\prime + \log (1 + n) |\setX| \, |\setY| }, \label{eq:simplificationLemma1Imp1}
\end{IEEEeqnarray}
and
\begin{IEEEeqnarray}{rCl}
( Q^\prime W^n ) ( \setD ) & \geq & \bigl( 1 - e^{ - n \delta^\prime + \log (1 + n) |\setX| \, |\setY| } \bigr) \bigl( Q^\prime W^{(n)}_{P, \delta^\prime} \bigr) ( \setD ) \\
& \geq & ( 1 - \epsilon ) \bigl( 1 - e^{ - n \delta^\prime + \log (1 + n) |\setX| \, |\setY| } \bigr) \bigl( Q W^{(n)}_{P, \delta^\prime} \bigr) ( \setD ) - \bigl( 1 - e^{ - n \delta^\prime + \log (1 + n) |\setX| \, |\setY| } \bigr) e^{ -n \delta^\prime / 2 } \\
& \geq & ( 1 - \epsilon ) \bigl( 1 - e^{ - n \delta^\prime + \log (1 + n) |\setX| \, |\setY| } \bigr) ( Q W^n ) ( \setD ) \nonumber \\
& & - ( 1 - \epsilon ) \bigl( 1 - e^{ - n \delta^\prime + \log (1 + n) |\setX| \, |\setY| } \bigr) e^{ -n \delta^\prime + \log (1 + n) |\setX| \, |\setY| } - \bigl( 1 - e^{ - n \delta^\prime + \log (1 + n) |\setX| \, |\setY| } \bigr) e^{ -n \delta^\prime / 2 }. \label{eq:simplificationLemma1Imp2}
\end{IEEEeqnarray}
For $\delta^\prime = 3 \delta$ we obtain from \eqref{bl:lemma1NProp1} (which holds because $n \geq \eta_0$) that
\begin{IEEEeqnarray}{rCl}
e^{ - n \delta} & \geq & e^{ -n \delta^\prime + \log (1 + n) |\setX| \, |\setY| } + e^{ -n \delta^\prime / 2 },
\end{IEEEeqnarray}
and hence that
\begin{subequations}
\begin{IEEEeqnarray}{rCl}
e^{ - n \delta} & > & e^{ - n \delta^\prime + \log (1 + n) |\setX| \, |\setY| }, \\
e^{ - n \delta} & > & ( 1 - \epsilon ) \bigl( 1 - e^{ - n \delta^\prime  + \log (1 + n) |\setX| \, |\setY|} \bigr) e^{ -n \delta^\prime + \log (1 + n) |\setX| \, |\setY| } + \bigl( 1 - e^{ - n \delta^\prime + \log (1 + n) |\setX| \, |\setY| } \bigr) e^{ -n \delta^\prime / 2 }.
\end{IEEEeqnarray}
\end{subequations}
Consequently, \eqref{bl:lemma1} follows from \eqref{eq:simplificationLemma1Imp1} and \eqref{eq:simplificationLemma1Imp2}. In the following, we let $\delta^\prime = 3 \delta$ and conclude the proof by showing that there exists an $L$-type $Q^\prime$ on $\setT^{(n)}_P$ that satisfies \eqref{bl:simplificationLemma1} for every subset $\setD$ of $\setY^n$.

\subparagraph*{Required Fineness of Approximations for the Clipped Channel:} For every $V \in \Lambda^{(n)}_{\delta^\prime} (P)$ we can upper-bound $\muti P V$ in terms of $\muti P W$:

\begin{lemma}\cite[Lemma~4]{hanverdu92}
If $\sqrt{ D ( V || W | P ) } < 1/8$, then
\begin{IEEEeqnarray}{rCl}
\bigl| \muti P V - \muti P W \bigr| \leq 2 g \bigl( D ( V || W | P ) \bigr) + \sqrt{ D ( V || W | P ) } \log |\setY|,
\end{IEEEeqnarray}
where $g (\cdot)$ is defined in \eqref{eq:lemma1FunG}.
\end{lemma}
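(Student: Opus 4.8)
The statement is a uniform‑continuity estimate for mutual information in its channel argument, and the plan is to reduce it to the uniform continuity of Shannon entropy together with Pinsker's inequality. The starting point is the decomposition $\muti P V = \ent {PV} - \condent V P$, where $\condent V P = \sum_{x} P(x) \ent {V(\cdot|x)}$, and the same for $W$, which gives $\muti P V - \muti P W = \bigl[ \ent {PV} - \ent {PW} \bigr] - \bigl[ \condent V P - \condent W P \bigr]$. By the triangle inequality it then suffices to bound the output‑entropy gap $\bigl| \ent {PV} - \ent {PW} \bigr|$ and the conditional‑entropy gap $\sum_{x} P(x) \bigl| \ent {V(\cdot|x)} - \ent {W(\cdot|x)} \bigr|$ separately, and in each case the key tool is the Csisz\'ar--K\"orner entropy‑continuity estimate, namely that $\bigl| \ent {P_1} - \ent {P_2} \bigr| \leq \| P_1 - P_2 \|_1 \log | \setY | - \| P_1 - P_2 \|_1 \log \| P_1 - P_2 \|_1$ whenever $P_1, \, P_2$ are PMFs on $\setY$ with $\| P_1 - P_2 \|_1 \leq 1/2$.

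For the output‑entropy gap I would first invoke the data‑processing inequality for divergence, $D ( PV \| PW ) \leq D ( P \times V \| P \times W ) = D ( V \| W | P )$, and then Pinsker's inequality, to obtain $\| PV - PW \|_1 \leq \sqrt{ 2 D ( V \| W | P ) }$. The hypothesis $\sqrt{ D ( V \| W | P ) } < 1/8$ forces this $\ell_1$ distance below $1/e$ (hence below $1/2$), so the entropy‑continuity estimate applies and, since $t \mapsto -t \log t$ is increasing on $[0,1/e]$, it yields a bound of the form $\sqrt{ 2 D ( V \| W | P ) } \log | \setY | + g \bigl( D ( V \| W | P ) \bigr)$, recalling that $g(u) = - \sqrt{2u} \log \sqrt{2u}$.

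For the conditional‑entropy gap I would apply Pinsker symbol by symbol, writing $\theta_x := \| V(\cdot|x) - W(\cdot|x) \|_1 \leq \sqrt{ 2 D ( V(\cdot|x) \| W(\cdot|x) ) }$, and then aggregate by concavity. Concavity of $t \mapsto \sqrt t$ gives $\sum_{x} P(x) \sqrt{ D ( V(\cdot|x) \| W(\cdot|x) ) } \leq \sqrt{ D ( V \| W | P ) }$, and concavity of $t \mapsto -t \log t$ on $(0,\infty)$ gives $\sum_{x} P(x) ( - \theta_x \log \theta_x ) \leq - \bar\theta \log \bar\theta$ with $\bar\theta := \sum_{x} P(x) \theta_x \leq \sqrt{ 2 D ( V \| W | P ) } < 1/e$; monotonicity of $-t\log t$ on $[0,1/e]$ then replaces $\bar\theta$ by $\sqrt{ 2 D ( V \| W | P ) }$, so that $\sum_{x} P(x) ( - \theta_x \log \theta_x ) \leq g \bigl( D ( V \| W | P ) \bigr)$. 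Applying the per‑symbol entropy‑continuity estimate to the (by assumption small) distances $\theta_x$ — and bounding the contribution of those few $x$ with $\theta_x > 1/2$ by $\log | \setY |$ times $\sum_{x : \theta_x > 1/2} P(x) \leq 8\, D ( V \| W | P )$ (Markov plus Pinsker) — this controls the conditional‑entropy gap by a bound of the same shape, $\sqrt{ 2 D ( V \| W | P ) } \log | \setY | + g \bigl( D ( V \| W | P ) \bigr)$ plus a term linear in $\sqrt{ D ( V \| W | P ) }$ with a dimensionless constant. Since $\sqrt{ 2 D ( V \| W | P ) } < 1/e$, such linear remainders are dominated by $g \bigl( D ( V \| W | P ) \bigr)$, and collecting the two contributions leaves a bound of exactly the shape $2 g \bigl( D ( V \| W | P ) \bigr) + \sqrt{ D ( V \| W | P ) } \log | \setY |$.

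The conceptual content is small; the real work is the bookkeeping. One must use the entropy‑continuity estimate in a sufficiently sharp form and track carefully where the factors of $\sqrt 2$ coming from Pinsker land, so that the two halves combine into the claimed coefficients $2$ and $1$ rather than weaker ones. And one must check that the single hypothesis $\sqrt{ D ( V \| W | P ) } < 1/8$ — equivalently $D ( V \| W | P ) < 1/64$, whence $\sqrt{ 2 D ( V \| W | P ) } < 1/e < 1/2$ — simultaneously (i) keeps every $\ell_1$ distance in the range where the entropy‑continuity estimate is valid, (ii) keeps every argument of $-t\log t$ in the range where that function is increasing, and (iii) makes $g \bigl( D ( V \| W | P ) \bigr)$ dominate the linear‑in‑$\sqrt{ D ( V \| W | P ) }$ remainder terms; this is precisely why the threshold $1/8$ appears. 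Carrying this out in full is essentially the argument of \cite{hanverdu92}, to which we refer for the remaining details.
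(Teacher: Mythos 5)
The paper does not actually prove this lemma; it is quoted verbatim from \cite[Lemma~4]{hanverdu92} with no proof supplied, so your sketch can only be judged on its own terms. Your overall strategy --- split $\muti P V - \muti P W$ into the output-entropy gap and the conditional-entropy gap, control each via Pinsker plus the Csisz\'ar--K\"orner entropy-continuity estimate, and handle the symbols $x$ with $\| V(\cdot|x) - W(\cdot|x) \|_1 > 1/2$ by Markov's inequality --- is the natural one, and several of your observations are right; in particular $\sum_{x \colon \theta_x > 1/2} P(x) \leq 8 D$ and $8 D \leq \sqrt D$ exactly when $\sqrt D \leq 1/8$, which is indeed a plausible source of the threshold.

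The gap is in the final bookkeeping, and it is not removable within the argument as you describe it. Each of your two entropy gaps is bounded by $\sqrt{2D} \log |\setY| + g(D)$, where $D = D ( V || W | P )$, so their sum carries a $\log |\setY|$ coefficient of at least $2\sqrt{2D} = 2\sqrt 2 \, \sqrt D$, which already exceeds the claimed coefficient $\sqrt D$ --- and the bad-$x$ correction $8 D \log |\setY|$ only adds to it. Your proposed rescue, that ``such linear remainders are dominated by $g(D)$,'' fails precisely for the terms that matter: $g(D) = -\sqrt{2D} \log \sqrt{2D}$ has no dependence on $|\setY|$, so for a large output alphabet the excess $(2\sqrt 2 - 1)\sqrt D \log |\setY|$ exceeds any fixed multiple of $g(D)$, and the two $-\theta \log \theta$ pieces already exhaust the entire $2 g(D)$ budget. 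What your sketch actually establishes is a bound of the form $2 g(D) + (2\sqrt 2 + 1) \sqrt D \log |\setY|$, which is strictly weaker than the statement. To obtain the constants $2$ and $1$ you would need either a genuinely sharper route than triangle inequality plus per-gap Pinsker, or to settle for the weaker constants --- which, incidentally, would suffice for everything this paper uses the lemma for, since it enters only through the monotone function $\rho$ in Lemma~\ref{le:rcodApprox}; but it is not the inequality you set out to prove, and the assertion that the two halves ``combine into the claimed coefficients'' is unsupported.
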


For every $V \in \Lambda^{(n)}_{\delta^\prime} (P)$ the lemma, the fact that $\delta^\prime = 3 \delta$ satisfies $\sqrt{ \delta^\prime } < 1 / 8$, and the fact that $g (\cdot)$ is strictly increasing on $[0,\delta^\prime]$ imply that
\begin{IEEEeqnarray}{rCl}
\muti P V + 2 \delta^\prime \leq \muti P W + \rho (\delta). \label{eq:lemma1MutiPV}
\end{IEEEeqnarray}
Hence, if $\vecy \in G^{(n)}_{\delta^\prime} (V|P)$ for some $V \in \Lambda^{(n)}_{\delta^\prime} (P)$, then the definitions of $G^{(n)}_{\delta^\prime} (V|P)$ and $H^{(n)}_{P \times V} (\vecy)$ imply that
\begin{IEEEeqnarray}{rCl}
Q \bigl( H^{(n)}_{P \times V} (\vecy) \bigr) & \geq & e^{ - n ( \muti P W + \rho (\delta) - \delta^\prime ) }. \label{eq:lemma1MutiPVQHn}
\end{IEEEeqnarray}

\subparagraph*{The $L$-type Approximation $Q^\prime$:} We next show by random construction that the desired $L$-type $Q^\prime$ on $\setT^{(n)}_P$ exists. Draw $L$ $n$-tupes $\sim Q$ independently and place them in a pool $\pool$. Note that $\pool \subset \setT^{(n)}_P$. Index the $n$-tuples in the pool by the elements of a size-$L$ set $\setV$, e.g., $\{ 1, \ldots, L \}$, and denote by $\poolel v$ the $n$-tuple in $\pool$ that is indexed by $v \in \setV$. Define the $L$-type $\bm Q^\prime$ on $\setT^{(n)}_P$ by
\begin{IEEEeqnarray}{rCl}
\bm Q^\prime (\vecx) & = & \frac{1}{L} \sum_{v \in \setV} \ind {\vecx = \poolel v}, \quad \vecx \in \setX^n. \label{eq:lemma1TildeQ}
\end{IEEEeqnarray}

\begin{lemma}\cite[essentially Lemma~5]{hanverdu92}\label{le:lemma5}
With positive probability the $L$-type $\bm Q^\prime$ on $\setT^{(n)}_P$ satisfies for every $V \in \Lambda^{(n)}_{\delta^\prime} (P)$
\begin{subequations}\label{bl:lemma5}
\begin{IEEEeqnarray}{rCl}
\bm Q^\prime \bigl( H^{(n)}_{P \times V} (\vecy) \bigr) & < & (1 + \epsilon) Q \bigl( H^{(n)}_{P \times V} (\vecy) \bigr), \quad \vecy \in G^{(n)}_{\delta^\prime} (V|P), \label{bl:eqlemma51} \\
\bm Q^\prime \bigl( H^{(n)}_{P \times V} (\vecy) \bigr) & > & (1 - \epsilon) Q \bigl( H^{(n)}_{P \times V} (\vecy) \bigr), \quad \vecy \in G^{(n)}_{\delta^\prime} (V|P), \label{bl:eqlemma52}  \\
\bigl( \bm Q^\prime W^{(n)}_{V|P} \bigr) \bigl( \setY^n \setminus G^{(n)}_{\delta^\prime} ( V | P ) \bigr) & < & e^{ - n \delta^\prime / 2}. \label{bl:eqlemma53} 
\end{IEEEeqnarray}
\end{subequations}
\end{lemma}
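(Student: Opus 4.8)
The plan is to draw the pool at random as in \eqref{eq:lemma1TildeQ} and to show that each of the three inequalities in \eqref{bl:lemma5} fails---for \emph{some} $V \in \Lambda^{(n)}_{\delta^\prime} (P)$---only with small probability, so that a union bound produces a pool realization for which all three hold simultaneously. The starting observation is that, since $\bigl\{ \poolel v \bigr\}_{v \in \setV}$ are IID $\sim Q$ and lie in $\setT^{(n)}_P$, for every $V$ and every $\vecy \in \setY^n$ the quantity $\bm Q^\prime \bigl( H^{(n)}_{P \times V} (\vecy) \bigr) = \tfrac{1}{L} \sum_{v \in \setV} \ind {\poolel v \in H^{(n)}_{P \times V} (\vecy)}$ is an empirical average of $L$ IID $\{ 0,1 \}$-valued random variables of mean $Q \bigl( H^{(n)}_{P \times V} (\vecy) \bigr)$, so the concentration inequalities of Section~\ref{sec:intro} apply directly.

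For \eqref{bl:eqlemma51} and \eqref{bl:eqlemma52} I would fix $V \in \Lambda^{(n)}_{\delta^\prime} (P)$ and $\vecy \in G^{(n)}_{\delta^\prime} (V|P)$ and note that the expected count $\mu = L \, Q \bigl( H^{(n)}_{P \times V} (\vecy) \bigr)$ is \emph{exponentially large in $n$}: by \eqref{eq:lemma1MutiPVQHn}, $Q \bigl( H^{(n)}_{P \times V} (\vecy) \bigr) \geq e^{-n ( \muti P W + \rho (\delta) - \delta^\prime )}$, while $L = \lceil e^{nR} \rceil \geq e^{n ( \muti P W + \rho (\delta) )}$ because $R \geq \muti P W + \rho (\delta)$; hence $\mu \geq e^{n \delta^\prime} = e^{3 n \delta}$. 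The multiplicative Chernoff bounds \eqref{eq:multChernDeltaSm1Sm}--\eqref{eq:multChernDeltaSm1La} of Proposition~\ref{pr:multChernoff} then bound the probability that $\bm Q^\prime \bigl( H^{(n)}_{P \times V} (\vecy) \bigr)$ deviates from $Q \bigl( H^{(n)}_{P \times V} (\vecy) \bigr)$ by a multiplicative factor $\epsilon$ by $2 e^{-\epsilon^2 \mu / 3} \leq 2 \exp \{ - \epsilon^2 e^{3 n \delta - \log 3} \}$. A union bound over the $\leq (1 + n)^{|\setX| \, |\setY|}$ channels $V \in \Lambda^{(n)}_P$ and the $|\setY^n| = e^{n \log |\setY|}$ output sequences then produces exactly the first summand in \eqref{eq:lemma1NProp2}.

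For \eqref{bl:eqlemma53} I would, for fixed $V \in \Lambda^{(n)}_{\delta^\prime} (P)$, use \eqref{eq:lemma1PrEst} together with the identity $\sum_{\vecy \in \setY^n} \ind {\poolel v \in H^{(n)}_{P \times V} (\vecy)} = L^{(n)}_{V|P}$ (valid because $\poolel v \in \setT^{(n)}_P$) to rewrite $\bigl( \bm Q^\prime W^{(n)}_{V|P} \bigr) \bigl( \setY^n \setminus G^{(n)}_{\delta^\prime} (V|P) \bigr)$ as the average over $v \in \setV$ of $W^{(n)}_{V|P} \bigl( \setY^n \setminus G^{(n)}_{\delta^\prime} (V|P) \bigl| \poolel v \bigr)$, a $[0,1]$-valued random variable whose expectation equals $\bigl( Q W^{(n)}_{V|P} \bigr) \bigl( \setY^n \setminus G^{(n)}_{\delta^\prime} (V|P) \bigr) \leq e^{-n \delta^\prime + \log (1+n) |\setX| \, |\setY|}$ by Lemma~\ref{le:2HanVerdu92}. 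Here a crude Markov bound at threshold $e^{-n \delta^\prime / 2}$ already suffices---there is no union over $\vecy$ to beat---giving failure probability $\leq e^{-n \delta^\prime / 2 + \log (1+n)|\setX| \, |\setY|}$ for each $V$, hence $\leq \exp \{ - 3 n \delta / 2 + 2 \log (1+n) |\setX| \, |\setY| \}$ after a union over $V$, which is the second summand in \eqref{eq:lemma1NProp2}. Since $n \geq \eta_0$ makes \eqref{eq:lemma1NProp2} hold, the two estimates add up to strictly less than one, so a pool for which all of \eqref{bl:eqlemma51}--\eqref{bl:eqlemma53} hold simultaneously exists.

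The hard part is the second step: once we union-bound over the exponentially many $\vecy \in \setY^n$, the per-pair failure probability must be \emph{doubly}-exponentially small, which forces $\mu = L \, Q \bigl( H^{(n)}_{P \times V} (\vecy) \bigr)$ to be exponentially large on the good set. This is a tight race between the rate margin $\rho (\delta)$ engineered into $L$ and the loss $\rho (\delta) - \delta^\prime$ suffered in the passage \eqref{eq:lemma1MutiPVQHn} from membership in $G^{(n)}_{\delta^\prime} (V|P)$ to the lower bound on $Q \bigl( H^{(n)}_{P \times V} (\vecy) \bigr)$; checking that this margin is strictly positive---here $\rho (\delta) = 6 \delta + 2 g (3 \delta) + \sqrt{3 \delta} \log |\setY| \geq 6 \delta > 3 \delta = \delta^\prime$, using $g \geq 0$ on $[0, 3 \delta_0]$---is the delicate point, and the remainder is routine bookkeeping in the style of \cite[Lemma~5]{hanverdu92}.
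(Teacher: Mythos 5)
Your proposal is correct and follows essentially the same route as the paper's proof: the multiplicative Chernoff bounds of Proposition~\ref{pr:multChernoff} applied to the IID indicators $\ind{\poolel v \in H^{(n)}_{P \times V} (\vecy)}$ (whose expected count is at least $e^{n \delta^\prime}$ by \eqref{eq:lemma1MutiPVQHn} and $L \geq e^{n (\muti P W + \rho (\delta))}$) handle \eqref{bl:eqlemma51}--\eqref{bl:eqlemma52} after a union bound over $\vecy$ and $V$, while Lemma~\ref{le:2HanVerdu92} plus Markov's inequality at threshold $e^{-n\delta^\prime/2}$ handles \eqref{bl:eqlemma53}, and the total failure probability is below one by \eqref{eq:lemma1NProp2}. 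Your closing remark correctly identifies that the $6u$ term built into $\rho$ is what funds the $e^{n\delta^\prime}$ lower bound on the expected count, which is exactly the role it plays in the paper via \eqref{eq:lemma1MutiPV}.
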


\begin{proof}
We use the Union-of-Events bound to show that with positive probability $\bm Q^\prime$ satisfies \eqref{bl:lemma5} for every $V \in \Lambda^{(n)}_{\delta^\prime} (P)$. We begin with \eqref{bl:eqlemma51} and \eqref{bl:eqlemma52}. For every $V \in \Lambda^{(n)}_{\delta^\prime} (P)$ and $\vecy \in G^{(n)}_{\delta^\prime} (V | P)$
\begin{IEEEeqnarray}{l}
\Bigdistof { \bm Q^\prime \bigl( H^{(n)}_{P \times V} (\vecy) \bigr) \geq (1 + \epsilon) Q \bigl( H^{(n)}_{P \times V} (\vecy) \bigr) } \\
\quad \stackrel{(a)}= \Biggdistof { \frac{1}{L} \sum_{v \in \setV} \ind { \poolel v \in H^{(n)}_{P \times V} (\vecy) } \geq (1 + \epsilon) Q \bigl( H^{(n)}_{P \times V} (\vecy) \bigr) } \\
\quad \stackrel{(b)}\leq \exp \bigl\{ - \epsilon^2 Q \bigl( H^{(n)}_{P \times V} (\vecy) \bigr) L / 3 \bigr\} \\
\quad \stackrel{(c)}\leq \exp \bigl\{ - \epsilon^2 e^{ n \delta^\prime - \log 3 } \bigr\},
\end{IEEEeqnarray}
where $(a)$ is due to \eqref{eq:lemma1TildeQ}; $(b)$ follows from the multiplicative Chernoff bound \eqref{eq:multChernDeltaSm1La} in Proposition~\ref{pr:multChernoff}; and $(c)$ holds by \eqref{eq:lemma1MutiPVQHn} and because $L \geq e^{n ( \muti P W + \rho (\delta) ) }$. By the Union-of-Events bound and because $\bigl| G^{(n)}_{\delta^\prime} (V|P) \bigr| \leq |\setY|^n$
\begin{IEEEeqnarray}{l}
\Bigdistof { \exists \, \vecy \in G^{(n)}_{\delta^\prime} ( V | P ) \colon \bm Q^\prime \bigl( H^{(n)}_{P \times V} (\vecy) \bigr) \geq (1 + \epsilon) Q \bigl( H^{(n)}_{P \times V} (\vecy) \bigr) } \\
\leq \exp \bigl\{ - \epsilon^2 e^{ n \delta^\prime - \log 3 } + n \log \! |\setY| \bigr\}. \label{eq:lemma51}
\end{IEEEeqnarray}
Similarly, the multiplicative Chernoff bound \eqref{eq:multChernDeltaSm1Sm} in Proposition~\ref{pr:multChernoff} and the Union-of-Events bound imply that for every $V \in \Lambda^{(n)}_{\delta^\prime} (P)$
\begin{IEEEeqnarray}{l}
\Bigdistof { \exists \, \vecy \in G^{(n)}_{\delta^\prime} (V | P) \colon \bm Q^\prime \bigl( H^{(n)}_{P \times V} (\vecy) \bigr) \leq (1 - \epsilon) Q \bigl( H^{(n)}_{P \times V} (\vecy) \bigr) } \\
\quad \leq \exp \bigl\{ - \epsilon^2 e^{ n \delta^\prime - \log 2 } + n \log \! |\setY| \bigr\}. \label{eq:lemma52}
\end{IEEEeqnarray}
As to \eqref{bl:eqlemma53}, for every $V \in \Lambda^{(n)}_{\delta^\prime} (P)$
\begin{IEEEeqnarray}{l}
\BigEx {}{ \bigl( \bm Q^\prime W^{(n)}_{V | P} \bigr) \bigl( \setY^n \setminus G^{(n)}_{\delta^\prime} (V | P) \bigr)} \\
\quad = \sum_{\vecx \in \setX^n} \sum_{\vecy \in \setY^n \setminus G^{(n)}_{\delta^\prime} (V | P)} \frac{1}{L} \sum_{v \in \setV} \bigEx {}{ \ind {\vecx = \poolel v} } W^{(n)}_{V|P} (\vecy | \vecx) \\
\quad = \sum_{\vecx \in \setX^n} \sum_{\vecy \in \setY^n \setminus G^{(n)}_{\delta^\prime} (V | P)} \frac{1}{L} \sum_{v \in \setV} Q (\vecx) W^{(n)}_{V|P} (\vecy | \vecx) \\
\quad = \bigl( Q W^{(n)}_{V | P} \bigr) \bigl( \setY^n \setminus G^{(n)}_{\delta^\prime} (V | P) \bigr) \\
\quad \leq e^{-n \delta^\prime + \log (1 + n) |\setX| \, |\setY| },
\end{IEEEeqnarray}
where the last inequality is due to Lemma~\ref{le:2HanVerdu92}. Hence, Markov's inequality implies that
\begin{IEEEeqnarray}{rCl}
\Bigdistof { \bigl( \bm Q^\prime W^{(n)}_{V|P} \bigr) \bigl( \setY^n \setminus G^{(n)}_{\delta^\prime} (V | P) \bigr) \geq e^{-n \delta^\prime / 2}} & \leq & e^{-n \delta^\prime / 2 + \log (1 + n) |\setX| \, |\setY| }. \label{eq:lemma53}
\end{IEEEeqnarray}
Because $\bigl| \Lambda^{(n)}_{\delta^\prime} (P) \bigr| \leq (1 + n)^{|\setX| \, |\setY|}$ and by the Union-of-Events bound, \eqref{eq:lemma51}, \eqref{eq:lemma52}, and \eqref{eq:lemma53}, the probability that there exists a $V \in \Lambda^{(n)}_{\delta^\prime} (P)$ for which $\bm Q^\prime$ does not satisfy \eqref{bl:lemma5} is upper-bounded by
\begin{IEEEeqnarray}{l}
\exp \bigl\{ - \epsilon^2 e^{ n \delta^\prime - \log 3 } + n \log |\setY| + \log (1 + n) |\setX| \, |\setY| \bigr\} + \exp \bigl\{ - \epsilon^2 e^{ n \delta^\prime - \log 2 } + n \log |\setY| + \log (1 + n) |\setX| \, |\setY| \bigr\} \nonumber \\
\quad + \exp \bigl\{-n \delta^\prime / 2 + 2 \log (1 + n) |\setX| \, |\setY| \bigr\} < 1,
\end{IEEEeqnarray}
where the inequality holds because $\delta^\prime = 3 \delta$, by \eqref{eq:lemma1NProp2}, and because $n \geq \eta_0$.
\end{proof}

Fix a realization $Q^\prime$ of the random $L$-type $\bm Q^\prime$ on $\setT^{(n)}_P$ that satisfies \eqref{bl:lemma5} for all $V \in \Lambda^{(n)}_{\delta^\prime} (P)$. (By Lemma~\ref{le:lemma5} such a realization must exist.)

\subparagraph*{Approximation of $Q W^{(n)}_{P,\delta^\prime}$ by $Q^\prime W^{(n)}_{P,\delta^\prime}$:} It remains to show that the $L$-type $Q^\prime$ on $\setT^{(n)}_P$ satisfies \eqref{bl:simplificationLemma1}. For every $V \in \Lambda^{(n)}_{\delta^\prime} (P)$ and $\vecy \in G^{(n)}_{\delta^\prime} (V | P)$
\begin{IEEEeqnarray}{rCl}
\bigl( Q^\prime W^{(n)}_{V|P} \bigr) ( \vecy ) & \stackrel{(a)}= & \frac{ Q^\prime \bigl( H^{(n)}_{P \times V} (\vecy) \bigr) }{L^{(n)}_{V | P}} \\
& \stackrel{(b)}< & (1 + \epsilon) \frac{ Q \bigl( H^{(n)}_{P \times V} (\vecy) \bigr) }{L^{(n)}_{V | P}} \\
& \stackrel{(c)}= & (1 + \epsilon) \bigl( Q W^{(n)}_{V | P} \bigr) ( \vecy ), \label{eq:lemma1Approx1}
\end{IEEEeqnarray}
where $(a)$ and $(c)$ follow from \eqref{eq:lemma1PrEst}; and where $(b)$ holds because $Q^\prime$ satisfies \eqref{bl:lemma5}. For every $V \in \Lambda^{(n)}_{\delta^\prime} (P)$ and subset $\setD$ of $\setY^n$ we thus have
\begin{IEEEeqnarray}{rCl}
\bigl( Q^\prime W^{(n)}_{V | P} \bigr) ( \setD ) & \stackrel{(a)}= & \bigl( Q^\prime W^{(n)}_{V | P} \bigr) \bigl( \setD \cap G^{(n)}_{\delta^\prime} (V | P) \bigr) + \bigl( Q^\prime W^{(n)}_{V | P} ) \bigl( \setD \cap \bigl( \setY^n \setminus G^{(n)}_{\delta^\prime} (V | P) \bigr) \bigr) \\
& \stackrel{(b)}\leq & \bigl( Q^\prime W^{(n)}_{V|P} \bigr) \bigl( \setD \cap G^{(n)}_{\delta^\prime} (V | P) \bigr) + \bigl( Q^\prime W^{(n)}_{V | P} \bigr) \bigl( \setY^n \setminus G^{(n)}_{\delta^\prime} (V | P) \bigr) \\
& \stackrel{(c)}\leq  & (1 + \epsilon) \bigl( Q W^{(n)}_{V | P} \bigr) \bigl( \setD \cap G^{(n)}_{\delta^\prime} (V | P) \bigr) + e^{ - n \delta^\prime / 2} \\
&\stackrel{(d)}\leq & (1 + \epsilon) \bigl( Q W^{(n)}_{V | P} \bigr) ( \setD ) + e^{ - n \delta^\prime / 2}, \label{eq:lemma1Approx2}
\end{IEEEeqnarray}
where $(a)$ follows from the law of total probability; $(b)$ and $(d)$ are due to the monotonicity of probability; and $(c)$ holds by \eqref{eq:lemma1Approx1} and because $Q^\prime$ satisfies \eqref{bl:lemma5}. Similarly,
\begin{IEEEeqnarray}{rCl}
(1 - \epsilon) \bigl( Q W^{(n)}_{V|P} \bigr) ( \setD ) & \stackrel{(a)}\leq & \bigl( Q^\prime W^{(n)}_{V | P} \bigr) ( \setD ) + (1 - \epsilon) \bigl( Q W^{(n)}_{V | P} \bigr) \bigl(  \setY^n \setminus G^{(n)}_{\delta^\prime} (V|P) \bigr) \\
& \stackrel{(b)}\leq & \bigl( Q^\prime W^{(n)}_{V|P} \bigr) ( \setD ) + (1 - \epsilon) e^{-n \delta^\prime + \log (1 + n) |\setX| |\setY| } \\
& \stackrel{(c)}\leq & \bigl( Q^\prime W^{(n)}_{V|P} \bigr) ( \setD ) + e^{ - n \delta^\prime / 2}, \label{eq:lemma1Approx3}
\end{IEEEeqnarray}
where $(a)$ follows from the law of total probability, the monotonicity of probability, and the fact that $Q^\prime$ satisfies \eqref{bl:lemma5}; $(b)$ is due to Lemma~\ref{le:2HanVerdu92}; and $(c)$ holds because $\delta^\prime = 3 \delta$ and by \eqref{eq:lemma1NProp11} (which holds because $n \geq \eta_0$). On account of \eqref{eq:lemma1ClippedCh}, we can now conclude the proof of \eqref{bl:simplificationLemma1} by computing the weighted average of \eqref{eq:lemma1Approx2} and \eqref{eq:lemma1Approx3} w.r.t.\ $V \in \Lambda^{(n)}_{\delta^\prime} (P)$ and with the weights being $\bigl\{ \bar c^{(n)}_{V|P} \bigr\}_{ V \in \Lambda^{(n)}_{\delta^\prime} (P) }$.

\section{A Proof of Theorem~\ref{th:obBC3Rec}}\label{app:obBC3Rec}

We prove the following strong converse:

\begin{claim}\label{cl:toShowBC3RecConv}
For every rate-triple $( R_1, R_2, R_3 )$, every positive constants $$\lambda_1^{(k)}, \, \lambda_2^{(k)}, \quad k \in \{ 1,2,3 \}$$ satisfying
\begin{IEEEeqnarray}{rCl}
\sum_{k = 1}^3 \Bigl( \lambda_1^{(k)} + \lambda_2^{(k)} \Bigr) < 1, \label{eq:sumMissWrongBC3RecSm1}
\end{IEEEeqnarray}
and every $\epsilon > 0$ there exists some $\eta_0 \in \naturals$ so that, for every blocklength~$n \geq \eta_0$, every size-$\exp (\exp (n R_1))$ set $\setM_1$ of possible ID messages for Receiver~$1$, every size-$\exp (\exp (n R_2))$ set $\setM_2$ of possible ID messages for Receiver~$2$, and every size-$\exp (\exp (n R_3))$ set $\setM_3$ of possible ID messages for Receiver~$3$, a necessary condition for an $\bigl( n, \{ \setM_k, \lambda_1^{(k)}, \lambda_2^{(k)} \}_{k \in \{ 1,2,3 \}} \bigr)$ ID code for the BC $\channel {y_1,y_2,y_3} x$ to exist is that for some PMF $P$ on $\setX$
\begin{IEEEeqnarray}{rCl}
R_k &< \muti {P}{W_k} + \epsilon, \, \forall \, k \in \{ 1,2,3 \}. \label{eq:converseBC3Rec}
\end{IEEEeqnarray}
\end{claim}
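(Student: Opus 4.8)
The plan is to follow the proof of the two-receiver strong converse (Claim~\ref{cl:toShowIDBCConv} in Section~\ref{sec:CVIDBC}) essentially verbatim, replacing the two-way union bound there by a three-way one. The key point is that, by definition, an $\bigl( n, \{ \setM_k, \lambda^{(k)}_1, \lambda^{(k)}_2 \}_{k \in \{ 1,2,3 \}} \bigr)$ ID code for the BC $\channel {y_1,y_2,y_3} x$ is precisely a collection of PMFs and ID sets whose three induced mixture codes $\bigl\{ Q_{m_k}, \setD_{m_k} \bigr\}_{m_k \in \setM_k}$, $k \in \{ 1,2,3 \}$, are each a single-user $\bigl( n, \setM_k, \lambda^{(k)}_1, \lambda^{(k)}_2 \bigr)$ ID code for the corresponding marginal channel $W_k (y_k|x)$. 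Hence Lemma~\ref{le:avgDistWeightOnTypes}, which is a single-user statement, applies to each of these three marginal codes.

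First I would fix positive constants $\kappa^{(1)}, \kappa^{(2)}, \kappa^{(3)}$ with $\lambda^{(k)}_1 + \lambda^{(k)}_2 < \kappa^{(k)}$ for every $k$ and $\kappa^{(1)} + \kappa^{(2)} + \kappa^{(3)} < 1$; this is possible by \eqref{eq:sumMissWrongBC3RecSm1}. Applying Lemma~\ref{le:avgDistWeightOnTypes} to the marginal channel $W_k$, with ID rate $R_k$ and constants $\lambda^{(k)}_1, \lambda^{(k)}_2, \epsilon, \kappa^{(k)}$, yields some $\eta^{(k)}_0 \in \naturals$ such that, for every $n \geq \eta^{(k)}_0$ and every choice of message sets, the mixture PMF
\[
Q = \frac{1}{\card {\setM_1} \, \card {\setM_2} \, \card {\setM_3}} \sum_{(m_1,m_2,m_3)} Q_{m_1,m_2,m_3} = \frac{1}{\card {\setM_k}} \sum_{m_k \in \setM_k} Q_{m_k}
\]
must satisfy
\[
Q \bigl( X^n \in \{ \vecx \in \setX^n \colon \muti {P_\vecx}{W_k} > R_k - \epsilon \} \bigr) > 1 - \kappa^{(k)} - \exp \bigl\{ e^{n ( R_k - \epsilon / 2 )} \bigr\} / \exp \bigl\{ e^{n R_k} \bigr\}.
\]
The observation that makes the intersection possible is that the \emph{same} $Q$ serves all three marginals, since $Q = \card {\setM_k}^{-1} \sum_{m_k} Q_{m_k}$ for each $k$.

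Next I would take the union bound over $k \in \{ 1,2,3 \}$ to obtain, for $n \geq \max_k \eta^{(k)}_0$,
\[
Q \bigl( X^n \in \{ \vecx \in \setX^n \colon \muti {P_\vecx}{W_k} > R_k - \epsilon \textnormal{ for all } k \} \bigr) \geq 1 - \sum_{k=1}^3 \kappa^{(k)} - \sum_{k=1}^3 \exp \bigl\{ e^{n ( R_k - \epsilon / 2 )} \bigr\} / \exp \bigl\{ e^{n R_k} \bigr\}.
\]
Since $\sum_k \kappa^{(k)} < 1$ and $\epsilon > 0$, the right-hand side is positive for all sufficiently large $n$; I would let $\eta_0$ be the smallest such $n$ that is also at least $\max_k \eta^{(k)}_0$. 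For $n \geq \eta_0$ there must then exist some $\vecx \in \setX^n$ with $\muti {P_\vecx}{W_k} > R_k - \epsilon$ for every $k \in \{ 1,2,3 \}$, and taking $P = P_\vecx$ establishes \eqref{eq:converseBC3Rec}, which is Claim~\ref{cl:toShowBC3RecConv}. Theorem~\ref{th:obBC3Rec} then follows from the claim by the usual limiting argument: for any achievable rate-triple and any $\epsilon > 0$ one gets a PMF $P_\epsilon$ with $R_k < \muti {P_\epsilon}{W_k} + \epsilon$ for all $k$, and a convergent subsequence $P_{\epsilon_j} \to P^\star$ in the compact set $\mathscr P (\setX)$ together with the continuity of $P \mapsto \muti {P}{W_k}$ gives $R_k \leq \muti {P^\star}{W_k}$, i.e.\ membership in $\setR_{3\textnormal{-ob}}$.

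I do not expect any genuine obstacle: the argument is a routine adaptation of the two-receiver converse, and its only non-mechanical ingredient, Lemma~\ref{le:avgDistWeightOnTypes}, is invoked directly on each marginal channel. The single point worth stating explicitly (as above) is that one fixed mixture distribution $Q$ simultaneously witnesses the ``good-type'' property for all three marginals, so that the three individually overwhelming events can be intersected; everything else is bookkeeping of exponentially and doubly-exponentially small terms.
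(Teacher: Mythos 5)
Your proof is correct and follows the same route as the paper: fix $\kappa^{(k)}$ with $\lambda_1^{(k)}+\lambda_2^{(k)}<\kappa^{(k)}$ and $\sum_k\kappa^{(k)}<1$, apply Lemma~\ref{le:avgDistWeightOnTypes} to each marginal single-user code while exploiting that the same mixture PMF $Q$ serves all three marginals, intersect by the Union-of-Events bound, and take $n$ large enough that the remaining probability is positive. The only cosmetic difference is that you name separate thresholds $\eta_0^{(k)}$ and take their maximum, where the paper folds this into a single $\eta_0'$; the closing remark about passing from the $\epsilon$-approximate statement to Theorem~\ref{th:obBC3Rec} via compactness and continuity is a standard appendage not included in the paper's proof of the claim itself.
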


\begin{proof}
The proof is similar to that of Claim~\ref{cl:toShowIDBCConv}. Fix $\kappa^{(1)}, \, \kappa^{(2)}, \, \kappa^{(3)} > 0$ that satisfy
\begin{subequations}
\begin{IEEEeqnarray}{rCl}
\lambda_1^{(k)} + \lambda_2^{(k)} & < & \kappa^{(k)}, \, \forall \, k \in \{ 1,2,3 \}, \\
\sum_{k = 1}^3 \kappa^{(k)} & < & 1.
\end{IEEEeqnarray}
\end{subequations}
(This is possible because of \eqref{eq:sumMissWrongBC3RecSm1}.) By Lemma~\ref{le:avgDistWeightOnTypes} there must exist some $\eta_0^\prime \in \naturals$ so that, for every blocklength~$n \geq \eta_0^\prime$, every size-$\exp (\exp (n R_1))$ set $\setM_1$ of possible ID messages for Receiver~$1$, every size-$\exp (\exp (n R_2))$ set $\setM_2$ of possible ID messages for Receiver~$2$, and every size-$\exp (\exp (n R_3))$ set $\setM_3$ of possible ID messages for Receiver~$3$, the following is necessary for a collection of tuples $$\bigl\{ Q_{m_1,m_2,m_3}, \setD_{m_1}, \setD_{m_2}, \setD_{m_3} \bigr\}_{(m_1,m_2,m_3) \in \setM_1 \times \setM_2 \times \setM_3}$$ to be an $\bigl( n, \{ \setM_k,  \lambda^{(k)}_1, \lambda^{(k)}_2 \}_{k \in \{ 1,2,3 \}} \bigr)$ ID code for the BC $\channel {y_1,y_2,y_3} x$: the mixture PMFs on $\setX^n$
\begin{subequations}
\begin{IEEEeqnarray}{rCl}
Q_{m_1} & = & \frac{1}{|\setM_2| \, |\setM_3|} \sum_{m_2, m_3} Q_{m_1,m_2,m_3}, \quad m_1 \in \setM_1, \\
Q_{m_2} & = & \frac{1}{|\setM_1| \, |\setM_3|} \sum_{m_1, m_3} Q_{m_1,m_2,m_3}, \quad m_2 \in \setM_2, \\
Q_{m_3} & = & \frac{1}{|\setM_1| \, |\setM_2|} \sum_{m_1, m_2} Q_{m_1,m_2,m_3}, \quad m_3 \in \setM_3, \\
Q & = & \frac{1}{|\setM_1| \, |\setM_2| \, |\setM_3|} \sum_{m_1, m_2, m_3} Q_{m_1,m_2,m_3}
\end{IEEEeqnarray}
\end{subequations}
satisfy
\begin{IEEEeqnarray}{l}
Q \bigl( X^n \in \{ \vecx \in \setX^n \colon \muti {P_\vecx}{W_k} > R_k - \epsilon \} \bigr) \nonumber \\
\quad = \frac{1}{| \setM_k |} \sum_{m_k \in \setM_k} Q_{m_k} \bigl( X^n \in \{ \vecx \in \setX^n \colon \muti {P_\vecx}{W_k} > R_k - \epsilon \} \bigr) \\
\quad \geq 1 - \kappa^{(k)} - \exp \bigl\{ e^{n ( R_k - \epsilon )} \bigr\} / \exp \bigl\{ e^{n R_k} \bigr\}, \quad k \in \{ 1,2,3 \}. \label{eq:condSinceWkIDCode}
\end{IEEEeqnarray}
The Union-of-Events bound and \eqref{eq:condSinceWkIDCode} imply that
\begin{IEEEeqnarray}{l}
Q \bigl( X^n \in \bigl\{ \vecx \in \setX^n \colon \muti {P_\vecx}{W_k} > R_k - \epsilon, \, \forall \, k \in \{ 1, 2, 3 \} \bigr\} \bigr) \nonumber \\
\quad \geq 1 - \sum_{k = 1}^3 \Bigl( \kappa^{(k)} + \exp \bigl\{ e^{n ( R_k - \epsilon / 2 )} \bigr\} / \exp \bigl\{ e^{n R_k} \bigr\} \Bigr).  \label{eq:BCExistsPMFGoodForAllThree}
\end{IEEEeqnarray}
Now let $\eta_0$ be the smallest integer $n \geq \eta_0^\prime$ for which the RHS of \eqref{eq:BCExistsPMFGoodForAllThree} is positive (such an $n$ must exist, because $\epsilon > 0$ and $\sum_{k = 1}^3 \kappa^{(k)} < 1$). Then, for every blocklength $n \geq \eta_0$ a necessary condition for \eqref{eq:BCExistsPMFGoodForAllThree} to hold is that for some PMF $P$ on $\setX$ \eqref{eq:converseBC3Rec} holds, and hence Claim~\ref{cl:toShowBC3RecConv} follows.
\end{proof}

\section{A Proof of Theorem~\ref{th:ibBC3Rec}}\label{app:ibBC3Rec}

The proof is similar to that in Section~\ref{sec:DPIDBC}. We prove Theorem~\ref{th:ibBC3Rec} by fixing any input distribution $P \in \mathscr P (\setX)$ and any positive ID rate-triple $(R_1,R_2,R_3)$ satisfying
\begin{subequations}\label{bl:rateTripleAch3Rec}
\begin{equation}
0 < R_k < \min \biggl\{ \muti P {W_k}, \sum_{l \in \{ 1,2,3 \} \setminus \{ k \}} \muti P {W_l} \biggr\}, \, \forall \, k \in \{ 1,2,3 \}
\end{equation}
\end{subequations}
and showing that the rate-triple $(R_1,R_2,R_3)$ is achievable. We assume that $$\muti P {W_k}, \quad k \in \{ 1,2,3 \}$$ are all positive; when they are not, the result follows from Theorem~\ref{th:IDBC}. For each $k \in \{ 1,2,3 \}$ let $\setM_k$ be a size-$\exp (\exp (n R_k))$ set of possible ID messages for Terminal~$k$. We next describe our random code construction and show that, for every positive $$\lambda_1^{(k)}, \, \lambda_2^{(k)}, \quad k \in \{ 1,2,3 \}$$ and every sufficiently-large blocklength~$n$, it produces with high probability an $\bigl( n, \{ \setM_k, \lambda_1^{(k)}, \lambda_2^{(k)} \}_{k \in \{ 1,2,3 \}} \bigr)$ ID code for the BC $\channel {y_1,y_2,y_3} x$.

\subparagraph*{Code Generation:} Fix expected bin rates $$\tilde R_k, \quad k \in \{ 1,2,3 \}$$ and a pool rate $R_\poolre$ satisfying
\begin{subequations}\label{bl:expBinRatesPoolRate3Rec}
\begin{IEEEeqnarray}{rCcCl}
R_k & < & \tilde R_k & < & \min \biggl\{ \muti P {W_k}, \sum_{l \in \{ 1,2,3 \} \setminus \{ k \}} \muti P {W_l} \biggr\}, \\
\tilde R_k & < & R_\poolre, \\
2 R_\poolre & < & \sum_{k \in \{ 1,2,3 \}} \tilde R_k. \label{eq:expBinRatesPoolRate3Rec}
\end{IEEEeqnarray}
\end{subequations}
This is possible by \eqref{bl:rateTripleAch3Rec}. Draw $e^{n R_\poolre}$ $n$-tuples $\sim P^n$ independently and place them in a pool $\pool$. Index the $n$-tuples in the pool by the elements of a size-$e^{n R_\poolre}$ set $\setV$, e.g., $\{ 1, \ldots, e^{n R_\poolre} \}$, and denote by $\poolel v$ the $n$-tuple in $\pool$ that is indexed by~$v \in \setV$. For each receiving terminal $k \in \{ 1,2,3 \}$ associate with each ID message $m_k \in \setM_k$ an index-set $\indexset {m_k}$ and a bin $\bin {m_k}$ as follows. Select each element of $\setV$ for inclusion in $\indexset {m_k}$ independently with probability $e^{-n( R_\poolre - \tilde R_k )}$, and let Bin~$\bin {m_k}$ be the multiset that contains all the $n$-tuples in the pool that are indexed by $\indexset {m_k}$, $$\bin {m_k} = \bigl\{ \poolel v, \, v \in \indexset {m_k} \bigr\}.$$ (Bin~$\bin {m_k}$ is thus of expected size $e^{n \tilde R_k}$.) Associate with each ID message-triple $(m_1,m_2,m_3) \in \setM_1 \times \setM_2 \times \setM_3$ an index $V_{m_1,m_2,m_3}$ as follows. If $\indexset {m_1} \cap \indexset {m_2} \cap \indexset {m_3}$ is not empty, then draw $V_{m_1,m_2,m_3}$ uniformly over $\indexset {m_1} \cap \indexset {m_2} \cap \indexset {m_3}$. Otherwise draw $V_{m_1,m_2,m_3}$ uniformly over $\setV$. Reveal the pool $\pool$, the index-sets $$\bigl\{ \indexset {m_k} \bigr\}_{m_k \in \setM_k}, \quad k \in \{ 1,2,3 \},$$ the corresponding bins $$\bigl\{ \bin {m_k} \bigr\}_{m_k \in \setM_k}, \quad k \in \{ 1,2,3 \},$$ and the indices $\bigl\{ V_{m_1,m_2,m_3} \bigr\}_{(m_1,m_2,m_3) \in \setM_1 \times \setM_2 \times \setM_3}$ to all parties. The encoding and decoding are determined by
\begin{IEEEeqnarray}{l}
\rcode = \Bigl( \pool, \bigl\{ \indexset {m_1} \bigr\}_{m_1 \in \setM_1}, \bigl\{ \indexset {m_2} \bigr\}_{m_2 \in \setM_2}, \bigl\{ \indexset {m_3} \bigr\}_{m_3 \in \setM_3}, \bigl\{ V_{m_1, m_2, m_3} \bigr\}_{ (m_1, m_2, m_3 ) \in \setM_1 \times \setM_2 \times \setM_3 } \Bigr). \label{eq:randCodeBC3Rec}
\end{IEEEeqnarray}

\subparagraph*{Encoding:} To send ID Message-Triple~$(m_1, m_2, m_3) \in \setM_1 \times \setM_2 \times \setM_3$, the encoder transmits the sequence $\poolel {V_{m_1,m_2,m_3}}$. ID Message-Triple~$(m_1, m_2, m_3)$ is thus associated with the $\{ 0,1 \}$-valued PMF
\begin{IEEEeqnarray}{rCl}
\bm Q_{m_1, m_2, m_3} (\vecx) & = & \ind {\vecx = \poolel {V_{m_1, m_2, m_3}}}, \quad \vecx \in \setX^n. \label{eq:defPMFRndCodeBC3Rec}
\end{IEEEeqnarray}
Note that once the code \eqref{eq:randCodeBC3Rec} has been constructed, the encoder is deterministic: it maps ID Message-Triple~$(m_1, m_2, m_3)$ to the $(m_1, m_2, m_3)$-codeword $\poolel {V_{m_1, m_2, m_3}}$. 

\subparagraph*{Decoding:} In this section the function $\delta (\cdot)$ maps every nonnegative real number $u$ to $u  \ent {P \times W}$. The decoders choose $\epsilon > 0$ sufficiently small so that $$2 \delta ( \epsilon ) < \muti {P}{W_k} - \tilde R_k, \quad k \in \{ 1,2,3 \}.$$ For each $k \in \{ 1,2,3 \}$ the $m^\prime_k$-focused party at Terminal~$k$ guesses that $m^\prime_k$ was sent iff for some index $v \in \indexset {m_k^\prime}$ the $n$-tuple $\poolel v$ in Bin~$\bin {m_k^\prime}$ is jointly $\epsilon$-typical with the Terminal-$k$ output-sequence $Y_{k,1}^n$, i.e., iff $( \poolel v, Y^n_{k,1} ) \in \eptyp (P \times W_k)$ for some $v \in \indexset {m_k^\prime}$. The set $\idset {m^\prime_k}$ of Terminal-$k$ output-sequences $\vecy_k \in (\setY_k)^n$ that result in the guess ``$m^\prime_k$ was sent'' is thus
\begin{IEEEeqnarray}{rCl}
\idset {m^\prime_k} & = & \bigcup_{v \in \indexset {m^\prime_k}} \setT^{ ( n  )}_\epsilon \bigl( P \times W_k \bigl| \poolel v \bigr), \quad k \in \{ 1,2,3 \}. \label{eq:DefIDSetMk3Rec}
\end{IEEEeqnarray}

\subparagraph*{Analysis of the Probabilities of Missed and Wrong Identification:} We first note that $\rcode$ of \eqref{eq:randCodeBC3Rec} (together with the fixed blocklength $n$ and the chosen $\epsilon$) fully specifies the encoding and guessing rules. That is, the randomly constructed ID code
\begin{IEEEeqnarray}{l}
\bigl\{ \bm Q_{m_1,m_2,m_3}, \idset {m_1}, \idset {m_2}, \idset {m_3} \bigr\}_{(m_1,m_2,m_3) \in \setM_1 \times \setM_2 \times \setM_3} \label{eq:randCodeBC23Rec}
\end{IEEEeqnarray}
is fully specified by $\rcode$. Let $\dist$ be the distribution of  $\rcode$, and let $\Exop$ denote expectation w.r.t.\ $\dist$. Subscripts indicate conditioning on the event that some of the chance variables assume the values indicated by the subscripts, e.g., $\dist_{\indexsetre {m_1}}$ denotes the distribution conditional on $\indexset {m_1} = \indexsetre {m_1}$, and $\Exop_{\indexsetre {m_1}}$ denotes the expectation w.r.t.\ $\dist_{\indexsetre {m_1}}$.\\

The maximum probabilities of missed and wrong identification of the randomly constructed ID code are the random variables
\begin{subequations}
\begin{IEEEeqnarray}{rCl}
P^{(k)}_{\textnormal{missed-ID}} & = & \max_{m_k \in \setM_k} \frac{1}{\card {\setM_j} \, \card {\setM_\ell}} \sum_{(m_j,m_\ell) \in \setM_j \times \setM_\ell} \bigl( \bm Q_{m_1,m_2,m_3} W^n \bigr) \bigl( Y^n_{k,1} \notin \idset {m_k} \bigr), \\
P^{(k)}_{\textnormal{wrong-ID}} & = & \max_{m_k \in \setM_k} \max_{m^\prime_k \neq m_k } \frac{1}{\card {\setM_j} \, \card {\setM_\ell}} \sum_{(m_j, m_\ell) \in \setM_j \times \setM_\ell} \bigl( \bm Q_{m_1,m_2,m_3} W^n \bigr) \bigl( Y^n_{k,1} \in \idset {m^\prime_k} \bigr),
\end{IEEEeqnarray}
\end{subequations}
where $k \in \{ 1, 2, 3 \}$ and $\ell, \, j$ is the pair of elements of $\{ 1, 2, 3 \} \setminus \{ k \}$ that satisfies $\ell < j$. They are fully specified by $\rcode$, because they are fully specified by the randomly constructed ID code \eqref{eq:randCodeBC23Rec}, which is in turn fully specified by $\rcode$. To prove that for every choice of $$\lambda_1^{(k)}, \, \lambda_2^{(k)} > 0, \quad k \in \{ 1,2,3 \}$$ and $n$ sufficiently large the collection of tuples \eqref{eq:randCodeBC23Rec} is with high probability an $\bigl( n, \{ \setM_k, \lambda^{(k)}_1, \lambda^{(k)}_2 \}_{k \in \{ 1,2,3 \}} \bigr)$ ID code for the BC $\channel {y_1,y_2,y_3} x$, we prove the following stronger result:

\begin{claim}\label{cl:toShowIDBC3Rec}
The probabilities $$P^{(k)}_{\textnormal{missed-ID}}, P^{(k)}_{\textnormal{wrong-ID}}, \quad k \in \{ 1,2,3 \}$$ of the randomly constructed ID code \eqref{eq:randCodeBC23Rec} converge in probability to zero exponentially in the blocklength~$n$, i.e.,
\begin{IEEEeqnarray}{l}
\exists \, \tau > 0 \textnormal{ s.t.\ } \lim_{n \rightarrow \infty} \biggdistof { \max_{k \in \{ 1,2,3 \}} \Bigl\{ P^{(k)}_{\textnormal{missed-ID}}, P^{(k)}_{\textnormal{wrong-ID}} \Bigr\} \geq e^{-n \tau} } = 0. \label{eq:toShowIDBC3Rec}
\end{IEEEeqnarray}
\end{claim}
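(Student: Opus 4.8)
The plan is to mirror the two-receiver proof of Claim~\ref{cl:toShowIDBC}. By the symmetry of the construction under permuting the three receivers, it suffices to prove that $\max\bigl\{P^{(1)}_{\textnormal{missed-ID}}, P^{(1)}_{\textnormal{wrong-ID}}\bigr\}$ converges to zero in probability exponentially fast; applying this to each $k\in\{1,2,3\}$ and invoking the Union-of-Events bound then yields \eqref{eq:toShowIDBC3Rec}. Throughout, write $j,\ell$ for the two elements of $\{1,2,3\}\setminus\{1\}$. As in Section~\ref{sec:DPIDBC}, I would first reduce to a Total-Variation statement on the pool-index set $\setV$: define the mixture PMF $\bm Q_{m_1}=\tfrac{1}{|\setM_j||\setM_\ell|}\sum_{m_j,m_\ell}\bm Q_{m_1,m_j,m_\ell}$ and the ``uniform over $\bin{m_1}$'' PMF $\tilde{\bm Q}_{m_1}$, together with their index-set counterparts $\bm P^{(m_1)}_V(v)=\tfrac{1}{|\setM_j||\setM_\ell|}\sum_{m_j,m_\ell}\ind{v=V_{m_1,m_j,m_\ell}}$ and $\tilde{\bm P}^{(m_1)}_V$ (uniform over $\indexset{m_1}$). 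Then $\{\tilde{\bm Q}_{m_1},\idset{m_1}\}_{m_1\in\setM_1}$ has the same law as the single-user random code of Section~\ref{sec:IDCodingTechnique} for $W=W_1$ (the parameters $\epsilon,R_1,\tilde R_1,R_\poolre$ are eligible by \eqref{bl:rateTripleAch3Rec} and the choice of $\epsilon$), so Claim~\ref{re:convExpFast} handles $\tilde P^{(1)}_{\textnormal{missed-ID}},\tilde P^{(1)}_{\textnormal{wrong-ID}}$; and by the Data-Processing inequality for Total-Variation distance, $P^{(1)}_{\textnormal{missed-ID}}$ and $P^{(1)}_{\textnormal{wrong-ID}}$ are each at most $\tilde P^{(1)}_{\cdot}+\max_{m_1}d\bigl(\bm P^{(m_1)}_V,\tilde{\bm P}^{(m_1)}_V\bigr)$. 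So it remains to show $\max_{m_1}d\bigl(\bm P^{(m_1)}_V,\tilde{\bm P}^{(m_1)}_V\bigr)$ decays exponentially fast in probability. Exactly as in Section~\ref{sec:DPIDBC} I would condition on the ``good'' event $\setH^{(1)}_\mu$ that $|\indexset\nu|>(1-\delta_n)e^{n\tilde R_1}$ for all $\nu\in\setM_1$, with $\mu$ as in \eqref{eq:muIDBCPf} (now with index $1$) and $\delta_n$ as in \eqref{eq:deltanIDBCPf}; its complement has vanishing probability by the multiplicative Chernoff bound \eqref{eq:multChernDeltaSm1Sm} and a union bound over $\setM_1$.

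The heart of the argument --- and the one place the three-receiver case genuinely differs from the two-receiver one --- is the concentration of $\bm P^{(m_1)}_V(v)$ around $\tilde{\bm P}^{(m_1)}_V(v)$. In the two-receiver case the family $\{\ind{v=V_{m_\ry,m_\rz}}\}_{m_\rz}$ is IID given $\indexset{m_\ry}$, and a single application of Hoeffding's inequality (Proposition~\ref{pr:hoeffding}) suffices; here the family $\{\ind{v=V_{m_1,m_j,m_\ell}}\}_{(m_j,m_\ell)}$ is \emph{not} IID given $\indexset{m_1}$, because two entries sharing the index $m_j$ (respectively $m_\ell$) share the index-set $\indexset{m_j}$ (respectively $\indexset{m_\ell}$). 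I would therefore use a two-level (nested) concentration. For fixed $m_1$ and $m_j$, the inner family $\{\ind{v=V_{m_1,m_j,m_\ell}}\}_{m_\ell}$ \emph{is} IID given $(\indexset{m_1},\indexset{m_j})$, so Hoeffding concentrates the inner average $\tfrac{1}{|\setM_\ell|}\sum_{m_\ell}\ind{v=V_{m_1,m_j,m_\ell}}$ around its conditional mean $\mu^{(m_1)}_{m_j}(v):=\distsubof{\indexset{m_1},\indexset{m_j}}{V_{m_1,m_j,m_\ell}=v}$; and since $\mu^{(m_1)}_{m_j}(v)$ depends only on $(\indexset{m_1},\indexset{m_j})$, the family $\{\mu^{(m_1)}_{m_j}(v)\}_{m_j}$ \emph{is} IID given $\indexset{m_1}$, so a second Hoeffding concentrates $\tfrac{1}{|\setM_j|}\sum_{m_j}\mu^{(m_1)}_{m_j}(v)$ around $\bigEx{\indexset{m_1}}{\ind{v=V_{m_1,m_j,m_\ell}}}$. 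It then remains to compute this last expectation and show that, on $\setH^{(1)}_\mu$, it lies within a doubly-exponentially small quantity of $\tilde{\bm P}^{(m_1)}_V(v)$: splitting according to whether $v\in\indexsetre{m_1}$ and using that $\indexset{m_j}$ and $\indexset{m_\ell}$ are independent, the relevant correction is governed by
\[
\distof{\indexsetre{m_1}\cap\indexset{m_j}\cap\indexset{m_\ell}=\emptyset}\;\le\;\exp\Bigl\{-(1-\delta_n)\,e^{n(\tilde R_1+\tilde R_j+\tilde R_\ell-2R_\poolre)}\Bigr\},
\]
which is doubly-exponentially small precisely because of the constraint $2R_\poolre<\sum_k\tilde R_k$ in \eqref{eq:expBinRatesPoolRate3Rec}. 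This is the constraint with no two-receiver counterpart, and it is exactly what guarantees that the triple intersection $\indexset{m_1}\cap\indexset{m_2}\cap\indexset{m_3}$ --- hence the $(m_1,m_2,m_3)$-codeword's membership in all three bins --- holds with overwhelming probability.

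Finally I would assemble the pieces. Picking $\kappa$ with $0<\kappa<\min\{R_j,R_\ell,\ \sum_k\tilde R_k-2R_\poolre\}$ (possible by \eqref{bl:rateTripleAch3Rec} and \eqref{eq:expBinRatesPoolRate3Rec}), the analogue of \eqref{eq:IDBCKappaDef}, and setting $\xi_n$ of the order $\exp\{-e^{n\kappa}\}$ as in \eqref{eq:IDBCKappaTripExpSmall}, both Hoeffding steps yield failure probabilities of the form $\exp\{-c\,|\setM_\ell|\,\xi_n^2\}$ and $\exp\{-c\,|\setM_j|\,\xi_n^2\}$ --- \emph{triply}-exponentially small in $n$, since $|\setM_j|$ and $|\setM_\ell|$ are doubly-exponential --- which survive the union bounds over $v\in\setV$ (exponentially many), over $m_j\in\setM_j$ (doubly-exponentially many), and finally over $m_1\in\setM_1$ (doubly-exponentially many). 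Adding the two inner deviations and the doubly-exponentially small mean-correction gives $\max_{v\in\setV}\bigl|\bm P^{(m_1)}_V(v)-\tilde{\bm P}^{(m_1)}_V(v)\bigr|<\xi_n$ with the required probability, whence $d\bigl(\bm P^{(m_1)}_V,\tilde{\bm P}^{(m_1)}_V\bigr)\le\tfrac12|\setV|\,\xi_n\le e^{-n\tau}$ for a suitable $\tau>0$ and all large $n$; putting this back through the Data-Processing reduction and the rate-combining argument of Section~\ref{sec:DPIDBC} completes the proof. The main obstacle is precisely the row/column dependence among the codeword-indices $\{V_{m_1,m_j,m_\ell}\}_{(m_j,m_\ell)}$, which is what forces the two-level concentration scheme above; everything else is a routine transcription of the two-receiver argument with the index $\rz$ replaced by the pair $(m_j,m_\ell)$ and the pairwise-intersection estimates replaced by triple-intersection estimates controlled by \eqref{eq:expBinRatesPoolRate3Rec}.
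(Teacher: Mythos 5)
Your proposal is correct, and its outer shell --- symmetrizing over $k$, reducing via the Data-Processing inequality for the Total-Variation distance to the statement that $\max_{m_1} d\bigl( \bm P_V^{(m_1)}, \tilde{\bm P}_V^{(m_1)} \bigr)$ decays exponentially in probability, and conditioning on the event that every $|\indexset{\nu}|$, $\nu \in \setM_1$, exceeds $(1-\delta_n)e^{n \tilde R_1}$ --- coincides with the paper's. Where you genuinely diverge is in the core concentration step that handles the dependence among $\{ V_{m_1,m_j,m_\ell} \}_{(m_j,m_\ell)}$. You nest the average: Hoeffding on $\frac{1}{|\setM_\ell|}\sum_{m_\ell} \ind{v = V_{m_1,m_j,m_\ell}}$ conditionally on $(\indexset{m_1},\indexset{m_j})$, then Hoeffding again on the IID conditional means $\mu^{(m_1)}_{m_j}(v)$ over $m_j$, and finally a direct computation comparing $\Exop_{\indexsetre{m_1}}\bigl[\ind{v = V_{m_1,m_j,m_\ell}}\bigr]$ to $1/|\indexsetre{m_1}|$ via the probability that the triple intersection is empty. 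The paper instead inserts the intermediate PMFs $\hat{\bm P}_V^{(m_1,m_2)}$ (uniform on $\indexset{m_1}\cap\indexset{m_2}$) and $\hat{\bm P}_V^{(m_1)} = \frac{1}{|\setM_2|}\sum_{m_2}\hat{\bm P}_V^{(m_1,m_2)}$, splits by the triangle inequality, observes that each pair $\bigl(\bm P_V^{(m_1,m_2)}, \hat{\bm P}_V^{(m_1,m_2)}\bigr)$ has the law of the two-receiver pair with effective bin rate $\tilde R_1 + \tilde R_2 - R_\poolre$ --- so the two-receiver concentration can be quoted verbatim --- and then must additionally control the fraction of $m_2 \in \setM_2$ for which $|\indexset{m_1}\cap\indexset{m_2}|$ falls below $(1-\delta_n)e^{n(\tilde R_1+\tilde R_2-R_\poolre)}$, as well as prove a separate concentration for $d\bigl(\hat{\bm P}_V^{(m_1)}, \tilde{\bm P}_V^{(m_1)}\bigr)$. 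Both routes pivot on the same two facts: the per-event Hoeffding failure probabilities $\exp\{-c\,|\setM_\ell|\,\xi_n^2\}$ and $\exp\{-c\,|\setM_j|\,\xi_n^2\}$ are doubly exponentially small because $\kappa < \min\{R_j, R_\ell\}$, so they survive union bounds whose logarithm is only singly exponential; and the mean correction $\distof{\indexset{m_1}\cap\indexset{m_2}\cap\indexset{m_3} = \emptyset} \leq \exp\bigl\{ -(1-\delta_n) e^{n(\tilde R_1 + \tilde R_2 + \tilde R_3 - 2 R_\poolre)} \bigr\}$ is doubly exponentially small precisely because of \eqref{eq:expBinRatesPoolRate3Rec}. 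Your nested-Hoeffding argument is the more self-contained of the two and avoids the extra bad event; the paper's decomposition buys the ability to recycle the two-receiver computation as a black box, at the price of one more intermediate distribution and the somewhat intricate mean calculation for $\hat{\bm P}_V^{(m_1)}$.
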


\begin{proof}
We will prove that
\begin{IEEEeqnarray}{l}
\exists \, \tau > 0 \textnormal{ s.t.\ } \lim_{n \rightarrow \infty} \Bigdistof { \max \bigl\{ P^{(1)}_{\textnormal{missed-ID}}, P^{(1)}_{\textnormal{wrong-ID}} \bigr\} \geq e^{-n \tau} } = 0. \label{eq:toShowIDBC23Rec}
\end{IEEEeqnarray}
By swapping $1$ and $2$ or $3$ throughout the proof it will then follow that \eqref{eq:toShowIDBC23Rec} also holds when we replace $1$ with $2$ or $3$, respectively, and \eqref{eq:toShowIDBC3Rec} will then follow using the Union-of-Events bound. To prove \eqref{eq:toShowIDBC23Rec} we consider for each $m_1 \in \setM_1$ two distributions on the set $\setV$, which indexes the pool $\pool$. We fix some $v^\star \in \setV$ and define for every $m_1 \in \setM_1$ the PMFs on $\setV$
\begin{subequations}\label{bl:remIndM2M3AndUnifBin3Rec}
\begin{IEEEeqnarray}{rCl}
\bm P_V^{(m_1)} (v) & = & \frac{1}{|\setM_2| \, |\setM_3|} \sum_{(m_2,m_3) \in \setM_2 \times \setM_3} \ind {v = V_{m_1,m_2,m_3}}, \quad v \in \setV, \label{eq:remIndM2M3Bin3Rec} \\
\tilde {\bm P}_V^{(m_1)} (v) & = & \begin{cases} \frac{1}{|\indexset {m_1}|} \sum_{v^\prime \in \indexset {m_1}} \ind {v = v^\prime} &\textnormal{if } \indexset {m_1} \neq \emptyset, \\ \ind {v = v^\star} &\textnormal{otherwise}, \end{cases} \quad v \in \setV. \label{eq:remUnifBin3Rec}
\end{IEEEeqnarray}
\end{subequations}
The latter PMF is reminiscent of the distribution we encountered in \eqref{eq:distIDMsgDMC} and \eqref{eq:distIDMsgDMC2} in the single-user case. The former is related to the three-receiver BC setting when we view the pair $(M_2,M_3)$ as uniform over $\setM_2 \times \setM_3$. Like the proof of Claim~\ref{cl:toShowIDBC}, to establish \eqref{eq:toShowIDBC3Rec} it suffices to show that the two PMFs are similar in the sense that
\begin{IEEEeqnarray}{l}
\exists \, \tau > 0 \textnormal{ s.t.\ } \lim_{n \rightarrow \infty} \biggdistof { \max_{m_1 \in \setM_1} d \Bigl( \bm P_V^{(m_1)}, \tilde {\bm P}_V^{(m_1)} \Bigr) \geq e^{-n \tau} } = 0. \label{eq:toShowIDBC43Rec}
\end{IEEEeqnarray}

Establishing \eqref{eq:toShowIDBC43Rec} requires more work than establishing \eqref{eq:toShowIDBC4} in the proof of Claim~\ref{cl:toShowIDBC}. The reason for this is that---unlike the index-sets $\bigl\{ \indexset {m_\rz} \bigr\}_{m_\rz \in \setM_\rz}$ of Section~\ref{sec:DPIDBC}---the intersections $\bigl\{ \indexset {m_2} \cap \indexset {m_3} \bigr\}_{(m_2,m_3) \in \setM_2 \times \setM_3}$ are not independent. To overcome this difficulty, we shall first view only $M_3$ as uniform over $\setM_3$ while fixing $M_2 = m_2$ for some $m_2 \in \setM_2$. Later, we shall view also $M_2$ as uniform over $\setM_2$.

We define for every pair $(m_1,m_2) \in \setM_1 \times \setM_2$ the PMFs on $\setV$
\begin{subequations}\label{bl:remIndM3AndUnifIntBin3Rec}
\begin{IEEEeqnarray}{rCl}
\bm P_V^{(m_1,m_2)} (v) & = & \frac{1}{|\setM_3|} \sum_{m_3 \in \setM_3} \ind {v = V_{m_1,m_2,m_3}}, \quad v \in \setV, \\ \label{eq:remIndM3Bin3Rec}
\hat {\bm P}_V^{(m_1,m_2)} (v) & = & \begin{cases} \frac{1}{|\indexset {m_1} \cap \indexset {m_2}|} \sum_{v^\prime \in \indexset {m_1} \cap \indexset {m_2}} \ind {v = v^\prime} &\textnormal{if } \indexset {m_1} \cap \indexset {m_2} \neq \emptyset, \\ \ind {v = v^\star} &\textnormal{otherwise}, \end{cases} \quad v \in \setV. \label{eq:remUnifIntBin3Rec}
\end{IEEEeqnarray}
\end{subequations}
The latter PMF is reminiscent of the distribution in \eqref{eq:remUnifBin3Rec}. The former is related to the three-receiver BC setting when we view $M_3$ as uniform over $\setM_3$, and for every $m_1 \in \setM_1$ it relates to the distribution in \eqref{eq:remIndM2M3Bin3Rec} through
\begin{IEEEeqnarray}{rCl}
\bm P_V^{(m_1)} (v) & = & \frac{1}{|\setM_2|} \sum_{m_2 \in \setM_2} \bm P_V^{(m_1,m_2)} (v), \quad v \in \setV. \label{eq:remIndM2M3FromIndM3Bin3Rec}
\end{IEEEeqnarray}
For every $m_1 \in \setM_1$ define the PMF on $\setV$
\begin{IEEEeqnarray}{rCl}
\hat {\bm P}_V^{(m_1)} (v) & = & \frac{1}{|\setM_2|} \sum_{m_2 \in \setM_2} \hat {\bm P}_V^{(m_1,m_2)} (v), \quad v \in \setV. \label{eq:remIndM2UnifIntBin3Rec}
\end{IEEEeqnarray}
We can now upper-bound $d \bigl( \bm P_V^{(m_1)}, \tilde {\bm P}_V^{(m_1)} \bigr)$ by
\begin{IEEEeqnarray}{l}
d \Bigl( \bm P_V^{(m_1)}, \tilde {\bm P}_V^{(m_1)} \Bigr) \nonumber \\
\quad \stackrel{(a)}\leq d \Bigl( \bm P_V^{(m_1)}, \hat {\bm P}_V^{(m_1)} \Bigr) + d \Bigl( \hat {\bm P}_V^{(m_1)}, \tilde {\bm P}_V^{(m_1)} \Bigr) \\
\quad \stackrel{(b)}\leq \frac{1}{|\setM_2|} \sum_{m_2 \in \setM_2} d \Bigl( \bm P_V^{(m_1,m_2)}, \hat {\bm P}_V^{(m_1,m_2)} \Bigr) + d \Bigl( \hat {\bm P}_V^{(m_1)}, \tilde {\bm P}_V^{(m_1)} \Bigr), \label{eq:ubToShowIDBC43Rec}
\end{IEEEeqnarray}
where $(a)$ follows from the Triangle inequality; and $(b)$ holds because
\begin{IEEEeqnarray}{l}
d \Bigl( \bm P_V^{(m_1)}, \hat {\bm P}_V^{(m_1)} \Bigr) \nonumber \\
\quad \stackrel{(c)}= \frac{1}{2} \sum_{v \in \setV} \Bigl| \bm P_V^{(m_1)} (v) - \hat {\bm P}_V^{(m_1)} (v) \Bigr| \\
\quad \stackrel{(d)}= \frac{1}{2} \sum_{v \in \setV} \biggl| \frac{1}{|\setM_2|} \sum_{m_2 \in \setM_2} \bm P_V^{(m_1,m_2)} (v) - \hat {\bm P}_V^{(m_1,m_2)} (v) \biggr| \\
\quad \stackrel{(e)}\leq \frac{1}{2} \sum_{v \in \setV} \frac{1}{|\setM_2|} \sum_{m_2 \in \setM_2} \Bigl| \bm P_V^{(m_1,m_2)} (v) - \hat {\bm P}_V^{(m_1,m_2)} (v) \Bigr| \\
\quad \stackrel{(f)}= \frac{1}{|\setM_2|} \sum_{m_2 \in \setM_2} d \Bigl( \bm P_V^{(m_1,m_2)}, \hat {\bm P}_V^{(m_1,m_2)} \Bigr),
\end{IEEEeqnarray}
where $(c)$ and $(f)$ hold by definition of the Total-Variation distance; $(d)$ holds by \eqref{eq:remIndM2M3FromIndM3Bin3Rec} and \eqref{eq:remIndM2UnifIntBin3Rec}; and $(e)$ follows from the Triangle inequality. For every $\tau_1, \, \tau_2$, and $\tau < \min \{ \tau_1, \tau_2 \}$ we have for all sufficiently-large $n$,
\begin{equation}
e^{-n \tau_1} + e^{-n \tau_2} \leq e^{- n \tau}. \label{eq:tau1tau2tau3Rec}
\end{equation}
This, combined with the Union-of-Events bound and \eqref{eq:ubToShowIDBC43Rec}, implies that to establish \eqref{eq:toShowIDBC43Rec} it suffices to show the following two:
\begin{subequations}\label{bl:toShowIDBC53Rec}
\begin{IEEEeqnarray}{l}
\exists \, \tau > 0 \textnormal{ s.t.\ } \lim_{n \rightarrow \infty} \biggdistof { \max_{m_1 \in \setM_1} \frac{1}{|\setM_2|} \sum_{m_2 \in \setM_2} d \Bigl( \bm P_V^{(m_1,m_2)}, \hat {\bm P}_V^{(m_1,m_2)} \Bigr) \geq e^{-n \tau} } = 0, \label{eq:toShowIDBC5a3Rec} \\
\exists \, \tau > 0 \textnormal{ s.t.\ } \lim_{n \rightarrow \infty} \biggdistof { \max_{m_1 \in \setM_1} d \Bigl( \hat {\bm P}_V^{(m_1)}, \tilde {\bm P}_V^{(m_1)} \Bigr) \geq e^{-n \tau} } = 0. \label{eq:toShowIDBC5b3Rec}
\end{IEEEeqnarray}
\end{subequations}

We next establish \eqref{bl:toShowIDBC53Rec}, beginning with \eqref{eq:toShowIDBC5a3Rec}. For every fixed ID message-pair $(m_1,m_2) \in \setM_1 \times \setM_2$ the pair $\bigl( \bm P_V^{(m_1,m_2)}, \hat {\bm P}_V^{(m_1,m_2)} \bigr)$ of \eqref{bl:remIndM3AndUnifIntBin3Rec} has the same law as the pair $\bigl( \bm P_V^{(m_\ry)}, \tilde {\bm P}_V^{(m_\ry)} \bigr)$ of \eqref{bl:remIndMzAndUnifBin} in Section~\ref{sec:DPIDBC} with expected bin rates $\tilde R_\ry = \tilde R_1 + \tilde R_2 - R_\poolre$ and $\tilde R_\rz = \tilde R_3$, pool rate $R_\poolre$, rate $R_\rz = R_3$, index-set $\setV$, and where $m_\ry \in \setM_\ry$. (To see this, note that the index-sets $\indexset {m_1} \cap \indexset {m_2}$ and $\indexset {m_\ry}$ are constructed by selecting each element of $\setV$ for inclusion in $\indexset {m_1} \cap \indexset {m_2}$ or $\indexset {m_\ry}$, respectively, independently with probability $e^{- n (2 R_\poolre - \tilde R_1 - \tilde R_2)} = e^{-n (R_\poolre - \tilde R_\ry)}$ ($= e^{-n (R_\poolre - \tilde R_1)} e^{-n (R_\poolre - \tilde R_2)}$), and that for every $m_3 \in \setM_3$ and $m_\rz \in \setM_\rz$ the indices $V_{m_1,m_2,m_3}$ and $V_{m_\ry,m_\rz}$ are of the same law.) To establish \eqref{eq:toShowIDBC5a3Rec}, we can thus adopt some of the arguments leading to \eqref{eq:toShowIDBC4} in the proof of Claim~\ref{cl:toShowIDBC}.

Let $\delta_n$ be positive and converge to zero as $n$ tends to infinity, and let us henceforth assume that $n$ is large enough so that the following two inequalities hold:
\begin{subequations}\label{bl:IDBCnLargeEnough3Rec}
\begin{IEEEeqnarray}{rCl}
(1 - \delta_n) e^{n (\tilde R_1 + \tilde R_2 - R_\poolre)} & \geq & 1, \\
\delta_n & \leq & 1/2. \label{eq:IDBCnLargeEnough23Rec}
\end{IEEEeqnarray}
\end{subequations}
(This is possible, because $\delta_n$ converges to zero as $n$ tends to infinity and, by \eqref{bl:expBinRatesPoolRate3Rec}, $\tilde R_1 + \tilde R_2 - R_\poolre > 0$.)
For every $(m_1,m_2) \in \setM_1 \times \setM_2$ we upper-bound $d \bigl( \bm P_V^{(m_1,m_2)}, \hat {\bm P}_V^{(m_1,m_2)} \bigr)$ differently depending on whether or not
\begin{equation}
|\indexset {m_1} \cap \indexset {m_2}| > (1 - \delta_n) e^{n (\tilde R_1 + \tilde R_2 - R_\poolre)}. \label{eq:indexsetM1M2SuffLarge3Rec}
\end{equation}
If \eqref{eq:indexsetM1M2SuffLarge3Rec} does not hold, then we upper-bound it by one (which is an upper bound on the Total-Variation distance between any two probability measures) to obtain
\begin{IEEEeqnarray}{l}
\max_{m_1 \in \setM_1} \frac{1}{|\setM_2|} \sum_{m_2 \in \setM_2} d \Bigl( \bm P_V^{(m_1,m_2)}, \hat {\bm P}_V^{(m_1,m_2)} \Bigr) \nonumber \\
\quad \leq \max_{m_1 \in \setM_1} \frac{1}{|\setM_2|} \bigl| \bigl\{ m_2 \in \setM_2 \colon |\indexset {m_1} \cap \indexset {m_2}| \leq (1 - \delta_n) e^{n (\tilde R_1 + \tilde R_2 - R_\poolre)} \bigr\} \bigr| \nonumber \\
\qquad + \max_{(m_1,m_2) \in \setM_1 \times \setM_2} d \Bigl( \bm P_V^{(m_1,m_2)}, \hat {\bm P}_V^{(m_1,m_2)} \Bigr) \ind {|\indexset {m_1} \cap \indexset {m_2}| > (1 - \delta_n) e^{n (\tilde R_1 + \tilde R_2 - R_\poolre)}}.
\end{IEEEeqnarray}
This, combined with the Union-of-Events bound and \eqref{eq:tau1tau2tau3Rec} (which holds for every $\tau_1$, $\tau_2$, and $\tau < \min \{ \tau_1, \tau_2 \}$, and for all sufficiently-large $n$) implies that to establish \eqref{eq:toShowIDBC5a3Rec} it suffices to show the following two:
\begin{subequations}\label{bl:toShowIDBC63Rec}
\begin{IEEEeqnarray}{l}
\exists \, \tau > 0 \textnormal{ s.t.\ } \lim_{n \rightarrow \infty} \biggdistof { \max_{m_1 \in \setM_1} \frac{1}{|\setM_2|} \bigl| \bigl\{ m_2 \in \setM_2 \colon |\indexset {m_1} \cap \indexset {m_2}| \leq (1 - \delta_n) e^{n (\tilde R_1 + \tilde R_2 - R_\poolre)} \bigr\} \bigr| \geq e^{-n \tau} } = 0, \label{eq:toShowIDBC6a3Rec} \\
\exists \, \tau > 0 \textnormal{ s.t.\ } \lim_{n \rightarrow \infty} \biggdistof { \! \max_{(m_1,m_2) \in \setM_1 \times \setM_2} \!\!\!\!\! d \Bigl( \bm P_V^{(m_1,m_2)}, \hat {\bm P}_V^{(m_1,m_2)} \Bigr) \ind {|\indexset {m_1} \cap \indexset {m_2}| > (1 - \delta_n) e^{n (\tilde R_1 + \tilde R_2 - R_\poolre)}} \geq e^{-n \tau} } = 0. \label{eq:toShowIDBC6b3Rec}
\end{IEEEeqnarray}
\end{subequations}

We next establish \eqref{bl:toShowIDBC63Rec}, beginning with \eqref{eq:toShowIDBC6b3Rec}. As in \eqref{eq:IDBCKappaDef}, fix some $\kappa$ satisfying
\begin{IEEEeqnarray}{rCl}
0 < \kappa < \min \bigl\{ R_3, \tilde R_1 + \tilde R_2 + \tilde R_3 - 2 R_\poolre \bigr\}, \label{eq:IDBCKappaDef3Rec}
\end{IEEEeqnarray}
and let
\begin{IEEEeqnarray}{rCl}
\xi_n = 4 \exp \bigl\{ -e^{n \kappa - \log 2} \bigr\}. \label{eq:IDBCKappaTripExpSmall3Rec}
\end{IEEEeqnarray}
By \eqref{eq:IDBCnLargeEnough23Rec}
\begin{IEEEeqnarray}{l}
\xi_n / 2 > (1 - \delta_n)^{-1} \exp \bigl\{ - (1 - \delta_n) e^{n (\tilde R_1 + \tilde R_2 + \tilde R_3  - 2 R_\poolre)} - n (\tilde R_1 + \tilde R_2 - R_\poolre) \bigr\}.
\end{IEEEeqnarray}
For a fixed pair $(m_1,m_2) \in \setM_1 \times \setM_2$ fix any realization $\indexsetre {m_1} \cap \indexsetre {m_2}$ of the intersection $\indexset {m_1} \cap \indexset {m_2}$ satisfying that
\begin{equation}
|\indexsetre {m_1} \cap \indexsetre {m_2}| > (1 - \delta_n) e^{n (\tilde R_1 + \tilde R_2 - R_\poolre)}.
\end{equation}
The line of arguments leading to \eqref{eq:prTotVarDistTooBigFixedMy} in the proof of Claim~\ref{cl:toShowIDBC} implies that
\begin{IEEEeqnarray}{l}
\biggdistsubof {\indexsetre {m_1} \cap \indexsetre {m_2}}{ d \Bigl( \bm P_V^{(m_1,m_2)}, \hat {\bm P}_V^{(m_1,m_2)} \Bigr) \geq |\setV| \, \xi_n / 2 } \nonumber \\
\quad \leq 2 \, |\setV| \exp \bigl\{ - |\setM_3| \xi_n^2 / 2 \bigr\}, \quad |\indexsetre {m_1} \cap \indexsetre {m_2}| > (1 - \delta_n) e^{n (\tilde R_1 + \tilde R_2 - R_\poolre)}. \label{eq:prTotVarDistTooBigFixedM1M23Rec}
\end{IEEEeqnarray}
By \eqref{eq:expBinRatesPoolRate3Rec}, \eqref{eq:IDBCKappaDef3Rec}, and \eqref{eq:IDBCKappaTripExpSmall3Rec} there must exist a positive constant $\tau > 0$ and some $\eta_0 \in \naturals$ for which
\begin{IEEEeqnarray}{l}
|\setV| \, \xi_n / 2 \leq e^{-n \tau}, \quad n \geq \eta_0. \label{eq:IDBCXiExpSmall3Rec}
\end{IEEEeqnarray}
For every $\tau > 0$ and $\eta_0 \in \naturals$ satisfying \eqref{eq:IDBCXiExpSmall3Rec} and for all $n$ exceeding $\eta_0$
\begin{IEEEeqnarray}{l}
\biggdistof { \max_{(m_1,m_2) \in \setM_1 \times \setM_2 } d \Bigl( \bm P_V^{(m_1,m_2)}, \hat {\bm P}_V^{(m_1,m_2)} \Bigr) \ind {|\indexset {m_1} \cap \indexset {m_2}| > (1 - \delta_n) e^{n (\tilde R_1 + \tilde R_2 - R_\poolre)}} \geq e^{- n \tau} } \nonumber \\
\quad \stackrel{(a)} \leq |\setM_1| \, |\setM_2| \max_{|\indexsetre {m_1} \cap \indexsetre {m_2}| > (1 - \delta_n) e^{n (\tilde R_1 + \tilde R_2 - R_\poolre)}} \biggdistsubof {\indexsetre {m_1} \cap \indexsetre {m_2}}{ d \Bigl( \bm P_V^{(m_1,m_2)}, \hat {\bm P}_V^{(m_1,m_2)} \Bigr) \geq e^{- n \tau} } \\
\quad \stackrel{(b)}\leq |\setM_1| \, |\setM_2| \max_{|\indexsetre {m_1} \cap \indexsetre {m_2}| > (1 - \delta_n) e^{n (\tilde R_1 + \tilde R_2 - R_\poolre)}} \biggdistsubof {\indexsetre {m_1} \cap \indexsetre {m_2}}{ d \Bigl( \bm P_V^{(m_1,m_2)}, \hat {\bm P}_V^{(m_1,m_2)} \Bigr) \geq |\setV| \, \xi_n / 2 } \\
\quad \stackrel{(c)}\leq 2 \, |\setV| \, |\setM_1| \, |\setM_2| \exp \bigl\{ - |\setM_3| \exp \{ - e^{n \kappa} + 3 \log 2 \} \bigr\} \nonumber \\
\quad \stackrel{(d)}\rightarrow 0 \, (n \rightarrow \infty),  \label{eq:prTotVarDistTooBigAnyM1M23Rec}
\end{IEEEeqnarray}
where $(a)$ follows from the Union-of-Events bound; $(b)$ holds by \eqref{eq:IDBCXiExpSmall3Rec}, because $n$ exceeds $\eta_0$; $(c)$ holds by \eqref{eq:prTotVarDistTooBigFixedM1M23Rec} and \eqref{eq:IDBCKappaTripExpSmall3Rec}; and $(d)$ holds by \eqref{eq:IDBCKappaDef3Rec}, because $|\setV| = e^{n R_\poolre}$, and because $|\setM_k| = \exp (\exp (n R_k)), \,\, k \in \{ 1,2,3 \}$.

Having established \eqref{eq:toShowIDBC6b3Rec} for every $\delta_n$ that converges to zero as $n$ tends to infinity, we return to \eqref{bl:toShowIDBC63Rec} and conclude the proof of \eqref{eq:toShowIDBC5a3Rec} by establishing \eqref{eq:toShowIDBC6a3Rec} for some $\delta_n$ that converges to zero as $n$ tends to infinity. To that end, fix some $\mu$ satisfying
\begin{equation}
0 < \mu < \tilde R_1 - R_1, \label{eq:muIDBCPf3Rec}
\end{equation}
and let
\begin{equation}
\alpha_n = e^{- n \mu / 2}. \label{eq:alphaIDBCPf3Rec}
\end{equation}
Introduce the set $\setH^{(1)}_\mu$ comprising the realizations $\{ \indexsetre \nu \}_{\nu \in \setM_1}$ of the index-sets $\{ \indexset \nu \}_{\nu \in \setM_1}$ satisfying that
\begin{equation}
|\indexsetre \nu| > (1 - \alpha_n) e^{n \tilde R_1}, \,\, \forall \, \nu \in \setM_1. \label{eq:IDBCSetH3Rec}
\end{equation}
We upper-bound $$\bigl| \bigl\{ m_2 \in \setM_2 \colon |\indexset {m_1} \cap \indexset {m_2}| \leq (1 - \delta_n) e^{n (\tilde R_1 + \tilde R_2 - R_\poolre)} \bigr\} \bigr|$$ differently depending on whether or not $\{ \indexset \nu \}$ is in $\setH^{(1)}_\mu$, where $\{ \indexset \nu \}$ is short for $\{ \indexset \nu \}_{\nu \in \setM_1}$. If $\{ \indexset \nu \} \notin \setH^{(1)}_\mu$, then we upper-bound it by $|\setM_2|$ to obtain for every $\tau > 0$
\begin{IEEEeqnarray}{l}
\biggdistof { \max_{m_1 \in \setM_1} \frac{1}{|\setM_2|} \bigl| \bigl\{ m_2 \in \setM_2 \colon |\indexset {m_1} \cap \indexset {m_2}| \leq (1 - \delta_n) e^{n (\tilde R_1 + \tilde R_2 - R_\poolre)} \bigr\} \bigr| \geq e^{-n \tau} } \nonumber \\
\quad \leq \bigdistof { \{ \indexset \nu \} \notin \setH^{(1)}_\mu } + \sum_{ \indexsetsre \in \setH^{(1)}_\mu} \bigdistof { \{ \indexset \nu \} = \indexsetsre } \nonumber \\
\qquad \times \, \biggdistsubof {\indexsetsre}{ \max_{m_1 \in \setM_1} \frac{1}{|\setM_2|} \bigl| \bigl\{ m_2 \in \setM_2 \colon |\indexsetre {m_1} \cap \indexset {m_2}| \leq (1 - \delta_n) e^{n (\tilde R_1 + \tilde R_2 - R_\poolre)} \bigr\} \bigr| \geq e^{-n \tau} }. \label{eq:IDBCToShowNotinAndInH3Rec}
\end{IEEEeqnarray}

We consider the two terms on the RHS of \eqref{eq:IDBCToShowNotinAndInH3Rec} separately, beginning with $\bigdistof { \{ \indexset \nu \} \notin \setH^{(1)}_\mu }$. By the line of arguments leading to \eqref{eq:allBmySuffLarge} in the proof of Claim~\ref{cl:toShowIDBC}
\begin{IEEEeqnarray}{rCl}
\bigdistof { \{ \indexset \nu \} \notin \setH^{(1)}_\mu } & \leq & |\setM_1| \exp \bigl\{ - e^{n (\tilde R_1 - \mu) - \log 2} \bigr\} \\
& \stackrel{(a)}\rightarrow & 0 \, (n \rightarrow \infty), \label{eq:allBmySuffLarge3Rec}
\end{IEEEeqnarray}
where $(a)$ holds because $|\setM_1| = \exp (\exp (n R_1))$ and by \eqref{eq:muIDBCPf3Rec}.

Having established \eqref{eq:allBmySuffLarge3Rec}, we return to \eqref{eq:IDBCToShowNotinAndInH3Rec} and conclude the proof of \eqref{eq:toShowIDBC6a3Rec} by showing that
\begin{IEEEeqnarray}{l}
\exists \, \tau > 0 \textnormal{ s.t.\ } \nonumber \\
\lim_{n \rightarrow \infty} \max_{\indexsetsre \in \setH^{(1)}_\mu} \biggdistsubof {\indexsetsre}{ \max_{m_1 \in \setM_1} \frac{1}{|\setM_2|} \bigl| \bigl\{ m_2 \in \setM_2 \colon |\indexsetre {m_1} \cap \indexset {m_2}| \leq (1 - \delta_n) e^{n (\tilde R_1 + \tilde R_2 - R_\poolre)} \bigr\} \bigr| \geq e^{-n \tau} } = 0. \label{eq:toShowIDBC6b23Rec}
\end{IEEEeqnarray}
To prove \eqref{eq:toShowIDBC6b23Rec}, let us henceforth assume that $n$ is large enough so that the following two inequalities hold:
\begin{subequations}\label{bl:IDBCnLargeEnough23Rec}
\begin{IEEEeqnarray}{rCl}
(1 - \alpha_n) e^{n \tilde R_1} & \geq & 1, \label{eq:IDBCnLargeEnough213Rec} \\
\alpha_n & \leq & 1/2, \label{eq:IDBCnLargeEnough223Rec}
\end{IEEEeqnarray}
 \end{subequations}
where $\alpha_n$ is defined in \eqref{eq:alphaIDBCPf3Rec}. (This is possible, because $\alpha_n$ converges to zero as $n$ tends to infinity and $\tilde R_1 > 0$.) Fix any realization $\indexsetsre$ in $\setH^{(1)}_\mu$. Rather than directly upper-bounding the maximum over $m_1 \in \setM_1$ of $$\frac{1}{|\setM_2|} \bigl| \bigl\{ m_2 \in \setM_2 \colon |\indexsetre {m_1} \cap \indexset {m_2}| \leq (1 - \delta_n) e^{n (\tilde R_1 + \tilde R_2 - R_\poolre)} \bigr\} \bigr|$$ under $\dist_{\indexsetsre}$, we first consider $$\frac{1}{|\setM_2|} \bigl| \bigl\{ m_2 \in \setM_2 \colon |\indexsetre {m_1} \cap \indexset {m_2}| \leq (1 - \delta_n) e^{n (\tilde R_1 + \tilde R_2 - R_\poolre)} \bigr\} \bigr|$$ for a fixed $m_1 \in \setM_1$. By \eqref{eq:IDBCSetH3Rec} (which holds because $\indexsetsre \in \setH^{(1)}_\mu$) and \eqref{eq:IDBCnLargeEnough213Rec}, $\indexsetre {m_1}$ is nonempty. For every fixed $m_2 \in \setM_2$ we therefore have that under $\dist_{\indexsetsre}$ the $|\indexsetre {m_1}|$ binary random variables $\bigl\{ \ind {v \in \indexset {m_2}} \bigr\}_{v \in \indexsetre {m_1}}$ are IID and of mean
\begin{IEEEeqnarray}{rCl}
\bigEx {\indexsetsre}{\ind {v \in \indexset {m_2}}} & = & e^{-n (R_\poolre - \tilde R_2)}. \label{eq:meanIndVInM23Rec}
\end{IEEEeqnarray}
Fix some $\lambda$ satisfying
\begin{equation}
0 < \lambda < \tilde R_1 + \tilde R_2 - R_\poolre, \label{eq:lambdaIDBCPf3Rec}
\end{equation}
let
\begin{equation}
\beta_n = e^{- n \lambda / 2}, \label{eq:betaIDBCPf3Rec}
\end{equation}
and let
\begin{equation}
\delta_n = \alpha_n + \beta_n - \alpha_n \beta_n,
\end{equation}
where $\alpha_n$ is defined in \eqref{eq:alphaIDBCPf3Rec}. Note that $\delta_n$ satisfies
\begin{equation}
1 - \delta_n = (1 - \alpha_n) (1 - \beta_n). \label{eq:relDeltaAlphaBetaPf3Rec}
\end{equation}
Because $\alpha_n$ and $\beta_n$ are positive, smaller than one, and converge to zero as $n$ tends to infinity, also $\delta_n$ is positive, smaller than one, and converges to zero as $n$ tends to infinity. For every $m_2 \in \setM_2$ the multiplicative Chernoff bound \eqref{eq:multChernDeltaSm1Sm} implies that
\begin{IEEEeqnarray}{l}
\BigEx {\indexsetsre}{\ind{|\indexsetre {m_1} \cap \indexset {m_2}| \leq (1 - \delta_n) e^{n (\tilde R_1 + \tilde R_2 - R_\poolre)}}} \nonumber \\
\quad = \Bigdistsubof {\indexsetsre}{|\indexsetre {m_1} \cap \indexset {m_2}| \leq (1 - \delta_n) e^{n (\tilde R_1 + \tilde R_2 - R_\poolre)}} \\
\quad \stackrel{(a)}\leq \Biggdistsubof {\indexsetsre}{\sum_{v \in \indexsetre {m_1}} \ind {v \in \indexset {m_2}} \leq (1 - \beta_n) |\indexsetre {m_1}| e^{ - n (R_\poolre - \tilde R_2)}} \\
\quad \stackrel{(b)}\leq \exp \bigl\{ - \beta_n^2 (1 - \alpha_n) e^{ n (\tilde R_1 + \tilde R_2 - R_\poolre) - \log 2} \bigr\}, \quad \indexsetsre \in \setH^{(1)}_\mu, \label{eq:intIndexsetsM1M2Small3Rec}
\end{IEEEeqnarray}
where $(a)$ holds by \eqref{eq:IDBCSetH3Rec} (which holds because $\indexsetsre \in \setH^{(1)}_\mu$) and \eqref{eq:relDeltaAlphaBetaPf3Rec}; and $(b)$ holds by \eqref{eq:meanIndVInM23Rec}, \eqref{eq:multChernDeltaSm1Sm}, and \eqref{eq:IDBCSetH3Rec}. By \eqref{eq:lambdaIDBCPf3Rec}, \eqref{eq:betaIDBCPf3Rec}, and because $\alpha_n$ converges to zero as $n$ tends to infinity, there must exist a positive constant $\tau > 0$ and some $\eta_0 \in \naturals$ for which
\begin{equation}
\exp \bigl\{ - \beta_n^2 (1 - \alpha_n) e^{ n (\tilde R_1 + \tilde R_2 - R_\poolre) - \log 2} \bigr\} \leq e^{-n \tau} / 2, \quad n \geq \eta_0. \label{eq:tauAlphaBetapf3Rec}
\end{equation}
Since the $\exp (\exp (n R_2))$ binary random variables $$\Bigl\{ \ind{|\indexsetre {m_1} \cap \indexset {m_2}| \leq (1 - \delta_n) e^{n (\tilde R_1 + \tilde R_2 - R_\poolre)}} \Bigr\}_{m_2 \in \setM_2}$$ are IID, Hoeffding's inequality (Proposition~\ref{pr:hoeffding}) implies that for every $\tau > 0$ and $\eta_0 \in \naturals$ satisfying \eqref{eq:tauAlphaBetapf3Rec} and for all $n$ exceeding $\eta_0$
\begin{IEEEeqnarray}{l}
\Biggdistsubof {\indexsetsre}{ \frac{1}{|\setM_2|} \bigl| \bigl\{ m_2 \in \setM_2 \colon |\indexsetre {m_1} \cap \indexset {m_2}| \leq (1 - \delta_n) e^{n (\tilde R_1 + \tilde R_2 - R_\poolre)} \bigr\} \bigr| \geq e^{- n \tau}}  \nonumber \\
\quad = \Biggdistsubof {\indexsetsre}{\frac{1}{|\setM_2|} \sum_{m_2 \in \setM_2} \ind{|\indexsetre {m_1} \cap \indexset {m_2}| \leq (1 - \delta_n) e^{n (\tilde R_1 + \tilde R_2 - R_\poolre)}} \geq e^{- n \tau}} \\
\quad \leq \exp \bigl\{ - |\setM_2| e^{- 2 n \tau} / 2 \bigr\}, \quad \indexsetsre \in \setH^{(1)}_\mu, \label{eq:intIndexsetsM1M2SmallTooManyM23Rec}
\end{IEEEeqnarray}
where in the last inequality we used \eqref{eq:intIndexsetsM1M2Small3Rec} and \eqref{eq:tauAlphaBetapf3Rec}.

Having obtained \eqref{eq:intIndexsetsM1M2SmallTooManyM23Rec} for every fixed $m_1 \in \setM_1$, we are now ready to tackle the maximum over $m_1 \in \setM_1$ and prove \eqref{eq:toShowIDBC6a3Rec}: For every $\tau > 0$ and $\eta_0 \in \naturals$ satisfying \eqref{eq:tauAlphaBetapf3Rec} and for all $n$ exceeding $\eta_0$
\begin{IEEEeqnarray}{l}
\max_{\indexsetsre \in \setH^{(1)}_\mu} \biggdistsubof {\indexsetsre}{ \max_{m_1 \in \setM_1} \frac{1}{|\setM_2|} \bigl| \bigl\{ m_2 \in \setM_2 \colon |\indexsetre {m_1} \cap \indexset {m_2}| \leq (1 - \delta_n) e^{n (\tilde R_1 + \tilde R_2 - R_\poolre)} \bigr\} \bigr| \geq e^{- n \tau}}  \nonumber \\
\quad \stackrel{(a)}\leq |\setM_1| \exp \bigl\{ - |\setM_2| e^{- 2 n \tau} / 2 \bigr\} \\
\quad \stackrel{(b)}\rightarrow 0 \, (n \rightarrow \infty), \label{eq:intIndexsetsM1M2SmallTooManyM2MaxM13Rec}
\end{IEEEeqnarray}
where $(a)$ follows from the Union-of-Events bound and \eqref{eq:intIndexsetsM1M2SmallTooManyM23Rec}; and $(b)$ holds because $|\setM_k| = \exp (\exp (n R_k)), \,\, k \in \{ 1,2 \}$. This concludes the proof of \eqref{bl:toShowIDBC63Rec} and hence that of \eqref{eq:toShowIDBC5a3Rec}.\\

Having established \eqref{eq:toShowIDBC5a3Rec}, we return to \eqref{bl:toShowIDBC53Rec} and conclude the proof of Claim~\ref{cl:toShowIDBC3Rec} by establishing \eqref{eq:toShowIDBC5b3Rec}. To that end, we argue similarly as when establishing \eqref{eq:toShowIDBC4} in the proof of Claim~\ref{cl:toShowIDBC}. Recall that $\setH^{(1)}_\mu$ is the set comprising the realizations $\indexsetsre$ of the index-sets $\{ \indexset \nu \}$ satisfying \eqref{eq:IDBCSetH3Rec}, where $\mu$ is defined in \eqref{eq:muIDBCPf3Rec} and $\alpha_n$ in \eqref{eq:alphaIDBCPf3Rec}. We upper-bound $d \bigl( \hat {\bm P}_V^{(m_1)}, \tilde {\bm P}_V^{(m_1)} \bigr)$ differently depending on whether or not $\{ \indexset \nu \}$ is in $\setH^{(1)}_\mu$. If $\{ \indexset \nu \} \notin \setH^{(1)}_\mu$, then we upper-bound it by one (which is an upper bound on the Total-Variation distance between any two probability measures) to obtain for every $\tau > 0$
\begin{IEEEeqnarray}{l}
\biggdistof { \max_{m_1 \in \setM_1} d \Bigl( \hat {\bm P}_V^{(m_1)}, \tilde {\bm P}_V^{(m_1)} \Bigr) \geq e^{-n \tau} } \nonumber \\
\quad \leq \bigdistof { \{ \indexset \nu \} \notin \setH^{(1)}_\mu } + \sum_{ \indexsetsre \in \setH^{(1)}_\mu} \bigdistof { \{ \indexset \nu \} = \indexsetsre} \biggdistsubof { \indexsetsre }{ \max_{m_1 \in \setM_1} d \Bigl( \hat {\bm P}_V^{(m_1)}, \tilde {\bm P}_V^{(m_1)} \Bigr) \geq e^{-n \tau} }.
\end{IEEEeqnarray}
This and \eqref{eq:allBmySuffLarge3Rec} imply that to establish \eqref{eq:toShowIDBC5b3Rec} it suffices to show that
\begin{IEEEeqnarray}{l}
\exists \, \tau > 0 \textnormal{ s.t.\ } \lim_{n \rightarrow \infty} \max_{ \indexsetsre \in \setH^{(1)}_\mu } \biggdistsubof { \indexsetsre }{ \max_{m_1 \in \setM_1} d \Bigl( \hat {\bm P}_V^{(m_1)}, \tilde {\bm P}_V^{(m_1)} \Bigr) \geq e^{-n \tau} } = 0. \label{eq:toShowIDBC5b23Rec}
\end{IEEEeqnarray}
To prove \eqref{eq:toShowIDBC5b23Rec}, let us henceforth assume that $n$ is large enough so that \eqref{bl:IDBCnLargeEnough23Rec} holds. Fix any realization $\indexsetsre$ in $\setH^{(1)}_\mu$. Rather than directly upper-bounding the maximum over $m_1 \in \setM_1$ of $d \bigl( \hat {\bm P}_V^{(m_1)}, \tilde {\bm P}_V^{(m_1)} \bigr)$, we first consider $d \bigl( \hat {\bm P}_V^{(m_1)}, \tilde {\bm P}_V^{(m_1)} \bigr)$ for a fixed $m_1 \in \setM_1$. By \eqref{eq:IDBCSetH3Rec} (which holds because $\indexsetsre \in \setH^{(1)}_\mu$) and \eqref{eq:IDBCnLargeEnough213Rec}, $\indexsetre {m_1}$ is nonempty. We therefore have that under $\dist_{\indexsetre {m_1}}$
\begin{subequations}\label{bl:remUnifIndOthers}
\begin{IEEEeqnarray}{rCl}
\tilde {\bm P}_V^{(m_1)} (v) & = & \frac{1}{|\indexsetre {m_1}|} \sum_{v^\prime \in \indexsetre {m_1}} \ind {v = v^\prime}, \quad v \in \setV, \label{eq:remUnifBinM1} \\
\hat {\bm P}_V^{(m_1)} (v) & = & \frac{1}{|\setM_2|} \sum_{m_2 \in \setM_2} \Biggl( \frac{\ind {v \in \indexsetre {m_1} \cap \indexset {m_2}}}{|\indexsetre {m_1} \cap \indexset {m_2}| \vee 1}  + \ind {\indexsetre {m_1} \cap \indexset {m_2} = \emptyset} \, \ind {v = v^\star} \Biggr), \quad v \in \setV, \label{eq:remIndM2Bin}
\end{IEEEeqnarray}
\end{subequations}
where for every fixed $v \in \setV$ the $\exp (\exp (n R_2))$ $[0,1]$-valued random variables $$\Biggl\{ \frac{\ind {v \in \indexsetre {m_1} \cap \indexset {m_2}}}{|\indexsetre {m_1} \cap \indexset {m_2}| \vee 1}  + \ind {\indexsetre {m_1} \cap \indexset {m_2} = \emptyset} \, \ind {v = v^\star} \Biggr\}_{m_2 \in \setM_2}$$ are IID and have mean
\begin{IEEEeqnarray}{l}
\begin{cases} \frac{1}{|\indexsetre {m_1}|} \Bigl( 1 - \bigl( 1 - e^{-n (R_\poolre - \tilde R_2)} \bigr)^{|\indexsetre {m_1}|} \Bigr) &\textnormal{if } v \in \indexsetre {m_1} \setminus \{ v^\star \}, \\ \frac{1}{|\indexsetre {m_1}|} \Bigl( 1 + (|\indexsetre {m_1}| - 1) \bigl( 1 - e^{-n (R_\poolre - \tilde R_2)} \bigr)^{|\indexsetre {m_1}|} \Bigr) &\textnormal{if } v \in \indexsetre {m_1} \cap \{ v^\star \}, \\ \bigl( 1 - e^{-n (R_\poolre - \tilde R_2)} \bigr)^{|\indexsetre {m_1}|} &\textnormal{if } v \in \{ v^\star \} \setminus \indexsetre {m_1}, \\ 0 &\textnormal{if } v \notin \indexsetre {m_1} \cup \{ v^\star \}, \end{cases} \label{eq:meanHatPM1v3Rec}
\end{IEEEeqnarray}
where we used that
\begin{equation}
\distsubof {\indexsetsre}{\indexsetre {m_1} \cap \indexset {m_2} = \emptyset} = \bigl( 1 - e^{-n (R_\poolre - \tilde R_2)} \bigr)^{|\indexsetre {m_1}|}
\end{equation}
and that for every $m_2 \in \setM_2$ the $[0,1]$-valued random variables $$\Biggl\{ \frac{\ind {v \in \indexsetre {m_1} \cap \indexset {m_2}}}{|\indexsetre {m_1} \cap \indexset {m_2}| \vee 1} \Biggr\}_{v \in \indexsetre {m_1}}$$ are IID and sum to
\begin{equation}
\sum_{v \in \indexsetre {m_1}} \frac{\ind {v \in \indexsetre {m_1} \cap \indexset {m_2}}}{|\indexsetre {m_1} \cap \indexset {m_2}| \vee 1} = \ind {\indexsetre {m_1} \cap \indexset {m_2} \neq \emptyset}.
\end{equation}
With \eqref{eq:meanHatPM1v3Rec} at hand, we can establish \eqref{eq:toShowIDBC5b23Rec} essentially along the line of arguments leading to \eqref{eq:toShowIDBC5} in the proof of Claim~\ref{cl:toShowIDBC}.
\end{proof}

\section{A Proof of Theorem~\ref{th:IDBCCM}}\label{app:IDBCCM}

The proof consists of a direct and a converse part.

\subsection{The Direct Part of Theorem~\ref{th:IDBCCM}}

The proof of the direct part is similar to that in Section~\ref{sec:DPIDBC}. We prove the direct part of Theorem~\ref{th:IDBCCM} by fixing any input distribution $P \in \mathscr P (\setX)$ and any positive ID rate-triple $(R,R_\ry,R_\rz)$ satisfying
\begin{subequations}\label{bl:rateTripleAchCM}
\begin{IEEEeqnarray}{rCcCl}
0 & < & R, \, R_\ry & < & \muti{P}{W_\ry}, \\
0 & < & R, \, R_\rz & < & \muti{P}{W_\rz}
\end{IEEEeqnarray}
\end{subequations}
and showing that the rate-triple $(R,R_\ry,R_\rz)$ is achievable. The restriction to positive rates $R_\ry$ and $R_\rz$ is that of Theorem~\ref{th:IDBCCM}. Moreover, we assume that $R$ is positive; when it is not, the result follows from Theorem~\ref{th:IDBC}. Let $\setM$ be a size-$\exp (\exp (n R))$ set of possible common ID messages, let $\setM_\ry$ be a size-$\exp (\exp (n R_\ry))$ set of possible ID messages for Terminal~$\ry$, and let $\setM_\rz$ be a size-$\exp (\exp (n R_\rz))$ set of possible ID messages for Terminal~$\rz$. We next describe our random code construction and show that, for every positive $\lambda^{\ry}_1$, $\lambda^{\ry}_2$, $\lambda^{\rz}_1$, and $\lambda^{\rz}_2$ and for every sufficiently-large blocklength~$n$, it produces with high probability an $\bigl( n, \setM, \setM_\ry, \setM_\rz, \lambda^{\ry}_1, \lambda^{\ry}_2, \lambda^{\rz}_1, \lambda^{\rz}_2 \bigr)$ ID code for the BC $\channel {y,z} x$.

\subparagraph*{Code Generation:} Fix an expected bin rate $\tilde R_\ry$ for Terminal~$\ry$, an expected bin rate $\tilde R_\rz$ for Terminal~$\rz$, and a pool rate $R_\poolre$ satisfying
\begin{subequations}\label{bl:binAndPoolRateConstraintsBCCM}
\begin{IEEEeqnarray}{rCcCl}
R, \, R_\ry & < & \tilde R_\ry & < & \muti {P}{W_\ry}, \\
R, \, R_\rz & < & \tilde R_\rz & < & \muti {P}{W_\rz}, \\
&& \tilde R_\ry & < & R_\poolre, \\
&& \tilde R_\rz & < & R_\poolre, \\
R_\poolre & < & \tilde R_\ry + \tilde R_\rz. && \label{eq:blBinAndPoolRateConstraintsBCCMUBPoolRate}
\end{IEEEeqnarray}
\end{subequations}
This is possible by \eqref{bl:rateTripleAchCM}. Draw $e^{n R_\poolre}$ $n$-tuples $\sim P^n$ independently and place them in a pool $\pool$. Index the $n$-tuples in the pool by the elements of a size-$e^{n R_\poolre}$ set $\setV$, e.g., $\{ 1, \ldots, e^{n R_\poolre} \}$, and denote by $\poolel v$ the $n$-tuple in $\pool$ that is indexed by~$v \in \setV$. For each receiving terminal $\Psi \in \{ \ry, \rz \}$ associate with each ID message-pair $(m,m_\Psi) \in \setM \times \setM_\Psi$ an index-set $\indexset {m,m_\Psi}$ and a bin $\bin {m,m_\Psi}$ as follows. Select each element of $\setV$ for inclusion in $\indexset {m,m_\Psi}$ independently with probability $e^{-n( R_\poolre - \tilde R_\Psi )}$, and let Bin~$\bin {m,m_\Psi}$ be the multiset that contains all the $n$-tuples in the pool that are indexed by $\indexset {m,m_\Psi}$, $$\bin {m,m_\Psi} = \bigl\{ \poolel v, \, v \in \indexset {m,m_\Psi} \bigr\}.$$ (Bin~$\bin {m,m_\Psi}$ is thus of expected size $e^{n \tilde R_\Psi}$.) Associate with each ID message-triple $(m,m_\ry,m_\rz) \in \setM \times \setM_\ry \times \setM_\rz$ an index $V_{m,m_\ry,m_\rz}$ as follows. If $\indexset {m,m_\ry} \cap \indexset {m,m_\rz}$ is not empty, then draw $V_{m,m_\ry,m_\rz}$ uniformly over $\indexset {m,m_\ry} \cap \indexset {m,m_\rz}$. Otherwise draw $V_{m,m_\ry,m_\rz}$ uniformly over $\setV$. Reveal the pool $\pool$, the index-sets $\bigl\{ \indexset {m,m_\ry} \bigr\}_{(m,m_\ry) \in \setM \times \setM_\ry}$ and $\bigl\{ \indexset {m,m_\rz} \bigr\}_{(m,m_\rz) \in \setM \times \setM_\rz}$, the corresponding bins $\bigl\{ \bin {m,m_\ry} \bigr\}_{(m,m_\ry) \in \setM \times \setM_\ry}$ and $\bigl\{ \bin {m,m_\rz} \bigr\}_{(m,m_\rz) \in \setM \times \setM_\rz}$, and the indices $\bigl\{ V_{m,m_\ry,m_\rz} \bigr\}_{(m,m_\ry,m_\rz) \in \setM \times \setM_\ry \times \setM_\rz}$ to all parties. The encoding and decoding are determined by
\begin{IEEEeqnarray}{l}
\rcode = \Bigl( \pool, \bigl\{ \indexset {m,m_\ry} \bigr\}_{(m,m_\ry) \in \setM \times \setM_\ry}, \bigl\{ \indexset {m,m_\rz} \bigr\}_{(m,m_\rz) \in \setM \times \setM_\rz}, \bigl\{ V_{m, m_\ry, m_\rz} \bigr\}_{ (m, m_\ry, m_\rz ) \in \setM \times \setM_\ry \times \setM_\rz } \Bigr). \label{eq:randCodeBCCM}
\end{IEEEeqnarray}

\subparagraph*{Encoding:} To send ID Message-Triple~$(m, m_\ry, m_\rz) \in \setM \times \setM_\ry \times \setM_\rz$, the encoder transmits the sequence $\poolel {V_{m,m_\ry,m_\rz}}$. ID Message-Triple~$(m,m_\ry,m_\rz)$ is thus associated with the $\{ 0,1 \}$-valued PMF
\begin{IEEEeqnarray}{rCl}
\bm Q_{m,m_\ry,m_\rz} (\vecx) & = & \ind {\vecx = \poolel {V_{m,m_\ry,m_\rz}}}, \quad \vecx \in \setX^n. \label{eq:defPMFRndCodeBCCM}
\end{IEEEeqnarray}
Note that once the code \eqref{eq:randCodeBCCM} has been constructed, the encoder is deterministic: it maps ID Message-Triple~$(m,m_\ry,m_\rz)$ to the $(m,m_\ry,m_\rz)$-codeword $\poolel {V_{m,m_\ry,m_\rz}}$. 

\subparagraph*{Decoding:} In this section the function $\delta (\cdot)$ maps every nonnegative real number $u$ to $u  \ent {P \times W}$. The decoders choose $\epsilon > 0$ sufficiently small so that $2 \delta ( \epsilon ) < \muti {P}{W_\ry} - \tilde R_\ry$ and $2 \delta ( \epsilon ) < \muti {P}{W_\rz} - \tilde R_\rz$. The $(m^\prime,m^\prime_\ry)$-focused party at Terminal~$\ry$ guesses that $(m^\prime,m^\prime_\ry)$ was sent iff for some index $v \in \indexset {m^\prime,m_\ry^\prime}$ the $n$-tuple $\poolel v$ in Bin~$\bin {m^\prime,m_\ry^\prime}$ is jointly $\epsilon$-typical with the Terminal-$\ry$ output-sequence $Y^n$, i.e., iff $( \poolel v, Y^n ) \in \eptyp (P \times W_\ry)$ for some $v \in \indexset {m^\prime,m_\ry^\prime}$. The set $\idset {m^\prime,m^\prime_\ry}$ of Terminal-$\ry$ output-sequences $\vecy \in \setY^n$ that result in the guess ``$(m^\prime,m^\prime_\ry)$ was sent'' is thus
\begin{IEEEeqnarray}{rCl}
\idset {m^\prime,m^\prime_\ry} & = & \bigcup_{v \in \indexset {m^\prime,m^\prime_\ry}} \setT^{ ( n  )}_\epsilon \bigl( P \times W_\ry \bigl| \poolel v \bigr). \label{eq:DefIDSetMMyCM}
\end{IEEEeqnarray}
Likewise, the $(m^\prime,m^\prime_\rz)$-focused party at Terminal~$\rz$ guesses that $(m^\prime,m^\prime_\rz)$ was sent iff $( \poolel v, Z^n ) \in \eptyp (P \times W_\rz)$ for some $v \in \indexset {m^\prime,m_\rz^\prime}$. The set $\idset {m^\prime,m^\prime_\rz}$ of Terminal-$\rz$ output-sequences $\vecz \in \setZ^n$ that result in the guess ``$(m^\prime,m^\prime_\rz)$ was sent'' is thus
\begin{IEEEeqnarray}{rCl}
\idset {m^\prime,m^\prime_\rz} & = & \bigcup_{v \in \indexset {m^\prime,m^\prime_\rz}} \setT^{( n )}_\epsilon \bigl( P \times W_\rz \bigl| \poolel v \bigr). \label{eq:DefIDSetMMzCM}
\end{IEEEeqnarray}

\subparagraph*{Analysis of the Probabilities of Missed and Wrong Identification:} We first note that $\rcode$ of \eqref{eq:randCodeBCCM} (together with the fixed blocklength $n$ and the chosen $\epsilon$) fully specifies the encoding and guessing rules. That is, the randomly constructed ID code
\begin{IEEEeqnarray}{l}
\bigl\{ \bm Q_{m,m_\ry,m_\rz}, \idset {m,m_\ry}, \idset {m,m_\rz} \bigr\}_{(m,m_\ry,m_\rz) \in \setM \times \setM_\ry \times \setM_\rz} \label{eq:randCodeBC2CM}
\end{IEEEeqnarray}
is fully specified by $\rcode$. Let $\dist$ be the distribution of  $\rcode$, and let $\Exop$ denote expectation w.r.t.\ $\dist$.\\

The maximum probabilities of missed and wrong identification of the randomly constructed ID code are the random variables

\begin{subequations}
\begin{IEEEeqnarray}{rCl}
P^\ry_{\textnormal{missed-ID}} & = & \max_{(m,m_\ry) \in \setM \times \setM_\ry} \frac{1}{\card {\setM_\rz}} \sum_{m_\rz \in \setM_\rz} \bigl( \bm Q_{m,m_\ry,m_\rz} W^n \bigr) \bigl( Y^n \notin \idset {m,m_\ry} \bigr), \\
P^\rz_{\textnormal{missed-ID}} & = & \max_{(m,m_\rz) \in \setM \times \setM_\rz} \frac{1}{\card {\setM_\ry}} \sum_{m_\ry \in \setM_\ry} \bigl( \bm Q_{m,m_\ry,m_\rz} W^n \bigr) \bigl( Z^n \notin \idset {m,m_\rz} \bigr), \\
P^\ry_{\textnormal{wrong-ID}} & = & \max_{(m,m_\ry) \in \setM \times \setM_\ry} \max_{(m^\prime,m^\prime_\ry) \neq (m,m_\ry) } \frac{1}{\card {\setM_\rz}} \sum_{m_\rz \in \setM_\rz} \bigl( \bm Q_{m,m_\ry,m_\rz} W^n \bigr) \bigl( Y^n \in \idset {m^\prime,m^\prime_\ry} \bigr), \\
P^\rz_{\textnormal{wrong-ID}} & = & \max_{(m,m_\rz) \in \setM \times \setM_\rz} \max_{(m^\prime,m^\prime_\rz) \neq (m,m_\rz) } \frac{1}{\card {\setM_\ry}} \sum_{m_\ry \in \setM_\ry} \bigl( \bm Q_{m,m_\ry,m_\rz} W^n \bigr) \bigl( Z^n \in \idset {m^\prime,m^\prime_\rz} \bigr).
\end{IEEEeqnarray}
\end{subequations}
They are fully specified by $\rcode$, because they are fully specified by the randomly constructed ID code \eqref{eq:randCodeBC2CM}, which is in turn fully specified by $\rcode$. To prove that for every choice of $\lambda^{\ry}_1, \, \lambda^{\ry}_2, \, \lambda^{\rz}_1, \, \lambda^{\rz}_2 > 0$ and $n$ sufficiently large the collection of tuples \eqref{eq:randCodeBC2CM} is with high probability an $\bigl( n, \setM, \setM_\ry, \setM_\rz, \lambda^{\ry}_1, \lambda^{\ry}_2, \lambda^{\rz}_1, \lambda^{\rz}_2 \bigr)$ ID code for the BC $\channel {y,z} x$, we prove the following stronger result:

\begin{claim}\label{cl:toShowIDBCCM}
The probabilities $P^\ry_{\textnormal{missed-ID}}$, $P^\rz_{\textnormal{missed-ID}}$, $P^\ry_{\textnormal{wrong-ID}}$, and $P^\rz_{\textnormal{wrong-ID}}$ of the randomly constructed ID code \eqref{eq:randCodeBC2CM} converge in probability to zero exponentially in the blocklength~$n$, i.e.,
\begin{IEEEeqnarray}{l}
\exists \, \tau > 0 \textnormal{ s.t.\ } \lim_{n \rightarrow \infty} \Bigdistof { \max \bigl\{ P^\ry_{\textnormal{missed-ID}}, P^\rz_{\textnormal{missed-ID}}, P^\ry_{\textnormal{wrong-ID}}, P^\rz_{\textnormal{wrong-ID}} \bigr\} \geq e^{-n \tau} } = 0. \label{eq:toShowIDBCCM}
\end{IEEEeqnarray}
\end{claim}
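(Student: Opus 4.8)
The plan is to mirror the proof of Claim~\ref{cl:toShowIDBC} almost verbatim, treating the common ID message $m$ together with the private ID message $m_\ry$ (resp.\ $m_\rz$) as a single ID message for the marginal channel $\channely y x$ (resp.\ $\channelz z x$). By the symmetry of the construction \eqref{eq:randCodeBCCM} it suffices to prove that there is a $\tau > 0$ with $\lim_{n \to \infty} \Bigdistof{\max\{P^\ry_{\textnormal{missed-ID}}, P^\ry_{\textnormal{wrong-ID}}\} \ge e^{-n\tau}} = 0$; swapping $\ry$ and $\rz$ throughout gives the same statement for Terminal~$\rz$, and \eqref{eq:toShowIDBCCM} then follows from the Union-of-Events bound.

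For Terminal~$\ry$ I would introduce, for every pair $(m,m_\ry) \in \setM \times \setM_\ry$, the mixture PMF $\bm Q_{m,m_\ry}$ on $\setX^n$ obtained by averaging $\bm Q_{m,m_\ry,m_\rz}$ over a uniform $m_\rz \in \setM_\rz$, and the ``bin-uniform'' PMF $\tilde{\bm Q}_{m,m_\ry}$ that places equal mass on the elements of $\bin{m,m_\ry}$ (and unit mass on a fixed pool element when this bin is empty). Exactly as in Section~\ref{sec:DPIDBC}, the collection $\{\bm Q_{m,m_\ry}, \idset{m,m_\ry}\}_{(m,m_\ry) \in \setM \times \setM_\ry}$ has maximum missed- and wrong-identification probabilities equal to $P^\ry_{\textnormal{missed-ID}}$ and $P^\ry_{\textnormal{wrong-ID}}$, whereas $\{\tilde{\bm Q}_{m,m_\ry}, \idset{m,m_\ry}\}_{(m,m_\ry) \in \setM \times \setM_\ry}$ has the law of the single-user ID code of Section~\ref{sec:IDCodingTechnique} for the DMC $W = W_\ry$ with decoding parameter $\epsilon$, message set $\setM \times \setM_\ry$, expected bin rate $\tilde R_\ry$, and pool rate $R_\poolre$. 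These parameters are eligible for that construction: by \eqref{bl:binAndPoolRateConstraintsBCCM} we have $\max\{R,R_\ry\} < \tilde R_\ry < \muti{P}{W_\ry}$ and $\tilde R_\ry < R_\poolre$, $\epsilon$ is small enough that $2\epsilon\ent{P \times W_\ry} < \muti{P}{W_\ry} - \tilde R_\ry$, and for $n$ large $|\setM \times \setM_\ry| \le \exp(\exp(nR'))$ for any fixed $R'$ with $\max\{R,R_\ry\} < R' < \tilde R_\ry$, so Claim~\ref{re:convExpFast} applies and the errors of $\{\tilde{\bm Q}_{m,m_\ry}, \idset{m,m_\ry}\}$ vanish exponentially. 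By the Data-Processing inequality for the Total-Variation distance \cite[Lemma~1]{cannoneronservedio15}, exactly as in \eqref{eq:toShowIDBC3DataProcessingTotVarDist}, it then remains to show that for some $\tau > 0$, $\lim_{n \to \infty} \Bigdistof{\max_{(m,m_\ry) \in \setM \times \setM_\ry} d(\bm P_V^{(m,m_\ry)}, \tilde{\bm P}_V^{(m,m_\ry)}) \ge e^{-n\tau}} = 0$, where $\bm P_V^{(m,m_\ry)}$ is the law of $V_{m,m_\ry,m_\rz}$ with $m_\rz$ uniform on $\setM_\rz$ and $\tilde{\bm P}_V^{(m,m_\ry)}$ is uniform on $\indexset{m,m_\ry}$.

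The remaining step repeats the argument following \eqref{eq:muIDBCPf}. Fix $\mu$ with $0 < \mu < \tilde R_\ry - \max\{R,R_\ry\}$, put $\delta_n = e^{-n\mu/2}$, and let $\setH^\ry_\mu$ be the event that $|\indexset{m,m_\ry}| > (1 - \delta_n)e^{n\tilde R_\ry}$ for every $(m,m_\ry) \in \setM \times \setM_\ry$. The multiplicative Chernoff bound \eqref{eq:multChernDeltaSm1Sm} of Proposition~\ref{pr:multChernoff} and a union bound over the doubly-exponentially many pairs $(m,m_\ry)$ give $\distof{\{\indexset{m,m_\ry}\} \notin \setH^\ry_\mu} \to 0$, using $\mu < \tilde R_\ry - \max\{R,R_\ry\}$. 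Conditioning on a realization of the $\ry$-index-sets in $\setH^\ry_\mu$, the key observation is that for each fixed common message $m$ the index-sets $\{\indexset{m,m_\rz}\}_{m_\rz \in \setM_\rz}$ are drawn independently of one another and of the $\ry$-index-sets, so conditionally the indices $\{V_{m,m_\ry,m_\rz}\}_{m_\rz \in \setM_\rz}$ are IID with mean at most $(1 - \delta_n)^{-1}\exp\{-(1 - \delta_n)e^{n(\tilde R_\ry + \tilde R_\rz - R_\poolre)} - n\tilde R_\ry\}$ for $v \notin \indexset{m,m_\ry}$, and within that distance of $1/|\indexset{m,m_\ry}|$ for $v \in \indexset{m,m_\ry}$, just as in \eqref{eq:meanVNotinBMyBd} and \eqref{eq:meanVInBMyBd}. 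Choosing $\kappa$ with $0 < \kappa < \min\{R_\rz, \tilde R_\ry + \tilde R_\rz - R_\poolre\}$ (possible by \eqref{bl:binAndPoolRateConstraintsBCCM}) and $\xi_n = 4\exp\{-e^{n\kappa - \log 2}\}$, Hoeffding's inequality (Proposition~\ref{pr:hoeffding}) applied over the doubly-exponentially many $m_\rz$, followed by union bounds over $v \in \setV$ and over $(m,m_\ry)$, yields the desired exponential convergence; the cardinalities $|\setV| = e^{nR_\poolre}$ and $|\setM \times \setM_\ry|$ are absorbed because $|\setM_\rz| = \exp(\exp(nR_\rz))$ with $R_\rz > 0$.

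I expect no conceptual difficulty beyond the two-receiver argument --- the proof is essentially bookkeeping --- and the one point genuinely requiring care is that indexing the bins by pairs $(m,m_\Psi)$ preserves the two independence properties that drive the proof: for each fixed common message $m$ the $\rz$-index-sets $\{\indexset{m,m_\rz}\}_{m_\rz}$ must be mutually independent (this is what makes the Hoeffding step go through unchanged), and the now doubly-exponential cardinality $|\setM \times \setM_\ry|$ must still be dominated by the singly-exponential bin and pool sizes, which is ensured by the strict inequalities $\max\{R,R_\ry\} < \tilde R_\ry$ and $\max\{R,R_\rz\} < \tilde R_\rz$ in \eqref{bl:binAndPoolRateConstraintsBCCM}. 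Once these are checked, swapping $\ry \leftrightarrow \rz$ and invoking the Union-of-Events bound completes the proof of \eqref{eq:toShowIDBCCM}.
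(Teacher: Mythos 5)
Your proposal is correct and follows essentially the same approach as the paper's own (quite terse) proof: both reduce to Terminal~$\ry$ by symmetry, couple the common-message code to the single-user construction of Section~2 on the marginal channel $W_\ry$ with message set $\setM \times \setM_\ry$, bound the Total-Variation distance $d(\bm P_V^{(m,m_\ry)}, \tilde{\bm P}_V^{(m,m_\ry)})$ via the Data-Processing inequality, and then repeat the $\setH^\ry_\mu$-conditioning and Chernoff/Hoeffding concentration argument from Claim~\ref{cl:toShowIDBC}. The paper leaves the concentration step as ``follows essentially along the line of arguments leading to \eqref{eq:toShowIDBC4}''; you usefully make explicit the two points it silently relies on, namely that $|\setM \times \setM_\ry|$ is still dominated by $\exp(\exp(nR'))$ for some $R' < \tilde R_\ry$ (so the union bounds close), and that for a fixed $(m,m_\ry)$ the index-sets $\{\indexset{m,m_\rz}\}_{m_\rz}$ are IID so the Hoeffding step over $m_\rz$ is unchanged.
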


\begin{proof}
We will prove that
\begin{IEEEeqnarray}{l}
\exists \, \tau > 0 \textnormal{ s.t.\ } \lim_{n \rightarrow \infty} \Bigdistof { \max \bigl\{ P^\ry_{\textnormal{missed-ID}}, P^\ry_{\textnormal{wrong-ID}} \bigr\} \geq e^{-n \tau} } = 0. \label{eq:toShowIDBC2CM}
\end{IEEEeqnarray}
By swapping $\rz$ and $\ry$ throughout the proof it will then follow that \eqref{eq:toShowIDBC2CM} also holds when we replace $\ry$ with $\rz$, and \eqref{eq:toShowIDBCCM} will then follow using the Union-of-Events bound.

To prove \eqref{eq:toShowIDBC2CM} we consider for each $m_\ry \in \setM_\ry$ two distributions on the set $\setV$, which indexes the pool $\pool$. We fix some $v^\star \in \setV$ and define for every $m_\ry \in \setM_\ry$ the PMFs on $\setV$
\begin{subequations}\label{bl:remIndM2M3AndUnifBinCM}
\begin{IEEEeqnarray}{rCl}
\bm P_V^{(m,m_\ry)} (v) & = & \frac{1}{|\setM_\rz|} \sum_{m_\rz \in \setM_\rz} \ind {v = V_{m,m_\ry,m_\rz}}, \quad v \in \setV, \\ \label{eq:remIndMzBinCM}
\tilde {\bm P}_V^{(m,m_\ry)} (v) & = & \begin{cases} \frac{1}{|\indexset {m,m_\ry}|} \sum_{v^\prime \in \indexset {m,m_\ry}} \ind {v = v^\prime} &\textnormal{if } \indexset {m,m_\ry} \neq \emptyset, \\ \ind {v = v^\star} &\textnormal{otherwise}, \end{cases} \quad v \in \setV. \label{eq:remUnifBinCM}
\end{IEEEeqnarray}
\end{subequations}
The latter PMF is reminiscent of the distribution we encountered in \eqref{eq:distIDMsgDMC} and \eqref{eq:distIDMsgDMC2} in the single-user case. The former is related to the common-message BC setting when we view $M_\setZ$ as uniform over $\setM_\setZ$. Like the proof of Claim~\ref{cl:toShowIDBC}, to establish \eqref{eq:toShowIDBCCM} it suffices to show that the two are similar in the sense that
\begin{IEEEeqnarray}{l}
\exists \, \tau > 0 \textnormal{ s.t.\ } \lim_{n \rightarrow \infty} \biggdistof { \max_{(m,m_\ry) \in \setM \times \setM_\ry} d \Bigl( \bm P_V^{(m,m_\ry)}, \tilde {\bm P}_V^{(m,m_\ry)} \Bigr) \geq e^{-n \tau} } = 0, \label{eq:toShowIDBC4CM}
\end{IEEEeqnarray}
which follows essentially along the line of arguments leading to \eqref{eq:toShowIDBC4} in the proof of Claim~\ref{cl:toShowIDBC}.
\end{proof}

\subsection{The Converse Part of Theorem~\ref{th:IDBCCM}}

We prove the following strong converse:

\begin{claim}\label{cl:toShowIDBCConvCM}
For every rate-triple $( R, R_\ry, R_\rz )$, every positive constants $\lambda_1^\ry, \, \lambda_2^\ry, \, \lambda_1^\rz, \, \lambda_2^\rz$ satisfying
\begin{IEEEeqnarray}{rCl}
\lambda_1^\ry + \lambda_2^\ry + \lambda_1^\rz + \lambda_2^\rz & < & 1, \label{eq:sumMissWrongBCCM}
\end{IEEEeqnarray}
and every $\epsilon > 0$ there exists some $\eta_0 \in \naturals$ so that, for every blocklength~$n \geq \eta_0$, every size-$\exp (\exp (n R))$ set $\setM$ of possible common ID messages, every size-$\exp (\exp (n R_\ry))$ set $\setM_\ry$ of possible ID messages for Receiver~$\ry$, and every size-$\exp (\exp (n R_\rz))$ set $\setM_\rz$ of possible ID messages for Receiver~$\rz$, a necessary condition for an $\bigl( n, \setM, \setM_\ry, \setM_\rz, \lambda_1^\ry, \lambda_2^\ry, \lambda_1^\rz, \lambda_2^\rz \bigr)$ ID code for the BC $\channel {y,z} x$ to exist is that for some PMF $P$ on $\setX$
\begin{subequations}\label{bl:converseBCCM}
\begin{IEEEeqnarray}{rCl}
R, \, R_\ry & < & \muti {P}{W_\ry} + \epsilon, \\
R, \, R_\rz & < & \muti {P}{W_\rz} + \epsilon.
\end{IEEEeqnarray}
\end{subequations}
\end{claim}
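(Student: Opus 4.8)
The plan is to mimic the proof of Claim~\ref{cl:toShowIDBCConv} almost verbatim, using the equivalent formulation of an ID code for the BC with a common message from Definition~\ref{def:commonMessageIDCodeBC}. First I would recall that a collection of tuples $\bigl\{ Q_{m,m_\ry,m_\rz}, \idsetre {m,m_\ry}, \idsetre {m,m_\rz} \bigr\}$ is an $\bigl( n, \setM, \setM_\ry, \setM_\rz, \lambda^{\ry}_1, \lambda^{\ry}_2, \lambda^{\rz}_1, \lambda^{\rz}_2 \bigr)$ ID code for the BC with a common message if and only if $\bigl\{ Q_{m,m_\ry}, \setD_{m,m_\ry} \bigr\}_{(m,m_\ry) \in \setM \times \setM_\ry}$ is an $\bigl( n, \setM \times \setM_\ry, \lambda^{\ry}_1, \lambda^{\ry}_2 \bigr)$ ID code for the marginal channel $\channely y x$, and likewise $\bigl\{ Q_{m,m_\rz}, \setD_{m,m_\rz} \bigr\}$ is an $\bigl( n, \setM \times \setM_\rz, \lambda^{\rz}_1, \lambda^{\rz}_2 \bigr)$ ID code for $\channelz z x$. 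The key observation is that $\card {\setM \times \setM_\ry} = \exp(\exp(nR)) \exp(\exp(nR_\ry))$, which is (up to subexponential factors in the iterated logarithm) $\exp\bigl(\exp\bigl(n \max\{R, R_\ry\}\bigr)\bigr)$; more carefully, $\exp(\exp(nR)) \exp(\exp(nR_\ry)) \leq \exp\bigl(2\exp(n\max\{R,R_\ry\})\bigr) = \exp\bigl(\exp(n\max\{R,R_\ry\} + \log 2)\bigr)$.

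Next I would apply Lemma~\ref{le:avgDistWeightOnTypes} to each of these two single-user ID codes. Fix $\kappa^\ry, \kappa^\rz > 0$ with $\lambda_1^\ry + \lambda_2^\ry < \kappa^\ry$, $\lambda_1^\rz + \lambda_2^\rz < \kappa^\rz$, and $\kappa^\ry + \kappa^\rz < 1$ (possible by \eqref{eq:sumMissWrongBCCM}). For the $\ry$-code, the message set has size $\exp(\exp(nR_\ry^{\max}))$ with $R_\ry^{\max} = \max\{R, R_\ry\}$ up to the harmless $\log 2$ additive correction inside the inner exponent, so Lemma~\ref{le:avgDistWeightOnTypes} (applied with rate $R_\ry^{\max}$) gives that for $n$ large the mixture PMF $Q = \frac{1}{\card{\setM}\,\card{\setM_\ry}\,\card{\setM_\rz}} \sum Q_{m,m_\ry,m_\rz}$, which equals the average of $Q_{m,m_\ry}$ over $(m,m_\ry)$, satisfies $Q\bigl(X^n \in \{\vecx : \muti{P_\vecx}{W_\ry} > R_\ry^{\max} - \epsilon\}\bigr) > 1 - \kappa^\ry - o(1)$, where the $o(1)$ term is a ratio of doubly-exponential quantities that vanishes. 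The same argument with $\ry$ replaced by $\rz$ gives $Q\bigl(X^n \in \{\vecx : \muti{P_\vecx}{W_\rz} > R_\rz^{\max} - \epsilon\}\bigr) > 1 - \kappa^\rz - o(1)$, where $R_\rz^{\max} = \max\{R, R_\rz\}$. Since $\muti{P_\vecx}{W_\ry} > R_\ry^{\max} - \epsilon \geq R - \epsilon$ and $\geq R_\ry - \epsilon$ simultaneously, and likewise for $\rz$, a Union-of-Events bound shows $Q$ assigns positive mass, for $n$ large, to some $\vecx$ with $\muti{P_\vecx}{W_\ry} > \max\{R,R_\ry\} - \epsilon$ and $\muti{P_\vecx}{W_\rz} > \max\{R,R_\rz\} - \epsilon$; taking $P = P_\vecx$ gives \eqref{bl:converseBCCM}.

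The one subtlety to handle carefully — and the step I expect to require the most attention — is the passage from message-set size $\exp(\exp(nR))\exp(\exp(nR_\ry))$ to the clean rate $\max\{R, R_\ry\}$ in the hypothesis of Lemma~\ref{le:avgDistWeightOnTypes}: that lemma is stated for sets of size exactly $\exp(\exp(nR'))$. I would address this by noting that $\setM \times \setM_\ry$ has size at most $\exp(\exp(n(\max\{R,R_\ry\} + \frac{\log 2}{n})))$ — but $\max\{R,R_\ry\} + \frac{\log 2}{n}$ is not a fixed rate. The cleanest fix is to instead re-run the proof of Lemma~\ref{le:avgDistWeightOnTypes} directly, observing that in its statement and proof the only place the size $\exp(\exp(nR))$ enters is in inequality \eqref{eq:convLemmaAvgDistWeightOnTypesNSuffLarge4} and in bounding $\card{\setK} \leq \card{\setM}$; the argument goes through verbatim as long as $\log\log\card{\setM \times \setM_\ry} \approx n\max\{R,R_\ry\}$ and the analogue of \eqref{eq:IDBCConvSetKUB} holds with $R$ replaced by $\max\{R,R_\ry\}$, which it does because $\card{\setK} < \exp\{e^{n(\max\{R,R_\ry\} - \epsilon/2)}\}$ is still exponentially (in the iterated-log sense) smaller than $\card{\setM \times \setM_\ry} \geq \exp\{e^{n\max\{R,R_\ry\}}\}$. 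Thus I would either state a trivial corollary of Lemma~\ref{le:avgDistWeightOnTypes} allowing message sets of size $\Theta(\exp(\exp(nR)))$ in the exponent, or simply invoke the lemma with $R' = \max\{R,R_\ry\}$ and remark that enlarging the message set (from $\exp(\exp(nR'))$ up to $\exp(\exp(nR))\exp(\exp(nR_\ry)) \leq \exp(2\exp(nR'))$) only strengthens the conclusion since the ratio $\exp\{e^{n(R'-\epsilon/2)}\}/\card{\setM\times\setM_\ry}$ still tends to zero. Everything else is a transcription of the proof of Claim~\ref{cl:toShowIDBCConv} with the obvious relabeling, so I would keep the write-up brief and refer back to that proof.
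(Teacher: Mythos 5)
Your proposal is correct and follows essentially the same route as the paper: reduce to two single-user ID codes on the marginal channels with message sets $\setM \times \setM_\ry$ and $\setM \times \setM_\rz$, apply Lemma~\ref{le:avgDistWeightOnTypes}, and combine via the Union-of-Events bound. The paper handles the technicality about the product set's size by defining the exact blocklength-dependent rates $R_\ry^\prime (n) = \tfrac{1}{n} \log \log ( \card \setM \, \card {\setM_\ry} ) \geq \max \{ R, R_\ry \}$ (and likewise for $\rz$) and invoking the lemma at those rates, which is equivalent to your monotonicity observation.
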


\begin{proof}
The proof is similar to that of Claim~\ref{cl:toShowIDBCConv}. Fix $\kappa^\ry, \, \kappa^\rz > 0$ that satisfy the following three: 1) $\lambda_1^\ry + \lambda_2^\ry < \kappa^\ry$; 2) $\lambda_1^\rz + \lambda_2^\rz < \kappa^\rz$; and 3) $\kappa^\ry + \kappa^\rz< 1$. (This is possible because of \eqref{eq:sumMissWrongBCCM}.) By Lemma~\ref{le:avgDistWeightOnTypes} there must exist some $\eta_0^\prime \in \naturals$ so that, for every blocklength~$n \geq \eta_0^\prime$, every size-$\exp (\exp (n R))$ set $\setM$ of possible common ID messages, every size-$\exp (\exp (n R_\ry))$ set $\setM_\ry$ of possible ID messages for Receiver~$\ry$, and every size-$\exp (\exp (n R_\rz))$ set $\setM_\rz$ of possible ID messages for Receiver~$\rz$, the following conditions are necessary for a collection of tuples $$\bigl\{ Q_{m,m_\ry,m_\rz}, \idsetre {m,m_\ry}, \idsetre {m,m_\rz} \bigr\}_{(m,m_\ry,m_\rz) \in \setM \times \setM_\ry \times \setM_\rz}$$ to be an $\bigl( n, \setM, \setM_\ry, \setM_\rz, \lambda_1^\ry, \lambda_2^\ry, \lambda_1^\rz, \lambda_2^\rz \bigr)$ ID code for the BC $\channel {y,z} x$: for
\begin{subequations}\label{bl:cmMessRyRzPrime}
\begin{IEEEeqnarray}{rCl}
R_\ry^\prime (n) & = & \frac{1}{n} \log \log \bigl( |\setM| \, |\setM_\ry| \bigr), \\
R_\rz^\prime (n) & = & \frac{1}{n} \log \log \bigl( |\setM| \, |\setM_\rz| \bigr)
\end{IEEEeqnarray}
\end{subequations}
the mixture PMFs on $\setX^n$
\begin{subequations}
\begin{IEEEeqnarray}{rCl}
Q_{m,m_\ry} & = & \frac{1}{|\setM_\rz|} \sum_{m_\rz \in \setM_\rz} Q_{m,m_\ry,m_\rz}, \quad (m,m_\ry) \in \setM \times \setM_\ry, \\
Q_{m,m_\rz} & = & \frac{1}{|\setM_\ry|} \sum_{m_\ry \in \setM_\ry} Q_{m,m_\ry,m_\rz}, \quad (m,m_\rz) \in \setM \times \setM_\rz, \\
Q & = & \frac{1}{|\setM| \, |\setM_\ry| \, |\setM_\rz|} \sum_{(m, m_\ry, m_\rz) \in \setM \times \setM_\ry \times \setM_\rz} Q_{m,m_\ry,m_\rz}
\end{IEEEeqnarray}
\end{subequations}
satisfy
\begin{IEEEeqnarray}{l}
Q \bigl( X^n \in \{ \vecx \in \setX^n \colon \muti {P_\vecx}{W_\ry} > R_\ry^\prime (n) - \epsilon \} \bigr) \nonumber \\
\quad = \frac{1}{|\setM| \, | \setM_\ry |} \sum_{(m,m_\ry) \in \setM \times \setM_\ry} Q_{m, m_\ry} \bigl( X^n \in \{ \vecx \in \setX^n \colon \muti {P_\vecx}{W_\ry} > R_\ry^\prime (n) - \epsilon \} \bigr) \\
\quad \geq 1 - \kappa^\ry - \exp \bigl\{ e^{n ( R_\ry^\prime (n) - \epsilon / 2 )} \bigr\} / \exp \bigl\{ e^{n R_\ry^\prime (n)} \bigr\} \label{eq:condSinceW1IDCodeCM}
\end{IEEEeqnarray}
and
\begin{IEEEeqnarray}{l}
Q \bigl( X^n \in \{ \vecx \in \setX^n \colon \muti {P_\vecx}{W_\rz} > R_\rz^\prime (n) - \epsilon \} \bigr) \nonumber \\
\quad = \frac{1}{| \setM | \, | \setM_\rz |} \sum_{(m,m_\rz) \in \setM \times \setM_\rz} Q_{m, m_\rz} \bigl( X^n \in \{ \vecx \in \setX^n \colon \muti {P_\vecx}{W_\rz} > R_\rz^\prime (n) - \epsilon \} \bigr) \\
\quad \geq 1 - \kappa^\rz - \exp \bigl\{ e^{n ( R_\rz^\prime (n) - \epsilon / 2 )} \bigr\} / \exp \bigl\{ e^{n R_\rz^\prime (n)} \bigr\}. \label{eq:condSinceW2IDCodeCM}
\end{IEEEeqnarray}
The Union-of-Events bound, \eqref{eq:condSinceW1IDCodeCM}, and \eqref{eq:condSinceW2IDCodeCM} imply that
\begin{IEEEeqnarray}{l}
Q \bigl( X^n \in \{ \vecx \in \setX^n \colon \muti {P_\vecx}{W_\ry} > R_\ry^\prime (n) - \epsilon, \, \muti {P_\vecx}{W_\rz} > R_\rz^\prime (n) - \epsilon \} \bigr) \nonumber \\
\quad \geq 1 - \kappa^\ry - \kappa^\rz - \exp \bigl\{ e^{n ( R_\ry^\prime (n) - \epsilon / 2 )} \bigr\} / \exp \bigl\{ e^{n R_\ry^\prime (n)} \bigr\} - \exp \bigl\{ e^{n ( R_\rz^\prime (n) - \epsilon / 2 )} \bigr\} / \exp \bigl\{ e^{n R_\rz^\prime (n)} \bigr\}.  \label{eq:BCExistsPMFGoodForBothCM}
\end{IEEEeqnarray}
Now let $\eta_0$ be the smallest integer $n \geq \eta_0^\prime$ for which the RHS of \eqref{eq:BCExistsPMFGoodForBothCM} is positive (such an $n$ must exist, because $\epsilon > 0$ and $\kappa^\ry + \kappa^\rz < 1$). By \eqref{bl:cmMessRyRzPrime}
\begin{subequations}
\begin{IEEEeqnarray}{rCl}
R_\ry^\prime (n) & \geq & \max \{ R, R_\ry \}, \\
R_\rz^\prime (n) & \geq & \max \{ R, R_\rz \},
\end{IEEEeqnarray}
\end{subequations}
and hence Claim~\ref{cl:toShowIDBCConvCM} follows: for every blocklength~$n \geq \eta_0$ a necessary condition for \eqref{eq:BCExistsPMFGoodForBothCM} to hold is that for some PMF $P$ on $\setX$ \eqref{bl:converseBCCM} holds.
\end{proof}

\section{A Proof of Theorem~\ref{th:IDBC1FBIB}}\label{app:IDBC1FBIB}

We prove Theorem~\ref{th:IDBC1FBIB} by fixing any input distribution $P \in \mathscr P (\setX)$ and any positive ID rate-pair $(R_\ry,R_\rz)$ satisfying
\begin{subequations}\label{bl:ratePairAch1FB}
\begin{IEEEeqnarray}{rCcCl}
0 & < & R_\ry & < & \ent {P W_\ry} \ind {\max_{\tilde P} \muti {\tilde P}{W_\ry} > 0}, \\
0 & < & R_\rz & < & \muti P{W_\rz}
\end{IEEEeqnarray}
\end{subequations}
and showing that the rate-pair $(R_\ry,R_\rz)$ is achievable. We assume that $\max_{\tilde P} \muti {\tilde P}{W_\ry}$, $\ent {P W_\ry}$, and $\muti P {W_\rz}$ are all positive; when they are not, the result follows from Theorem~\ref{th:IDDMC} and \cite{ahlswededueckfb89}. Let $\setM_\ry$ be a size-$\exp (\exp (n R_\ry))$ set of possible ID messages for Terminal~$\ry$, and let $\setM_\rz$ be a size-$\exp (\exp (n R_\rz))$ set of possible ID messages for Terminal~$\rz$. We next describe our random code construction and show that, for every positive $\lambda_1^\ry, \, \lambda_2^\ry, \, \lambda_1^\rz, \, \lambda_2^\rz$ and for every sufficiently-large $n$, it produces with high probability an $\bigl( n + \sqrt n, \setM_\ry, \setM_\rz, \lambda^{\ry}_1, \lambda^{\ry}_2, \lambda^{\rz}_1, \lambda^{\rz}_2 \bigr)$ ID code for the BC $\channel {y,z} x$ with one-sided feedback from Terminal~$\ry$. A rough description of the coding scheme that we propose can be found in Section~\ref{sec:BC1FB}.

\subparagraph*{Code Generation:} Fix an expected bin rate $\tilde R_\rz$ for Terminal~$\rz$, a pool rate $R_\poolre$, and a transmission rate $\hat R_\ry$ for Terminal~$\ry$ satisfying
\begin{subequations}\label{bl:binAndPoolRateConstraintsBC1FB}
\begin{IEEEeqnarray}{rCcCl}
R_\rz & < & \tilde R_\rz & < & \muti {P}{W_\rz}, \\
&& \muti {P}{W_\ry} & < & R_\poolre, \label{eq:poolRateConstYConstraints1FB} \\
&& \tilde R_\rz & < & R_\poolre, \\
0 & < & \hat R_\ry & < & \max_{\tilde P} \muti {P}{W_\ry}. \label{eq:transRateTerYConstraints1FB}
\end{IEEEeqnarray}
\end{subequations}
This is possible by \eqref{bl:ratePairAch1FB}. Draw $e^{n R_\poolre}$ $n$-tuples $\sim P^n$ independently and place them in a pool $\pool$. Index the $n$-tuples in the pool by the elements of a size-$e^{n R_\poolre}$ set $\setV$, e.g., $\{ 1, \ldots, e^{n R_\poolre} \}$, and denote by $\poolel v$ the $n$-tuple in $\pool$ that is indexed by~$v \in \setV$. Associate with each ID message $m_\rz \in \setM_\rz$ an index-set $\indexset {m_\rz}$ and a bin $\bin {m_\rz}$ as follows. Select each element of $\setV$ for inclusion in $\indexset {m_\rz}$ independently with probability $e^{-n( R_\poolre - \tilde R_\rz )}$, and let Bin~$\bin {m_\rz}$ be the multiset that contains all the $n$-tuples in the pool that are indexed by $\indexset {m_\rz}$, $$\bin {m_\rz} = \bigl\{ \poolel v, \, v \in \indexset {m_\rz} \bigr\}.$$ (Bin~$\bin {m_\rz}$ is thus of expected size $e^{n \tilde R_\rz}$.) Associate with each ID message-pair $(m_\ry,m_\rz) \in \setM_\ry \times \setM_\rz$ an index $V_{m_\ry,m_\rz}$ as follows. If $\indexset {m_\rz}$ is not empty, then draw $V_{m_\ry,m_\rz}$ uniformly over $\indexset {m_\rz}$. Otherwise let $V_{m_\ry,m_\rz} = v^\star$, where $v^\star$ is an arbitrary but fixed element of $\setV$. Let $\bigl\{ ( f_k, \phi_k ) \bigr\}_{k \in \naturals}$ be a sequence of blocklength-$k$, rate-$\hat R_\ry$ transmission codes for the marginal channel $W_\ry$ satisfying that the maximum error probability $\epsilon_k$ converges to zero as the blocklength~$k$ tends to infinity. (By \eqref{eq:transRateTerYConstraints1FB} such a transmission-code sequence exists.) For the code construction we use the blocklength-$\sqrt n$ transmission code $( f_{\sqrt n}, \phi_{\sqrt n} )$, which we denote by $( f, \phi )$. Denote the size-$2^{\sqrt n \hat R_\ry}$ set of possible transmission messages by $\setU$, so $f \colon \setU \rightarrow \setX^{\sqrt n}$, $\phi \colon \setY^{\sqrt n} \rightarrow \setU$, and
\begin{equation}
\epsilon_{\sqrt n} = \max_{u \in \setU} W^{\sqrt n}_\ry \Bigl( Y^{\sqrt n} \notin \phi^{-1} ( u ) \Bigl| f ( u ) \Bigr). \label{eq:epsilonSqrtn1FB}
\end{equation}
Associate with each pair $(\vecy, m_\ry) \in \setY^n \times \setM_\ry$ a transmission message $\cwind \vecy {m_\ry}$ by drawing the transmission messages independently and uniformly over $\setU$. Reveal the pool $\pool$, the index-sets $\bigl\{ \indexset {m_\rz} \bigr\}_{m_\rz \in \setM_\rz}$, the corresponding bins $\bigl\{ \bin {m_\rz} \bigr\}_{m_\rz \in \setM_\rz}$, the indices $\bigl\{ V_{m_\ry,m_\rz} \bigr\}_{(m_\ry,m_\rz) \in \setM_\ry \times \setM_\rz}$, the transmission code $( f, \phi )$, and the transmission messages $\bigl\{ \cwind \vecy {m_\ry} \bigr\}_{(\vecy, m_\ry) \in \setY^n \times \setM_\ry}$ to all parties. The encoding and decoding are determined by
\begin{IEEEeqnarray}{l}
\rcode = \Bigl( \pool, \bigl\{ \indexset {m_\rz} \bigr\}_{m_\rz \in \setM_\rz}, \bigl\{ V_{m_\ry, m_\rz} \bigr\}_{ (m_\ry, m_\rz ) \in \setM_\ry \times \setM_\rz }, (f, \phi), \bigl\{ \cwind \vecy {m_\ry} \bigr\}_{(\vecy, m_\ry) \in \setY^n \times \setM_\ry} \Bigr). \label{eq:randCodeBC1FB}
\end{IEEEeqnarray}

\subparagraph*{Encoding:} To send ID Message-Pair~$(m_\ry, m_\rz) \in \setM_\ry \times \setM_\rz$, the encoder transmits the sequence $\poolel {V_{m_\ry,m_\rz}} \circ f \bigl( \cwind {Y^n} {m_\ry} \bigr)$. Note that once the code \eqref{eq:randCodeBC1FB} has been constructed, the encoder is deterministic: The encoder first maps ID Message-Pair~$(m_\ry,m_\rz)$ to the $(m_\ry,m_\rz)$-codeword $\poolel {V_{m_\ry,m_\rz}}$, which it transmits during the first $n$ channel uses; it then observes the first $n$ channel outputs $Y^n$ at Receiver~$\ry$ through the feedback link; from $Y^n$ and ID Message~$m_\ry$, the encoder computes the $(Y^n,m_\ry)$-transmission-codeword $f \bigl( \cwind {Y^n} {m_\ry} \bigr)$, which it transmits during the remaining $\sqrt n$ channel uses.

\subparagraph*{Decoding:} In this section the function $\delta (\cdot)$ maps every nonnegative real number $u$ to $u  \ent {P \times W}$. The decoders choose $\epsilon > 0$ sufficiently small so that $3 \delta ( \epsilon ) < \ent {P W_\ry} - R_\ry$ and $2 \delta ( \epsilon ) < \muti {P}{W_\rz} - \tilde R_\rz$. The $m^\prime_\ry$-focused party at Terminal~$\ry$ guesses that $m^\prime_\ry$ was sent iff the Terminal-$\ry$ output-sequence $Y^{n + \sqrt n}$ satisfies that the decoding function $\phi$ maps $Y^{n + \sqrt n}_{n + 1}$ to the $(Y^n,m_\ry^\prime)$-transmission-message $\cwind {Y^n} {m_\ry^\prime}$, i.e., iff $\phi \bigl( Y_{n+1}^{n+\sqrt n} \bigr) = \cwind {Y^n}{m^\prime_\ry}$. The set $\idset {m^\prime_\ry}$ of Terminal-$\ry$ output-sequences $\vecy \in \setY^{n + \sqrt n}$ that result in the guess ``$m^\prime_\ry$ was sent'' is thus
\begin{IEEEeqnarray}{rCl}
\idset {m^\prime_\ry} & = & \Bigl\{\vecy \in \setY^{n+\sqrt n} \colon \phi \bigl( y_{n+1}^{n+\sqrt n} \bigr) = \cwind {y^n}{m^\prime_\ry} \Bigr\}. \label{eq:DefIDSetMy1FB}
\end{IEEEeqnarray}
The $m^\prime_\rz$-focused party at Terminal~$\rz$ guesses that $m^\prime_\rz$ was sent iff for some index $v \in \indexset {m_\rz^\prime}$ the $n$-tuple $\poolel v$ in Bin~$\bin {m_\rz^\prime}$ is jointly $\epsilon$-typical with the first $n$ channel outputs at Terminal-$\rz$, i.e., iff $( \poolel v, Z^n ) \in \eptyp (P \times W_\rz)$ for some $v \in \indexset {m_\rz^\prime}$. The set $\idset {m^\prime_\rz}$ of Terminal-$\rz$ output-sequences $\vecz \in \setZ^{n + \sqrt n}$ that result in the guess ``$m^\prime_\rz$ was sent'' is thus
\begin{IEEEeqnarray}{rCl}
\idset {m^\prime_\rz} & = & \Biggl( \bigcup_{v \in \indexset {m^\prime_\rz}} \setT^{ ( n  )}_\epsilon \bigl( P \times W_\rz \bigl| \poolel v \bigr) \Biggr) \times \setZ^{ \sqrt n }. \label{eq:DefIDSetMz1FB}
\end{IEEEeqnarray}

\subparagraph*{Analysis of the Probabilities of Missed and Wrong Identification:} We first note that $\rcode$ of \eqref{eq:randCodeBC1FB} (together with the fixed $n$ and the chosen $\epsilon$) fully specifies the encoding and guessing rules. Let $\dist$ be the distribution of  $\rcode$, and let $\Exop$ denote expectation w.r.t.\ $\dist$. Subscripts indicate conditioning on the event that some of the chance variables assume the values indicated by the subscripts, e.g., $\dist_{\indexsetre {m_\ry}}$ denotes the distribution conditional on $\indexset {m_\ry} = \indexsetre {m_\ry}$, and $\Exop_{\indexsetre {m_\ry}}$ denotes the expectation w.r.t.\ $\dist_{\indexsetre {m_\ry}}$.\\

The maximum probabilities of missed and wrong identification of the randomly constructed ID code are the random variables
\begin{subequations}\label{bl:prMissedWrongIDIDBC1FB}
\begin{IEEEeqnarray}{rCl}
P^\ry_{\textnormal{missed-ID}} & = & \max_{m_\ry \in \setM_\ry} \frac{1}{|\setM_\rz|} \sum_{m_\rz \in \setM_\rz} \sum_{\vecy \notin \idset {m_\ry}} W^n_\ry \bigl( y^n \bigl| \poolel {V_{m_\ry,m_\rz}} \bigr) W^{\sqrt n}_\ry \Bigl( y^{n + \sqrt n}_{n + 1} \Bigl| f \bigl( \cwind {y^n}{m_\ry} \bigr) \Bigr), \label{eq:prMissedIDIDBC1FB} \\
P^\rz_{\textnormal{missed-ID}} & = & \max_{m_\rz \in \setM_\rz} \frac{1}{|\setM_\ry|} \sum_{m_\ry \in \setM_\ry} W^n \bigl( Z^n \notin \idset {m_\rz} \bigl| \poolel {V_{m_\ry,m_\rz}} \bigr), \\
P^\ry_{\textnormal{wrong-ID}} & = & \max_{m_\ry \in \setM_\ry} \max_{m_\ry^\prime \neq m_\ry} \frac{1}{|\setM_\rz|} \! \sum_{m_\rz \in \setM_\rz} \sum_{\vecy \in \idset {m_\ry^\prime}} W^n_\ry \bigl( y^n \bigl| \poolel {V_{m_\ry,m_\rz}} \bigr) W^{\sqrt n}_\ry \Bigl( y^{n + \sqrt n}_{n + 1} \Bigl| f \bigl( \cwind {y^n}{m_\ry} \bigr) \Bigr), \label{eq:prWrongIDIDBC1FB} \\
P^\rz_{\textnormal{wrong-ID}} & = & \max_{m_\rz \in \setM_\rz} \max_{m_\rz^\prime \neq m_\rz} \frac{1}{|\setM_\ry|} \sum_{m_\ry \in \setM_\ry} W^n \bigl( Z^n \in \idset {m_\rz^\prime} \bigl| \poolel {V_{m_\ry,m_\rz}} \bigr).
\end{IEEEeqnarray}
\end{subequations}
They are fully specified by $\rcode$. To prove that for every choice of $\lambda_1^\ry, \, \lambda_2^\ry, \, \lambda_1^\rz, \, \lambda_2^\rz > 0$ and $n$ sufficiently large the constructed code is with high probability an $\bigl( n + \sqrt n, \setM_\ry, \setM_\rz, \lambda^{\ry}_1, \lambda^{\ry}_2, \lambda^{\rz}_1, \lambda^{\rz}_2 \bigr)$ ID code for the BC $\channel {y,z} x$ with one-sided feedback from Terminal~$\ry$, we prove the following stronger result:

\begin{claim}\label{cl:toShowIDBC1FB}
The probabilities $P^\ry_{\text{missed-ID}}$, $P^\rz_{\text{missed-ID}}$, $P^\ry_{\text{wrong-ID}}$, and $P^\rz_{\text{wrong-ID}}$ satisfy
\begin{subequations}\label{bl:claimIDBC1FB}
\begin{IEEEeqnarray}{rl}
\exists \, \{ \kappa_n \}_{n \in \naturals} \textnormal{ s.t. } & \lim_{n \rightarrow \infty} \kappa_n = 0 \textnormal{ and } \lim_{n \rightarrow \infty} \Bigdistof {\max \bigl\{ P^\ry_{\textnormal{missed-ID}}, P^\ry_{\textnormal{wrong-ID}} \bigr\} \geq \kappa_n} = 0, \label{eq:claimIDBC1FBY} \\
\exists \, \tau > 0 \textnormal{ s.t. } & \lim_{n \rightarrow \infty} \Bigdistof {\max \bigl\{ P^\rz_{\textnormal{missed-ID}}, P^\rz_{\textnormal{wrong-ID}} \bigr\} \geq e^{- n \tau}} = 0. \label{eq:claimIDBC1FBZ}
\end{IEEEeqnarray}
\end{subequations}
\end{claim}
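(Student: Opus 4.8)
Because the ID sets $\idset{m_\rz}$ in \eqref{eq:DefIDSetMz1FB} depend only on the first $n$ channel outputs at Terminal~$\rz$, whereas the ID sets $\idset{m_\ry}$ in \eqref{eq:DefIDSetMy1FB} depend only on the last $\sqrt n$ outputs at Terminal~$\ry$, the plan is to treat the two receivers separately. For Receiver~$\rz$---where \eqref{eq:claimIDBC1FBZ} is to be shown---I would mimic the analysis of Receiver~$\ry$ in Section~\ref{sec:DPIDBC}, with the roles of the two terminals interchanged. The $\rz$-decoder only sees the first $n$ outputs, which are generated by feeding $\poolel{V_{m_\ry,m_\rz}}$ into $W_\rz^n$, so the $\sqrt n$-block transmission of $f\bigl(\cwind{Y^n}{m_\ry}\bigr)$ is irrelevant for it. First I would show that, with probability tending to one, for every fixed $m_\rz$ the transmitted codeword---averaged over $M_\ry$ uniform over $\setM_\ry$---is within an exponentially-small Total-Variation distance of being uniform over Bin~$\bin{m_\rz}$: since each $V_{m_\ry,m_\rz}$ is uniform over $\indexset{m_\rz}$ and, for fixed $m_\rz$, the family $\{V_{m_\ry,m_\rz}\}_{m_\ry\in\setM_\ry}$ is drawn independently, the empirical PMF $\tfrac{1}{\card{\setM_\ry}}\sum_{m_\ry}\ind{v=V_{m_\ry,m_\rz}}$ concentrates around $\ind{v\in\indexset{m_\rz}}/\card{\indexset{m_\rz}}$ by Hoeffding's inequality (Proposition~\ref{pr:hoeffding}), the deviation probability being controlled by a union bound over the exponentially-many $v\in\setV$ and the doubly-exponentially-many $m_\rz\in\setM_\rz$. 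On the high-probability event that $\card{\indexset{m_\rz}}\approx e^{n\tilde R_\rz}$ for all $m_\rz$ (the analogue of $\setH^\rz_\mu$), the bins $\{\bin{m_\rz}\}_{m_\rz\in\setM_\rz}$ are generated exactly as in the single-user construction of Section~\ref{sec:IDCodingTechnique} for the DMC $W_\rz$ with pool rate $R_\poolre$, expected bin rate $\tilde R_\rz$, and ID rate $R_\rz$---admissible by \eqref{bl:ratePairAch1FB} and \eqref{bl:binAndPoolRateConstraintsBC1FB}---so by the Data-Processing inequality for the Total-Variation distance \cite[Lemma~1]{cannoneronservedio15} and Claim~\ref{re:convExpFast} the quantities $P^\rz_{\textnormal{missed-ID}}$ and $P^\rz_{\textnormal{wrong-ID}}$ differ from the exponentially-decaying error probabilities of the single-user code by an exponentially-small term, which yields \eqref{eq:claimIDBC1FBZ}.

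For Receiver~$\ry$---where \eqref{eq:claimIDBC1FBY} is to be shown---I would realize the common-randomness ID scheme of \cite{ahlswededueckfb89}, with $Y^n$ playing the role of the common randomness shared by the encoder (through the feedback link) and Receiver~$\ry$ after the first $n$ channel uses. The missed-identification bound is immediate: when $m_\ry'=m_\ry$ the encoder sends $f\bigl(\cwind{Y^n}{m_\ry}\bigr)$ over the last $\sqrt n$ uses, the $m_\ry'$-focused party forms $\cwind{Y^n}{m_\ry'}=\cwind{Y^n}{m_\ry}$, and a missed identification can occur only if $\phi$ errs, so by \eqref{eq:epsilonSqrtn1FB} one has $P^\ry_{\textnormal{missed-ID}}\le\epsilon_{\sqrt n}\to0$ deterministically. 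For the wrong-identification part I would proceed in three steps. (i) As in the proof of Claim~\ref{cl:toShowIDBC}, show that with probability tending to one, for every fixed $m_\ry$, the transmitted codeword averaged over $M_\rz$ uniform over $\setM_\rz$ is within an exponentially-small Total-Variation distance of being uniform over the pool $\pool$; this again follows from Hoeffding's inequality applied to the independent $\{V_{m_\ry,m_\rz}\}_{m_\rz\in\setM_\rz}$ and a union bound over $v\in\setV$ and $m_\ry\in\setM_\ry$. (ii) Because the pool consists of $e^{nR_\poolre}$ independent $P^n$-distributed $n$-tuples and $R_\poolre>\muti{P}{W_\ry}$ by \eqref{eq:poolRateConstYConstraints1FB}, the resolvability results of \cite{hanverdu93} give, with probability tending to one over the pool, that the Terminal-$\ry$ output distribution induced by a uniformly-drawn pool element is within a vanishing Total-Variation distance of $(P W_\ry)^n$; combined with step~(i) and the Data-Processing inequality this shows that, for every $m_\ry$, the law of $Y^n$ when $M_\rz$ is uniform is within a vanishing Total-Variation distance $d_n$ of $(P W_\ry)^n$. (iii) Since the colorings $\{\cwind{\vecy}{m_\ry}\}_{(\vecy,m_\ry)}$ are drawn independently of the pool and of $(f,\phi)$, I would show that with probability tending to one they are ``good'', i.e.\ that for every distinct pair $m_\ry,m_\ry'\in\setM_\ry$ the $(P W_\ry)^n$-probability of the event $\{\vecy:\cwind{\vecy}{m_\ry}=\cwind{\vecy}{m_\ry'}\}$ is at most $\card{\setU}^{-1}+(P W_\ry)^n\bigl(\setY^n\setminus\eptyp(P W_\ry)\bigr)$: restrict the sum over $\vecy$ to $\eptyp(P W_\ry)$---where $(P W_\ry)^n(\vecy)\le e^{-n(\ent{P W_\ry}-\delta(\epsilon))}$---and apply Hoeffding's inequality to the independent $\ber\bigl(\card{\setU}^{-1}\bigr)$ collision indicators weighted by $(P W_\ry)^n$, which gives a deviation bound of the form $\exp\bigl(-2\card{\setU}^{-2}e^{n(\ent{P W_\ry}-\delta(\epsilon))}\bigr)$; this beats a union bound over the $\card{\setM_\ry}^2=\exp(2e^{nR_\ry})$ pairs because $\card{\setU}=2^{\sqrt n\hat R_\ry}$ grows only sub-exponentially while $\ent{P W_\ry}-\delta(\epsilon)>R_\ry$ by the decoder's choice of $\epsilon$ and \eqref{bl:ratePairAch1FB}. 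Combining the three steps and using that on a pair $m_\ry\ne m_\ry'$ a wrong identification requires $\phi\bigl(Y^{n+\sqrt n}_{n+1}\bigr)=\cwind{Y^n}{m_\ry'}$ while $\phi$ recovers $\cwind{Y^n}{m_\ry}$ except with probability at most $\epsilon_{\sqrt n}$, one obtains, with probability tending to one over the code,
\[
P^\ry_{\textnormal{wrong-ID}} \le \epsilon_{\sqrt n}+2d_n+\card{\setU}^{-1}+(P W_\ry)^n\bigl(\setY^n\setminus\eptyp(P W_\ry)\bigr),
\]
and since all four terms vanish, taking $\kappa_n$ to be their sum establishes \eqref{eq:claimIDBC1FBY}.

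The main obstacle is step~(iii) together with its interface to steps~(i)--(ii): one must ensure not merely that $Y^n$ is close to $(P W_\ry)^n$ but that it is ``spread out'' over the size-$e^{n\ent{P W_\ry}}$ typical set, since it is precisely the inequality $\ent{P W_\ry}>R_\ry$ that lets the collision bound overcome the doubly-exponential union bound over message pairs; and one must carefully order the random choices---first the pool, then the colorings and the transmission code---so that the ``good pool'' event of step~(ii) and the ``good colorings'' event of step~(iii) can be intersected. Everything else is a routine adaptation of the concentration estimates of Sections~\ref{sec:IDCodingTechnique} and~\ref{sec:DPIDBC} and of the common-randomness construction of \cite{ahlswededueckfb89}.
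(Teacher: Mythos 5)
Your proposal is correct and follows essentially the same route as the paper: for Receiver~$\rz$ it shows concentration of the $M_\ry$-averaged codeword-index distribution around the uniform distribution on $\indexset{m_\rz}$ (on the event that all bins are large) and then invokes the single-user analysis via the Data-Processing inequality, and for Receiver~$\ry$ it combines the deterministic bound $P^\ry_{\textnormal{missed-ID}}\le\epsilon_{\sqrt n}$ with the three ingredients the paper uses—concentration of $\bm P_V^{(m_\ry)}$ around $U_V$, the resolvability result of \cite{hanverdu93} under $R_\poolre>\muti{P}{W_\ry}$, and the Hoeffding collision bound on the typical set that beats the doubly-exponential union bound because $\ent{PW_\ry}>R_\ry$. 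The minor constant discrepancies (e.g.\ $\card\setU^{-1}$ versus the paper's $\alpha_n$) are immaterial.
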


\begin{proof}
We begin with \eqref{eq:claimIDBC1FBZ}. To prove \eqref{eq:claimIDBC1FBZ} we consider for each $m_\rz \in \setM_\rz$ two distributions on the set $\setV$, which indexes the pool $\pool$. We define for every $m_\rz \in \setM_\rz$ the PMFs on $\setV$
\begin{subequations}\label{bl:remIndMyAndUnifBin1FB}
\begin{IEEEeqnarray}{rCl}
\bm P_V^{(m_\rz)} (v) & = & \frac{1}{|\setM_\ry|} \sum_{m_\ry \in \setM_\ry} \ind {v = V_{m_\ry,m_\rz}}, \quad v \in \setV, \\ \label{eq:remIndMyBin1FB}
\tilde {\bm P}_V^{(m_\rz)} (v) & = & \begin{cases} \frac{1}{|\indexset {m_\rz}|} \sum_{v^\prime \in \indexset {m_\rz}} \ind {v = v^\prime} &\textnormal{if } \indexset {m_\rz} \neq \emptyset, \\ \ind {v = v^\star} &\textnormal{otherwise}, \end{cases} \quad v \in \setV. \label{eq:remUnifBin1FB}
\end{IEEEeqnarray}
\end{subequations}
The latter PMF is reminiscent of the distribution we encountered in \eqref{eq:distIDMsgDMC} and \eqref{eq:distIDMsgDMC2} in the single-user case. The former is related to the BC setting when we view $M_\ry$ as uniform over $\setM_\ry$. Like the proof of Claim~\ref{cl:toShowIDBC}, to establish \eqref{bl:claimIDBC1FB} it suffices to show that the two are similar in the sense that
\begin{IEEEeqnarray}{l}
\exists \, \tau > 0 \textnormal{ s.t.\ } \lim_{n \rightarrow \infty} \biggdistof { \max_{m_\rz \in \setM_\rz} d \Bigl( \bm P_V^{(m_\rz)}, \tilde {\bm P}_V^{(m_\rz)} \Bigr) \geq e^{-n \tau} } = 0. \label{eq:toShowIDBC41FB}
\end{IEEEeqnarray}

To establish \eqref{eq:toShowIDBC41FB}, we adapt the line of arguments leading to \eqref{eq:toShowIDBC4} in the proof of Claim~\ref{cl:toShowIDBC}. Fix some $\mu$ satisfying
\begin{IEEEeqnarray}{l}
0 < \mu < \tilde R_\rz - R_\rz, \label{eq:muIDBCPf1FB}
\end{IEEEeqnarray}
and let
\begin{IEEEeqnarray}{l}
\delta_n = e^{-n \mu / 2}. \label{eq:deltanIDBCPf1FB}
\end{IEEEeqnarray}
Introduce the set $\setH^\rz_\mu$ comprising the realizations $\{ \indexsetre \nu \}_{\nu \in \setM_\rz}$ of the index-sets $\{ \indexset {\nu} \}_{\nu \in \setM_\rz}$ satisfying that
 \begin{IEEEeqnarray}{rCl}
|\indexsetre \nu| > (1 - \delta_n) e^{n \tilde R_\rz}, \, \forall \, \nu \in \setM_\rz. \label{eq:IDBCSetH1FB}
\end{IEEEeqnarray}
We upper-bound $\max_{m_\rz \in \setM_\rz} d \bigl( \bm P_V^{(m_\rz)}, \tilde {\bm P}_V^{(m_\rz)} \bigr)$ differently depending on whether or not $\{ \indexset \nu \}$ is in $\setH^\rz_\mu$, where $\{ \indexset \nu \}$ is short for $\{ \indexset {\nu} \}_{\nu \in \setM_\rz}$. If $\{ \indexset \nu \} \notin \setH^\rz_\mu$, then we upper-bound it by one (which is an upper bound on the Total-Variation distance between any two probability measures) to obtain for every $\tau > 0$
\begin{IEEEeqnarray}{l}
\biggdistof { \max_{m_\rz \in \setM_\rz} d \Bigl( \bm P_V^{(m_\rz)}, \tilde {\bm P}_V^{(m_\rz)} \Bigr) \geq e^{-n \tau} } \nonumber \\
\quad \leq \bigdistof { \{ \indexset \nu \} \notin \setH^\rz_\mu } + \sum_{ \indexsetsre \in \setH^\rz_\mu } \bigdistof { \{ \indexset \nu \} =  \indexsetsre } \, \biggdistsubof { \indexsetsre } { \max_{m_\rz \in \setM_\rz} d \Bigl( \bm P_V^{(m_\rz)}, \tilde {\bm P}_V^{(m_\rz)} \Bigr) \geq e^{-n \tau} }. \label{eq:IDBCToShowNotinAndInH1FB}
\end{IEEEeqnarray}

We consider the two terms on the RHS of \eqref{eq:IDBCToShowNotinAndInH1FB} separately, beginning with $\bigdistof { \{ \indexset \nu \} \notin \setH^\rz_\mu }$. By the line of arguments leading to \eqref{eq:allBmySuffLarge} in the proof of Claim~\ref{cl:toShowIDBCConv}
\begin{IEEEeqnarray}{rCl}
\bigdistof { \{ \indexset \nu \} \notin \setH^\rz_\mu } & \leq & |\setM_\rz| \exp \bigl\{ - e^{n (\tilde R_\rz - \mu) - \log 2} \bigr\} \\
& \stackrel{(a)}\rightarrow & 0 \, (n \rightarrow \infty), \label{eq:allBmzSuffLarge1FB}
\end{IEEEeqnarray}
where $(a)$ holds because $|\setM_\rz| = \exp (\exp (n R_\rz))$ and by \eqref{eq:muIDBCPf1FB}.

Having established \eqref{eq:allBmzSuffLarge1FB}, we return to \eqref{eq:IDBCToShowNotinAndInH1FB} and conclude the proof of \eqref{eq:toShowIDBC41FB} by showing that
\begin{equation}
\exists \, \tau > 0 \textnormal{ s.t. } \lim_{n \rightarrow \infty} \max_{\indexsetsre \in \setH^\rz_\mu} \biggdistsubof { \indexsetsre } { \max_{m_\rz \in \setM_\rz} d \Bigl( \bm P_V^{(m_\rz)}, \tilde {\bm P}_V^{(m_\rz)} \Bigr) \geq e^{-n \tau} } = 0. \label{eq:toShowIDBC51FB}
\end{equation}
To prove \eqref{eq:toShowIDBC51FB}, let us henceforth assume that $n$ is large enough so that the following two inequalities hold:
\begin{subequations}\label{bl:IDBCnLargeEnough1FB}
\begin{IEEEeqnarray}{rCl}
(1 - \delta_n) e^{n \tilde R_\rz} & \geq & 1, \label{eq:IDBCnLargeEnough1FB} \\
\delta_n & \leq & 1/2, \label{eq:IDBCnLargeEnough21FB}
\end{IEEEeqnarray}
\end{subequations}
where $\delta_n$ is defined in \eqref{eq:deltanIDBCPf1FB}. (This is possible, because $\delta_n$ converges to zero as $n$ tends to infinity and $\tilde R_\rz > 0$.) Fix any realization $\indexsetsre$ in $\setH^\rz_\mu$. By \eqref{eq:IDBCSetH1FB} (which holds because $\indexsetsre \in \setH^\rz_\mu$) and \eqref{eq:IDBCnLargeEnough1FB}, $\indexsetre {m_\rz}$ is nonempty. For every fixed $v \in \setV$ we therefore have that under $\dist_{ \indexsetsre }$ the $\exp (\exp (n R_\ry))$ binary random variables $\bigl\{ \ind {v = V_{m_\ry,m_\rz}} \bigr\}_{m_\ry \in \setM_\ry}$ are IID and of mean
\begin{IEEEeqnarray}{l}
\begin{cases}
\frac{1}{|\indexsetre {m_\rz}|} &\textnormal{if } v \in \indexsetre {m_\rz}, \\
0 &\textnormal{if } v \notin \indexsetre {m_\rz}. \label{eq:IDBC1FBMeans}
\end{cases}
\end{IEEEeqnarray}
With \eqref{eq:IDBC1FBMeans} at hand, we can establish \eqref{eq:toShowIDBC51FB} essentially along the line of arguments leading to \eqref{eq:toShowIDBC5} in the proof of Claim~\ref{cl:toShowIDBC}.\\

Having established \eqref{eq:claimIDBC1FBZ}, we return to \eqref{bl:claimIDBC1FB} and conclude the proof by establishing \eqref{eq:claimIDBC1FBY}. We first observe that if the ID message that is sent to Terminal~$\rz$ is drawn uniformly over $\setM_\rz$, then the ID code that is used to send the ID message intended for Receiver~$\ry$ is similar to the common-randomness ID code \cite[Section~IV]{ahlswededueckfb89} for the DMC $\channely y x$ with perfect feedback. The difference is that---unlike the common-randomness ID code \cite{ahlswededueckfb89}---the common randomness $Y^n$ is not generated by drawing the first $n$ channel inputs $X^n \sim P^n$, irrespective of the ID message that is sent to Receiver~$\ry$. Instead, if the ID message that is sent to Receiver~$\ry$ is $m_\ry$ and the ID message that is sent to Receiver~$\rz$ is drawn uniformly over $\setM_\rz$, then $X^n$ is drawn from the PMF on $\setX^n$
\begin{IEEEeqnarray}{l}
\bm P^{(m_\ry)}_{X^n} ( \vecx ) = \frac{1}{| \setM_\rz |} \sum_{m_\rz \in \setM_\rz} \ind {\vecx = \poolel {V_{m_\ry,m_\rz}}}, \quad \vecx \in \setX^n. \label{eq:distXCondBin1FB}
\end{IEEEeqnarray}
As we argue next, the reasoning of \cite{ahlswededueckfb89} nevertheless applies.

The maximum probability of missed identification satisfies the upper bound
\begin{IEEEeqnarray}{l}
P^\ry_{\textnormal{missed-ID}} \nonumber \\
\quad \stackrel{(a)}= \max_{m_\ry \in \setM_\ry} \frac{1}{|\setM_\rz|} \sum_{m_\rz \in \setM_\rz} \sum_{\vecy \notin \idset {m_\ry}} W^n_\ry \bigl( y^n \bigl| \poolel {V_{m_\ry,m_\rz}} \bigr) W^{\sqrt n}_\ry \Bigl( y^{n + \sqrt n}_{n + 1} \Bigl| f \bigl( \cwind {y^n}{m_\ry} \bigr) \Bigr) \\
\quad \stackrel{(b)}= \max_{m_\ry \in \setM_\ry} \frac{1}{|\setM_\rz|} \sum_{m_\rz \in \setM_\rz} \sum_{\vecy^\prime \in \setY^n} W^n_\ry \bigl( \vecy^\prime \bigl| \poolel {V_{m_\ry,m_\rz}} \bigr) W^{\sqrt n}_\ry \Bigl( Y^{\sqrt n} \notin \phi^{-1} \bigl( \cwind {\vecy^\prime}{m_\ry} \bigr) \Bigl| f \bigl( \cwind {\vecy^\prime}{m_\ry} \bigr) \Bigr) \\
\quad \stackrel{(c)}\leq \max_{m_\ry \in \setM_\ry} \frac{1}{|\setM_\rz|} \sum_{m_\rz \in \setM_\rz} \sum_{\vecy^\prime \in \setY^n} W^n_\ry \bigl( \vecy^\prime \bigl| \poolel {V_{m_\ry,m_\rz}} \bigr) \, \epsilon_{\sqrt n} \\
\quad = \epsilon_{\sqrt n},
\end{IEEEeqnarray}
where $(a)$ holds by \eqref{eq:prMissedIDIDBC1FB}; $(b)$ holds by \eqref{eq:DefIDSetMy1FB}; and $(c)$ holds by \eqref{eq:epsilonSqrtn1FB}. This, combined with the Union-of-Events bound and the fact that $\epsilon_{\sqrt n}$ converges to zero as $n$ tends to infinity, implies that to establish \eqref{eq:claimIDBC1FBY} it suffices to show that
\begin{equation}
\exists \, \{ \kappa_n \}_{n \in \naturals} \textnormal{ s.t. } \lim_{n \rightarrow \infty} \kappa_n = 0 \textnormal{ and } \lim_{n \rightarrow \infty} \bigdistof { P^\ry_{\textnormal{wrong-ID}} \geq \kappa_n } = 0. \label{eq:claimIDBC1FBY2}
\end{equation}

Before we establish \eqref{eq:claimIDBC1FBY2}, we first show that
\begin{equation}
\exists \, \{ \lambda_n \}_{n \in \naturals} \textnormal{ s.t. } \lim_{n \rightarrow \infty} \lambda_n = 0 \textnormal{ and } \lim_{n \rightarrow \infty} \biggdistof {\max_{m_\ry \in \setM_\ry} d \Bigl( \bm P^{(m_\ry)}_{X^n} W_\ry^n, ( P W_\ry )^n \Bigr) \geq \lambda_n } = 0. \label{eq:claimIDBC1FBYTV}
\end{equation}
(This is useful, because, if the ID message that is sent to Receiver~$\ry$ is $m_\ry$ and the ID message that is sent to Terminal~$\rz$ is drawn uniformly over $\setM_\rz$, then we generate the common randomness $Y^n \sim \bm P^{(m_\ry)}_{X^n} W_\ry^n$, whereas the common-randomness ID code \cite{ahlswededueckfb89} for the DMC $\channely y x$ with perfect feedback generates the common randomness $Y^n \sim (P W_\ry)^n$ irrespective of $m_\ry \in \setM_\ry$.) For every $m_\ry \in \setM_\ry$ define the PMF on $\setV$
\begin{IEEEeqnarray}{rCl}
\bm P_V^{(m_\ry)} (v) & = & \frac{1}{|\setM_\rz|} \sum_{m_\rz \in \setM_\rz} \ind {v = V_{m_\ry,m_\rz}}, \quad v \in \setV, \label{eq:remIndMzBin1FB}
\end{IEEEeqnarray}
let $U_V$ denote the uniform distribution on $\setV$, define the conditional PMF
\begin{equation}
\bm P_{X^n|V} ( \vecx | v ) = \ind {\vecx = \poolel v}, \quad (\vecx,v) \in \setX^n \times \setV,
\end{equation}
and note that for every $m_\ry \in \setM_\ry$
\begin{IEEEeqnarray}{rCl}
\Bigl( \bm P_{X^n}^{(m_\ry)} W_\ry^n \Bigr) ( \vecy ) & = & \Bigl( \bm P_V^{(m_\ry)} \bm P_{X^n|V} W_\ry^n \Bigr) (\vecy), \quad \vecy \in \setY^n. \label{eq:remIndMzXFromBin1FB}
\end{IEEEeqnarray}
This implies that $d \bigl( \bm P_{X^n}^{(m_\ry)} W_\ry^n, ( P W_\ry )^n \bigr)$ satisfies the upper bound
\begin{IEEEeqnarray}{l}
d \Bigl( \bm P_{X^n}^{(m_\ry)} W_\ry^n, ( P W_\ry )^n \Bigr) \nonumber \\
\quad \stackrel{(a)}\leq d \Bigl( \bm P_{X^n}^{(m_\ry)} W_\ry^n, U_V \bm P_{X^n|V} W_\ry^n \Bigr) + d \bigl( U_V \bm P_{X^n|V} W_\ry^n, ( P W_\ry )^n \bigr) \\
\quad \stackrel{(b)}\leq d \Bigl( \bm P_V^{(m_\ry)} \bm P_{X^n|V} W_\ry^n, U_V \bm P_{X^n|V} W_\ry^n \Bigr) + d \bigl( U_V \bm P_{X^n|V} W_\ry^n, ( P W_\ry )^n \bigr) \\
\quad \stackrel{(c)}\leq d \Bigl( \bm P_V^{(m_\ry)}, U_V \Bigr) + d \bigl( U_V \bm P_{X^n|V} W_\ry^n, ( P W_\ry )^n \bigr), \quad m_\ry \in \setM_\ry, \label{eq:totVarDistTriangDataPr1FB}
\end{IEEEeqnarray}
where $(a)$ follows from the Triangle inequality; $(b)$ holds by \eqref{eq:remIndMzXFromBin1FB}; and $(c)$ follows from the Data-Processing inequality for the Total-Variation distance \cite[Lemma~1]{cannoneronservedio15}. In \cite{hanverdu93} it is shown that by \eqref{eq:poolRateConstYConstraints1FB}
\begin{IEEEeqnarray}{l}
\BigEx {}{d \bigl( U_V \bm P_{X^n|V} W_\ry^n, ( P W_\ry )^n \bigr)} \rightarrow 0 \, ( n \rightarrow \infty ).
\end{IEEEeqnarray}
Consequently, Markov's inequality implies that
\begin{IEEEeqnarray}{rCl}
\exists \, \{ \lambda_n \}_{n \in \naturals} \textnormal{ s.t. } \lim_{n \rightarrow \infty} \lambda_n = 0 \textnormal{ and } \lim_{n \rightarrow \infty} \Bigdistof { d \bigl( U_V \bm P_{X^n|V} W_\ry^n, ( P W_\ry )^n \bigr) \geq \lambda_n } = 0.
\end{IEEEeqnarray}
This, combined with \eqref{eq:totVarDistTriangDataPr1FB} and the Union-of-Events bound, implies that to establish \eqref{eq:claimIDBC1FBYTV} it suffices to show that
\begin{equation}
\exists \, \{ \lambda_n \}_{n \in \naturals} \textnormal{ s.t. } \lim_{n \rightarrow \infty} \lambda_n = 0 \textnormal{ and } \lim_{n \rightarrow \infty} \biggdistof { \max_{m_\ry \in \setM_\ry} d \Bigl( \bm P_V^{(m_\ry)}, U_V \Bigr) \geq \lambda_n } = 0. \label{eq:claimIDBC1FBYTV2}
\end{equation}

Fix some $\lambda$ satisfying
\begin{equation}
0 < \lambda < R_\rz, \label{eq:lambdaFB1}
\end{equation}
and let
\begin{equation}
\xi_n = \exp \bigl\{ -e^{n \lambda} \bigr\}. \label{eq:xiFB1}
\end{equation}
For every $v \in \setV$ the $\exp(\exp (n R_\rz))$ binary random variables $\bigl\{ \ind {v = V_{m_\ry,m_\rz}} \bigr\}_{m_\rz \in \setM_\rz}$ are IID and have mean $1 / |\setV|$. Consequently, Hoeffding's inequality (Proposition~\ref{pr:hoeffding}) and the Union-of-Events bound imply that for every fixed $v \in \setV$
\begin{IEEEeqnarray}{l}
\biggdistof { \Bigl| \bm P_V^{(m_\ry)} (v) - U_V (v) \Bigr| \geq \xi_n } \nonumber \\
\quad = \Biggdistof { \biggl| \frac{1}{|\setM_\rz|} \sum_{m_\rz \in \setM_\rz} \ind {v = V_{m_\ry,m_\rz}} - \bigEx {}{\ind {v = V_{m_\ry,m_\rz}}} \biggr| \geq \xi_n } \\
\quad \leq 2 \exp \bigl\{ - 2 \, |\setM_\rz| \, \xi_n^2 \bigr\}, \quad v \in \setV,
\end{IEEEeqnarray}
where the first equality holds because $U_V (v)$ and $\bigEx {}{\ind {v = V_{m_\ry,m_\rz}}}$ both equal $1 / |\setV|$. This, combined with the Union-of-Events bound, implies that
\begin{IEEEeqnarray}{l}
\biggdistof { \exists \, v \in \setV \colon \Bigl| \bm P_V^{(m_\ry)} (v) - U_V (v) \Bigr| \geq \xi_n } \nonumber \\
\quad \leq 2 \, |\setV| \exp \bigl\{ - 2 \, |\setM_\rz| \, \xi_n^2 \bigr\}. \label{eq:claimIDBC1FBYTV2ExistsVTooLarge}
\end{IEEEeqnarray}
Consequently,
\begin{IEEEeqnarray}{l}
\biggdistof { d \Bigl( \bm P_V^{(m_\ry)}, U_V \Bigr) \geq |\setV| \, \xi_n / 2 } \\
\quad \stackrel{(a)}= \Biggdistof { \sum_{v \in \setV} \Bigl| \bm P_V^{(m_\ry)} (v) - U_V (v) \Bigr| \geq |\setV| \, \xi_n } \\
\quad \leq \biggdistof { \exists \, v \in \setV \colon \Bigl| \bm P_V^{(m_\ry)} (v) - U_V (v) \Bigr| \geq \xi_n } \\
\quad \stackrel{(b)}\leq 2 \, |\setV| \exp \{ - 2 \, |\setM_\rz| \, \xi_n^2 \}, \label{eq:claimIDBC1FBYTV2TooLarge}
\end{IEEEeqnarray}
where $(a)$ holds by definition of the Total-Variation distance; and $(b)$ holds by \eqref{eq:claimIDBC1FBYTV2ExistsVTooLarge}. Having obtained \eqref{eq:claimIDBC1FBYTV2TooLarge} for every fixed $m_\ry \in \setM_\ry$, we are now ready to tackle the maximum over $m_\ry \in \setM_\ry$ and prove \eqref{eq:claimIDBC1FBYTV2} and hence \eqref{eq:claimIDBC1FBYTV}:
\begin{IEEEeqnarray}{l}
\biggdistof { \exists \, m_\ry \in \setM_\ry \colon d \Bigl( \bm P_V^{(m_\ry)}, U_V \Bigr) \geq |\setV | \, \xi_n / 2 } \nonumber \\
\quad \stackrel{(a)}\leq \sum_{m_\ry \in \setM_\ry} \biggdistof { d \Bigl( \bm P_V^{(m_\ry)}, U_V \Bigr) \geq |\setV| \, \xi_n / 2 } \\
\quad \stackrel{(b)}\leq 2 \, |\setV| \, |\setM_\ry| \exp \bigl\{ - 2 \, |\setM_\rz| \, \xi_n^2 \bigr\} \\
\quad \stackrel{(c)}\rightarrow 0 \, (n \rightarrow \infty),
\end{IEEEeqnarray}
where $(a)$ follows from the Union-of-Events bound; $(b)$ holds by \eqref{eq:claimIDBC1FBYTV2TooLarge}; and $(c)$ holds because $|\setV| = e^{n R_\poolre}$, $|\setM_\ry| = \exp (\exp (n R_\ry))$, $|\setM_\rz| = \exp (\exp (n R_\rz))$, and by \eqref{eq:lambdaFB1} and \eqref{eq:xiFB1}.

We next conclude the proof of Claim~\ref{cl:toShowIDBC1FB} by establishing \eqref{eq:claimIDBC1FBY2}. To that end, we use \eqref{eq:claimIDBC1FBYTV}, which allows us to follow Ahlswede and Dueck's line of arguments \cite{ahlswededueckfb89}. We begin by upper-bounding $$\frac{1}{|\setM_\rz|} \! \sum_{m_\rz \in \setM_\rz} \sum_{\vecy \in \idset {m_\ry^\prime}} W^n_\ry \bigl( y^n \bigl| \poolel {V_{m_\ry,m_\rz}} \bigr) W^{\sqrt n}_\ry \Bigl( y^{n + \sqrt n}_{n + 1} \Bigl| f \bigl( \cwind {y^n}{m_\ry} \bigr) \Bigr)$$ for fixed distinct $m_\ry, \, m^\prime_\ry \in \setM_\ry$. Later we will maximize over such $m_\ry, \, m^\prime_\ry$. For every fixed distinct $m_\ry, \, m^\prime_\ry \in \setM_\ry$
\begin{IEEEeqnarray}{l}
\frac{1}{|\setM_\rz|} \! \sum_{m_\rz \in \setM_\rz} \sum_{\vecy \in \idset {m_\ry^\prime}} W^n_\ry \bigl( y^n \bigl| \poolel {V_{m_\ry,m_\rz}} \bigr) W^{\sqrt n}_\ry \Bigl( y^{n + \sqrt n}_{n + 1} \Bigl| f \bigl( \cwind {y^n}{m_\ry} \bigr) \Bigr) \nonumber \\
\quad \stackrel{(a)}= \sum_{\vecy^\prime \in \setY^n} \Bigl( \bm P^{(m_\ry)}_{X^n} W_\ry^n \Bigr) \bigl( \vecy^\prime \bigr) W^{\sqrt n}_\ry \Bigl( Y^{\sqrt n} \in \phi^{-1} \bigl( \cwind {\vecy^\prime}{m_\ry^\prime} \bigr) \Bigl| f \bigl( \cwind {\vecy^\prime}{m_\ry} \bigr) \Bigr) \\
\quad \stackrel{(b)}\leq \sum_{\vecy^\prime \in \setY^n} \Bigl( \bm P^{(m_\ry)}_{X^n} W_\ry^n \Bigr) \bigl( \vecy^\prime \bigr) \ind { \cwind {\vecy^\prime}{m_\ry^\prime} = \cwind {\vecy^\prime}{m_\ry} } \nonumber \\
\qquad + \sum_{\vecy^\prime \in \setY^n} \Bigl( \bm P^{(m_\ry)}_{X^n} W_\ry^n \Bigr) \bigl( \vecy^\prime \bigr) W^{\sqrt n}_\ry \Bigl( Y^{\sqrt n} \notin \phi^{-1} \bigl( \cwind {\vecy^\prime}{m_\ry} \bigr) \Bigl| f \bigl( \cwind {\vecy^\prime}{m_\ry} \bigr) \Bigr) \\
\quad \stackrel{(c)}\leq \sum_{\vecy^\prime \in \setY^n} \ind { \cwind {\vecy^\prime}{m_\ry^\prime} = \cwind {\vecy^\prime}{m_\ry} } \Bigl( \bm P^{(m_\ry)}_{X^n} W_\ry^n \Bigr) \bigl( \vecy^\prime \bigr) + \epsilon_{\sqrt n}, \label{eq:ubPrWrongIDY1FB}
\end{IEEEeqnarray}
where $(a)$ holds by \eqref{eq:DefIDSetMy1FB} and \eqref{eq:distXCondBin1FB}; $(b)$ follows from the monotonicity of probability and the Union-of-Events bound; and $(c)$ holds by \eqref{eq:epsilonSqrtn1FB}. Let $\eptyp$ be short for $\eptyp ( P W_\ry )$. The first term in \eqref{eq:ubPrWrongIDY1FB} satisfies the upper bound
\begin{IEEEeqnarray}{l}
\sum_{\vecy^\prime \in \setY^n} \ind { \cwind {\vecy^\prime}{m_\ry^\prime} = \cwind {\vecy^\prime}{m_\ry} } \Bigl( \bm P^{(m_\ry)}_{X^n} W_\ry^n \Bigr) (\vecy^\prime) \\
\quad = \Bigl( \bm P^{(m_\ry)}_{X^n} W_\ry^n \Bigr) \Bigl( Y^n \in \bigl\{ \vecy \in \setY^n \colon \cwind {\vecy}{m_\ry^\prime} = \cwind {\vecy}{m_\ry} \bigr\} \Bigr) \\
\quad \stackrel{(a)}\leq ( P W_\ry )^n \Bigl( Y^n \in \bigl\{ \vecy \in \setY^n \colon \cwind {\vecy}{m_\ry^\prime} = \cwind {\vecy}{m_\ry} \bigr\} \Bigr) + d \Bigl( \bm P^{(m_\ry)}_{X^n} W^n_\ry, ( P W_\ry )^n \Bigr) \\
\quad \stackrel{(b)}\leq ( P W_\ry )^n \Bigl( Y^n \in \bigl\{ \vecy \in \eptyp \colon \cwind {\vecy}{m_\ry^\prime} = \cwind {\vecy}{m_\ry} \bigr\} \Bigr) + ( P W_\ry )^n \bigl( Y^n \notin \eptyp \bigr) \nonumber \\
\qquad + d \Bigl( \bm P^{(m_\ry)}_{X^n} W^n_\ry, ( P W_\ry )^n \Bigr), \label{eq:ubPrWrongIDY1FB2}
\end{IEEEeqnarray}
where$(a)$ holds by definition of the Total-Variation distance; and $(b)$ follows from the monotonicity of probability and the Union-of-Events bound. Using \eqref{eq:claimIDBC1FBYTV}, that $\epsilon_{\sqrt n}$ converges to zero as $n$ tends to infinity, and that $( P W_\ry )^n \bigl( Y^n \notin \eptyp \bigr)$ decays exponentially in $n$, we obtain from \eqref{eq:ubPrWrongIDY1FB}, \eqref{eq:ubPrWrongIDY1FB2}, and the Union-of-Events bound that to establish \eqref{eq:claimIDBC1FBY2} it suffices to show that
\begin{IEEEeqnarray}{l}
\exists \, \{ \kappa_n \}_{n \in \naturals} \textnormal{ s.t. } \lim_{n \rightarrow \infty} \kappa_n = 0 \textnormal{ and } \nonumber \\
\quad \lim_{n \rightarrow \infty} \biggdistof {\exists \, m_\ry, m_\ry^\prime \in \setM_\ry, \, m_\ry \neq m_\ry^\prime \colon ( P W_\ry )^n \Bigl( Y^n \in \bigl\{ \vecy \in \eptyp \colon \cwind {\vecy}{m_\ry^\prime} = \cwind {\vecy}{m_\ry} \bigr\} \Bigr) \geq \kappa_n } = 0. \label{eq:claimIDBC1FBY3}
\end{IEEEeqnarray}

Fix some $\rho$ satisfying
\begin{equation}
0 < \rho < \ent {P W_\ry} - R_\ry - 3 \delta ( \epsilon ), \label{eq:rhoFB1}
\end{equation}
and let
\begin{equation}
\alpha_n = \max \bigl\{ 2 / |\setU|, e^{-n \rho / 2} \bigr\}. \label{eq:alphaFB1}
\end{equation}
The binary random variables $\bigl\{ \ind {\cwind {\vecy}{m_\ry^\prime} = \cwind {\vecy}{m_\ry}} \bigr\}_{\vecy \in \setY^n}$ are IID with mean
\begin{equation}
\bigEx {}{\ind {\cwind {\vecy}{m_\ry^\prime} = \cwind {\vecy}{m_\ry}}} = \frac{1}{|\setU|}, \quad \vecy \in \setY^n. \label{eq:meanTransCodesSameFB1}
\end{equation}
Consequently, Hoeffding's inequality (Proposition~\ref{pr:hoeffding}) implies that
\begin{IEEEeqnarray}{l}
\biggdistof { \exists \, m_\ry, m^\prime_\ry \in \setM_\ry, \, m_\ry \neq m^\prime_\ry \colon ( P W_\ry )^n \Bigl( Y^n \in \bigl\{ \vecy \in \eptyp \colon \cwind {\vecy}{m_\ry^\prime} = \cwind {\vecy}{m_\ry} \bigr\} \Bigr) \geq \alpha_n } \nonumber \\
\quad = \Biggdistof { \exists \, m_\ry, m^\prime_\ry \in \setM_\ry, \, m_\ry \neq m^\prime_\ry \colon \sum_{\vecy \in \eptyp} ( P W_\ry )^n ( \vecy ) \ind{ \cwind {\vecy}{m_\ry} = \cwind {\vecy}{m^\prime_\ry} } \geq \alpha_n } \\
\quad \stackrel{(a)}\leq \sum_{m_\ry \in \setM_\ry} \sum_{m^\prime_\ry \neq m_\ry} \Biggdistof { \sum_{\vecy \in \eptyp} ( P W_\ry )^n ( \vecy ) \ind{ \cwind {\vecy}{m_\ry} = \cwind {\vecy}{m^\prime_\ry} } \geq \alpha_n } \\
\quad \stackrel{(b)}\leq | \setM_\ry |^2 \exp \Biggl\{ - \frac{ 2 \bigl( \alpha_n - 1 / | \setU | \bigr)^2 } { \sum_{\vecy \in \eptyp} \bigl( ( P W_\ry )^n ( \vecy ) \bigr)^2 } \Biggr\} \\
\quad \stackrel{(c)}\leq | \setM_\ry |^2 \exp \bigl\{ - e^{ n ( \ent {P W_\ry} - \rho - 3 \delta ( \epsilon ) ) - \log 2} \bigr\} \\
\quad \stackrel{(d)}\rightarrow 0 \, ( n \rightarrow \infty ), \label{eq:smallProbTheSame1FB}
\end{IEEEeqnarray}
where $(a)$ follows from the Union-of-Events bound; $(b)$ follows from \eqref{eq:meanTransCodesSameFB1} and Hoeffding's inequality (Proposition~\ref{pr:hoeffding}); $(c)$ holds by \eqref{eq:alphaFB1} and because
\begin{IEEEeqnarray}{rCl}
( P W_\ry )^n ( \vecy ) & \leq & e^{ - n ( \ent {P W_\ry} - \delta ( \epsilon ) ) }, \nonumber \\
|\eptyp| & \leq & e^{ n ( \ent {P W_\ry} + \delta (\epsilon) ) };
\end{IEEEeqnarray}
and $(d)$ holds because $| \setM_\ry | = \exp ( \exp ( n R_\ry ))$ and by \eqref{eq:rhoFB1}. Since $\alpha_n$ of \eqref{eq:alphaFB1} converges to zero as $n$ tends to infinity, this implies \eqref{eq:claimIDBC1FBY3} and hence concludes the proof.
\end{proof}

\section{A Proof of Theorem~\ref{th:IDBC1FBOB}}\label{app:IDBC1FBOB}

\subsection{A Useful Lemma}\label{le:IDCodesFB}

\begin{lemma}\cite[Lemma~4.1]{venkatesananantharam98}\label{le:typeFB}
For some DMC $\channel y x$, let $\dist$ be some distribution of the pair $(X^n,Y^n)$ of length-$n$ input- and output-sequence satisfying that
\begin{IEEEeqnarray}{l}
\bigdistof { (X^n,Y^n) = (\vecx,\vecy) } = \prod^n_{i = 1} \bigdistof {X_i = x_i \bigl| (X^{i-1},Y^{i-1}) = (x^{i-1},y^{i-1})} \channel {y_i}{x_i}, \quad ( \vecx, \vecy) \in \setX^n \times \setY^n, \label{eq:feedbackDist}
\end{IEEEeqnarray}
and for every pair $( \vecx, \vecy ) \in \setX^n \times \setY^n$ define the PMF on $\setX$
\begin{IEEEeqnarray}{l}
P^{\vecx, \vecy} ( x ) = \frac{1}{n} \sum^n_{i = 1} \bigdistof {X_i = x \bigl| (X^{i-1},Y^{i-1}) = (x^{i-1},y^{i-1})}, \quad x \in \setX. \label{eq:PDOnInput}
\end{IEEEeqnarray}
Then, for any $\nu > 0$
\begin{IEEEeqnarray}{l}
\Bigdistof{ \exists \, ( x,y ) \in \setX \times \setY \colon \bigl| P_{X^n,Y^n} (x,y) - P^{X^n,Y^n} \! ( x ) \channel {y}{x} \bigr| \geq \sqrt { \channel {y}{x} } \, \nu } \leq \frac{ | \setX | \, | \setY | }{n \nu^2}, \label{eq:typeFBProbBound}
\end{IEEEeqnarray}
where $P_{X^n,Y^n}$ is the empirical type of the pair $(X^n, Y^n)$, so $P_{X^n,Y^n} (x,y) = N ( x, y | X^n, Y^n ) / n, \, (x,y) \in \setX \times \setY$.
\end{lemma}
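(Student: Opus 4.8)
The plan is to treat each pair $(x,y) \in \setX \times \setY$ with $\channel y x > 0$ separately, to reduce the per-pair statement to a second-moment (Chebyshev-type) bound on a martingale sum, and then to conclude by a union bound over the at most $\card \setX \, \card \setY$ such pairs. Pairs with $\channel y x = 0$ may be discarded at the outset: under the feedback law \eqref{eq:feedbackDist} the event $\{ X_i = x, \, Y_i = y \}$ has probability zero for every $i$, so $P_{X^n,Y^n}(x,y)$ and $P^{X^n,Y^n}(x)\channel y x$ both vanish and such a pair never contributes to the event in \eqref{eq:typeFBProbBound}.

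Fix an admissible pair $(x,y)$ and, for $i \in \{ 1, \ldots, n \}$, set
\begin{IEEEeqnarray*}{rCl}
D_i & = & \ind {X_i = x, \, Y_i = y} - \bigdistof {X_i = x \bigl| X^{i-1}, Y^{i-1}} \channel y x .
\end{IEEEeqnarray*}
Summing and using the definition of the empirical type together with \eqref{eq:PDOnInput},
\begin{IEEEeqnarray*}{rCl}
\sum_{i=1}^n D_i & = & n \bigl( P_{X^n,Y^n}(x,y) - P^{X^n,Y^n}(x) \, \channel y x \bigr),
\end{IEEEeqnarray*}
so the event in \eqref{eq:typeFBProbBound} restricted to this pair is exactly $\bigl\{ \bigl| \sum_{i=1}^n D_i \bigr| \geq n \sqrt{\channel y x} \, \nu \bigr\}$. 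The key structural fact, which I would verify from the product form \eqref{eq:feedbackDist}, is that $\{ D_i \}$ is a martingale-difference sequence for the filtration generated by $(X^i,Y^i)$: conditionally on $(X^{i-1},Y^{i-1})$ the input $X_i$ has law $\bigdistof {X_i = \cdot \bigl| X^{i-1},Y^{i-1}}$, and given $X_i$ the output $Y_i$ is distributed $\sim \channel {\cdot}{X_i}$ independently of the past, whence $\bigEx {}{\ind {X_i = x, Y_i = y} \bigl| X^{i-1},Y^{i-1}} = \bigdistof {X_i = x \bigl| X^{i-1},Y^{i-1}} \channel y x$ and therefore $\bigEx {}{D_i \bigl| X^{i-1},Y^{i-1}} = 0$.

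It then remains to bound the second moment and apply Chebyshev. By orthogonality of martingale differences, $\bigEx {}{\bigl( \sum_{i=1}^n D_i \bigr)^2} = \sum_{i=1}^n \Ex {}{D_i^2}$. Writing $p_i = \bigdistof {X_i = x \bigl| X^{i-1},Y^{i-1}} \channel y x$, which is the conditional mean of the indicator, the conditional variance obeys $\bigEx {}{D_i^2 \bigl| X^{i-1},Y^{i-1}} = p_i(1-p_i) \leq p_i$, so $\Ex {}{D_i^2} \leq \bigEx {}{\bigdistof {X_i = x \bigl| X^{i-1},Y^{i-1}}} \channel y x$; summing and using $\sum_{i=1}^n \bigdistof {X_i = x \bigl| X^{i-1},Y^{i-1}} = n \, P^{X^n,Y^n}(x) \leq n$ gives $\bigEx {}{\bigl( \sum_{i=1}^n D_i \bigr)^2} \leq n \, \channel y x$. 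Markov's inequality applied to $\bigl( \sum_{i=1}^n D_i \bigr)^2$ then yields, for each admissible pair,
\begin{IEEEeqnarray*}{rCl}
\Bigdistof {\Bigl| \sum_{i=1}^n D_i \Bigr| \geq n \sqrt{\channel y x} \, \nu} & \leq & \frac{n \, \channel y x}{n^2 \, \channel y x \, \nu^2} \;=\; \frac{1}{n \nu^2},
\end{IEEEeqnarray*}
and the union bound over the at most $\card \setX \, \card \setY$ pairs $(x,y)$ gives \eqref{eq:typeFBProbBound}. I do not expect a genuine obstacle here; the only points needing care are the justification, from the product form \eqref{eq:feedbackDist}, that the $D_i$ are martingale differences (so that $\bigEx {}{D_i \bigl| X^{i-1},Y^{i-1}} = 0$ and all cross terms vanish), and the bookkeeping for the degenerate pairs with $\channel y x = 0$, both of which are routine.
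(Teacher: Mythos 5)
Your proposal is correct and follows essentially the same route as the paper's proof in Appendix~\ref{le:IDCodesFB}: center the indicator $\ind{(X_i,Y_i)=(x,y)}$ at its conditional mean $\distof{X_i = x \mid X^{i-1},Y^{i-1}}\channel y x$ (obtained from the product form \eqref{eq:feedbackDist} via the tower property), observe that the resulting differences are uncorrelated martingale increments with second moment at most $\channel y x$, apply Chebyshev's inequality to the normalized sum, and finish with a union bound over the $\card\setX\,\card\setY$ pairs. The only cosmetic differences are that you bound the summed variance through $n P^{X^n,Y^n}(x)\leq n$ rather than bounding each term by $\channel y x$ directly, and that you explicitly set aside the pairs with $\channel y x = 0$, neither of which changes the argument.
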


\begin{proof}
For every pair $(x,y) \in \setX \times \setY$ define the binary random variables
\begin{IEEEeqnarray}{l}
E^{x,y}_i = \ind {( X_i, Y_i ) = ( x,y )} = \ind {X_i = x} \ind {Y_i = y}, \quad i \in [ 1 : n ] \label{eq:RVEi}
\end{IEEEeqnarray}
with mean
\begin{IEEEeqnarray}{l}
\bigEx {}{ E^{x,y}_i \bigl| X^{i-1}, Y^{i-1} } \nonumber \\
\quad \stackrel{(a)}= \BigEx {}{ \bigEx {}{\ind {X_i = x} \ind {Y_i = y} \bigl| X^i, Y^{i-1} } \Bigl| X^{i-1}, Y^{i-1} } \\
\quad \stackrel{(b)}= \BigEx {}{ \ind {X_i = x} \bigEx {X_i = x}{ \ind {Y_i = y} \bigl| X^{i-1}, Y^{i-1} } \Bigl| X^{i-1}, Y^{i-1}  } \\
\quad \stackrel{(c)}= \bigEx {}{ \ind {X_i = x} \channel y x \bigl| X^{i-1}, Y^{i-1} } \\
\quad = \bigdistof { X_i = x \bigl| X^{i-1}, Y^{i-1} } \channel {y}{x}, \label{eq:condExpEi}
\end{IEEEeqnarray}
where $(a)$ follows from \eqref{eq:RVEi} and the Tower property of conditional expectation; $(b)$ holds because $\ind {X_i = x}$ is $\sigof {X_i}$-measurable and because $\ind{X_i = x}$ is zero unless $X_i = x$; and $(c)$ holds by \eqref{eq:feedbackDist}. Define the centered random variables
\begin{IEEEeqnarray}{l}
\tilde E^{x,y}_i = E^{x,y}_i - \bigEx {}{ E^{x,y}_i \bigl| X^{i-1}, Y^{i-1} }, \quad i \in [ 1:n ]. \label{eq:RVTildeEi}
\end{IEEEeqnarray}
By \eqref{eq:PDOnInput} and \eqref{eq:condExpEi}
\begin{IEEEeqnarray}{l}
\sum_{i = 1}^n \tilde E^{x,y}_i = N ( x,y | X^n, Y^n ) - n P^{X^n,Y^n} \! ( x ) \channel {y}{x}. \label{eq:RVTildeEiSum}
\end{IEEEeqnarray}
As we shall see, the centered random variables $\bigl\{ \tilde E^{x,y}_i \bigr\}_{i \in [ 1 : n ]}$ are uncorrelated and of variance $\bigEx {}{ ( \tilde E^{x,y}_i )^2 } \leq \channel {y}{x}$. Consequently, Chebyshev's inequality implies that
\begin{equation}
\Biggdistof { \frac{1}{n} \biggl| \sum_{i = 1}^n \tilde E^{x,y}_i \biggr| \geq \sqrt{\channel {y}{x}} \, \nu} \leq \frac{1}{n \nu^2}, \quad (x,y) \in \setX \times \setY,
\end{equation}
and \eqref{eq:typeFBProbBound} thus follows from \eqref{eq:RVTildeEiSum} and the Union-of-Events bound:
\begin{IEEEeqnarray}{l}
\Bigdistof{ \exists \, ( x,y ) \in \setX \times \setY \colon \bigl| P_{X^n,Y^n} (x,y) - P^{X^n,Y^n} \! ( x ) \channel {y}{x} \bigr| \geq \sqrt { \channel {y}{x} } \, \nu } \nonumber \\
\quad = \biggdistof{ \exists \, ( x,y ) \in \setX \times \setY \colon \biggl| \frac{N ( x, y | X^n, Y^n ) }{n} - P^{X^n,Y^n} \! ( x ) \channel {y}{x} \biggr| \geq \sqrt { \channel {y}{x} } \, \nu } \\
\quad = \Biggdistof {\exists \, ( x,y ) \in \setX \times \setY \colon \frac{1}{n} \biggl| \sum_{i = 1}^n \tilde E^{x,y}_i \biggr| \geq \sqrt{\channel {y}{x}} \, \nu} \\
\quad \leq \frac{| \setX | \, | \setY | }{n \nu^2}.
\end{IEEEeqnarray}

To conclude the proof, it remains to show that the centered random variables $\bigl\{ \tilde E^{x,y}_i \bigr\}_{i \in [ 1 : n ]}$ are uncorrelated and of variance $\bigEx {}{ ( \tilde E^{x,y}_i )^2 } \leq \channel {y}{x}$. We first prove the former: For every $l, \, k \in [1:n]$ satisfying $l < k$
\begin{IEEEeqnarray}{rCl}
\bigEx {}{ \tilde E^{x,y}_l \tilde E^{x,y}_k } & \stackrel{(a)}= & \BigEx{}{\bigEx{}{ \tilde E^{x,y}_l \tilde E^{x,y}_k \bigl| X^{k-1}, Y^{k-1} }} \\
& \stackrel{(b)}= & \BigEx{}{\tilde E^{x,y}_l \bigEx{}{\tilde E^{x,y}_k \bigl| X^{k-1}, Y^{k-1} }} \\
& \stackrel{(c)}= & 0,
\end{IEEEeqnarray}
where $(a)$ follows from the Tower property of conditional expectation; $(b)$ holds because $\tilde E^{x,y}_l$ is $\sigof {X^l, Y^l}$-measurable and $l \leq k-1$; and $(c)$ holds by \eqref{eq:RVTildeEi}. Having established that the centered random variables $\bigl\{ \tilde E^{x,y}_i \bigr\}_{i \in [ 1 : n ]}$ are uncorrelated, it remains to show that their variance is upper-bounded by $\channel {y}{x}$. For every $i \in [1:n]$
\begin{IEEEeqnarray}{rCl}
\BigEx{}{ \bigl( \tilde E^{x,y}_i \bigr)^2} & \stackrel{(a)}= & \BigEx {}{ \bigl( E^{x,y}_i - \bigEx{}{E^{x,y}_i \bigl| X^{i-1}, Y^{i-1} } \bigr)^2} \\
& \stackrel{(b)}= & \biggEx{}{\BigEx{}{ \bigl( E^{x,y}_i - \bigEx{}{E^{x,y}_i \bigl| X^{i-1}, Y^{i-1} } \bigr)^2 \Bigr| X^{i-1}, Y^{i-1} }} \\
& \stackrel{(c)}= & \BigEx{}{ \bigdistof{ X_i = x \bigl| X^{i-1}, Y^{i-1} } \channel {y}{x} \bigl( 1 - \bigdistof{ X_i = x \bigl| X^{i-1}, Y^{i-1} } \channel {y}{x} \bigr)} \\
& \stackrel{(d)}\leq & \channel y x,
\end{IEEEeqnarray}
where $(a)$ holds by \eqref{eq:RVTildeEi}; $(b)$ follows from the Tower property of conditional expectation; $(c)$ holds because $(E^{x,y}_i)^2 = E^{x,y}_i$ (which holds by \eqref{eq:RVEi}), because $\bigEx {}{ E^{x,y}_i \bigl| X^{i-1}, Y^{i-1} }$ is $\sigof {X^{i-1}, Y^{i-1}}$-measurable, and by \eqref{eq:condExpEi}; and $(d)$ holds because conditional probability cannot exceed one.
\end{proof}

\subsection{A Proof of Theorem~\ref{th:IDBC1FBOB}}

If $\max_{\tilde P} \muti {\tilde P}{W_\ry} = 0$, then the transition law $\channely y x$ does not depend on $x$, and hence $\lambda^{\ry}_1 + \lambda^{\ry}_2 \geq 1$ whenever $R_\ry > 0$. Likewise, if $\max_{\tilde P} \muti {\tilde P}{W_\rz} = 0$, then $\lambda^{\rz}_1 + \lambda^{\rz}_2 \geq 1$ whenever $R_\rz > 0$. Consequently, if suffices to prove the following strong converse:

\begin{claim}\label{cl:toShowIDBC1FBConv}
For every rate-pair $(R_\ry, R_\rz)$, every positive constants $\lambda_1^\ry, \, \lambda_2^\ry, \, \lambda_1^\rz, \, \lambda_2^\rz$ satisfying
\begin{equation}
\lambda_1^\ry + \lambda_2^\ry + \lambda_1^\rz + \lambda_2^\rz < 1, \label{eq:lambdaSumSm11FB}
\end{equation}
and every $\epsilon > 0$ there exists some $\eta_0 \in \naturals$ so that, for every blocklength $n \geq \eta_0$, every size-$\exp (\exp (n R_\ry))$ set $\setM_\ry$ of possible ID messages for Receiver~$\ry$, and every size-$\exp (\exp (n R_\rz))$ set $\setM_\rz$ of possible ID messages for Receiver~$\rz$, a necessary condition for an $\bigl( n, \setM_\ry, \setM_\rz, \lambda_1^\ry, \lambda_2^\ry, \lambda_1^\rz, \lambda_2^\rz \bigr)$ ID code for the BC $\channel {y,z} x$ with one-sided feedback from Terminal~$\ry$ to exist is that for some PMF $P$ on $\setX$
\begin{subequations}\label{bl:FBConv4}
\begin{IEEEeqnarray}{rCl}
R_\ry & < & \ent {P W_\ry} + \epsilon, \\
R_\rz & < & \muti {P \times W_\ry}{\widetilde W_\rz} + \epsilon,
\end{IEEEeqnarray}
\end{subequations}
where $\widetilde W_\rz$ is defined in \eqref{eq:channelztild}.
\end{claim}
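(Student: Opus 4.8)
The plan is to prove the strong converse receiver by receiver, exploiting that Receiver~$\rz$ has no feedback while the feedback available to the encoder is exactly Receiver~$\ry$'s output, and then to glue the two bounds together on a common input type in the style of the proof of Claim~\ref{cl:toShowIDBCConv}. Throughout I work with the overall mixture $Q_{XY}$ of the $(X^n,Y^n)$-laws induced by the code when $(M_\ry,M_\rz)$ is drawn uniformly over $\setM_\ry\times\setM_\rz$; note that for every fixed message pair this law has the feedback form \eqref{eq:feedbackDist} for the marginal channel $W_\ry$, and since $P_m(x^i,y^i)=P_m(x^i,y^{i-1})\,\channely {y_i}{x_i}$ for each $m$, this form is preserved under the mixture, so Lemma~\ref{le:typeFB} applies to $Q_{XY}$. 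Fixing a sequence $\nu_n\to 0$ that decays slowly (e.g.\ $\nu_n=n^{-1/4}$), Lemma~\ref{le:typeFB} gives that with $Q_{XY}$-probability tending to one the empirical joint type $P_{X^n,Y^n}$ lies within $\sqrt{\channely {y}{x}}\,\nu_n$ of $P^{X^n,Y^n}\!\times W_\ry$, where $P^{X^n,Y^n}$ is the trajectory-averaged input PMF of \eqref{eq:PDOnInput}. Writing $P:=P^{X^n,Y^n}$, on this high-probability set $P_{X^n,Y^n}\approx P\times W_\ry$, so for $n$ large both $\bigl|\,\muti{P_{X^n,Y^n}}{\widetilde W_\rz}-\muti{P\times W_\ry}{\widetilde W_\rz}\bigr|<\epsilon$ and $\bigl|\,\ent{P_{Y^n}}-\ent{P W_\ry}\bigr|<\epsilon$ by continuity of mutual information and entropy, where $P_{Y^n}$ denotes the empirical type of $Y^n$ and $\widetilde W_\rz$ is as in \eqref{eq:channelztild}.

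For Receiver~$\rz$ I would use the decomposition $\channel {y,z}{x}=\channely y x\,\channelztild z {x,y}$ and regard the pair $(\vecx,\vecy)$ as the input to the memoryless channel $\widetilde W_\rz$ with input alphabet $\setX\times\setY$. Absorbing $M_\ry$ into the encoder's randomness, the $\rz$-marginal code --- whose codeword PMFs now live on $\setX^n\times\setY^n$ and whose ID sets are $\{\setD_{m_\rz}\}_{m_\rz\in\setM_\rz}$ --- is an ordinary $(n,\setM_\rz,\lambda_1^\rz,\lambda_2^\rz)$ ID code for $\widetilde W_\rz$ (no feedback is needed since Receiver~$\rz$ has none). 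Applying Lemma~\ref{le:avgDistWeightOnTypes} to this channel, with a constant $\kappa^\rz$ chosen so that $\lambda_1^\rz+\lambda_2^\rz<\kappa^\rz$, shows that for $n$ large $Q_{XY}$ puts weight exceeding $1-\kappa^\rz-o(1)$ on pairs $(\vecx,\vecy)$ with $\muti{P_{\vecx,\vecy}}{\widetilde W_\rz}>R_\rz-\epsilon$; combined with the previous paragraph this yields $\muti{P\times W_\ry}{\widetilde W_\rz}>R_\rz-2\epsilon$ on the intersection with the Lemma~\ref{le:typeFB} set.

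For Receiver~$\ry$ the $\ry$-marginal code decodes from $Y^n$ alone and, after absorbing $M_\rz$, is a feedback ID code for $W_\ry$; here $Y^n$ acts as common randomness shared by encoder and decoder, whose empirical type is $\approx P W_\ry$ with high probability by the above. I would adapt the Han--Verd\'u homogeneous / $L$-regular reduction underlying Lemma~\ref{le:avgDistWeightOnTypes} (in the spirit of \cite{ahlswededueckfb89,hanverdu92}) to a type-refined converse that tracks the \emph{output} type rather than the input type: with $\kappa^\ry$ chosen so that $\lambda_1^\ry+\lambda_2^\ry<\kappa^\ry$, for $n$ large $Q_{XY}$ puts weight exceeding $1-\kappa^\ry-o(1)$ on pairs with $\ent{P_{\vecy}}>R_\ry-\epsilon$. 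Together with the Lemma~\ref{le:typeFB} closeness this gives $\ent{P W_\ry}>R_\ry-2\epsilon$ on the relevant set. Finally, picking $\kappa^\ry,\kappa^\rz>0$ with $\kappa^\ry+\kappa^\rz<1$ (possible by \eqref{eq:lambdaSumSm11FB}) and union-bounding the three high-probability events --- the Lemma~\ref{le:typeFB} concentration, the $\rz$-weight bound, and the $\ry$-weight bound --- leaves positive $Q_{XY}$-mass for all sufficiently large $n$, hence a pair $(\vecx,\vecy)$, equivalently a PMF $P$ on $\setX$, for which both inequalities in \eqref{bl:FBConv4} hold after relabelling $\epsilon$.

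I expect the main obstacle to be the $\ry$-part: upgrading ``$Y^n$ has type $\approx P W_\ry$ with high probability'' to the type-refined double-exponential bound $R_\ry<\ent{P W_\ry}+\epsilon$, i.e.\ proving the feedback analogue of Lemma~\ref{le:avgDistWeightOnTypes} with $\ent{\cdot}$ in place of $\muti{\cdot}{\cdot}$. This amounts to re-running the homogeneous and $L$-regular code reductions while bookkeeping on the output type and handling that the common randomness $Y^n$ is only close to, not exactly, uniform on a typical set. A secondary technical point is ensuring that the closeness constant $\nu_n$ from Lemma~\ref{le:typeFB} and the continuity moduli of $\ent{\cdot}$ and $\muti{\cdot}{\cdot}$ are compatible with the $n$-thresholds produced by Lemma~\ref{le:avgDistWeightOnTypes} (applied to $\widetilde W_\rz$) and by its feedback analogue.
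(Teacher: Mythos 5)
Your $\rz$-half matches the paper: you absorb $Y^n$ into a new channel input so that the $\rz$-marginal code becomes a feedbackless $(n,\setM_\rz,\lambda_1^\rz,\lambda_2^\rz)$ ID code for $\widetilde W_\rz(z|x,\tilde y)$, apply Lemma~\ref{le:avgDistWeightOnTypes}, and then use Lemma~\ref{le:typeFB} to trade the empirical joint type $P_{\vecx,\vecy}$ for a product type $P\times W_\ry$. That is exactly what the paper does.

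The gap is in the $\ry$-half. You note that, after absorbing $M_\rz$, the $\ry$-marginal is a \emph{feedback} ID code for $W_\ry$, and that one would need to ``adapt the Han--Verd\'u homogeneous / $L$-regular reduction \dots to a type-refined converse that tracks the output type'' --- a new lemma you flag as the main obstacle. That obstacle is avoidable, and the paper avoids it. Once you have made the substitution $(x,\tilde y)\leftarrow$ channel input, the original ID code with one-sided feedback from Terminal~$\ry$ is a feedbackless ID code for the auxiliary two-receiver BC $\channeltild{y,z}{x,\tilde y}=\ind{y=\tilde y}\,\channelztild{z}{x,\tilde y}$, so Lemma~\ref{le:avgDistWeightOnTypes} applies to \emph{both} marginals of this BC exactly as in the no-feedback Claim~\ref{cl:toShowIDBCConv}. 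The $\ry$-marginal of the auxiliary BC is the deterministic channel $\widetilde W_\ry(y|x,\tilde y)=\ind{y=\tilde y}$, for which $\muti{P_{\vecx,\vecy}}{\widetilde W_\ry}=\ent{P_\vecy}$ identically. So the standard input-type converse already yields the output-entropy bound you were trying to prove from scratch; there is no need for a new output-type converse for feedback ID codes, nor for any common-randomness bookkeeping on how close $Y^n$ is to uniform. You do recognize the reduction for $\widetilde W_\rz$; the missing observation is that the same reduction applies simultaneously, and more trivially, to $\widetilde W_\ry$. With this substitution, the rest of your plan (union-bounding the Lemma~\ref{le:typeFB} concentration set against the two weight bounds with $\kappa^\ry+\kappa^\rz<1$ and extracting a PMF $P$ on $\setX$) matches the paper's argument.
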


\begin{proof}
Suppose that the collection of tuples $$\Bigl\{ \bigl\{ Q^{(i)}_{m_\ry,m_\rz} \bigr\}_{i \in \{ 1, \ldots, n \}}, \idset {m_\ry}, \idset {m_\rz} \Bigr\}_{(m_\ry,m_\rz) \in \setM_\ry \times \setM_\rz}$$ is an $\bigl( n, \setM_\ry, \setM_\rz, \lambda^\ry_1, \lambda^\ry_2, \lambda^\rz_1, \lambda^\rz_2 \bigr)$ ID code for the BC $\channel {y,z} x$ with one-sided feedback from Terminal~$\ry$. For every pair $( m_\ry, m_\ry ) \in \setM_\ry \times \setM_\rz$ define the PMF on $\setX^n \times \setY^n$
\begin{equation}
Q_{m_\ry,m_\rz} ( \vecx, \vecy ) = \prod^n_{i = 1} Q^{(i)}_{m_\ry,m_\rz} ( x_i | x^{i-1}, y^{i-1} ) \channely {y_i}{x_i}, \quad (\vecx,\vecy) \in \setX^n \times \setY^n, \label{eq:formInputDist}
\end{equation}
and note that $Q_{m_\ry,m_\rz}$ is the distribution of the pair $(X^n,Y^n)$ of length-$n$ input- and output-sequence if ID Message-Pair~$(m_\ry,m_\rz)$ is sent. Introduce the BC $\channeltild {y,z} {x,\tilde y}$ whose outputs are the outputs of the BC $\channel {y,z} x$ and whose inputs are the input and the output at Receiver~$\ry$ of the BC $\channel {y,z} x$, so
\begin{IEEEeqnarray}{l}
\channeltild {y,z}{x,\tilde y} = \ind {y = \tilde y} \channelztild {z}{x, \tilde y}, \quad (x, \tilde y, y, z) \in \setX \times \setY \times \setY \times \setZ.
\end{IEEEeqnarray}
(The marginal channels of the BC $\channeltild {y,z}{x,\tilde y}$ are $\ind {y = \tilde y}$ and $\channelztild {z}{x,\tilde y}$.) Because $$\Bigl\{ \bigl\{ Q^{(i)}_{m_\ry,m_\rz} \bigr\}_{i \in \{ 1, \ldots, n \}}, \idset {m_\ry}, \idset {m_\rz} \Bigr\}_{(m_\ry,m_\rz) \in \setM_\ry \times \setM_\rz}$$ is an $\bigl( n, \setM_\ry, \setM_\rz, \lambda^\ry_1, \lambda^\ry_2, \lambda^\rz_1, \lambda^\rz_2 \bigr)$ ID code for the BC $\channel {y,z} x$ with one-sided feedback from Terminal~$\ry$, the collection of tuples $\bigl\{ Q_{m_\ry, m_\rz}, \setD_{m_\ry}, \setD_{m_\rz} \bigr\}_{(m_\ry,m_\rz) \in \setM_\ry \times \setM_\rz}$ is an $\bigl( n, \setM_\ry, \setM_\rz, \lambda^\ry_1, \lambda^\ry_2, \lambda^\rz_1, \lambda^\rz_2 \bigr)$ ID code for the BC $\channeltild {y,z}{x,\tilde y}$ without feedback. To prove Claim~\ref{cl:toShowIDBC1FBConv}, we can thus adopt some of the arguments in the proof of Claim~\ref{cl:toShowIDBCConv}.

Fix some $\epsilon > 0$, and choose $\mu \in (0,1/2)$ sufficiently small so that
\begin{equation}
\mu + \mu \max \biggl\{ \log \frac{| \setY |}{\mu}, \log \frac{| \setZ |^2}{\mu} \biggr\} < \epsilon. \label{eq:defMu1FB}
\end{equation}
(This is possible, because $\mu \log \mu$ converges to zero as $\mu$ tends to zero.) Introduce the set $\setK_{\mu}$ comprising the realizations $(\vecx,\vecy) \in \setX^n \times \setY^n$ of the pair $(X^n,Y^n)$ that satisfy the following two conditions:
\begin{subequations}\label{bl:FBConvSetS}
\begin{IEEEeqnarray}{rCl}
\muti {P_{\vecx,\vecy}}{\widetilde W_\ry} & > & R_\ry - \mu, \\
\muti {P_{\vecx,\vecy}}{\widetilde W_\rz} & > & R_\rz - \mu,
\end{IEEEeqnarray}
\end{subequations}
where $P_{\vecx,\vecy}$ is the empirical type of the pair $(\vecx,\vecy)$. Moreover, introduce the set $\setL_{\epsilon,\mu}$ comprising the realizations $(\vecx,\vecy) \in \setX^n \times \setY^n$ of the pair $(X^n,Y^n)$ that for some PMF $P$ on $\setX$ satisfy the following two conditions:
\begin{subequations}\label{bl:FBConvSetL}
\begin{IEEEeqnarray}{rCl}
\bigl| \muti {P_{\vecx,\vecy}}{\widetilde W_\ry} - \ent {P W_\ry } \bigr| & \leq & \epsilon - \mu , \\
\bigl| \muti {P_{\vecx,\vecy}}{\widetilde W_\rz} - \muti {P \times W_\ry }{\widetilde W_\rz} \bigr| & \leq & \epsilon - \mu.
\end{IEEEeqnarray}
\end{subequations}
As we shall see, there exists some $\eta_0 \in \naturals$ so that for every blocklength $n \geq \eta_0$ the mixture PMF on $\setX^n \times \setY^n$
\begin{equation}
Q = \frac{1}{ | \setM_\ry | \, | \setM_\rz | } \sum_{(m_\ry, m_\rz) \in \setM_\ry \times \setM_\rz} Q_{m_\ry, m_\rz} \label{eq:FBConvMixPMF}
\end{equation}
satisfies
\begin{equation}
Q \bigl ( (X^n,Y^n) \in \setK_{\mu} \cap \setL_{\epsilon,\mu} \bigr) > 0. \label{eq:FBConv3}
\end{equation}
By \eqref{bl:FBConvSetS} and \eqref{bl:FBConvSetL} the intersection $\setK_{\mu} \cap \setL_{\epsilon,\mu}$ contains only realizations $(\vecx,\vecy) \in \setX^n \times \setY^n$ of the pair $(X^n,Y^n)$ that for some PMF $P$ on $\setX$ satisfy the following two conditions:
\begin{subequations}\label{bl:FBConvSetLCapS}
\begin{IEEEeqnarray}{rCl}
\ent {P W_\ry } & \geq & \muti {P_{\vecx,\vecy}}{\widetilde W_\ry} - \epsilon + \mu > R_\ry - \epsilon, \\
\muti {P \times W_\ry }{\widetilde W_\rz} & \geq & \muti {P_{\vecx,\vecy}}{\widetilde W_\rz} - \epsilon + \mu > R_\rz - \epsilon.
\end{IEEEeqnarray}
\end{subequations}
This implies that for every blocklength~$n \geq \eta_0$ a necessary condition for \eqref{eq:FBConv3} to hold is that for some PMF $P$ on $\setX$ \eqref{bl:FBConv4} holds, and hence Claim~\ref{cl:toShowIDBC1FBConv} follows.

It remains to establish \eqref{eq:FBConv3}. We begin by upper-bounding the probability $Q \bigl( (X^n,Y^n) \notin \setK_\mu \bigr)$. Fix $\kappa^\ry, \, \kappa^\rz > 0$ that satisfy the following three: 1) $\lambda_1^\ry + \lambda_2^\ry < \kappa^\ry$; 2) $\lambda_1^\rz + \lambda_2^\rz < \kappa^\rz$; and 3) $\kappa^\ry + \kappa^\rz < 1$. (This is possible because of \eqref{eq:lambdaSumSm11FB}.) Because $\bigl\{ Q_{m_\ry, m_\rz}, \setD_{m_\ry}, \setD_{m_\rz} \bigr\}_{(m_\ry,m_\rz) \in \setM_\ry \times \setM_\rz}$ is an $\bigl( n, \setM_\ry, \setM_\rz, \lambda^\ry_1, \lambda^\ry_2, \lambda^\rz_1, \lambda^\rz_2 \bigr)$ ID code for the BC $\channeltild {y,z}{x,\tilde y}$ without feedback, \eqref{eq:BCExistsPMFGoodForBoth} in the proof of Claim~\ref{cl:toShowIDBCConv} implies that there must exist some $\eta_0^\prime \in \naturals$ so that for every blocklength $n \geq \eta_0^\prime$
\begin{IEEEeqnarray}{l}
Q \bigl( (X^n,Y^n) \notin \setK_\mu \bigr) \nonumber \\
\quad \leq \kappa^\ry + \kappa^\rz + \exp \bigl\{ e^{n ( R_\ry - \mu / 2 )} \bigr\} / \exp \bigl\{ e^{n R_\ry} \bigr\} + \exp \bigl\{ e^{n ( R_\rz - \mu / 2 )} \bigr\} / \exp \bigl\{ e^{n R_\rz} \bigr\}. \label{eq:FBConv2}
\end{IEEEeqnarray}

Having established \eqref{eq:FBConv2}, we conclude the proof of \eqref{eq:FBConv3} by showing that the probability $Q \bigl( (X^n,Y^n) \notin \setL_{\epsilon,\mu} \bigr)$ satisfies the upper bound
\begin{IEEEeqnarray}{rCl}
Q \bigl( (X^n,Y^n) \notin \setL_{\epsilon,\mu} \bigr) & \leq & \frac{ |\setX|^3 |\setY|^3 }{n \mu^2}. \label{eq:FBConv1}
\end{IEEEeqnarray}
This implies \eqref{eq:FBConv3}, because, combined with the Union-of-Events bound and \eqref{eq:FBConv2}, it implies that
\begin{IEEEeqnarray}{l}
Q \bigl ( (X^n,Y^n) \in \setK_{\mu} \cap \setL_{\epsilon,\mu} \bigr) \nonumber \\
\quad \geq 1 - Q \bigl( (X^n,Y^n) \notin \setK_\mu \bigr) - Q \bigl( (X^n,Y^n) \notin \setL_{\epsilon,\mu} \bigr) \\
\quad \geq 1 - \kappa^\ry - \kappa^\rz - \exp \bigl\{ e^{n ( R_\ry - \mu / 2 )} \bigr\} / \exp \bigl\{ e^{n R_\ry} \bigr\} - \exp \bigl\{ e^{n ( R_\rz - \mu / 2 )} \bigr\} / \exp \bigl\{ e^{n R_\rz} \bigr\} - \frac{ |\setX|^3 |\setY|^3 }{n \mu^2}, \label{eq:FBConv1Comb2}
\end{IEEEeqnarray}
and we can let $\eta_0$ be the smallest integer $n \geq \eta_0^\prime$ for which the RHS of \eqref{eq:FBConv1Comb2} is positive (such an $n$ must exist, because $\mu > 0$ and $\kappa^\ry + \kappa^\rz < 1$).

To conclude the proof of Claim~\ref{cl:toShowIDBC1FBConv}, it remains to establish \eqref{eq:FBConv1}. For every pair $(m_\ry,m_\rz) \in \setM_\ry \times \setM_\rz$ define for every $(\vecx, \vecy) \in \setX^n \times \setY^n$ the PMF on $\setX$
\begin{IEEEeqnarray}{l}
P^{\vecx, \vecy}_{m_\ry,m_\rz} ( x ) = \frac{1}{n} \sum^n_{i = 1} Q^{(i)}_{m_\ry,m_\rz} (x), \quad x \in \setX, \label{eq:PDOnInputConv1FB}
\end{IEEEeqnarray}
and introduce the set $\setL^{m_\ry,m_\rz}_{\epsilon,\mu}$ comprising the realizations $(\vecx,\vecy) \in \setX^n \times \setY^n$ of the pair $(X^n,Y^n)$ that satisfy the following two conditions:
\begin{subequations}\label{bl:FBConvSetLMyMz}
\begin{IEEEeqnarray}{rCl}
\bigl| \muti {P_{\vecx,\vecy}}{\widetilde W_\ry} - \ent {P^{\vecx,\vecy}_{m_\ry,m_\rz} W_\ry } \bigr| & \leq & \epsilon - \mu , \\
\bigl| \muti {P_{\vecx,\vecy}}{\widetilde W_\rz} - \muti {P^{\vecx,\vecy}_{m_\ry,m_\rz} \times W_\ry }{\widetilde W_\rz} \bigr| & \leq & \epsilon - \mu.
\end{IEEEeqnarray}
\end{subequations}
By comparing \eqref{bl:FBConvSetLMyMz} and \eqref{bl:FBConvSetL} we see that
\begin{equation}
\setL^{m_\ry,m_\rz}_{\epsilon,\mu} \subseteq \setL_{\epsilon,\mu}, \quad (m_\ry,m_\rz) \in \setM_\ry \times \setM_\rz.
\end{equation}
This, combined with \eqref{eq:FBConvMixPMF}, implies that
\begin{IEEEeqnarray}{l}
Q \bigl( (X^n,Y^n) \notin \setL_{\epsilon,\mu} \bigr) \nonumber \\
\quad = \frac{1}{|\setM_\ry| \, |\setM_\rz|} \sum_{(m_\ry,m_\rz) \in \setM_\ry \times \setM_\rz} Q_{m_\ry,m_\rz} \bigl( (X^n,Y^n) \notin \setL_{\epsilon,\mu} \bigr) \\
\quad \leq \max_{(m_\ry,m_\rz) \in \setM_\ry \times \setM_\rz} Q_{m_\ry,m_\rz} \bigl( (X^n,Y^n) \notin \setL^{m_\ry,m_\rz}_{\epsilon,\mu} \bigr),
\end{IEEEeqnarray}
and to establish \eqref{eq:FBConv1} it thus suffices to show that
\begin{IEEEeqnarray}{l}
Q_{m_\ry,m_\rz} \bigl( (X^n,Y^n) \notin \setL^{m_\ry,m_\rz}_{\epsilon,\mu} \bigr) \leq \frac{ |\setX|^3 |\setY|^3 }{n \mu^2}, \quad (m_\ry,m_\rz) \in \setM_\ry \times \setM_\rz. \label{eq:FBConv1MaxMyMz}
\end{IEEEeqnarray}
To that end, let
\begin{equation}
\nu = \frac{\mu}{| \setX | \, | \setY |}, \label{eq:defNu1FB}
\end{equation}
and for every $(m_\ry,m_\rz) \in \setM_\ry \times \setM_\rz$ introduce the set $\setN^{m_\ry,m_\rz}_\mu$ comprising the realizations $(\vecx, \vecy) \in \setX^n \times \setY^n$ of the pair $(X^n,Y^n)$ satisfying that
\begin{IEEEeqnarray}{l}
\bigl| P_{\vecx, \vecy} (x,y) - P^{\vecx,\vecy}_{m_\ry,m_\rz} ( x ) \channely {y}{x} \bigr| < \sqrt { \channely {y}{x} } \, \nu, \,\, \forall \, (x,y) \in \setX \times \setY. \label{eq:BC1FBxySatType}
\end{IEEEeqnarray}
As we shall see,
\begin{equation}
\setN^{m_\ry,m_\rz}_\mu \subseteq \setL^{m_\ry,m_\rz}_{\epsilon,\mu}, \label{eq:conv1FBSetNSubsSetL}
\end{equation}
and to establish \eqref{eq:FBConv1MaxMyMz} it thus suffices to show that
\begin{IEEEeqnarray}{rCl}
Q_{m_\ry,m_\rz} \bigl( (X^n,Y^n) \notin \setN^{m_\ry,m_\rz}_{\mu} \bigr) & \leq & \frac{ |\setX|^3 |\setY|^3 }{n \mu^2}, \quad (m_\ry,m_\rz) \in \setM_\ry \times \setM_\rz. \label{eq:FBConv1SetN}
\end{IEEEeqnarray}
But this in an immediate consequence of Lemma~\ref{le:typeFB} in Appendix~\ref{le:IDCodesFB}: For every pair $(m_\ry,m_\rz) \in \setM_\ry \times \setM_\rz$ the PMF $Q_{m_\ry,m_\rz}$ of \eqref{eq:formInputDist} is of the form \eqref{eq:feedbackDist}, and by comparing \eqref{eq:PDOnInputConv1FB} to \eqref{eq:PDOnInput} we see that $P^{\vecx,\vecy}_{m_\ry,m_\rz}$ is the corresponding PMF $P^{\vecx,\vecy}$ on $\setX$ of \eqref{eq:PDOnInput}. Consequently, \eqref{eq:BC1FBxySatType} and Lemma~\ref{le:typeFB} imply that
\begin{IEEEeqnarray}{l}
Q_{m_\ry,m_\rz} \bigl( (X^n,Y^n) \notin \setN^{m_\ry,m_\rz}_\mu \bigr) \nonumber \\
\quad \leq \frac{ | \setX | \, | \setY | }{n \nu^2} \\
\quad \leq \frac{ |\setX|^3 |\setY|^3 }{n \mu^2}, \quad (m_\ry,m_\rz) \in \setM_\ry \times \setM_\rz, \label{eq:BC1FBType}
\end{IEEEeqnarray}
where the last inequality holds by \eqref{eq:defNu1FB}.

Having established \eqref{eq:FBConv1SetN}, we can now conclude the proof of Claim~\ref{cl:toShowIDBC1FBConv} by establishing \eqref{eq:conv1FBSetNSubsSetL}. To that end, fix any pair $(\vecx, \vecy) \in \setN^{m_\ry,m_\rz}_\mu$. By \eqref{eq:BC1FBxySatType} (which holds because $(\vecx, \vecy) \in \setN^{m_\ry,m_\rz}_\mu$)
\begin{IEEEeqnarray}{l}
P_{\vecx, \vecy} ( x,y ) \in \Bigl[ P^{\vecx,\vecy}_{m_\ry,m_\rz} ( x ) \channely {y}{x} \pm \sqrt { \channely {y}{x} } \, \nu \Bigr], \quad ( x,y ) \in \setX \times \setY. \label{eq:BC1FBxySatType2}
\end{IEEEeqnarray}
Consequently, $d ( P_{\vecx, \vecy}, P^{\vecx,\vecy}_{m_\ry,m_\rz} \times W_\ry )$ satisfies the upper bound
\begin{IEEEeqnarray}{l}
d ( P_{\vecx, \vecy}, P^{\vecx,\vecy}_{m_\ry,m_\rz} \times W_\ry ) \nonumber \\
\quad \stackrel{(a)}= \frac{1}{2} \sum_{(x,y) \in \setX \times \setY} \bigl| P_{\vecx, \vecy} ( x,y ) - P^{\vecx,\vecy}_{m_\ry,m_\rz} ( x ) \channely {y}{x} \bigr| \\
\quad \stackrel{(b)}\leq \frac{1}{2} \sum_{(x,y) \in \setX \times \setY} \sqrt { \channely {y}{x} } \, \nu \\
\quad = \frac{1}{2} \sqrt { \channely {y}{x} } \, \card \setX \, \card \setY \, \nu \\
\quad \stackrel{(c)}\leq \mu/2, \quad (\vecx, \vecy) \in \setN_\mu^{m_\ry,m_\rz}, \label{eq:totVarDistFB}
\end{IEEEeqnarray}
where $(a)$ holds by definition of the Total-Variation distance; $(b)$ holds by \eqref{eq:BC1FBxySatType2}; and $(c)$ holds by \eqref{eq:defNu1FB} and because $\sqrt { \channely {y}{x} } \leq 1$. Using this we can upper-bound $d \bigl( P_{\vecx, \vecy} \times \widetilde W, P^{\vecx,\vecy}_{m_\ry,m_\rz} \times W_\ry \times \widetilde W \bigr)$ by
\begin{IEEEeqnarray}{l}
d \bigl( P_{\vecx, \vecy} \times \widetilde W, P^{\vecx,\vecy}_{m_\ry,m_\rz} \times W_\ry \times \widetilde W \bigr) \nonumber \\
\quad = \frac{1}{2} \sum_{(x, \tilde y, y, z) \in \setX \times \setY \times \setY \times \setZ} \bigl| P_{\vecx, \vecy} ( x,\tilde y ) \channeltild {y,z}{x,\tilde y} - P^{\vecx,\vecy}_{m_\ry,m_\rz} ( x ) \channely {\tilde y}{x} \channeltild {y,z}{x,\tilde y} \bigr| \\
\quad = \frac{1}{2} \sum_{(x, \tilde y) \in \setX \times \setY} \bigl| P_{\vecx, \vecy} ( x,\tilde y ) - P^{\vecx,\vecy}_{m_\ry,m_\rz} ( x ) \channely {\tilde y}{x} \bigr| \sum_{(y,z) \in \setY \times \setZ} \channeltild {y,z}{x,\tilde y} \\
\quad = \frac{1}{2} \sum_{(x, \tilde y) \in \setX \times \setY} \bigl| P_{\vecx, \vecy} ( x,\tilde y ) - P^{\vecx,\vecy}_{m_\ry,m_\rz} ( x ) \channely {\tilde y}{x} \bigr| \\
\quad \leq \mu/2, \quad (\vecx, \vecy) \in \setN^{m_\ry,m_\rz}_\mu. \label{eq:totVarDistFBNewCh}
\end{IEEEeqnarray}
Consequently, the Data-Processing inequality for the Total-Variation distance \cite[Lemma~1]{cannoneronservedio15} implies that
\begin{subequations}
\begin{IEEEeqnarray}{rCl}
d ( P_{\vecy}, P^{\vecx,\vecy}_{m_\ry,m_\rz} W_\ry ) & \leq & \mu/2, \quad (\vecx, \vecy) \in \setN^{m_\ry,m_\rz}_\mu, \label{eq:totVarDistFBNewChY} \\
d \bigl( P_{\vecx, \vecy} \widetilde W_\rz, ( P^{\vecx,\vecy}_{m_\ry,m_\rz} \times W_\ry ) \widetilde W_\rz \bigr) & \leq & \mu/2, \quad (\vecx, \vecy) \in \setN^{m_\ry,m_\rz}_\mu. \label{eq:totVarDistFBNewChZ}
\end{IEEEeqnarray}
\end{subequations}
This, combined with the fact that entropy is continuous, implies that
\begin{IEEEeqnarray}{l}
\bigl| \muti {P_{\vecx, \vecy}}{\widetilde W_\ry} - \ent {P^{\vecx, \vecy}_{m_\ry,m_\rz} W_\ry } \bigr| \nonumber \\
\quad \stackrel{(a)}= \bigl| \ent {P_{\vecy}} - \ent {P^{\vecx, \vecy}_{m_\ry,m_\rz} W_\ry } \bigr| \\
\quad \stackrel{(b)}\leq \mu \log \frac{| \setY |}{\mu} \\
\quad \stackrel{(c)}\leq \epsilon - \mu, \quad (\vecx, \vecy) \in \setN^{m_\ry,m_\rz}_\mu, \label{eq:BC1FBMutChY}
\end{IEEEeqnarray}
where $(a)$ holds because $\channelytild {y}{x,\tilde y} = \ind {y = \tilde y}$; $(b)$ holds by \eqref{eq:totVarDistFBNewChY}, \cite[Lemma~2.7]{csiszarkoerner11}, and the fact that $\mu < 1/2$; and $(c)$ holds by \eqref{eq:defMu1FB}. Similarly,
\begin{IEEEeqnarray}{l}
\bigl| \muti {P_{\vecx, \vecy}}{\widetilde W_\rz} - \muti {P^{\vecx, \vecy}_{m_\ry,m_\rz} \times W_\ry }{\widetilde W_\rz} \bigr| \nonumber \\
\quad \stackrel{(a)}\leq \bigl| \ent{ P_{\vecx, \vecy} \widetilde W_\rz } - \bigent{ ( P^{\vecx, \vecy}_{m_\ry,m_\rz} \times W_\ry ) \widetilde W_\rz } \bigr| + \sum_{x,\tilde y} \bigl| P_{\vecx, \vecy} ( x,\tilde y ) - P^{\vecx, \vecy}_{m_\ry,m_\rz} ( x ) \channely {\tilde y}{x} \bigr| \bigent { \channelztild {\cdot}{x, \tilde y} } \\
\quad \stackrel{(b)}\leq \mu \log \frac{| \setZ |}{\mu} + 2 d ( P_{\vecx, \vecy}, P^{\vecx,\vecy}_{m_\ry,m_\rz} \times W_\ry ) \log | \setZ | \\
\quad \stackrel{(c)}\leq \mu \log \frac{| \setZ |^2}{\mu} \\
\quad \stackrel{(d)}\leq \epsilon - \mu, \quad (\vecx, \vecy) \in \setN^{m_\ry,m_\rz}_\mu, \label{eq:BC1FBMutChZ}
\end{IEEEeqnarray}
where $(a)$ holds by definition of mutual information and the Triangle inequality; $(b)$ holds by \eqref{eq:totVarDistFBNewChZ}, \cite[Lemma~2.7]{csiszarkoerner11}, the fact that $\mu < 1/2$, because the uniform distribution maximizes entropy, and by definition of the Total-Variation distance; $(c)$ holds by \eqref{eq:totVarDistFB}; and $(d)$ holds by \eqref{eq:defMu1FB}. From \eqref{bl:FBConvSetLMyMz}, \eqref{eq:BC1FBMutChY}, and \eqref{eq:BC1FBMutChZ} we conclude that \eqref{eq:conv1FBSetNSubsSetL} holds.
\end{proof}

%

\lhead[\fancyplain{\scshape Appendix}
{\scshape Appendix}]
{\fancyplain{\scshape \leftmark}
  {\scshape \leftmark}}
\rhead[\fancyplain{\scshape \leftmark}
{\scshape \leftmark}]
{\fancyplain{\scshape Appendix}
  {\scshape Appendix}}

%
\end{appendix}



\end{document}